\documentclass{amsart}
\usepackage{amssymb, latexsym}

\parskip .5ex
\newcommand{\bm}[1]{\mb{\boldmath ${#1}$}}
\newcommand{\beqa}{\begin{eqnarray*}}
\newcommand{\eeqa}{\end{eqnarray*}}
\newcommand{\beqn}{\begin{eqnarray}}
\newcommand{\eeqn}{\end{eqnarray}}

\newcommand{\iy}{\infty}

\newcommand{\lt}{\left}
\newcommand{\rt}{\right}

\newcommand{\C}{\mathbb C}
\newcommand{\D}{\mathbb D}
\newcommand{\R}{\mathbb R}

\newcommand{\N}{\mathbb N}

\newcommand{\mcA}{\mathcal A}

\newcommand{\mcH}{\mathcal H}
\newcommand{\mcB}{\mathcal B}
\newcommand{\mcC}{\mathcal C}
\newcommand{\mcD}{\mathcal D}
\newcommand{\mcE}{\mathcal E}
\newcommand{\mcF}{\mathcal F}

\newcommand{\mcP}{\mathcal P}

\newcommand{\mcL}{\mathcal L}

\newcommand{\mcM}{\mathcal M}
\newcommand{\mfB}{\mathfrak B}
\newcommand{\mfC}{\mathfrak C}
\newcommand{\mfF}{\mathfrak F}

\newcommand{\mfL}{\mathfrak L}
\newcommand{\mfO}{\mathfrak O}

\newcommand{\mfX}{\mathfrak X}

\newcommand{\mfM}{\mathfrak M}

\newcommand{\mb}{\makebox}

\newcommand{\tf}{\tfrac}

\newcommand{\al}{\alpha}

\newcommand{\e}{\varepsilon}
\newcommand{\ph}{\phi}
\newcommand{\de}{\delta}

\newcommand{\la}{\lambda}

\newcommand{\ka}{\kappa}

\newcommand{\Om}{\Omega}

\newcommand{\s}{\sigma}

\newcommand{\lb}{\label}
\newcommand{\rf}{\ref}
\newcounter{cnt1}
\newcounter{cnt2}
\newcounter{cnt3}
\newcommand{\blr}{\begin{list}{$($\roman{cnt1}$)$}
 {\usecounter{cnt1} \setlength{\topsep}{0pt}
 \setlength{\itemsep}{0pt}}}
\newcommand{\bla}{\begin{list}{$($\alph{cnt2}$)$}
 {\usecounter{cnt2} \setlength{\topsep}{0pt}
 \setlength{\itemsep}{0pt}}}
\newcommand{\bln}{\begin{list}{$($\arabic{cnt3}$)$}
 {\usecounter{cnt3} \setlength{\topsep}{0pt}
 \setlength{\itemsep}{0pt}}}
\newcommand{\el}{\end{list}}
\newtheorem{thm}{Theorem}[section]
\newtheorem{lem}[thm]{Lemma}
\newtheorem{cor}[thm]{Corollary}

\newtheorem{Def}[thm]{Definition}

\newtheorem{rem}[thm]{Remark}
\newcommand{\Rem}{\begin{rem} \rm}
\newcommand{\bdfn}{\begin{Def} \rm}
\newcommand{\edfn}{\end{Def}}
\newcommand{\tx}{\text}

\newcommand{\ba}{\begin{array}}
\newcommand{\ea}{\end{array}}

\sloppy

\begin{document}

\title{On the physical and mathematical foundations of quantum
physics via functional integrals}
\author[Esposito]{Giampiero  Esposito}
\address[Giampiero Esposito]{ Dipartimento di Fisica ``Ettore Pancini'',
Complesso Universitario di Monte S. Angelo\\ Via Cintia Edificio 6, 80126 Napoli, Italy\\
INFN Sezione di Napoli, Complesso Universitario di Monte S. Angelo\\ 
Via Cintia Edificio 6, 80126 Napoli, Italy}
\author[Gill]{Tepper L. Gill}
\address[Tepper L. Gill]{Department of EECS, Mathematics and Computational Physics Laboratory \\ 
Howard University, Washington DC 20059, USA}

\begin{abstract}  
In order to preserve the leading role of the action principle in
formulating all field theories one needs quantum field theory, with
the associated BRST symmetry, and Feynman-DeWitt-Faddeev-Popov ghost
fields. Such fields result from the fibre-bundle structure of the space
of histories, but the physics-oriented literature used them formally
because a rigorous theory of measure and integration was lacking. 
Motivated by this framework, this paper exploits previous work of Gill
and Zachary, where the use of Banach spaces for the Feynman integral
was proposed. The Henstock-Kurzweil integral is first introduced, because it
makes it possible to integrate functions like $exp\{ix^2/2\}$.  
The Lebesgue measure on $\R^\iy$ is then built and used 
to define the measure on every separable Hilbert space.
The subsequent step is the construction of a 
new Hilbert space $KS^2[\R^n]$, which contains 
$L^2[\R^n]$ as a continuous dense embedding, and contains both the test functions 
$\mcD[\R^n]$ and their dual $\mcD^*[\R^n]$, the Schwartz space of distributions, 
as continuous embeddings. This space allows us to construct the Feynman path integral 
in a manner that maintains its intuitive and computational advantages. We also extend 
this space to $KS^2[\mcH]$, where $\mcH$ is any separable Hilbert space. 
Last, the existence of a unique universal definition of time, $\tau_h$, 
that we call historical time, is proven. We use $\tau_h$ as the order parameter 
for our construction of Feynman's time ordered operator calculus, which in turn 
is used to extend the path integral in order to include all time dependent groups and 
semigroups with a reproducing kernel representation.
\end{abstract}
\maketitle
\section{Introduction}

The present paper is devoted to the physical and mathematical foundations
of Feynman's functional integrals. On the physical side, one should
first acknowledge the early work of Dirac, who studied, first, the 
Lagrangian in quantum mechanics both in a paper \cite{DI1} and in his famous book
on the principles of quantum mechanics \cite{DI2}. Later on Feynman, who had studied
the action at a distance with Wheeler, became interested in
a model where two particles can interact with each other \cite{WH}, but cannot
interact with themselves \cite{FY1}. Feynman realized that he could not write down
Hamilton's equations of motion for these particles, and hence was led to
study a quantization of the two-particle system not relying upon
Hamiltonian methods, but rather using Lagrangian methods along the seminal
work by Dirac. Over more than two decades Feynman developed first what
he called a spacetime approach to nonrelativistic quantum mechanics \cite{FY2},
and then extended formally his approach to gauge field 
theories \cite{FY3,FY5}, discovering
eventually the fields \cite{FY4} that came to be known as 
Feynman-DeWitt-Faddeev-Popov ghost fields. 
After the pioneering work of Misner \cite{MI},
Feynman presented his idea at
a conference in Poland \cite{FY4}. Within a few years, Faddeev and Popov had developed
a formal derivation for quantum Yang-Mills theory \cite{FAD}, whereas DeWitt obtained
a systematic derivation for all gauge field theories \cite{DW1}, showing that ghost
fields arise from the fibre-bundle structure of the space of field histories,
when a spacetime \cite{DW3}, global approach to quantum field theory 
\cite{DW4} is developed.
With hindsight, the need for a quantum theory of gauge fields might
be described as follows.

When we impose a supplementary condition in field theory 
\cite{DW4} in
order to obtain an invertible operator on gauge fields, e.g.,
the Lorenz condition in electrodynamics
\begin{equation}
\Phi_{L}(A)=\sum_{\mu=1}^{4}\nabla^{\mu}A_{\mu}=\tau,
\label{(1)}
\end{equation}
or the de Donder condition for gravitation:
\begin{equation}
\Phi_{\mu}(h)=\sum_{\nu=1}^{4}\nabla^{\nu}
\left(h_{\mu \nu}-{\tf{1}{2} }
g_{\mu \nu}{\rm tr}(h)\right)=\tau_{\mu},
\label{(2)}
\end{equation}
where $h_{\mu \nu}$ are the components of perturbations of the
metric $g$, we are dealing with functionals that associate to
physical fields some equations which can be denoted by $P^{\alpha}$,
the $\alpha$ being Lie-algebra indices. At classical level one
can remark for example that the Maxwell action functional in
Minkowski spacetime with metric $\eta$ 
provides the noninvertible operator 
$$
Q_{\mu \nu}=-\eta_{\mu \nu}\Box 
+ \partial_{\mu} \partial_{\nu}.
$$
Such an operator acts on smooth sections of the bundle of $1$-form 
fields on Minkowski spacetime, as is clearer by writing it in the form 
$$
Q_{\mu}^{\; \nu}=-\delta_{\mu}^{\; \nu}\Box
+\partial_{\mu}\partial^{\nu}.
$$
We note now that an additional term in the action such as the square 
of the Lorenz gauge would provide a contribution
$-\partial_{\mu}\partial_{\nu}$ to such an operator,
by virtue of the identity
\begin{equation}
\Phi_{L}^{2}(A)=\sum_{\mu, \nu=1}^{4} \Bigr[\partial^{\mu}
(A_{\mu}\partial^{\nu}A_{\nu})-A_{\mu}\partial^{\mu}
\partial^{\nu}A_{\nu}\Bigr].
\label{(3)}
\end{equation}
The resulting operator on $A^{\nu}$ would then be 
$P_{\mu \nu}=-\eta_{\mu \nu}\Box$, which is instead
invertible. More generally, any operator of the form
(here $\rho \in {\mathbb R} - \{ 0 \}$)
\begin{equation}
P_{\mu \nu}=-\eta_{\mu \nu}\Box
+\left(1-{1 / \rho}\right)\partial_{\mu}
\partial_{\nu}
\label{(4)}
\end{equation}
is invertible because its symbol is the matrix
(here $k^{2}=\sum_{\mu,\nu=1}^{4} 
\eta_{\mu \nu}k^{\mu}k^{\nu}$)
\begin{equation}
\sigma_{\mu \nu}=\eta_{\mu \nu}k^{2}
-\left(1-{1 / \rho}\right)k_{\mu} k_{\nu},
\label{(5)}
\end{equation}
whose inverse is \cite{E} 
\begin{equation}
\Sigma^{\nu \lambda}=\frac{1}{k^{2}}
\eta^{\nu \lambda}
+(\rho-1)\frac{k^{\nu}k^{\lambda}}{k^{4}}.
\label{(6)}
\end{equation}
The expression of the additional term in the action for
arbitrary choice of supplementary condition is therefore
$$
{\tf{1}{2} }\sum_{\alpha,\beta} \int d^{4}x 
\int d^{4}x' \; P^{\alpha}(x) \omega_{\alpha \beta}(x,x')
P^{\beta}(x')={\tf{1}{2} } P^{\alpha}
\omega_{\alpha \beta'} P^{\beta'}.
$$
With this notation, repeated indices without summation 
symbol represent actually summation as well as 
integration \cite{DW3,DW4}.

Moreover, from the identity (Latin lower case indices being
used for fields)
\begin{equation}
\delta P^{\alpha}=P_{,i}^{\alpha}\delta \varphi^{i}
=P_{,i}^{\alpha}Q_{\beta}^{i}\delta \xi^{\beta}
={\widehat F}_{\; \beta}^{\alpha} \delta \xi^{\beta}
\label{(7)}
\end{equation}
we discover the existence of the operator defined as follows:
\begin{equation} 
{\widehat F}_{\; \beta}^{\alpha}[\varphi] 
\equiv P_{\; ,i}^{\alpha}[\varphi]
Q_{\; \beta}^{i}[\varphi].
\label{(8)}
\end{equation}
Such an operator should act upon fields $\chi_{\alpha}$ and
$\psi^{\beta}$, in general independent of each other. 
For example, in the case of electrodynamics, ${\widehat F}$ is
a map from the space of smooth functions on spacetime into itself:
$$
{\widehat F}: C^{\infty}(M,g) \rightarrow C^{\infty}(M,g).
$$
For gravitation, ${\widehat F}$ maps smooth vector fields $\sigma_1$
over spacetime into smooth vector fields $\sigma_{2}$ over spacetime
($T(M,g)$ being our notation for the tangent bundle of the spacetime
manifold $(M,g)$):
$$
\sigma_{1}: (M,g) \longrightarrow T(M,g), \;
\sigma_{2}: (M,g) \longrightarrow T(M,g),
\; {\widehat F}: \sigma_{1} \longrightarrow \sigma_{2}.
$$
In light of Eq. \eqref{(8)}
we are led to assume that one can build the action functional
\begin{equation}
{\widetilde S}[\varphi,\chi,\psi]
=S[\varphi]+{\tf{1}{2} }P^{\alpha}[\varphi]
\omega_{\alpha \beta'}P^{\beta'}[\varphi]
+\chi_{\alpha}{\widehat F}_{\; \beta'}^{\alpha}[\varphi]
\psi^{\beta'}.
\label{(9)}
\end{equation}
The term quadratic in the functionals $P^{\alpha}$ spoils
gauge invariance of the classical action, whereas the sum
of the three terms in Eq. \eqref{(9)} has a new invariance property
(see theorem below)
under a class of transformations which are called \cite{BE} 
Becchi-Rouet-Stora-Tyutin (hereafter BRST) transformation.
They can be written in the form \cite{DW2}
\begin{equation}
\delta \varphi^{i}=Q_{\; \alpha}^{i}[\varphi]\psi^{\alpha}
\delta \lambda,
\label{(10)}
\end{equation}
\begin{equation}
\delta \chi_{\alpha}=\omega_{\alpha \beta}P^{\beta}[\varphi]
\delta \lambda,
\label{(11)}
\end{equation}
\begin{equation}
\delta \psi^{\alpha}=-{\tf{1}{2} }C_{\; \beta \gamma}^{\alpha}
\psi^{\beta}\psi^{\gamma}\delta \lambda,
\label{(12)}
\end{equation}
where $\delta \lambda$ is a constant which commutes
with $\varphi^{i}$ and anticommutes with
$\chi_{\alpha}$ and $\psi^{\alpha}$. The 
$Q_{\; \alpha}^{i}[\varphi]$ are the generators of 
infinitesimal gauge transformations 
(we write $\delta_{G}$ in order to avoid confusion
with $\delta$ used for BRST in Eqs. 
\eqref{(10)}-\eqref{(12)}, and we write
$\delta_{G}\xi^{\alpha}$ to denote a set of linearly independent
group parameters): 
\begin{equation}
\delta_{G}\varphi^{i}=Q_{\; \alpha}^{i}[\varphi]
\delta_{G}\xi^{\alpha}.
\label{(13)}
\end{equation}
Such $Q_{\; \alpha}^{i}$ are restricted from the identity 
of group-theoretical nature \cite{DW4}
\begin{equation}
Q_{\; \alpha,j}^{i}[\varphi]Q_{\; \beta}^{j}[\varphi]
-Q_{\; \beta,j}^{i}[\varphi] Q_{\; \alpha}^{j}[\varphi]
=Q_{\; \gamma}^{i}[\varphi]C_{\; \alpha \beta}^{\gamma},
\label{(14)}
\end{equation}
where $C_{\; \beta \gamma}^{\alpha}$ are the structure constants
of the Lie algebra of the gauge group.
For local theories, the $Q_{\; \alpha}^{i}$ are linear combinations
of Dirac's $\delta$ and its derivatives.
The ghost operator is discovered from Eq. \eqref{(7)}, which shows how
the functionals $P^{\alpha}$ used to fix the supplementary conditions
are varying under gauge transformations.  
The classical roots of such a statement can be understood by
remarking for example that, under a gauge transformation of the
potential, the $1$-form $A$, the functional $\Phi_{L}$ of the Lorenz
gauge changes by the amount (on denoting by
$\varepsilon$ a function of class $C^{2}$ in the gauge
transformation of the components of $A$) 
\begin{equation}
\Phi_{L}(A)-\Phi_{L}(A+\nabla \varepsilon)=-\Box \varepsilon.
\label{(15)}
\end{equation}
Similarly, the functional \eqref{(2)} for the de Donder gauge varies
under infinitesimal diffeomorphisms according to the relation
\begin{equation}
\Phi_{\mu}(h)-\Phi_{\mu}(h+L_{X}g)=
-{\tf{1}{2} }
\sum_{\nu=1}^{4}\left(g_{\mu \nu}\Box
+R_{\mu \nu}\right)X^{\nu}.
\label{(16)}
\end{equation}
The second-order differential operators in such equations
are two different realizations of one and the same concept, 
i.e., the ghost operator \cite{DW3,DW4} defined in Eq. \eqref{(8)}.
Now we can state and prove a well-known key theorem, as follows.
\vskip 0.3cm
\noindent
{\bf Theorem on the existence of a BRST-invariant action functional}. 
The action functional defined in Eq. \eqref{(9)}
is invariant under the BRST transformations defined in Eqs.
\eqref{(10)}, \eqref{(11)} and \eqref{(12)}.
\vskip 0.3cm
\noindent
{\bf Proof}. First of all let us point out that, 
by virtue of Eqs. \eqref{(10)}-\eqref{(12)}, 
the infinitesimal BRST variation of 
${\widetilde S}[\varphi,\chi,\psi]$ takes the form
\begin{eqnarray}
\delta {\widetilde S}[\varphi,\chi,\psi]&
=& S_{,i}[\varphi]Q_{\; \alpha}^{i}[\varphi]
\psi^{\alpha}\delta \lambda +{\tf{1}{2}}P_{\; ,i}^{\alpha}[\varphi]
Q_{\; \gamma}^{i}[\varphi]\omega_{\alpha \beta}
P^{\beta}[\varphi]\psi^{\gamma}\delta \lambda 
\nonumber \\
&+& { \tf{1}{2} }P^{\beta}[\varphi]\omega_{\beta \alpha}
P_{\; ,i}^{\alpha}[\varphi]Q_{\; \gamma}^{i}[\varphi]
\psi^{\gamma} \delta \lambda 
+ \omega_{\alpha \beta}P^{\beta}[\varphi]\delta \lambda
P_{\; ,i}^{\alpha}[\varphi]Q_{\; \gamma}^{i}[\varphi]\psi^{\gamma} 
\nonumber \\
&+& \chi_{\alpha}P_{\; ,ij}^{\alpha}[\varphi]Q_{\; \gamma}^{j}[\varphi]
\psi^{\gamma} \delta \lambda Q_{\; \beta}^{i}[\varphi]\psi^{\beta}
\nonumber \\ 
&+& \chi_{\alpha}P_{\; ,i}^{\alpha}[\varphi]Q_{\; \beta,j}^{i}[\varphi]
Q_{\; \gamma}^{j}[\varphi]\psi^{\gamma}
\delta \lambda \psi^{\beta} 
\nonumber \\
&-& {\tf{1}{2} }\chi_{\alpha}P_{\; ,i}^{\alpha}[\varphi]
Q_{\; \beta}^{i}[\varphi]C_{\; \gamma \delta}^{\beta}
\psi^{\gamma}\psi^{\delta}\delta \lambda.
\label{(17)}
\end{eqnarray}
From the gauge invariance of the classical action one finds that
$$
S_{,i}[\varphi]Q_{\; \gamma}^{i}[\varphi]=0,
$$
because
$$
\delta_{G}S=S_{,i}[\varphi]\delta_{G}\varphi^{i}
=S_{,i}[\varphi]Q_{\; \alpha}^{i}[\varphi] \delta_{G}\xi^{\alpha}=0.
$$
Moreover, the sum of second, third and fourth term on the
right-hand side of Eq. \eqref{(17)} vanishes as well, because
$
(\delta \lambda)\psi^{\alpha}=-\psi^{\alpha}(\delta \lambda)
$,
and exploiting the symmetric nature of
$\omega_{\alpha \beta}$. The fifth term on the right-hand side
of Eq. \eqref{(17)} reduces to
$$
-\chi_{\alpha}P_{\; ,ij}^{\alpha}[\varphi]
Q_{\; (\gamma}^{j}[\varphi]
Q_{\; \beta)}^{i}[\varphi]\psi^{[\gamma} \; \psi^{\beta]}\delta \lambda
$$
and hence vanishes as well. Last, upon using the identity 
\eqref{(14)} in order to express 
$Q_{\; \beta}^{i}[\varphi]C_{\; \gamma \delta}^{\beta}$, the sum 
of sixth and seventh term on the right-hand side of 
\eqref{(17)} is given by
$$ 
\chi_{\alpha}P_{\; ,i}^{\alpha}[\varphi]
Q_{\; \beta,j}^{i}[\varphi]Q_{\; \gamma}^{j}[\varphi]\psi^{\beta}
\psi^{\gamma} \delta \lambda 
-{\tf{1}{2} }\chi_{\alpha}P_{\; ,i}^{\alpha}[\varphi]
Q_{\; \gamma , j}^{i}[\varphi]Q_{\; \delta}^{j}[\varphi]
\psi^{\gamma}\psi^{\delta}\delta \lambda 
$$
$$
+{\tf{1}{2} }\chi_{\alpha}P_{\; ,i}^{\alpha}[\varphi]
Q_{\; \gamma,j}^{i}[\varphi]Q_{\; \beta}^{j}[\varphi]
\psi^{\beta}\psi^{\gamma}\delta \lambda 
$$
which is found to vanish after relabelling indices 
and exploiting the identity
$\psi^{\beta}\psi^{\gamma}=-\psi^{\gamma}\psi^{\beta}$. Q.E.D.

Thus, the BRST transformations involve also anticommuting fields,
and hence the BRST symmetry is not classical. Conceptually,
an important conclusion is found to emerge \cite{DW4,E}:
\vskip 0.3cm
\noindent
(1) Relativity suggests using the action principle as a 
foundation for all field theories.
\vskip 0.3cm
\noindent
(2) The gauge principle leads to a gauge-invariant action 
functional $S$.
\vskip 0.3cm
\noindent
(3) The resulting operator $S_{,ij}$ on gauge fields 
(it maps indeed gauge fields into gauge fields) is not
invertible, so that no Green function can be defined for
$S_{,ij}$. 
\vskip 0.3cm
\noindent
(4) Such a drawback is amended by adding to the action the
second term on the right-hand side of Eq. \eqref{(9)}. 
\vskip 0.3cm
\noindent
(5) The full action becomes the functional ${\widetilde S}$
of Eq. \eqref{(9)}, which is no longer gauge-invariant, but rather
BRST-invariant. This is a quantum symmetry, since it needs
anticommuting fields associated to bosonic fields.
\vskip 0.3cm
\noindent
(6) With hindsight, {\it quantum field theory turns
out to be the branch of physics 
which is needed in order to preserve the guiding role of the
action principle, and the driving force is provided by relativity,
since it puts the emphasis on constructing the action
upon completing the original gauge-invariant action by
the addition of the other terms in Eq. \eqref{(9)}}.

In quantum field theory, the anticommuting fields discussed
before are accommodated within a global approach that relies
upon functional integrals. It is therefore compelling to
build a mathematically consistent theory of such 
integrals \cite{GZ09}, which is the topic of our paper. 

\subsection{Outline of this paper}

After some preliminaries, in section 2, we introduce the Henstock-Kurzweil 
integral (HK). This integral extends those of Bochner and Pettis and is defined for 
operator-valued functions that are not separately valued (where both the Bochner and 
Pettis integrals are not defined). Intuitively, one uses a version of the Riemann 
integral (of calculus) with the interior points chosen first, while the size of the 
base rectangle around any interior point is determined by an arbitrary positive function 
defined at that point (see also Appendix). It is important to note that it integrates 
functions such as $exp\{ix^2/2\}$. This integral was discovered  by 
Henstock \cite{HS1}, \cite{HS2}, and Kurzweil \cite{KW}.  
\medskip

In Section 3 we construct the Lebesgue measure on $\R^\iy$ by using its equivalent 
space $\R_I^\iy$, which is defined in the section. This approach 
makes it possible for us to define Lebesgue 
measure on every separable Hilbert space in Section 4.
\smallskip

In Section 5 we construct a new Hilbert space $KS^2[\R_I^n]$, which contains 
$L^2[\R_I^n]$ as a continuous dense embedding, and contains both the test functions 
$\mcD[\R_I^n]$ and their dual $\mcD^*[\R_I^n]$, the Schwartz space of distributions, 
as continuous embeddings. This space allows us to construct the Feynman path integral 
in a manner that maintains its intuitive \cite{GMG} 
and computational advantages. We also extend 
this space to $KS^2[\mcH]$, where $\mcH$ is a separable Hilbert space. 
\smallskip

In Section 6 we prove that a unique universal definition of time, $\tau_h$, 
exists (which we call historical time). We use $\tau_h$ as the order parameter 
for the construction of Feynman's time ordered operator calculus, which in turn 
is used to extend the path integral to include all time dependent groups and 
semigroups with a kernel.

Section 7 is devoted to concluding remarks, and some basic concepts of
real and functional analysis are further described in the Appendix.
It has been our choice to describe modern analysis concepts which, despite
being widespead in mathematics, are often ignored in the physics-oriented
literature. We think that their potentialities deserve greater attention in 
theoretical physics.

\subsection{{\bf Preliminaries}}

\subsubsection{\bf{Extensions and Cubes}}

Before we discuss the Henstock-Kurzweil integral on $\R^n$, we need an extension 
theorem for functions defined on a domain of $\R^n$ and a result that shows that a 
domain in $\R^n$ can be written as a union of nonoverlapping closed cubes. 
(proofs of these results may be found in Evans \cite{EV} and Stein \cite{STE}, respectively.)
Let $\D$ be a bounded open connected set of $\R^n$ (a domain) with topological boundary 
$\partial{\D}$ and closure $\bar{\mathbb{D}}$. 
\begin{Def} Let $k$ be a positive integer: we say that $\partial \D$ is of class 
${\bf{C}}^k$ if, for every point $\bf x \in \partial{\D}$, there is a homeomorphism 
$\varphi$ of a neighborhood $U$ of $\bf x$ into $\R^n$ such that both $\varphi$ and 
$\varphi^{-1}$  have $k$ continuous derivatives with  
\[
\varphi \left( {\mathbb{D} \cap U} \right) \subset \left\{ {(x_1 , 
\ldots ,x_n ) \in \mathbb{R}^n :x_n  > 0} \right\}
\]
and
\[
\varphi \left( {\partial \mathbb{D} \cap U} \right) \subset \left\{ {(x_1 , 
\ldots ,x_n ) \in \mathbb{R}^n :x_n  = 0} \right\}.
\]
\end{Def}
\begin{thm}\lb{d1} Let $\D$ be a domain in $\R^n$ with $\partial \D$ of class 
${\bf{C}}^1$. Let $\mathbb{U}$ be any bounded open set such that the closure 
of $\D$ is a compact subset of $\mathbb{U}$. Then there exists a linear operator 
$\mathfrak{E}$ mapping functions on $\D$ to functions on $\R^n$ such that:
\begin{enumerate}  
\item $\mathfrak{C}$ maps ${{W}}^{1,2}({\mathbb D})$ (see below) continuously 
into ${{W}}^{1,2}({\R}^n)$.
\item  $\mathfrak{C}(f)  \left| {_\mathbb{D}}=f \right.$ (e. g., 
$\mathfrak{E}(\cdot)$ is an extension operator).
\item  $\mathfrak{E}(f)(x)=0$ for $x \in {\mathbb{U}}^c$  (e.g., 
$\mathfrak{E}(f)$ has support inside ${\mathbb{U}}$).
\end{enumerate}
\end{thm}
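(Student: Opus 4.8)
The plan is to prove this by the classical flattening-and-reflection method, following Evans \cite{EV}. First I would localize near the boundary. Since $\partial\D$ is of class $C^1$ and $\ov{\D}$ is compact, I would cover $\partial\D$ by finitely many neighborhoods $U_1,\ldots,U_N$ on each of which the defining homeomorphism $\varphi_k$ straightens the boundary, carrying $\D\cap U_k$ into the upper half-space $\{x_n>0\}$ and $\partial\D\cap U_k$ into the hyperplane $\{x_n=0\}$. I would add one interior set $U_0$ with $\ov{U_0}\ci\D$ so that $U_0,\ldots,U_N$ cover $\ov{\D}$. This reduces the construction to the model problem of extending a $W^{1,2}$ function defined on a half-ball across the flat face $\{x_n=0\}$.

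For the half-space model I would use the higher-order reflection: given $u$ defined for $x_n\ge 0$, set
\[
\bar u(x',x_n)=
\begin{cases}
u(x',x_n), & x_n\ge 0,\\
-3\,u(x',-x_n)+4\,u(x',-x_n/2), & x_n<0.
\end{cases}
\]
For $u\in C^1$ up to the boundary one checks directly that $\bar u$ and its normal derivative agree from both sides at $x_n=0$, so $\bar u\in C^1$ and in particular $\bar u\in W^{1,2}$ on the full ball, with $\|\bar u\|_{W^{1,2}}\le C\|u\|_{W^{1,2}}$ for a constant fixed by the reflection coefficients. (Plain even reflection also lands in $W^{1,2}$, but this choice keeps the derivative bookkeeping transparent.)

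Then I would globalize. Transporting each local reflection back through $\varphi_k^{-1}$, forming $\sum_{k=0}^{N}\zeta_k\,(\text{local extension})$ for a partition of unity $\{\zeta_k\}$ subordinate to $\{U_k\}$ with $\sum_k\zeta_k\equiv 1$ on $\ov{\D}$, and finally multiplying by a fixed cutoff $\eta\in C_c^\infty(\mathbb{U})$ equal to $1$ on a neighborhood of $\ov{\D}$, produces a function $\mathfrak{E}(u)$ on $\R^n$ that equals $u$ on $\D$, vanishes on $\mathbb{U}^c$, and obeys $\|\mathfrak{E}(u)\|_{W^{1,2}(\R^n)}\le C\|u\|_{W^{1,2}(\D)}$. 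Every step---reflection, pullback by the fixed charts, multiplication by fixed $\zeta_k$ and $\eta$, and summation---is linear, so $\mathfrak{E}$ is a linear operator. Because $C^1(\ov{\D})$ is dense in $W^{1,2}(\D)$ for a $C^1$ domain, the uniform estimate lets me extend $\mathfrak{E}$ from smooth functions to all of $W^{1,2}(\D)$, which yields conclusions (1)--(3).

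The main obstacle is the change of variables under the straightening maps. Because $\partial\D$ is only $C^1$, each $\varphi_k$ has merely continuous first derivatives, so its Jacobian is continuous and bounded (with bounded inverse) on the compact chart closures; this is exactly what is needed for $u\mapsto u\circ\varphi_k^{-1}$ to carry $W^{1,2}$ boundedly to $W^{1,2}$ via the chain rule, but there is no regularity to spare---the same argument at the level of $W^{2,2}$ would demand a $C^2$ boundary. The remaining work is quantitative: one must check that the finitely many chart-wise constants, together with the $\zeta_k$ and $\eta$ factors, assemble into a single constant $C$ independent of $u$, which is where the compactness of $\partial\D$ and of $\ov{\D}\ci\mathbb{U}$ enters.
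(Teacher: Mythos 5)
Your proposal is correct and coincides with the proof the paper itself relies on: the paper does not prove Theorem \ref{d1} but defers to Evans \cite{EV}, and your argument (localization by $C^1$ charts and a partition of unity, the higher-order reflection $-3u(x',-x_n)+4u(x',-x_n/2)$ on the flattened half-ball, pullback with a fixed cutoff $\eta\in C_c^\infty(\mathbb{U})$, and extension from the dense class $C^1(\overline{\mathbb{D}})$ by the uniform bound) is precisely Evans' flattening-and-reflection proof of the $W^{1,2}$ extension theorem. All the delicate points---the $C^1$ matching across $\{x_n=0\}$, the boundedness of the chart Jacobians on compact closures, and linearity surviving the density step---are handled correctly.
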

\begin{thm}\lb{d2} Let $\D$ be a domain in $\R^n$. Then $\D$ is the union of a 
sequence of closed cubes $\{{\D}_k\}$ whose sides are parallel to the coordinate 
axes and whose interiors are mutually disjoint. 
\end{thm}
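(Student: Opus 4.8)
The plan is to give the classical dyadic (Whitney-type) construction, which is exactly the argument in Stein \cite{STE}. Since $\D$ is bounded, its complement is nonempty, so $\partial\D\neq\es$ and every ``stopping level'' introduced below is finite for each interior point. First I would fix, for every integer $k\ge 0$, the tiling $\mcQ_k$ of $\R^n$ by the closed dyadic cubes of side $2^{-k}$, namely the cubes $\prod_{i=1}^{n}[m_i 2^{-k},(m_i+1)2^{-k}]$ with $m_i\in\mathbb{Z}$. The structural fact I would record at the outset is the \emph{nesting property}: any two dyadic cubes (of the same or of different generations) either have disjoint interiors or one is contained in the other; moreover each $Q\in\mcQ_k$ with $k\ge 1$ has a unique \emph{parent} $\hat Q\in\mcQ_{k-1}$ with $Q\ci\hat Q$. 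For a fixed $x\in\R^n$ the canonical nested chain $Q_0(x)\supset Q_1(x)\supset\cdots$, $Q_k(x)\in\mcQ_k$, is the one with $m_i=\lfloor 2^k x_i\rfloor$; using it removes any ambiguity coming from shared faces.

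Next I would make the selection. Put $Q\in\mcQ_0$ into the family $\mcF$ iff $Q\ci\D$, and for $k\ge 1$ put $Q\in\mcQ_k$ into $\mcF$ iff $Q\ci\D$ but its parent $\hat Q\not\ci\D$; in words, $\mcF$ consists of the \emph{maximal} dyadic cubes contained in $\D$. Since there are only countably many dyadic cubes in total, $\mcF$ is countable and I enumerate it as $\{\D_k\}$. Each $\D_k$ is a closed cube with sides parallel to the axes and $\D_k\ci\D$, so the easy inclusion $\bu_k\D_k\ci\D$ is immediate.

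For the reverse inclusion I would argue pointwise. Given $x\in\D$, openness gives $\mathrm{dist}(x,\partial\D)>0$, while $\mathrm{diam}\,Q_k(x)=\sqrt{n}\,2^{-k}\to 0$, so $Q_k(x)\ci\D$ for all sufficiently large $k$. Let $k^{*}$ be the least such $k$. If $k^{*}=0$ then $Q_0(x)\in\mcF$ contains $x$; if $k^{*}\ge 1$ then $Q_{k^{*}}(x)\ci\D$ while its parent $Q_{k^{*}-1}(x)\not\ci\D$ by minimality, so $Q_{k^{*}}(x)\in\mcF$ again contains $x$. Hence $\D\ci\bu_k\D_k$, and therefore $\bu_k\D_k=\D$.

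Finally, for the disjoint-interiors claim I would combine nesting with maximality. Suppose two distinct members $Q,Q'\in\mcF$ did not have disjoint interiors; by the nesting property one contains the other, say $Q'\subsetneq Q$ with $Q\in\mcQ_k$, $Q'\in\mcQ_{k'}$ and $k'>k$. Then the parent $\hat{Q'}\in\mcQ_{k'-1}$ satisfies $Q'\ci\hat{Q'}$; since $\hat{Q'}$ and $Q$ both contain the (nonempty) interior of $Q'$ their interiors overlap, so nesting applies, and as $k'-1\ge k$ the cube $\hat{Q'}$ is no larger than $Q$, forcing $\hat{Q'}\ci Q\ci\D$. This contradicts the selection rule, which disqualifies $Q'$ precisely when its parent lies in $\D$. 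Hence no member of $\mcF$ contains another and all their interiors are pairwise disjoint, completing the proof. The one place requiring genuine care — the step I would treat as the main obstacle — is exactly this last bookkeeping, that the parent of the smaller selected cube must nest inside the larger one; once the nesting property and the canonical chain are in place, every other step is formal.
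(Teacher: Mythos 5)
Your proof is correct. The paper itself gives no proof of Theorem \ref{d2}, deferring instead to Stein \cite{STE}, and your selection of maximal dyadic cubes (with the nesting property handling both the covering and the disjoint-interiors claims) is precisely the standard argument from that reference, so your route coincides with the paper's cited one.
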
 
\begin{rem}
Thus, if a function $f$ is defined on a domain in $\mathbb{R}^n$, by Theorem 
\rf{d1} it can be extended to the whole space. By Theorem \rf{d2}, without loss of generality, 
 we can assume that the domain is a cube with sides parallel 
to the coordinate axes. In either case, the HK-integral can be constructed under these conditions. 
\end{rem}

\subsubsection{\bf{Distributions}}

Let $\mathcal{D}({\mathbb{R}}^{n})={{\C}}_c^{\infty}({\mathbb{R}}^{n})$ denote the 
space of infinitely differentiable functions $\phi : {\mathbb{R}}^{n} \rightarrow 
\mathbb{R}$ with compact support. We say that a sequence of functions $\{ \phi_n \} 
\subset \mathcal{D}$ converges to $\phi \in \mathcal{D}$ if there is a fixed compact 
set $U$ such that all functions $\phi_n$ have their support in $U$ and, for each 
$k \ge 0$, the sequence of $k$-derivatives of $\phi_n,\; \phi_n^{(k)}$, converges 
uniformly \cite{FMS} to $\phi^{(k)}$ on $U$. We call a function $\phi$ belonging to 
$\mathcal{D}({\mathbb{R}}^{n})$ a test function.  

Let $u \in {{\C}}^{1}({\mathbb{R}}^{n})$. Then, if $\phi \in 
{{\C}}_c^{\infty}({\mathbb{R}}^{n})$, integration by parts gives:
\[
\int_{\mathbb{R}^n } {(u\phi _{y_i }) } d\lambda  = \int_{\partial 
U } {\left( {u\phi } \right)} \nu _i d{\mathbf{S}}   
- \int_{\mathbb{R}^n } {(\phi u_{y_i } )} d\lambda, \; 1 \le i \le n,
\]
where $\nu$ is the unit outward normal to $U$. Since $\phi$ vanishes 
on the boundary, we see that the above reduces to:
\[
\int_{\mathbb{R}^n } {(u\phi _{y_i } )} d\lambda =   - \int_{\mathbb{R}^n } 
{(\phi u_{y_i }) } \ d\lambda, \; 1 \le i \le n.
\]
In the general case, for any $u \in {{\C}}^{m}({\mathbb{R}}^{n})$ and any 
multi-index $\al = (\al_1,\dots, \al_n),\; \left| \alpha  \right| 
= \sum _{\al  = 1}^n \al _i = m$, we have
\[
\int_{\mathbb{R}^n } {u(D^{\al}\phi)} d\lambda 
=   (-1)^m \int_{\mathbb{R}^n } {\phi (D^{\al}u)} d\lambda.
\]
\begin{Def} If $\al$ is a multi-index and $u,\ v \in {{L}}_{loc}^1({\mathbb{R}}^{n})$,  
we say that $v$ is the $\al^{th}$-weak (or distributional) partial derivative 
of $u$ and write $D^{\al}u=v$ provided that
\[
\int_{\mathbb{R}^n } {u(D^{\al}\phi)} d\lambda =   (-1)^{\left| \alpha  \right|} 
\int_{\mathbb{R}^n } {\phi v} \ d\lambda
\]
for all functions $\phi \in {{\C}}_c^{\infty}({\mathbb{R}}^{n})$.  Thus, 
$v$ is in the dual space ${\mathcal{D}}^*(\R^n)$ of ${\mathcal{D}}(\R^n)$. 
\end{Def}
The next result is easy. 
\begin{lem} If a  weak $\al^{th}$-partial derivative exists for $u$, then it is a unique 
$\la$-a.e. (i.e., except for a set of measure zero).
\end{lem}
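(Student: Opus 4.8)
The plan is to reduce the uniqueness claim to the fundamental lemma of the calculus of variations (the du Bois-Reymond lemma). Suppose $v_1, v_2 \in {{L}}_{loc}^1(\mathbb{R}^n)$ both satisfy the defining identity for the weak $\alpha^{th}$-derivative of $u$. First I would subtract the two defining relations: since both share the same left-hand side $\int_{\mathbb{R}^n} u (D^{\al}\phi)\, d\lambda$, it cancels, and after dividing by $(-1)^{|\alpha|}$ one is left with
\[
\int_{\mathbb{R}^n} \phi (v_1 - v_2)\, d\lambda = 0
\]
for every $\phi \in {{\C}}_c^{\infty}(\mathbb{R}^n)$. Writing $w := v_1 - v_2 \in {{L}}_{loc}^1(\mathbb{R}^n)$, the whole problem collapses to the following assertion: if $\int_{\mathbb{R}^n} \phi\, w\, d\lambda = 0$ for all test functions $\phi$, then $w = 0$ $\lambda$-almost everywhere.

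To establish this vanishing I would argue locally. Fix an arbitrary bounded open ball $B$; since $\mathbb{R}^n$ is a countable union of such balls, it suffices to show $w = 0$ a.e. on each $B$, where $w$ is integrable. I would then invoke a mollifier (approximate identity) argument: let $\eta_\e$ be a standard nonnegative smooth kernel supported in the ball of radius $\e$ with $\int \eta_\e\, d\lambda = 1$, and for each interior point $x$ with $\mathrm{dist}(x, \partial B) > \e$ take $\phi(y) = \eta_\e(x - y)$, which is a legitimate test function. The hypothesis forces $(w * \eta_\e)(x) = 0$ at every such $x$, so the mollification of $w$ vanishes identically on the interior. Because $w * \eta_\e \to w$ in $L^1(B')$ as $\e \to 0$ for every $B' \Subset B$, it follows that $w = 0$ a.e. on $B'$, and letting $B' \uparrow B$ finishes the local claim.

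The main obstacle is precisely the last analytic step, namely the $L^1_{loc}$-convergence $w * \eta_\e \to w$; this is the only part with genuine content and rests on the density of continuous compactly supported functions in $L^1$ together with the continuity of translation in the $L^1$ norm, while all the rest is bookkeeping. An alternative route that sidesteps mollifiers altogether would use the Lebesgue differentiation theorem: approximating the indicator of a measurable set $E \subset B$ by test functions shows $\int_E w\, d\lambda = 0$ for every such $E$, which directly yields $w = 0$ a.e. Either way, $w = 0$ a.e. is exactly the statement $v_1 = v_2$ $\lambda$-a.e., establishing the uniqueness.
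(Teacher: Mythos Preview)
Your proof is correct and is the standard argument; the paper does not actually supply a proof of this lemma at all, merely remarking that ``the next result is easy'' and moving on. Your write-up therefore fills in exactly the routine verification the authors chose to omit, via the expected reduction to the du Bois-Reymond lemma, and either of your two suggested routes (mollifier convergence in $L^1_{loc}$ or approximation of indicators followed by Lebesgue differentiation) is entirely adequate.
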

\begin{Def} If $m \ge 0$ is fixed and $1 \le p \le \infty$, we define 
the Sobolev space ${{W}}^{m,2}({\R}^n)$ as the set of all locally summable 
functions $u:\ {\R}^n \rightarrow \R$ such that, for each multi-index $\al$ 
with $\left| \alpha  \right| \leqslant m, \ D^{\al}u$ exists in the weak 
sense and belongs to ${{L}}^{2}({\R}^n)$. At a deeper level, Sobolev spaces 
can be defined in at least two different ways \cite{ACM}. On the one hand, 
once an open domain $\Omega$ of $\R^{n}$ is given and $p \in [1,\infty)$ 
is fixed, we can start from the space $C^{1}(\bar{\Omega};\R)$ which is the 
subset of $C^{1}(\Omega;\R)$ consisting of functions $u$ such that both $u$
and its gradient admit a continuous extension to the closure of $\Omega$.
We then consider the subspace of regular functions $u \in C^{1}({\bar \Omega};\R)$ 
such that the norm
\begin{equation}
\left \| u \right \|_{W^{1,p}(\Omega;\R)}
=\left[\left( \left \| u \right \|_{L^{p}(\Omega;\R)}\right)^{p}
+\left( \left \| \nabla u \right \|_{L^{p}(\Omega;\R)}\right)^{p}\right]^{\frac{1}{p}}
\end{equation}
is finite. The Sobolev space $H^{1,p}(\Omega;\R)$ is, by definition, the
completion of such a subspace of $C^{1}(\bar{\Omega};\R)$ with respect to 
the $W^{1,p}$ norm. Alternatively, for $p \in [1,\infty]$, the Sobolev space
$W^{1,p}(\Omega;\R)$ is the subset of $L^{p}(\Omega;\R)$ whose elements are 
weakly differentiable with corresponding derivatives also belonging to
$L^{p}(\Omega;\R)$.
\end{Def}
\begin{Def} If $\D$ is a domain in $\R^n$, we define ${{W}}_0^{m,2}({\mathbb D})$ 
as the closure of ${{\C}}_c^{\infty}({\D})$ in ${{W}}^{m,2}({\mathbb D})$. 
\end{Def}
\begin{rem}Thus, ${{W}}_0^{m,2}({\mathbb D})$ contains the functions 
$u \in {{W}}^{m,2}({\mathbb D})$ such that, for all $\lt|\al\rt| \le m-1, 
\; D^{\al}u =0$ on the boundary of $\D, \; \partial{\D}$.

We further note that it is also standard to use $H^m(\D)= {{W}}^{m,2}({\mathbb D})$ 
and $H_0^m(\D) ={{W}}_0^{m,2}({\mathbb D})$.
\end{rem}

\section{{The Henstock-Kurzweil Integral}}

In this section we begin with an elementary introduction to integrals 
for functions defined on an interval based on Riemann, Darboux, 
Kurzweil and Henstock. We then discuss the Henstock-Kurzweil  
(HK) integral for functions defined on $\R^n$ and for operator-valued 
functions (see \cite{HS1} and \cite{RUD}).  

\subsection{Riemann's integral}

Riemann provided the first rigorous definition of an integral for a real-valued function in 1868. 
\begin{Def}{\tx{\bf(Riemann)}} Let $f$ be a bounded real-valued function 
defined in the interval $-\iy<a<b< \iy$. For each partition 
$P=\{a=x_0 <x_1 \dots <x_n=b \}$ and each choice of $t_j \in [x_{j-1}, x_j]$, 
define the corresponding Riemann sum by
\[
S(f,P,\{ {t_j}\} ) = \sum\nolimits_{j = 1}^n {f({t_j})\Delta {x_j}}, 
\]  
where ${\Delta {x_j}}=x_j-x_{j-1}$. Define the norm of $P$ by  by 
\[
\left\| P \right\| = \max \left\{ {\Delta {x_j},\;1 \leqslant j \leqslant n} \right\}.
\]
We say that $f$ is Riemann integrable over $[a, b]$ if there exists a number 
$I$ such that, for each $\e>0$, there exists a $\de>0$ such that, 
whenever $\left\| P \right\|< \de$, we have
\[
\left| {I - S(f,P,\{ {t_j}\} )} \right| < \varepsilon. 
\] 
In this case, we write:
\[
I = R\int_a^b {f(x)dx}. 
\]
\end{Def}
\begin{rem}
The Riemann integral is used for many applications and numerical approximation purposes.  
We can also allow $f$ to be complex-valued on $[a, b]$.  
\end{rem}

\subsection{Darboux integral}

When $f$ is a real-valued function on $[a,b]$, an integral equivalent to 
the Riemann integral was defined by Darboux, which is often discussed in elementary calculus.  
\begin{Def} Let 
\[
a=x_0 < x_1 < x_2 < \dots < x_{n-1}< x_n =b
\]
be a  partition $P$ of $[a,b]$ and, set  $\Delta {x_j} = {x_j} - {x_{j - 1}}$. Define
\[
{M_j} = \mathop {\sup }\limits_{{x_{j - 1}} \leqslant t \leqslant 
{x_j}} \left\{ {f(t)} \right\} \quad {\tx{and}}\quad {m_j} = \mathop 
{\inf }\limits_{{x_{j - 1}} \leqslant t \leqslant {x_j}} \left\{ {f(t)} \right\}
\]
and define the upper and lower Darboux sums by:
\[
{I^u} = \sum\nolimits_{j = 1}^n {{M_j}} \Delta {x_j}\quad {\text{and}}\quad 
{I_l} = \sum\nolimits_{j = 1}^n {{m_j}} \Delta {x_j}.
\]
If the following limits exist
\[
\mathop {\lim }\limits_{\left\| P \right\| \to 0} {I^u} = \mathop 
{\lim }\limits_{\left\| P \right\| \to 0} {I_l}.
\]
We say that the Darboux integral exists for the function $f(x)$ on $[a,b]$ and write:
\[
I = D\int_a^b {f(x)dx}=\mathop {\lim }\limits_{\left\| P \right\| \to 0} {I^u}.
\] 
\end{Def}
The most important result of elementary calculus is the Fundamental 
Theorem of Calculus, which relates differentiation to integration.
\begin{thm}{\rm{(Fundamental Theorem of Calculus, Riemann)}}Suppose the 
derivative $F'(x)=f(x)$ is Riemann integrable on $[a, b]$,  then  
$\int_a^x {f(t)dt} =F(x)-F(a)$, for all $x \in [a, b]$.
\end{thm}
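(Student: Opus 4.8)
The plan is to prove the identity by recognizing $F(x)-F(a)$ as a Riemann sum of $f$ for a cleverly tagged partition, and then to invoke the hypothesized Riemann integrability of $f$ to pass to the limit. Fix $x \in [a,b]$ and let $P=\{a=x_0<x_1<\dots<x_n=x\}$ be an arbitrary partition of $[a,x]$. First I would write $F(x)-F(a)$ as the telescoping sum $\sum_{j=1}^n [F(x_j)-F(x_{j-1})]$, which holds for every partition regardless of its mesh.

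The key step is to apply the Mean Value Theorem on each subinterval $[x_{j-1},x_j]$: since $F$ is differentiable with $F'=f$, there exists a tag $t_j \in (x_{j-1},x_j)$ with $F(x_j)-F(x_{j-1}) = f(t_j)\,\Delta x_j$. Summing over $j$ then gives $F(x)-F(a)=\sum_{j=1}^n f(t_j)\,\Delta x_j = S(f,P,\{t_j\})$, so the constant $F(x)-F(a)$ equals a genuine Riemann sum of $f$ for the partition $P$ with these particular tags.

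To finish I would let $\|P\|\to 0$. By hypothesis $f$ is Riemann integrable on $[a,x]$, so for every $\e>0$ there is a $\de>0$ such that every partition of mesh less than $\de$, for \emph{every} choice of tags, produces a Riemann sum within $\e$ of $\int_a^x f(t)\,dt$. In particular this applies to our Mean-Value tags. Since the left-hand side $F(x)-F(a)$ is independent of $P$, letting the mesh shrink to zero forces $F(x)-F(a)=\int_a^x f(t)\,dt$, as claimed.

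The main obstacle---really the only delicate point---is the interplay between the freedom in choosing tags and the integrability hypothesis. The Mean Value Theorem hands us specific tags that we cannot control, so without knowing in advance that $f$ is integrable we could not guarantee that these sums converge to anything. It is precisely the assumed Riemann integrability of $f=F'$ that rescues the argument: integrability means \emph{all} tag choices yield the same limit, and because our tagged sum is identically the constant $F(x)-F(a)$, that common limit must be $F(x)-F(a)$.
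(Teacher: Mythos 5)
Your proof is correct, and it is the standard classical argument: telescoping $F(x)-F(a)$ over a partition, using the Mean Value Theorem to realize each increment as $f(t_j)\,\Delta x_j$ for some tag $t_j$, and then invoking the hypothesis that \emph{every} tagged Riemann sum of mesh less than $\delta$ lies within $\varepsilon$ of the integral. The paper itself states this theorem without proof, as classical background in its survey of the Riemann, Darboux, Lebesgue and Henstock--Kurzweil integrals, so there is no competing argument to compare against; yours is the proof one would insert. The only point worth making explicit is that the hypothesis gives integrability of $f$ on $[a,b]$, while your limiting argument uses integrability on the subinterval $[a,x]$; this follows from the standard fact that a Riemann integrable function is Riemann integrable on every closed subinterval, and is worth a one-line remark.
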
 
The work of Baire, Borel, Cantor and others showed that the concepts of 
length and area were, like that of number, more delicate than expected, 
and required a deeper investigation. These issues have led to a closer 
study of differentiation and integration. In the early 1900s, Lebesgue 
weakened the conditions for the Fundamental Theorem. This led to the 
Lebesgue integral and the following:  
\begin{thm}{\rm{(Fundamental Theorem of Calculus, Lebesgue)}}Suppose 
the derivative $F'(x)=f(x)$ exists on $[a, b]$ and $F'(x)$ is bounded,  
then the Lebesgue integral exists and $L\int_a^x {f(t)dt} =F(x)-F(a)$, for all $x \in [a, b]$.
\end{thm}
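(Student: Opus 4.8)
The plan is to realize $f=F'$ as a pointwise limit of difference quotients and then to exploit the boundedness hypothesis to interchange limit and integral by means of the Bounded Convergence Theorem. First I would extend $F$ to $[a,b+1]$ by setting $F(t)=F(b)$ for $t>b$, so that the forward difference quotients are defined throughout $[a,b]$, and then define for each integer $n\ge 1$ the function $g_n(t)=n[F(t+1/n)-F(t)]$. Since $F$ is differentiable on $[a,b]$ it is continuous there, so each $g_n$ is continuous and hence Lebesgue measurable; as $g_n(t)\to F'(t)=f(t)$ pointwise, the limit $f$ is measurable as well.

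Next I would establish a uniform bound on the $g_n$. By the Mean Value Theorem, for each $t$ there is a point $\xi\in(t,t+1/n)$ with $g_n(t)=F'(\xi)$, so the hypothesis $|F'|\le M$ gives $|g_n(t)|\le M$ for all $n$ and all $t$, and letting $n\to\infty$ also yields $|f(t)|\le M$. A bounded measurable function on the finite interval $[a,b]$ is Lebesgue integrable, which already proves the first assertion. The Bounded Convergence Theorem, applied with the constant dominating function $M$, then permits passing to the limit inside the integral:
\[
L\!\int_a^x f(t)\,dt=\lim_{n\to\infty}L\!\int_a^x g_n(t)\,dt.
\]

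To evaluate the right-hand side I would use translation invariance of Lebesgue measure. Substituting $s=t+1/n$ in the first half of $g_n$ and cancelling the common middle portion yields, for $n$ large enough that $a+1/n\le x$,
\[
L\!\int_a^x g_n(t)\,dt=n\!\int_x^{x+1/n}F(s)\,ds-n\!\int_a^{a+1/n}F(s)\,ds.
\]
Because $F$ is continuous, each of these averaged integrals converges to the value of $F$ at its left endpoint, so that $n\int_x^{x+1/n}F\to F(x)$ and $n\int_a^{a+1/n}F\to F(a)$. Combining these limits gives $L\int_a^x f(t)\,dt=F(x)-F(a)$, as required.

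The main obstacle is the justification of passing the limit inside the integral, and this is precisely where the two hypotheses earn their keep. Derivatives need not be Riemann integrable, so the Riemann version of the theorem must assume integrability outright; here, by contrast, everywhere differentiability together with the boundedness of $F'$ forces Lebesgue integrability of $f$ and simultaneously supplies the uniform bound that makes the Bounded Convergence Theorem applicable. The only remaining point requiring care is the harmless extension of $F$ past $b$ (or, symmetrically, the use of backward difference quotients near $a$), needed only so that every $g_n$ is defined on all of $[a,b]$; the behaviour at the single endpoint is negligible since it affects a set of measure zero.
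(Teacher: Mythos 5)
Your proof is correct; note, however, that the paper itself states this result without proof, as part of its expository review of integration theories leading up to the Henstock--Kurzweil integral, so there is no internal argument to compare against. Your route---writing $f$ as the pointwise limit of the difference quotients $g_n(t)=n[F(t+1/n)-F(t)]$, bounding $|g_n|\le M$ via the Mean Value Theorem, invoking the Bounded Convergence Theorem, and then evaluating $\lim_n \int_a^x g_n$ by translation invariance and continuity of $F$---is precisely the classical textbook proof (it appears, for instance, in Rudin). One small imprecision deserves a repair: after you extend $F$ by the constant value $F(b)$ on $(b,b+1]$, the extended function need not be differentiable at $b$, so for $t\in(b-1/n,b]$ you cannot literally apply the Mean Value Theorem on $(t,t+1/n)$; the uniform bound survives anyway, since for such $t$ you have $g_n(t)=n[F(b)-F(t)]$, and the Mean Value Theorem on $[t,b]$ (where $F$ is genuinely differentiable) gives $|g_n(t)|\le nM(b-t)\le M$. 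Similarly $g_n(b)=0$ need not converge to $f(b)$, but, as you yourself observe, a single point has measure zero and affects neither the Bounded Convergence Theorem nor the value of the integral. With that one-line patch the argument is complete.
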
 
One is naturally led to ask the following question: Is it
possible to define an integration process (?) for which the following holds.
\begin{thm}
If $F$ is differentiable in $[a,b]$, then $F'$ is
(?)-integrable in $[a,b]$ and the Fundamental Theorem is valid.
\end{thm}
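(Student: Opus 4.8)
The plan is to identify the sought integration process as the Henstock-Kurzweil (gauge) integral and to prove that differentiability of $F$ on $[a,b]$ forces $F'$ to be HK-integrable there with $HK\int_a^x F'(t)\,dt = F(x)-F(a)$ for every $x$. Recall that the HK framework replaces the single mesh bound $\|P\|<\de$ of the Riemann definition by a \emph{gauge}, i.e. a strictly positive function $\de:[a,b]\ra(0,\iy)$, and declares a tagged partition $\{([x_{j-1},x_j],t_j)\}$ to be $\de$-fine when $t_j\in[x_{j-1},x_j]$ and $[x_{j-1},x_j]\ci(t_j-\de(t_j),\,t_j+\de(t_j))$. Before anything else I would record that $\de$-fine tagged partitions exist for every gauge on a compact interval; this is Cousin's Lemma, established by a bisection and compactness argument, and it is precisely what keeps the HK definition from being vacuous.

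The heart of the proof is to build the correct gauge directly out of pointwise differentiability. Fixing $\e>0$, for each $t\in[a,b]$ the differentiability of $F$ at $t$ supplies a $\de(t)>0$ with $|F(u)-F(t)-F'(t)(u-t)|\le\tf{\e}{b-a}\,|u-t|$ whenever $u\in[a,b]$ and $|u-t|<\de(t)$; this assignment defines the gauge $\de$. For any $\de$-fine tagged partition the tag $t_j$ lies inside $[x_{j-1},x_j]$ and both endpoints are within $\de(t_j)$ of $t_j$, so applying the inequality at $u=x_j$ and at $u=x_{j-1}$ and subtracting yields the \textbf{straddle estimate}
\[
\bigl|F(x_j)-F(x_{j-1})-F'(t_j)(x_j-x_{j-1})\bigr|\le\tf{\e}{b-a}\,(x_j-x_{j-1}),
\]
where the cancellation $|x_j-t_j|+|t_j-x_{j-1}|=x_j-x_{j-1}$ uses exactly that the tag sits between the two endpoints.

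Finally I would telescope: since $F(b)-F(a)=\sum_{j}\bigl(F(x_j)-F(x_{j-1})\bigr)$, summing the straddle estimate over the whole partition gives $\bigl|S(F',P,\{t_j\})-(F(b)-F(a))\bigr|\le\e$, which is exactly the defining condition for $HK\int_a^b F'=F(b)-F(a)$; running the same argument on $[a,x]$ delivers the stated Fundamental Theorem for each $x$. The main obstacle is conceptual rather than computational: one must exploit that the gauge is allowed to vary from point to point, so that even a wildly unbounded derivative can be absorbed by shrinking $\de(t)$ near the troublesome tags. This flexibility is exactly what is missing from the Riemann and Lebesgue versions quoted above, and it is what lets us drop the boundedness hypothesis on $F'$. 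The one genuine lemma to secure in advance is Cousin's Lemma; once the existence of $\de$-fine partitions is in hand, the straddle estimate and the telescoping are routine.
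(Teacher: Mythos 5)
Your proof is correct. It is worth noting that the paper never actually proves this theorem: the statement is posed there as a question (``Is it possible to define an integration process (?) for which the following holds''), and the paper's only answer is the remark that the integrals of Denjoy, Perron and Henstock--Kurzweil accomplish the task; the HK case is restated later, in the corollary on functions differentiable nearly everywhere, again without proof and with a pointer to the literature. What you supply is the standard, self-contained straddle-lemma argument: the gauge $\delta(t)$ is manufactured pointwise from the definition of $F'(t)$, the straddle estimate uses precisely that each tag lies between the endpoints of its subinterval, and telescoping yields the bound $\varepsilon$ for every $\delta$-fine partition, which together with Cousin's Lemma is exactly HK-integrability of $F'$ with value $F(b)-F(a)$. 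Two remarks. First, you prove the estimate for \emph{all} $\delta$-fine partitions, which is the correct quantifier; the paper's own definition of the one-dimensional HK integral literally asks only that \emph{some} HK-$\delta$ partition satisfy the bound, a formulation too weak as written, so your argument silently repairs it. Second, be aware that the paper's later corollary is genuinely stronger ($F$ continuous and differentiable except at countably many points); your gauge construction does not cover that case without an extra device (assigning gauges at the exceptional points via continuity, with a geometric series of tolerances), but that stronger claim is not the statement you were asked to prove.
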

Three integration processes exist that
accomplish this task, and they were developed by
Denjoy, Perron and later by Kurzweil and Henstock (HK) (discussed below). 
The HK approach is closest to the Riemann integration process, while 
retaining all the advantages of Lebesgue's theory. Teaching in basic courses 
still focuses (mainly) on the Riemann and Lebesgue theories. However, the 
concept Lebesgue measure is very useful in analysis and geometry.  In addition, it can be easily 
extended to more general settings, whereas the corresponding idea 
associated with the HK integral does not extend as well \cite{GO}. 

\subsection{The Henstock-Kurzweil Integral}

While studying a generalized approach to ordinary differential equations 
Kurzweil noticed that, if the base intervals were chosen to be small where 
the function was steep, and large where it was flat, this was sufficient 
to extend the Riemann process to obtain a stronger integral than that of 
Lebesgue (see \cite{KW1}). Independently, Henstock discovered the 
same approach and provided the first systematic study of the subject (see  \cite{HS1}).

\subsubsection{\bf{One-dimensional HK-integral}}  

Let $\de(t)>0$ be a function defined on the compact interval $[a,b]$ and let 
$t_1<t_2 \cdots< t_n$ be points in the open interval (a, b). The base 
intervals $[x_{j-1},x_{j}]$ used to approximate the integral are chosen 
such that, $\left[ {{x_{j - 1}},\;{x_j}} \right] \subseteq \left( {{t_j} 
- \delta ({t_j}),\;{t_j} + \delta ({t_j})} \right)$. We call each 
$(t_j, [x_{j-1}, x_j])$ a tagged interval subordinate to $\de$ and the collection:
\[
\mcP = \left\{ {\left( {{t_j},\;\left[ {{x_{j - 1}},\;{x_j}} \right]} \right),
\;1 \leqslant j \leqslant n} \right\}
\]
a HK-$\de$ partition of $[a, b]$, if $[a,b] =  \cup _{j = 1}^n
\left[ {{x_{j - 1}},\;{x_j}} \right]$.
\begin{Def} We say that a function $f(t)$ defined on $[a,b]$ is HK integrable with value $I$,
if for each $\e>0$, there exists a function $\de(t)$ and a  HK-$\de$ partition of $[a, b]$ such that:
\[\left| {I - \sum\nolimits_{j = 1}^n {f({t_j})\Delta {x_j}} } \right| < \varepsilon \]
and we write $I= HK\int_a^b {f(x)dx}$.
\end{Def}
The following theorem provides additional information.
\begin{thm} Let $f(t)\,:\, [a,b] \to \mathbb{R}$.
\begin{enumerate}
\item If $f(t)$ is Lebesgue integrable in $[a,b]$, then it is HK-integrable in $[a,b]$ and 
HK$\int_a^b {f(t)dt} =$L$\int_a^b {f(t)dt}$.
\item If $f(t)$ is HK-integrable in $[a,b]$, then $\left| {\int_a^b {f(x)dx} } \right| < \infty$. 
\item If $f(t)$ is HK-integrable and nonnegative in $[a,b]$, then 
it is Lebesgue integrable in $[a,b]$. 
\item If $f(t)$ is HK-integrable on every measurable subset of $[a,b]$, 
then it is Lebesgue integrable in $[a,b]$. 
\end{enumerate}
\end{thm}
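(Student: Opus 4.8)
The plan is to settle (2) immediately, then prove (1) via an approximation by simple functions, and finally deduce (3) from the differentiation theory of the indefinite integral and (4) from (3). Statement (2) is a restatement of the definition: HK-integrability of $f$ asserts the existence of a finite real number $I$ meeting the gauge estimate, so the value $\int_a^b f = I$ satisfies $|I| < \iy$ with nothing further to prove.

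For (1) I would first reduce to $f \ge 0$ by splitting $f = f^+ - f^-$ and using linearity of the HK-integral, so it suffices to show that a nonnegative Lebesgue-integrable $f$ is HK-integrable with the two integrals agreeing. The crux is the \emph{single-set lemma}: for measurable $E \ci [a,b]$ the indicator $\chi_E$ is HK-integrable with $\int_a^b \chi_E\,dx = \la(E)$. To prove it I would fix $\e > 0$, use regularity of Lebesgue measure to pick a compact $K$ and an open $U$ with $K \ci E \ci U$ and $\la(U \setminus K) < \e$, and then build the gauge $\de$ so that each tag $t \in K$ forces its base interval inside $U$ while each tag $t \notin E$ forces its base interval to avoid $K$; a one-line estimate then controls $\bigl|\sum_j \chi_E(t_j)\De x_j - \la(E)\bigr|$ by $\la(U \setminus K) < \e$. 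Granting this, linearity gives the identity for nonnegative simple functions, and a monotone convergence theorem for the HK-integral lifts it to a general $f \ge 0$ along an increasing sequence $s_n \uparrow f$ with $\int s_n \ra \int f$ (a sandwich via the Vitali--Carath\'eodory theorem would serve equally well).

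For (3) I would pass to the HK indefinite integral $F(x) = \int_a^x f$, which exists by (2) and, since $f \ge 0$, is nondecreasing and hence continuous of bounded variation. The Saks--Henstock lemma yields $F' = f$ a.e., while Lebesgue's differentiation theorem for monotone functions gives $\int_a^b F'\,d\la \le F(b) - F(a) = I$. Since $f \ge 0$ is measurable (HK-integrability implies measurability) its Lebesgue integral exists in $[0,\iy]$, and the displayed bound forces it to be finite; thus $f$ is Lebesgue integrable, and by (1) the two integrals coincide. Statement (4) then follows: measurability makes $E^+ = \{f \ge 0\}$ and $E^- = \{f < 0\}$ measurable, the hypothesis supplies HK-integrability of $f^+ = f\chi_{E^+}$ and of $f^- = -f\chi_{E^-}$, both nonnegative, so (3) renders $f^+$ and $f^-$ Lebesgue integrable, whence $f$ and $|f| = f^+ + f^-$ are as well.

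The main obstacle is the gauge construction in the single-set lemma behind (1), together with the Saks--Henstock lemma that drives both the monotone-convergence step of (1) and the a.e.\ identity $F' = f$ in (3); these are the genuinely HK-specific ingredients, while the remaining reductions (the $f^\pm$ splitting, the appeals to measure regularity and to Lebesgue's theorem on differentiation of monotone functions) are routine once those two tools are available.
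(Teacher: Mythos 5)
The paper itself gives no proof of this theorem: it is stated as known background from the Henstock--Kurzweil literature (the surrounding references are Henstock, Kurzweil, Gordon and Tuo-Yeong), followed only by the remark interpreting item (4). So there is no internal argument to compare yours against; judged on its own, your proof is correct and follows the standard route. It is worth pointing out, though, that your key ``single-set lemma'' is exactly the device the paper does deploy later when it proves the operator-valued analogue of item (1), namely the theorem asserting that Bochner integrability implies HK-integrability with equality of the two integrals, Eq. \eqref{(19)}: there the authors start from $A(t)=A\chi_E(t)$, pick a compact $D\subseteq E$ and an open $F\supseteq E$ with $\lambda(F\setminus D)$ small, and define the gauge by $\delta(t)=d(t,[a,b]\setminus F)$ for $t\in E$ and $\delta(t)=d(t,D)$ for $t\notin E$, which is your construction verbatim. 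Your treatment of (2) (pure definition-chasing), (3) (continuity and monotonicity of the indefinite HK integral, $F'=f$ a.e.\ via Saks--Henstock, then Lebesgue's estimate $\int_a^b F'\,d\lambda \le F(b)-F(a)$ for monotone $F$), and (4) (the $f^{\pm}=f\chi_{E^{\pm}}$ splitting reduced to (3)) are the standard arguments and are sound, granted the standard HK facts you invoke explicitly (measurability of HK-integrable functions, integrability over subintervals, and a monotone convergence theorem for the HK integral -- none of which is circular here).

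One correction to the gauge construction as you phrased it: you constrain only tags $t\in K$ to have base intervals inside $U$, which leaves tags in $E\setminus K$ uncontrolled and breaks the upper estimate on $\sum_j \chi_E(t_j)\Delta x_j$. The gauge must constrain \emph{every} tag of $E$: set $\delta(t)=d(t,[a,b]\setminus U)$ for all $t\in E$ (positive because $E\subseteq U$ and $U$ is open) and $\delta(t)=d(t,K)$ for $t\notin E$. Then every interval tagged in $E$ lies in $U$, so the sum is at most $\lambda(U)$, while every interval tagged outside $E$ misses $K$, so the intervals tagged in $E$ cover $K$ and the sum is at least $\lambda(K)$; since $\lambda(K)\le\lambda(E)\le\lambda(U)$, the error is at most $\lambda(U\setminus K)<\varepsilon$. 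This is precisely the gauge used in the paper's proof of the operator-valued theorem, so the fix aligns your argument with the one place where the paper actually carries out this computation.
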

\begin{rem}Item (4) for the HK-integral is equivalent to the statement that, 
if every rearrangement of a series converges, it is an absolutely convergent series.
\end{rem}
A function has a property (n.e.) if it has the property except for a countable number of points.
\begin{cor} Let $F\,:\,[a,b] \to \mathbb{R}$
be continuous. If $F$ is differentiable {\rm {(n.e.)}} on $[a,b]$, then 
$F'$ is HK-integrable and HK$\int_a^t {F'(s)ds}  = F(t)-F(a)$ for each $t \in [a,b]$.
\end{cor}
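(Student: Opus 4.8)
The plan is to prove this directly from the gauge (HK-$\de$ partition) definition, since the whole point of the statement is that $F'$ need not be Lebesgue integrable, so one cannot route through the Lebesgue theory or through part (1) of the preceding theorem. I would fix $t \in [a,b]$, work on the interval $[a,t]$, and construct a single gauge $\de(\cdot)>0$ so that every HK-$\de$ partition produces a Riemann sum for $F'$ lying within a controllable multiple of $\e$ of the telescoping difference $F(t) - F(a) = \sum_j (F(x_j) - F(x_{j-1}))$. The entire argument then reduces to comparing, subinterval by subinterval, the two quantities $F'(\tau_j)\,\De x_j$ and $F(x_j) - F(x_{j-1})$.

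First I would build the gauge separately on the two kinds of tags. Let $\{c_k\}_{k \ge 1}$ enumerate the countable set of exceptional points where $F$ fails to be differentiable, and set $F'(c_k) := 0$ arbitrarily. At a point $\tau$ where $F'(\tau)$ exists, the definition of the derivative in its straddled form yields a $\de(\tau)>0$ such that $|F(v) - F(u) - F'(\tau)(v-u)| \le \e(v-u)$ whenever $u \le \tau \le v$ and $[u,v] \subseteq (\tau - \de(\tau),\, \tau + \de(\tau))$; this \emph{Straddle Lemma} is the engine that controls the differentiable tags. At each exceptional point $c_k$ I would instead use only continuity of $F$, choosing $\de(c_k)>0$ small enough that $|F(v)-F(u)| < \e/2^k$ whenever $u \le c_k \le v$ and $[u,v] \subseteq (c_k - \de(c_k),\, c_k + \de(c_k))$.

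Next, for an arbitrary HK-$\de$ partition $\mcP = \{(\tau_j, [x_{j-1}, x_j])\}$ of $[a,t]$, I would split the index set into the differentiable tags $D$ and the exceptional tags $E$ and estimate
\[
\left| \sum_{j} F'(\tau_j)\,\De x_j - (F(t) - F(a)) \right| \le \sum_{j \in D} \left| F'(\tau_j)\,\De x_j - (F(x_j) - F(x_{j-1})) \right| + \sum_{j \in E} \left| F(x_j) - F(x_{j-1}) \right|,
\]
where the telescoping identity has been used and $F'(\tau_j) = 0$ on $E$. The Straddle Lemma bounds the first sum by $\e \sum_{j \in D} \De x_j \le \e(t-a)$. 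For the second sum I would use that the tags $\tau_1 < \cdots < \tau_n$ are distinct, so each $c_k$ tags at most one subinterval; hence that sum is dominated by $\sum_k \e/2^k = \e$, giving a total bound $\e(t - a + 1)$. Since $\e>0$ is arbitrary, the definition of HK-integrability is met with value $I = F(t) - F(a)$.

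I expect the main obstacle to be the treatment of the exceptional set rather than the differentiable points. The Straddle Lemma disposes of the latter mechanically, but the exceptional tags carry error terms that are not individually small relative to $\De x_j$, since $F'$ is meaningless there; the only available leverage is continuity, and the device that makes it work is the geometric budget $\e/2^k$ combined with the observation that distinct tags prevent any single $c_k$ from being charged more than a bounded number of times in one partition. I would also invoke Cousin's lemma (existence of at least one $\de$-fine partition) so that the estimate is non-vacuous and $I$ is well defined, and finally note that taking $t = b$ recovers the full Fundamental Theorem while letting $t$ vary yields the stated pointwise identity on all of $[a,b]$.
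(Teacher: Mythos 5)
Your proposal is correct, but there is nothing in the paper to compare it against: the corollary is stated there without proof, as a known result from the HK literature (the implicit reference being Gordon \cite{GO}), and the paper immediately moves on to the counterexample showing the statement fails for the Lebesgue integral. Your argument is the standard and complete one: the Straddle Lemma handles the tags where $F'$ exists, giving the error $\varepsilon\,\Delta x_j$ per subinterval, while at the countably many exceptional points $c_k$ you set $F'(c_k)=0$, use only continuity of $F$, and assign the geometric budget $\varepsilon/2^k$; the telescoping identity $F(t)-F(a)=\sum_j\bigl(F(x_j)-F(x_{j-1})\bigr)$ then reduces everything to a term-by-term comparison, and Cousin's Lemma guarantees the estimate is not vacuous. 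One small remark: your claim that each $c_k$ can tag at most one subinterval is exactly right under this paper's definition of an HK-$\delta$ partition, where the tags $t_1<t_2<\cdots<t_n$ are required to be strictly increasing; under the more common convention, in which a tag may be an endpoint shared by two adjacent subintervals, each $c_k$ can be charged twice, so the exceptional sum is bounded by $2\varepsilon$ rather than $\varepsilon$ --- a harmless factor that your closing observation about a ``bounded number of times'' already anticipates. Your proof would serve as a legitimate self-contained justification of the corollary, which the paper itself does not supply.
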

This last result shows the sense in which the HK-integral is the 
reverse of the derivative. This result is not true for Lebesgue integrals.  
The standard counterexample \cite{GO} 
is $F'(t)=2t sin(\pi/t^{2})-{(2\pi/t})cos(\pi/t^{2})$ 
for $t$ irrational in $(0,1)$ and equal to $0$ for $t$ rational in $(0,1)$.  
(It can be seen that $F(t)=t^2 sin(\pi/t^{2})$.)

\subsubsection{\bf{The $n$-dimensional HK-integral}} 

There are a number of ways to approach the 
HK-integral for $\R^n$. We follow the approach of 
Lee Tuo-Yeong  (see \cite{TY} and \cite{TY1}).  
All norms are equivalent on $\R^n$, however, for the  HK-integral the maximal 
norm $\left\| {\mathbf{x}} \right\| = \mathop {\max }\limits_{1 \leqslant 
k \leqslant n} \left| {x_k } \right|$, is natural. With this norm, the open ball 
$B'({\mathbf{x}},r)$, is a cube centered at ${\bf{x}}$ with sides parallel to the 
coordinate axis of length $2r$. (open interval when $n=1$.) If the open 
interval for side $i$ about $x_i$ is $(a_i, b_i)$, we can represent  
$B'({\mathbf{x}},r)=(J', {\bf x})$, where $J'=\prod _{i = 1}^n (a_i ,b_i)$. 
(For a closed ball $B({\mathbf{x}},r)$ about ${\mathbf{x}}$, we can represent 
it as $B({\mathbf{x}},r)=(J, {\bf x})$, with $J=\prod _{i = 1}^n [a_i ,b_i]$.)  
Let $\la_n[\,\cdot \,]$ be Lebesgue measure on $\R^n$.
\begin{Def} If $E$ is a compact ball in $\R^n$, a partition $\mcP$, of $E$ is 
a collection $\left\{ {\left( {J_i ,{\mathbf{x}}_i } \right):{\mathbf{x}}_i  
\in J_i ,\;1 \leqslant i \leqslant m} \right\}$, where $J_1, J_2 \dots J_m$ 
are non-overlapping intervals (i.e, $\lambda _n \left[ {J_i  \cap J_j} \right]  = 0,i \ne j$
) and  ${\bigcup\nolimits_{i = 1}^m {J_i } }= E$.   
\end{Def}
\begin{Def}If $\de$ is a positive function on $E$, we say that $\mcP$ is a  
HK-${\de}$ partition for $E$ if for each $i, \;  
J_i  \subset B' \left( {{\mathbf{x}}_i ,\delta ({\mathbf{x}}_i )} \right)$. 
\end{Def} 
\begin{rem}
The function $\de$ is also called a gauge\footnote{This has nothing to do
with the word gauge used for physical theories of fundamental interactions.} 
on $E$ and a HK-${\de}$ partition is known as a $\de$-fine partition of $E$.
\end{rem}
\begin{lem} {\rm Cousin's Lemma} If $\de( \cdot)$ is a positive function 
in $E$, then a HK-${\de}$ partition exists for $E$. 
\end{lem}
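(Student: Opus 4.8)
The plan is to argue by contradiction using repeated bisection together with the completeness of $\R^n$ (the nested-cube property); this is the multidimensional analogue of the classical one-dimensional proof of Cousin's Lemma, and the maximal-norm convention makes it run cleanly. Since $E$ is a compact ball in the maximal norm, it is a closed cube $\prod_{i=1}^{n}[a_i,b_i]$ with all side lengths equal; write $\ell$ for that common side length, so that the diameter of $E$ in the maximal norm is $\ell$.

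First I would set up the bisection dichotomy. Suppose, for contradiction, that $E$ admits no HK-$\de$ partition. Bisecting each coordinate interval at its midpoint cuts $E$ into $2^n$ congruent closed subcubes with pairwise disjoint interiors (hence non-overlapping in the sense $\la_n[J_i\cap J_j]=0$) whose union is $E$. The crucial observation is that HK-$\de$ partitions are additive: if each of these $2^n$ subcubes admitted a HK-$\de$ partition, their union would be a HK-$\de$ partition of $E$, since the tagging condition $J\ci B'(\mathbf{x},\de(\mathbf{x}))$, the tag condition $\mathbf{x}\in J$, the non-overlap condition, and the covering condition are all preserved under gluing partitions of non-overlapping pieces. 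Therefore at least one subcube, call it $E_1$, admits no HK-$\de$ partition. Iterating the same dichotomy on $E_1$, and so on, produces a nested sequence $E\supset E_1\supset E_2\supset\cdots$ of closed cubes, each admitting no HK-$\de$ partition, with $\mathrm{diam}(E_k)=\ell/2^{k}\to 0$. By the nested-interval (Cantor) property in $\R^n$, the intersection $\bi_{k=1}^{\iy}E_k$ is a single point $\mathbf{x}_0\in E$.

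Finally I would extract the contradiction from the positivity of the gauge. Because $\de(\mathbf{x}_0)>0$ while $\mathrm{diam}(E_k)\to 0$ and $\mathbf{x}_0\in E_k$ for every $k$, once $2^{k}>\ell/\de(\mathbf{x}_0)$ every $\mathbf{y}\in E_k$ satisfies $\|\mathbf{y}-\mathbf{x}_0\|\le\mathrm{diam}(E_k)=\ell/2^{k}<\de(\mathbf{x}_0)$, so that $E_k\ci B'(\mathbf{x}_0,\de(\mathbf{x}_0))$. But then the single tagged interval $\{(E_k,\mathbf{x}_0)\}$ is itself a HK-$\de$ partition of $E_k$, contradicting its construction. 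This contradiction forces $E$ to admit a HK-$\de$ partition, proving the lemma.

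The point demanding the most care is the additivity step: one must verify that assembling HK-$\de$ partitions of the $2^n$ subcubes yields a \emph{legitimate} HK-$\de$ partition of $E$ — in particular that the constituent intervals remain pairwise non-overlapping across the shared subcube faces (which have $\la_n$-measure zero) and that their union is exactly $E$. This is precisely where the maximal-norm convention does the real work: with that norm each ball $B'(\mathbf{x},r)$ is an axis-parallel cube and bisection produces genuine sub-intervals of the product form $J=\prod_{i=1}^{n}[a_i,b_i]$, so the membership $J\ci B'(\mathbf{x},\de(\mathbf{x}))$ is verified by a single coordinatewise diameter estimate; for a general norm the geometric bookkeeping of the tagging condition would be considerably more delicate.
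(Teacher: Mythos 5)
Your proposal is correct, but there is nothing in the paper to compare it against: the paper states Cousin's Lemma without any proof, quoting it as a standard fact of Henstock--Kurzweil theory (the surrounding treatment follows the cited texts of Lee Tuo-Yeong). Your argument is the classical bisection proof, and it is complete. The contradiction dichotomy, the nested-cube (Cantor intersection) step producing the single point $\mathbf{x}_0$, and the final contradiction from $\delta(\mathbf{x}_0)>0$ are all executed correctly. You also identify and discharge the one step that genuinely needs checking against the paper's definitions, namely that HK-$\delta$ partitions of the $2^n$ bisected subcubes glue to a HK-$\delta$ partition of $E$: intervals coming from distinct subcubes meet only in faces, which have $\lambda_n$-measure zero, so the paper's non-overlapping condition $\lambda_n[J_i \cap J_j]=0$ is preserved, while the tag and covering conditions are local to each piece. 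One further detail you handle correctly without comment: the paper's tagging condition requires containment in the \emph{open} cube $B'(\mathbf{x}_0,\delta(\mathbf{x}_0))$, and your strict inequality $\ell/2^k<\delta(\mathbf{x}_0)$ is exactly what makes the closed cube $E_k$ satisfy it, so the terminal tagged pair $\{(E_k,\mathbf{x}_0)\}$ is a legitimate HK-$\delta$ partition and the contradiction stands.
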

\begin{Def} A function $f:  E \to \R$ is said to be HK-integrable on $E$, if 
there exists a number $I$ such that for any $\e>0$ there is a 
HK-${\de}$ partition on $E$ such that 
\begin{equation}
\left| {\sum\limits_{i = 1}^m {f({\bf{x}}_i )\lambda _n [J_i ]}  - I} \right| < \varepsilon.
\label{(18)}
\end{equation}
In this case, we write 
\[
I = HK\int_E {f({\mathbf{x}})d\lambda _n ({\mathbf{x}})} .
\]
\end{Def}
The following theorem provides a constructive definition of what it means to say 
that a function is absolutely continuous: (A proof of the following can be found in \cite{GZ}.) 
\begin{thm} Let $f \in L^1[\R^n]$.  If $\e>0$, then there is a $\de>0$
such that, whenever $E$ is a measurable set with $\la_n[E] <\de$,
 \[
\left| {\int\limits_E {f({\bf{x}})d\lambda _n ({\bf{x}})} } \right| < \varepsilon .
\]
\end{thm}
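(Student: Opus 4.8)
The plan is to prove the (formally stronger) statement with $|f|$ in place of $f$, since $\lt| \int_E f \, d\la_n \rt| \le \int_E |f| \, d\la_n$; thus I may assume from the outset that $f \ge 0$. The governing idea is that an integrable function cannot concentrate too much of its mass on sets of small measure, and the device that makes this quantitative is truncation: a nonnegative $L^1$ function is approximated in $L^1$-norm by bounded functions, and for a bounded function $g$ the conclusion is immediate from the crude bound $\int_E g \, d\la_n \le \|g\|_\iy \, \la_n[E]$.

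Concretely, first I would introduce the truncations $f_N := \min(f, N)$ for $N \in \N$. Each $f_N$ is measurable, bounded by $N$, and satisfies $0 \le f_N \le f$ with $f_N \uparrow f$ pointwise. Since $f \in L^1[\R^n]$, the Monotone Convergence Theorem (equivalently Dominated Convergence, with dominating function $f$) gives $\int_{\R^n} (f - f_N) \, d\la_n \ra 0$ as $N \ra \iy$. Hence, given $\e > 0$, I fix an $N$ large enough that $\int_{\R^n}(f - f_N)\, d\la_n < \tf{\e}{2}$.

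With this $N$ fixed, for any measurable set $E$ I would split and estimate
\[
\int_E f \, d\la_n = \int_E (f - f_N)\, d\la_n + \int_E f_N \, d\la_n
\le \int_{\R^n}(f - f_N)\, d\la_n + N\,\la_n[E]
< \tf{\e}{2} + N\, \la_n[E],
\]
using $f - f_N \ge 0$ to enlarge the domain of the first integral to all of $\R^n$, and $0 \le f_N \le N$ to bound the second. Choosing $\de := \f{\e}{2N}$ then forces $\int_E f \, d\la_n < \e$ whenever $\la_n[E] < \de$, which is exactly the asserted conclusion.

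The argument is essentially routine, so there is no genuine obstacle; the single point requiring care is the limit $\int_{\R^n}(f - f_N)\, d\la_n \ra 0$, which is precisely where integrability of $f$ enters and cannot be dispensed with. An alternative route, avoiding truncation but slightly less elementary, proceeds by contradiction: were the statement false, one could extract measurable sets $E_k$ with $\la_n[E_k] < 2^{-k}$ yet $\int_{E_k}|f|\, d\la_n \ge \e$; setting $A_k := \bu_{j \ge k} E_j$ produces a decreasing family with $\la_n[A_k] \le \sum_{j \ge k} 2^{-j} \ra 0$, whence $\mathbf{1}_{A_k} \ra 0$ a.e., so Dominated Convergence (dominated by $|f|$) gives $\int_{A_k}|f|\, d\la_n \ra 0$, contradicting $\int_{A_k}|f|\, d\la_n \ge \int_{E_k}|f|\, d\la_n \ge \e$. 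I would present the truncation proof as the main line, since it is constructive and directly exhibits the dependence $\de = \e/(2N)$.
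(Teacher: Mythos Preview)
Your proof is correct; the truncation argument is the standard route to absolute continuity of the Lebesgue integral, and every step is justified. The paper does not actually supply its own proof of this statement --- it merely cites \cite{GZ} --- so there is nothing to compare against; your writeup would in fact fill a gap in the exposition.
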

The next result shows that the Lebesgue integral is a special case of the 
HK-integral, first proven by Davies and Schuss \cite{DZ}, but see \cite{GZ} .     
\begin{thm} If $E$ is a measurable subset of $\R^n$ and  
$f: E \to \R$, has a finite Lebesgue integral on $E$, then
\[
HK\int\limits_E {f({\bf{x}})d\lambda _n ({\bf{x}})}  
= L\int\limits_E {f({\bf{x}})d\lambda _n (x)}.
\]
\end{thm}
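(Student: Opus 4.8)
The plan is to establish the single quantitative statement that delivers both conclusions simultaneously: for every $\e>0$ there exists a gauge $\de(\cdot)$ on $E$ such that every HK-$\de$ partition $\mcP=\{(J_i,\mathbf{x}_i)\}$ obeys $|\sum_i f(\mathbf{x}_i)\la_n[J_i]-L\int_E f\,d\la_n|<\e$. By Cousin's Lemma a $\de$-fine partition always exists, so this inequality is precisely the assertion that $f$ is HK-integrable on $E$ with HK-integral equal to the Lebesgue integral. Everything therefore reduces to the construction of a suitable gauge.

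Before constructing it I would make three reductions. Extending $f$ by zero outside $E$ preserves measurability and integrability, and the absolute continuity theorem proved just above lets me discard the mass of $|f|$ outside a sufficiently large cube at the cost of an error $\e$; by the Remark following Theorem \rf{d2} I may thus take the domain to be a single compact cube, on which Cousin's Lemma applies. Writing $f=f^+-f^-$ and using the evident additivity of the HK-Riemann sums, it then suffices to treat a \emph{nonnegative} integrable $f$.

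For nonnegative $f$ the core of the proof is the gauge itself. I would invoke the Vitali--Carath\'eodory theorem to trap $f$ between an upper semicontinuous $u$ and a lower semicontinuous $w$ with $u\le f\le w$ and $\int_E(w-u)\,d\la_n<\e$. Semicontinuity supplies the gauge pointwise: at each $\mathbf{x}$ pick $\de(\mathbf{x})>0$ so small that $u(\mathbf{y})<u(\mathbf{x})+\eta$ and $w(\mathbf{y})>w(\mathbf{x})-\eta$ throughout $B'(\mathbf{x},\de(\mathbf{x}))$. For a $\de$-fine partition the containment $J_i\subset B'(\mathbf{x}_i,\de(\mathbf{x}_i))$ then lets me replace the sampled value $f(\mathbf{x}_i)$ by cell integrals of $u$ and $w$: integrating the two one-sided bounds over each $J_i$ and summing squeezes $\sum_i f(\mathbf{x}_i)\la_n[J_i]$ between $\int_E u-\e$ and $\int_E w+\e$, both of which lie within $\e$ of $\int_E f$. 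Hence the Riemann sum is within $O(\e)$ of the Lebesgue integral, and letting $\e\to 0$ finishes the nonnegative case; linearity then gives the general one.

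The step I expect to be the main obstacle is exactly this control of the \emph{pointwise} sampling of $f$ at the tags $\mathbf{x}_i$. Since the tags are chosen first and may land wherever $f$ is arbitrarily large, smallness of $f$ (or of $f-s$ for an approximating step function $s$) in the $L^1$ norm does \emph{not} by itself bound the Riemann sum, and this is the one place where a genuinely measure-theoretic input is unavoidable. The semicontinuous envelopes convert $L^1$-closeness into a gauge-enforced \emph{pointwise} estimate; an equivalent route would combine a Lusin/Egorov argument with the absolute continuity theorem to force the partition cells surrounding the large tags to carry negligible total measure. Once this is in place the identification of the two integrals follows, with the truncation error from the tail absorbed into $\e$ through the same absolute continuity estimate.
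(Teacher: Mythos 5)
Your proposal cannot be compared line-by-line with the paper's own argument for a simple reason: the paper gives none. This theorem is stated with a pointer to Davies--Schuss \cite{DZ} and to \cite{GZ}, and the only proof of this type actually written out in the paper is the one for the operator-valued analogue, Eq. \eqref{(19)} (Bochner integrability implies HK-integrability with equal values, on an interval $[a,b]$). That proof proceeds in three stages: (i) for an indicator function $A\chi_E$ the gauge comes from regularity of Lebesgue measure --- a compact $D \ci E$ and an open $F \supset E$ with $\la(F\setminus D)$ small, with $\de(t)$ equal to the distance from $t$ to $[a,b]\setminus F$ at points of $E$ and to the distance from $t$ to $D$ off $E$; (ii) countably-valued functions are treated by passing to the limit in the partial sums; (iii) a general Bochner-integrable function is handled by a three-epsilon argument against an approximating sequence of countably-valued functions, combining two gauges by taking their minimum. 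Your route compresses all of this into a single application of the Vitali--Carath\'eodory theorem, with semicontinuity generating the gauge. For scalar $f$ this is shorter and attacks the genuine difficulty (the tags sample $f$ pointwise, so $L^1$-smallness alone controls nothing) in one step, exactly as your last paragraph says; the paper's staged argument is the one that survives when the function takes values in a non-separable operator algebra, where there is no order structure and no semicontinuous envelope available. Both are legitimate: yours is the standard scalar proof, theirs is the one that generalizes.

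Two repairs are needed before your sketch is a complete proof. First, the Vitali--Carath\'eodory envelopes need not be finite everywhere: the l.s.c. majorant $w$ may equal $+\iy$ on a null set, and since tags are chosen freely, a tag $\mathbf{x}_i$ can land exactly there; at such a point your gauge condition ``$w(\mathbf{y})>w(\mathbf{x}_i)-\eta$'' is vacuous and the chain $f(\mathbf{x}_i)\la_n[J_i]\le w(\mathbf{x}_i)\la_n[J_i]\le\int_{J_i}w\,d\la_n+\eta\,\la_n[J_i]$ fails (an l.s.c. function infinite at a point can still be locally integrable). The standard fix costs nothing: define the gauge by requiring $w(\mathbf{y})>f(\mathbf{x})-\eta$ and $u(\mathbf{y})<f(\mathbf{x})+\eta$ on $B'(\mathbf{x},\de(\mathbf{x}))$, which semicontinuity grants because $u(\mathbf{x})\le f(\mathbf{x})\le w(\mathbf{x})$ and $f(\mathbf{x})$ is finite; integrating these two inequalities over $J_i$ and summing gives the same squeeze. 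Second, the paper defines the $n$-dimensional HK-integral only over compact balls/cubes, while the theorem allows an arbitrary measurable $E$; your truncation step is therefore not a mere convenience but the place where the left-hand side acquires its meaning (a limit over an expanding sequence of cubes applied to $f\chi_E$), and you should state explicitly that the absolute-continuity estimate is what makes this limit exist and be independent of the exhaustion. With these adjustments the argument is correct.
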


\subsection{Integration of Operator-valued Functions}

In this section, we extend the HK-integral to operator-valued functions\footnote{
We refer to a review by Wightman \cite{W} for the related concept of
quantum fields as operator-valued distributions}. 
Let $\Om$ be a subset of $\R$, $\mfB[\Om]$ be the set of Borel measurable subsets 
of $\Om$ and let $\la$ be the Lebesgue measure on $\R$. A function $A(t), t \in \Om$, 
defined on the measure space $(\Om, \mfB[\Om], \la)$ with values in $\mcB = L[\mcH]$, 
the bounded linear operators on a Hilbert space $\mcH$, are called operator-valued functions.  

The problem with integration for operator-valued functions is that the functions 
may not have a Lebesgue integral (see \cite{HP}, pg. 71-80). 
To understand this problem, we begin with:
\begin{Def}
The function $A(t)$ is said to be: 
\begin{enumerate}
\item almost surely separably valued (or essentially separably valued) if 
there exists a subset $N \subset \Om$ with $\la(N) = 0$ such that 
$A(\Om \setminus N) \subset \mcB$ is separable,
\item countably-valued if it assumes at most a countable number of values in 
$\mcB$, assuming each value $\ne 0$ on a measurable subset of $\Om$,  and
\item strongly measurable if there exists a sequence $\{ A_n(t)) \}$ of countably-valued 
functions converging (a.s.) to $A(t)$.
\item Bochner integrable if $\lt\|A(t)\rt\|_\mcB$ is Lebesgue integrable.
\item Gelfand-Pettis integrable if $\left\langle {A(t), \mcL } \right\rangle_\mcB$ is 
Lebesgue integrable for each bounded linear functional $\mcL \in \mcB^*$.
\end{enumerate}
\end{Def}
In order to define constructively the integral of $A(t)$, we must be able to 
approximate it with simple operator-valued functions in either a strong  
(Bochner) or weak sense (Gelfand-Pettis). However, the function must be 
countably-valued and strongly measurable in the first case or weakly measurable 
in the second case. In the case of current interest, $\mcB$ is not separable, and the family 
of operator-valued functions $A(t): \Om \to \mcB$ need not be almost separably valued, 
and hence need not be strongly or weakly measurable. In this section, we extend 
the HK integral to operator-valued functions on $\R$. Because this version will be used 
for the Feynman operator calculus, we provide some detail.  

Let $[a,b] \subset \mathbb{R}$ and, for each $t \in [a,b]$, let $A(t) \in 
L({\mathcal{H}})$ be a given family of operators.
 
Recall that, if $\delta (t)$ maps $[a,b] \to (0,\infty )$, and ${\mcP} 
= \{ t_0 ,\tau _1 ,t_1 ,\tau _2 , \cdots ,\tau _n ,t_n \},$ where $a = t_0  
\leqslant \tau _1  \leqslant t_1  \leqslant  \cdots  \leqslant \tau _n  
\leqslant t_n  = b$, we say it is a  HK-$\de$ partition provided that, for $0 
\leqslant i \leqslant n - 1, \; t_i ,t_{i + 1}  \in (\tau _{i + 1}  
- \delta (\tau _{i + 1} ),\tau _{i + 1}  + \delta (\tau _{i + 1} )).$
\begin{lem} Let $
\delta _1 (t)$ and $\delta _2 (t)$ map $[a,b] \to (0,\infty )$, and assume that $
\delta _1 (t) \leqslant \delta _2 (t).$ Thus, if  ${\mcP}_1$ is a HK-${{\de}_1}$ 
partition, it is also a HK-${{\de}_2}$ partition.
\end{lem}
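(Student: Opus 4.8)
The plan is to argue directly from the definition of a HK-$\de$ partition, which in this one-dimensional setting constrains the tagged intervals only through a pointwise neighborhood condition on the tags. The crucial observation is that both claims concern \emph{the same} collection of points $\{t_0,\tau_1,t_1,\dots,\tau_n,t_n\}$: we are not refining or perturbing $\mcP_1$, but merely asking whether the identical tagged partition also meets the weaker gauge requirement imposed by $\de_2$. This framing is what makes the lemma a pure monotonicity statement rather than an existence statement.

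First I would unwind the hypothesis that $\mcP_1$ is a HK-$\de_1$ partition: by definition this means that for every index $i$ with $0 \leqslant i \leqslant n-1$ we have $t_i, t_{i+1} \in (\tau_{i+1} - \de_1(\tau_{i+1}),\, \tau_{i+1} + \de_1(\tau_{i+1}))$. Next I would apply the hypothesis $\de_1(t) \leqslant \de_2(t)$ at the particular point $t = \tau_{i+1}$, which yields the nested-interval containment $(\tau_{i+1} - \de_1(\tau_{i+1}),\, \tau_{i+1} + \de_1(\tau_{i+1})) \subseteq (\tau_{i+1} - \de_2(\tau_{i+1}),\, \tau_{i+1} + \de_2(\tau_{i+1}))$, since shrinking the radius shrinks the interval. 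Combining these two facts, each endpoint $t_i$ and $t_{i+1}$ lying in the $\de_1$-neighborhood of $\tau_{i+1}$ necessarily lies in the larger $\de_2$-neighborhood as well.

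Because the index $i$ was arbitrary and the underlying point set is unchanged, the tagged partition $\mcP_1$ satisfies precisely the defining condition for being a HK-$\de_2$ partition, which closes the argument. I expect no genuine obstacle here: the only point requiring care is to keep the partition data fixed and recognize that the implication is about a \emph{single} partition satisfying a relaxed constraint, not about producing some $\de_2$-fine refinement. It is worth noting in passing that this lemma encodes the directedness of the family of gauges that underlies the whole HK construction: it ensures that the set of $\de$-fine partitions only grows as $\de$ increases, which is exactly what permits one to form a common refinement of two gauges via $\de = \min(\de_1,\de_2)$ in the Cauchy-criterion arguments used later.
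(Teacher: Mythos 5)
Your proof is correct: the paper states this lemma without proof (treating it as immediate), and your argument---unwinding the definition of a HK-$\de_1$ partition and using the pointwise containment of the $\de_1$-neighborhood of each tag $\tau_{i+1}$ in its $\de_2$-neighborhood---is exactly the intended monotonicity argument. Your closing remark about why this lemma matters (taking $\de = \de_1 \wedge \de_2$ to combine gauges) is also precisely how the paper uses it later, in the proof that Bochner integrability implies HK-integrability.
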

\begin{Def} The family $A(t) $, $t \in [a,b] $, is said to have a (uniform) 
HK-integral if there is an operator $Q[a,b] $ in $L({\mathcal{H}})$ such that, 
for each $\varepsilon  > 0$, there exists a $\delta$ and a HK-$\de$ partition such that
\[
\left\| {\sum\nolimits_{i = 1}^n {\Delta t_i A(\tau _i ) - Q[a,b]} } \right\| < \varepsilon. 
\]
In this case, we write
\[
Q[a,b] = (HK)\,\int_a^b {A(t)dt}. 
\]
\end{Def}
\begin{thm}For $t \in [a,b] $, suppose that the operators $A_1 (t) $ and 
$A_2 (t) $ both have HK-integrals, then their sum has as well and 
\[
(HK)\int_a^b {[A_1 (t) + A_2 (t)]dt}  = (HK)\,\int_a^b {A_1 (t)dt}  
+ (HK)\,\int_a^b {A_2 (t)dt}. 
\]
\end{thm}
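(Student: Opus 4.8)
The plan is to prove this by the standard triangle-inequality argument for gauge integrals. The only delicate point is that the two hypotheses furnish possibly different gauges and partitions, which must be synchronized so that a single Riemann sum simultaneously approximates both integrals.

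First I would fix $\varepsilon>0$ and set $Q_1=(HK)\int_a^b A_1(t)\,dt$ and $Q_2=(HK)\int_a^b A_2(t)\,dt$. Applying the definition of the uniform HK-integral to $A_1$ with tolerance $\varepsilon/2$ yields a positive gauge $\delta_1$ on $[a,b]$ controlling the approximation of $Q_1$; applying it to $A_2$ yields a gauge $\delta_2$ controlling $Q_2$.

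The key step is to form the common gauge $\delta(t)=\min\{\delta_1(t),\delta_2(t)\}$, again a positive function on $[a,b]$. By the Lemma immediately preceding this theorem, any HK-$\delta$ partition is simultaneously a HK-$\delta_1$ and a HK-$\delta_2$ partition, so one tagged partition $\mcP=\{t_0,\tau_1,\ldots,\tau_n,t_n\}$ subordinate to $\delta$ controls both approximations at once; such a partition exists by Cousin's Lemma. Because the tags $\tau_i$ are shared, the Riemann sums add termwise,
$$\sum_{i=1}^n \Delta t_i\,[A_1(\tau_i)+A_2(\tau_i)] = \sum_{i=1}^n \Delta t_i\,A_1(\tau_i)+\sum_{i=1}^n \Delta t_i\,A_2(\tau_i).$$
Subtracting $Q_1+Q_2$ and applying the triangle inequality in the operator norm bounds the left-hand deviation by the sum of the two individual deviations, each below $\varepsilon/2$, hence below $\varepsilon$. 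Since $\varepsilon$ was arbitrary, $A_1+A_2$ is uniformly HK-integrable with integral $Q_1+Q_2$.

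The main obstacle is bookkeeping rather than analysis: ensuring that one tagged partition serves both functions. This is exactly the role of the refinement Lemma — passing to the pointwise minimum of the two gauges shrinks the admissible base intervals, so $\delta$-fineness is the strongest of the three conditions and forces both $\delta_1$- and $\delta_2$-fineness. Notably, no separability or measurability hypothesis on the operator family is used, which is the whole advantage of the HK construction in the non-separable operator setting emphasized above.
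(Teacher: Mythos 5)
Your proof is correct, and there is in fact nothing in the paper to compare it against: the paper states this theorem without proof. What you wrote is precisely the argument the paper's setup anticipates. In particular, your key step --- forming the common gauge $\delta(t)=\min\{\delta_1(t),\delta_2(t)\}$ and invoking the refinement lemma (Lemma 2.19 in the paper) so that a single tagged partition, which exists by Cousin's Lemma, controls both Riemann sums simultaneously --- is exactly the device the paper itself deploys shortly afterwards in its proof that Bochner integrability implies HK-integrability, where it sets $\delta=\delta_1\wedge\delta_2$ and cites the same lemma. The termwise addition of Riemann sums and the triangle inequality in the operator norm then finish the argument, as you say.

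One small caution. Your argument tacitly (and correctly) reads the definition of the uniform HK-integral with a universal quantifier over partitions: the gauge $\delta$ must control \emph{every} HK-$\delta$ partition. The paper's definition, as literally written, only asserts the existence of a $\delta$ \emph{and} a HK-$\delta$ partition; under that existential reading your synchronization step would fail, because the two hypotheses would merely supply two unrelated tagged partitions whose sums cannot be added termwise (and indeed the value $Q[a,b]$ would not even be well defined). The universal reading is clearly the intended one --- the paper's own proof of the Bochner comparison theorem chooses $\delta_1$ so that \emph{any} HK-$\delta_1$ partition approximates the integral --- so your proof is the right one for the definition as intended; it would just be worth flagging that you are using that reading.
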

\begin{thm} Suppose $\{ A_k (t)\left| {} \right.\,k \in {\mathbb{N}}\}$ is a 
family of operator-valued functions in $L[{\mathcal{H}}]$, converging 
uniformly to $A(t) $ on $[a,b] $, and 
$A_k (t) $ has a HK-integral $Q_k [a,b] $ for each $k$; then, $A(t)$ has a HK-integral 
$Q[a,b] $ and $Q_k [a,b] \to Q[a,b] $ uniformly.
\end{thm}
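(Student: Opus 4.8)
The plan is to transcribe, into the operator-valued setting, the classical fact that a uniform limit of integrable functions is integrable and that the integrals converge, with the scalar absolute value replaced throughout by the operator norm on $L[\mcH]$ and with completeness of $L[\mcH]$ replacing completeness of $\R$. Write the HK-$\de$ sums as $\sum_{i=1}^n \De t_i\,A(\tau_i)$ with $\De t_i=t_i-t_{i-1}$, and note at the outset that for \emph{every} HK-$\de$ partition one has $\sum_{i=1}^n \De t_i=t_n-t_0=b-a$; this trivial identity is what converts a uniform bound on the integrands into a bound on the Riemann-type sums.

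First I would show that $\{Q_k[a,b]\}$ is Cauchy in $L[\mcH]$. Fix indices $k,l$. By the additivity theorem just established (together with the obvious homogeneity of the HK-integral under the scalar $-1$), the family $A_k(t)-A_l(t)$ has a HK-integral equal to $Q_k[a,b]-Q_l[a,b]$. Hence, for any $\e'>0$ there exist a gauge $\de$ and a HK-$\de$ partition with $\lt\|\sum_{i=1}^n \De t_i\,(A_k(\tau_i)-A_l(\tau_i))-(Q_k[a,b]-Q_l[a,b])\rt\|<\e'$, and the triangle inequality together with $\sum_i\De t_i=b-a$ yields
\[
\lt\|Q_k[a,b]-Q_l[a,b]\rt\|\le \e'+(b-a)\sup_{t\in[a,b]}\lt\|A_k(t)-A_l(t)\rt\|.
\]
Since $\e'$ is arbitrary the first term drops, leaving $\lt\|Q_k[a,b]-Q_l[a,b]\rt\|\le(b-a)\sup_t\lt\|A_k(t)-A_l(t)\rt\|$. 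Uniform convergence of $A_k$ makes $\{A_k\}$ uniformly Cauchy, so the right-hand side tends to $0$ as $k,l\to\iy$. Because $\mcH$ is a Hilbert space, $L[\mcH]$ is complete in the operator norm, so there is an operator $Q[a,b]\in L[\mcH]$ with $Q_k[a,b]\to Q[a,b]$ in operator norm; this is already the last assertion of the theorem, the word \emph{uniformly} being read as convergence in the uniform operator topology.

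It then remains to identify $Q[a,b]$ as the HK-integral of $A(t)$ via the defining estimate, which I would do with a standard $\e/3$ split. Given $\e>0$, first choose $k$ so large that $\sup_t\lt\|A_k(t)-A(t)\rt\|<\e/(3(b-a))$ and $\lt\|Q_k[a,b]-Q[a,b]\rt\|<\e/3$, possible by uniform convergence and by the previous step. With this $k$ fixed, HK-integrability of $A_k$ furnishes a gauge $\de_k$ such that every HK-$\de_k$ partition satisfies $\lt\|\sum_i \De t_i A_k(\tau_i)-Q_k[a,b]\rt\|<\e/3$. For any such partition, inserting $\sum_i \De t_i A_k(\tau_i)$ and $Q_k[a,b]$ into $\sum_i \De t_i A(\tau_i)-Q[a,b]$ and using the triangle inequality together with $\sum_i\De t_i=b-a$ bounds the total error by $(b-a)\cdot\e/(3(b-a))+\e/3+\e/3=\e$, which is exactly the defining estimate for $Q[a,b]=(HK)\int_a^b A(t)\,dt$.

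The argument is essentially routine once the additivity theorem and the completeness of $L[\mcH]$ are in hand, so there is no obstacle of real substance; this is why the operator-valued statement goes through verbatim from the scalar case. The one point requiring care is that each $A_k$ carries its own gauge, so the two halves must not attempt to share a single partition: the Cauchy estimate is run \emph{pairwise} on $A_k-A_l$, while the concluding $\e/3$ estimate is run for a \emph{single} fixed $k$, and in both places it is only the partition-independent identity $\sum_i\De t_i=b-a$ that lets the uniform control of the integrands pass to the Riemann sums.
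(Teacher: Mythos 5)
Your proof is correct. Note that for this particular theorem the paper supplies no proof of its own (it is one of the results imported from Gill--Zachary \cite{GZ}; the only proof given in that subsection is for the Bochner-comparison theorem that follows), so there is nothing in the paper to compare against; your argument is the standard one and closes the gap cleanly: additivity of the HK-integral (plus trivial homogeneity under $-1$) gives the pairwise bound $\left\| Q_k[a,b]-Q_l[a,b] \right\| \le (b-a)\sup_{t}\left\| A_k(t)-A_l(t) \right\|$, completeness of $L[\mcH]$ in the operator norm produces the limit $Q[a,b]$, and the $\e/3$ split with the single gauge $\de_k$ identifies $Q[a,b]$ as the HK-integral of $A$. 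Your closing remark is also the right one to make: since each $A_k$ comes with its own gauge, the two halves of the argument must not share a partition, and your arrangement (pairwise Cauchy estimate, then one fixed $k$ with its own gauge, with only the partition-independent identity $\sum_i \De t_i = b-a$ transferring uniform control of the integrands to the Riemann sums) avoids exactly the pitfall a naive transcription would hit.
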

\begin{thm} Suppose $A(t)$ is Bochner integrable on $[a,b] $, then $A(t) $ has a HK-integral 
$Q[a,b]$ and:
\begin{equation}
(B)\,\,\int_a^b {A(t)dt}  = (HK)\,\int_a^b {A(t)dt}. 
\label{(19)}
\end{equation}
\end{thm}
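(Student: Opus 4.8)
The plan is to verify the operator-norm HK criterion directly, transferring the classical ``Lebesgue $\Rightarrow$ HK'' argument (the theorem of Davies and Schuss quoted above) to the operator setting through a \emph{finite} simple approximation. Throughout write $Q := (B)\int_a^b A(t)\,dt$ and, for a tagged HK-$\delta$ partition $\mcP = \{(\tau_i,[t_{i-1},t_i])\}$, write $S(A,\mcP) = \sum_{i=1}^n \Delta t_i\, A(\tau_i)$. I would in fact prove the stronger statement that a single gauge $\delta$ can be chosen so that $\|S(A,\mcP) - Q\| < \varepsilon$ for \emph{every} $\delta$-fine partition $\mcP$; this is stronger than the definition stated above and hence suffices, with existence of $\delta$-fine partitions guaranteed by Cousin's Lemma.

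First I would fix $\varepsilon > 0$ and use Bochner integrability to produce a good simple approximant. By strong measurability there is a countably-valued function $\sum_k B_k \mathbf{1}_{E_k}$ agreeing a.s.\ with $A$, and since $\int_a^b \|A(t)\|\,d\la(t) < \infty$ one has $\sum_k \|B_k\|\,\la(E_k) < \infty$, so the tail may be discarded to leave a finite simple function $A_\varepsilon(t) = \sum_{k=1}^{N} B_k \mathbf{1}_{E_k}(t)$, with the $E_k \subset [a,b]$ measurable and the $B_k \in L[\mcH]$ fixed operators, satisfying $\int_a^b \|A(t) - A_\varepsilon(t)\|\,d\la(t) < \varepsilon$ and $\|(B)\int_a^b A_\varepsilon\,dt - Q\| < \varepsilon$.

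Next I would compute the HK-integral of the finite simple function $A_\varepsilon$. Each scalar characteristic function $\mathbf{1}_{E_k}$ is Lebesgue integrable on $[a,b]$, hence by the theorem of Davies and Schuss it is HK-integrable with $HK\int_a^b \mathbf{1}_{E_k} = \la(E_k)$; multiplying by the fixed operator $B_k$ preserves HK-integrability, since $\|B_k \sum_i \Delta t_i \mathbf{1}_{E_k}(\tau_i) - B_k\la(E_k)\| \le \|B_k\|\,|\sum_i \Delta t_i \mathbf{1}_{E_k}(\tau_i) - \la(E_k)|$, and the additivity theorem above lets me add the $N$ terms. Thus $A_\varepsilon$ has operator HK-integral $\sum_{k=1}^N B_k \la(E_k) = (B)\int_a^b A_\varepsilon\,dt$, and taking the minimum of the finitely many gauges coming from the individual $\mathbf{1}_{E_k}$ yields a gauge $\delta_0$ with $\|S(A_\varepsilon,\mcP) - (B)\int_a^b A_\varepsilon\,dt\| < \varepsilon$ for every $\delta_0$-fine $\mcP$.

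Finally I would control the error term, and this is where the main obstacle lies: strong (a.s.) convergence of $A_\varepsilon$ to $A$ gives no direct hold on the tag-sampled sum $\sum_i \Delta t_i\,(A(\tau_i) - A_\varepsilon(\tau_i))$, since the tags $\tau_i$ are an uncontrolled sample. The device that rescues the argument is that its norm is dominated by a scalar HK-Riemann sum of an integrable majorant. Indeed $g(t) := \|A(t) - A_\varepsilon(t)\|$ is nonnegative and Lebesgue integrable with $\int_a^b g\,d\la < \varepsilon$; by the absolute-continuity and Davies--Schuss results it is HK-integrable, so there is a gauge $\delta_1$ with $\sum_i \Delta t_i\, g(\tau_i) < \int_a^b g\,d\la + \varepsilon < 2\varepsilon$ for every $\delta_1$-fine partition. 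Since $\|S(A,\mcP) - S(A_\varepsilon,\mcP)\| \le \sum_i \Delta t_i\, \|A(\tau_i) - A_\varepsilon(\tau_i)\| = \sum_i \Delta t_i\, g(\tau_i)$, setting $\delta = \min(\delta_0,\delta_1)$ and applying the triangle inequality gives $\|S(A,\mcP) - Q\| \le 2\varepsilon + \varepsilon + \varepsilon = 4\varepsilon$ for every $\delta$-fine $\mcP$. Rescaling $\varepsilon$ shows that $A$ is HK-integrable and that its HK-integral equals $Q$, completing the proof.
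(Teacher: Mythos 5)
Your overall strategy is sound and, at its core, it is the same as the paper's: approximate $A$ in the Bochner $L^1$ norm by a function whose operator HK-integral is already computable, and then dominate the tag-sampled error $\|\sum_i \Delta t_i (A(\tau_i)-A_\varepsilon(\tau_i))\|$ by the scalar HK-Riemann sum of the integrable majorant $g(t)=\|A(t)-A_\varepsilon(t)\|$. This last device is exactly the paper's $f_m(t)=\|A_m(t)-A(t)\|$ trick, down to the four-term triangle inequality and the common refinement gauge $\delta=\delta_0\wedge\delta_1$. Where you differ is in the reduction: the paper proceeds characteristic function (with an explicit gauge built from distance functions to a compact $D\subset E$ and an open $F\supset E$) $\to$ countably-valued function (via the uniform-convergence theorem for HK-integrals) $\to$ general Bochner integrable function, whereas you collapse the first two stages by truncating to a \emph{finite} simple function and quoting the scalar Davies--Schuss theorem plus linearity and additivity. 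This buys brevity and avoids re-deriving the gauge construction, at the cost of leaning on the scalar theory where the paper is self-contained in the operator setting; both are legitimate.

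The one step you must repair is the opening claim that strong measurability produces a countably-valued function agreeing a.s.\ with $A$; that is false. For example, $A(t)=tI$ on $[0,1]$ is norm-continuous, hence strongly measurable and Bochner integrable, but it takes uncountably many distinct values, so it agrees with any countably-valued function only on a countable set, which has measure zero. The correct statement --- the one the paper itself invokes --- is that a Bochner integrable $A$ admits a sequence of countably-valued functions $A_m$ converging to $A$ a.e.\ in operator norm with $(L)\int_a^b \|A_m(t)-A(t)\|\,dt \to 0$. Picking $m$ with $\int_a^b\|A_m-A\|\,d\la<\varepsilon/2$, writing $A_m=\sum_j B_j\mathbf{1}_{E_j}$, and discarding the tail (possible since $\sum_j \|B_j\|\,\la(E_j)<\infty$) yields your finite simple $A_\varepsilon$ with $\int_a^b\|A-A_\varepsilon\|\,d\la<\varepsilon$, after which the rest of your argument goes through unchanged. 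So the flaw is local and repairable, and with that substitution your proof is correct.
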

\begin{proof} First, let $E$ be a measurable subset of $[a,b] $ and assume 
that $A(t) = A\chi _E (t) $, where $\chi _E (t) $ is the characteristic function 
of $E$.  In this case, we show that 
$Q[a,b] = A\la(E) $, where $\la(E)$ is the Lebesgue measure of $E$.  
Let $\varepsilon  > 0$ be given and let $D$ be a compact subset of $E$.  Let 
$F \subset [a,b] $ be an open set containing $E$ such that $\la(F\backslash D)
 < {\varepsilon  \mathord{\left/
 {\vphantom {\varepsilon  {\left\| A \right\|}}} \right.
 \kern-\nulldelimiterspace} {\left\| A \right\|}}$; and define 
$\delta \,:[a,b] \to (0,\infty ) $ such that:
\[
\delta (t) = \left\{ {\begin{array}{*{20}c}
   {d(t,[a,b]\backslash F),\;t \in E}  \\
   {d(t,D),\;t \in [a,b]\backslash E},  \\
\end{array} } \right.
\]
where $d(x\,,\,y) = \left| {x - y} \right|$ denotes the distance function.  Let $
{\mcP} = \{ t_0 ,\tau _1 ,t_1 ,\tau _2 , \cdots ,\tau _n ,t_n \}$ be a 
HK-$\de$ partition. If $\tau _i  \in E$ for $1 \leqslant i \leqslant n$,  
then $(t_{i - 1} ,t_i ) \subset F$ so that
\begin{equation}
\left\| {\sum\nolimits_{i = 1}^n {\Delta t_i A(\tau _i ) - A\la(F)} } \right\| 
= \left\| A \right\|\left[ {\la(F) - \sum\nolimits_{\tau _i  \in E} 
{\Delta t_i } } \right].
\label{(20)}
\end{equation}
On the other hand, if  $\tau _i  \notin E$ then $(t_{i - 1} ,t_i ) \cap D 
= \emptyset$ (empty set), then it follows that:
\begin{equation}
\left\| {\sum\nolimits_{i = 1}^n {\Delta t_i A(\tau _i ) - A\la(D)} } \right\| 
= \left\| A \right\|\left[ {\sum\nolimits_{\tau _i  \notin E} {\Delta t_i }  - \la(D)} \right].
\label{(21)}
\end{equation}
Combining equations \eqref{(20)} and \eqref{(21)}, we have that
\begin{eqnarray}
\begin{gathered}
  \left\| {\sum\nolimits_{i = 1}^n {\Delta t_i A(\tau _i ) - A\la(E)} } \right\| 
= \left\| A \right\|\left[ {\sum\nolimits_{\tau _i  \in E} 
{\Delta t_i }  - \la(E)} \right] \hfill \\
  {\text{                      }} \leqslant \left\| A \right\|\left[ {\la(F) 
- \la(E)} \right] \leqslant \left\| A \right\|\left[ {\la(F) - \la(D)} \right] 
\leqslant \left\| A \right\| \la(F\backslash D) < \varepsilon . \hfill \\ 
\end{gathered}
\label{(22)} 
\end{eqnarray}
Suppose that $A(t) = \sum\nolimits_{k = 1}^\infty  {A_k \chi _{E_k } (t)} $.  
By definition, $A(t) $ is Bochner integrable if and only if  $\left\| {A(t)} \right\|$ 
is Lebesgue integrable with: 
\[
(B)\int_a^b {A(t)} dt = \sum\nolimits_{k = 1}^\infty  {A_k \la(E_k )}, 
\]
and (cf. Hille and Phillips \cite{HP})
\[
(L)\int_a^b {\left\| {A(t)} \right\|} dt = \sum\nolimits_{k = 1}^\infty  
{\left\| {A_k } \right\|\la(E_k )}. 
\]
As the partial sums converge uniformly, $Q[a,b] $ exists and 
\[
Q[a,b] \equiv (HK)\int_a^b {A(t)} dt = (B)\int_a^b {A(t)} dt.
\]
Let $A(t) $ be an arbitrary Bochner integrable operator-valued function in 
$ L({\mathcal{H}})$, uniformly measurable and defined on $[a,b] $.  By definition, 
there exists a sequence $\{ A_k (t)\} $ of countably valued 
operator-valued functions in $L({\mathcal{H}})
$ that converges to $A(t) $ in the uniform operator topology such that:
\[
\mathop {\lim }\limits_{k \to \infty } (L)\int_a^b {\left\| {A_k (t) - A(t)} \right\|dt}  = 0,
\]
and
\[
(B)\int_a^b {A(t)dt}  = \mathop {\lim }\limits_{k \to \infty } (B)\int_a^b {A_k (t)dt}. 
\]
Since the $A_k (t) $ are countably-valued, 
\[
(HK)\int_a^b {A_k (t)dt}  = (B)\int_a^b {A_k (t)dt}, 
\]
so
\[
(B)\int_a^b {A(t)dt}  = \mathop {\lim }\limits_{k \to \infty } (HK)\int_a^b {A_k (t)dt}. 
\]
We are done if we show that $Q[a,b] $ exists.  Since every L-integral is a HK integral,  
$f_k (t) = \left\| {A_k (t) - A(t)} \right\|$ has a HK integral.  
This means that $\mathop {\lim }\limits_{k \to \infty } (HK)\int_a^b {f_k (t)dt}  = 0$.
Let $\varepsilon  > 0$ and let $m$ be so large that
\[
\left\| {(B)\int_a^b {A(t)dt}  - (HK)\int_a^b {A_m (t)dt} } \right\| < {\varepsilon  \mathord{\left/
 {\vphantom {\varepsilon  4}} \right.
 \kern-\nulldelimiterspace} 4}
\]
and
\[
(HK)\int_a^b {f_k (t)dt}  < {\varepsilon  \mathord{\left/
 {\vphantom {\varepsilon  4}} \right.
 \kern-\nulldelimiterspace} 4}.
\]
Choose $\delta _1$ so that, if $\{ t_0 ,\tau _1 ,t_1 ,\tau _2 , \cdots ,\tau _n ,t_n \}$ 
is a HK-${{\de_1}}$ partition, then 
\[
\left\| {(HK)\int_a^b {A_m (t)dt}  - \sum\nolimits_{i = 1}^n {\Delta t_i A_m (\tau _i )} } \right\| 
< {\varepsilon  \mathord{\left/
 {\vphantom {\varepsilon  4}} \right.
 \kern-\nulldelimiterspace} 4}.
\]
Now choose $\delta _2 $ so that, whenever $\{ t_0 ,\tau _1 ,t_1 ,\tau _2 , 
\cdots ,\tau _n ,t_n \}$ is a HK-${{\de_2}}$ partition,
\[
\left\| {(HK)\int_a^b {f_m (t)dt}  - \sum\nolimits_{i = 1}^n {\Delta t_i f_m (\tau _i )} } 
\right\| < {\varepsilon  \mathord{\left/
 {\vphantom {\varepsilon  4}} \right.
 \kern-\nulldelimiterspace} 4}.
\]
Set $\delta  = \delta _1  \wedge \delta _2$ so that, by Lemma 2.19,  $\{ t_0 ,\tau _1 ,t_1 ,\tau _2 , 
\cdots ,\tau _n ,t_n \}$ is a HK-${{\de}_1}$ and HK-${{\de_2}}$ partition so that:
\[
\begin{gathered}
  \left\| {(B)\int_a^b {A(t)dt}  - \sum\limits_{i = 1}^n {\Delta t_i A(\tau _i )} } 
\right\| \leqslant \left\| {(B)\int_a^b {A(t)dt}  - (HK)\int_a^b {A_m (t)dt} } \right\| \hfill \\
  {\text{       }} + \left\| {(HK)\int_a^b {A_m (t)dt}  - \sum\nolimits_{i = 1}^n 
{\Delta t_i A_m (\tau _i )} } \right\| + \left| {(HK)\int_a^b {f_m (t)dt}  
- \sum\nolimits_{i = 1}^n {\Delta t_i f_m (\tau _i )} } \right| \hfill \\
  {\text{                                                                                 
}} + (HK)\int_a^b {f_m (t)dt}  < \varepsilon . \hfill \\ 
\end{gathered} 
\]
\end{proof}
\begin{Def}A function $g: [a,b] \subset \R \to \mcH$ is said to be of bounded variation or BV, 
if (cf. Appendix and \cite{AFP}) 
\[
\mathop {\sup }\limits_\mathbb{P} \left\| {\sum\limits_{i = 1}^n {\left[ {g (b_i ) 
- g (a_i )} \right]} } \right\| < \infty,
\]
for all partitions $\mathbb{P} = \left\{ {(a_1 ,b_1 ), \ldots ,(a_n ,b_n )} \right\}$ 
of the non-overlapping subintervals of $[a,b]$.  In this case, we set 
\[
\mathop {\sup }\limits_\mathbb{P} \left\| {\sum\limits_{i = 1}^n {\left[ {g(b_i ) 
- g(a_i )} \right]} } \right\| = BV_a^b (g).
\]
\end{Def}
\begin{thm} Let $g: [a,b] \to \mcH$, be of BV.
\begin{enumerate}
\item If $h$ is continuous in $[a,b]$, then
\[
I=HK\int_a^b {h(s)dg(s)} 
\]
exists.
\item If, in addition $A$ is a closed densely defined linear operator on $\mcH, \; 
g \in D(A)$ and $Ag(s)=f(s)$ is of BV, then
\begin{equation}
AI=A\int_a^b {h(s)dg(s)}  = \int_a^b {h(s)df(s)} .
\end{equation}
\label{(23)}
\end{enumerate}
\end{thm}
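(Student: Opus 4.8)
The plan is to handle both parts through the Stieltjes sums attached to tagged HK-partitions, using completeness of $\mcH$ for part (1) and closedness of $A$ for part (2); throughout I treat the continuous integrand $h$ as scalar-valued, which is what makes the products $h(\tau_i)\lt[g(t_i)-g(t_{i-1})\rt]$ lie in $\mcH$. For part (1), I would establish a Cauchy criterion for the net of sums
\[
S_g(\mcP)=\sum_{i=1}^{n}h(\tau_i)\lt[g(t_i)-g(t_{i-1})\rt],
\]
indexed by HK-$\de$ partitions $\mcP=\{(\tau_i,[t_{i-1},t_i])\}$. Since $h$ is continuous on the compact interval $[a,b]$ it is uniformly continuous, so given $\e>0$ I would pick $\eta>0$ with $|h(s)-h(t)|<\e/(1+BV_a^b(g))$ whenever $|s-t|<\eta$, and then take the constant gauge $\de(t)\equiv\eta/2$. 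Every HK-$\de$ partition then has all tagged intervals of length less than $\eta$, and Cousin's Lemma guarantees such partitions exist.

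The key step is the refinement estimate. Given two HK-$\de$ partitions $\mcP_1,\mcP_2$, I would pass to their common refinement $\mcR$ and bound $\lt\|S_g(\mcP_j)-S_g(\mcR)\rt\|$. Because $g(t_i)-g(t_{i-1})$ telescopes over the subintervals of $\mcR$ contained in $[t_{i-1},t_i]$, that difference collapses into a sum of terms $[h(\tau_i)-h(\sigma)]\,[g(s')-g(s'')]$ in which both tags lie inside an interval of length $<\eta$; the scalar factors are then $<\e/(1+BV_a^b(g))$ while the norms of the $g$-increments sum to at most $BV_a^b(g)$. This gives $\lt\|S_g(\mcP_1)-S_g(\mcP_2)\rt\|<2\e$, so the sums form a Cauchy net and completeness of $\mcH$ produces the limit $I$.

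For part (2), I would first apply part (1) to $f$ — legitimate precisely because $f=Ag$ is of BV and $h$ is continuous — to secure the existence of $J=HK\int_a^b h(s)\,df(s)$. Since each $g(t_i)\in D(A)$, since $A$ is linear, and since the $h(\tau_i)$ are scalars, every Stieltjes sum lies in $D(A)$ and satisfies
\[
A\,S_g(\mcP)=\sum_{i=1}^{n}h(\tau_i)\,A\lt[g(t_i)-g(t_{i-1})\rt]=\sum_{i=1}^{n}h(\tau_i)\lt[f(t_i)-f(t_{i-1})\rt]=S_f(\mcP).
\]
Taking $\e=1/n$, extracting the associated gauges, and choosing at stage $n$ a gauge dominated by the minimum of the two gauges furnished by part (1) for $g$ and for $f$ — so that, by the gauge-comparison lemma stated above, a single partition $\mcP_n$ is fine for both — I obtain $S_g(\mcP_n)\to I$ and $A\,S_g(\mcP_n)=S_f(\mcP_n)\to J$. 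Closedness of $A$ then forces $I\in D(A)$ and $AI=J$, which is the asserted identity.

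The point demanding the most care is not either limiting argument in isolation but the interplay that makes closedness usable: closedness yields $AI=J$ only once we already know that $A\,S_g(\mcP_n)$ converges to something, and this is exactly why the hypothesis that $f=Ag$ be of bounded variation is indispensable. Without it, part (1) would not supply a limit for the sequence $A\,S_g(\mcP_n)$, and the closed-graph passage could not be invoked. The remaining technical matter — arranging one decreasing sequence of gauges that simultaneously controls the $g$- and $f$-sums — is routine and is dispatched by the gauge-comparison lemma.
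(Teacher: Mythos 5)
Your proposal is correct, but there is nothing in the paper to compare it against: this theorem is stated without any proof (unlike the neighbouring HK results, which are proved in detail), the authors implicitly deferring to the Gill--Zachary monograph. So your argument genuinely fills a gap rather than paralleling or diverging from a printed one. Part (1) is the classical Riemann--Stieltjes existence argument -- uniform continuity of the scalar integrand, a constant gauge, telescoping against a common refinement, a Cauchy net, completeness of $\mcH$ -- correctly transplanted to the HK setting; part (2) is the standard closed-operator limit argument, and your use of Cousin's Lemma together with the paper's gauge-comparison lemma (its Lemma 2.19) to manufacture a single sequence of partitions that is simultaneously fine for the $g$-sums and the $f$-sums is exactly the right mechanism. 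You also correctly isolate why the hypothesis that $f=Ag$ be of BV is what makes closedness usable.

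One step deserves repair, or at least a remark. Your refinement estimate bounds the error by the modulus-of-continuity factor times $\sum_j \| g(s_j)-g(s_{j-1})\|$, i.e. by the \emph{strong} variation of $g$. The paper's definition of BV places the norm outside the sum, $\sup_{\mathbb{P}} \| \sum_i [g(b_i)-g(a_i)] \| < \infty$ over finite collections of non-overlapping subintervals; in an infinite-dimensional $\mcH$ this quantity (a semivariation) can be finite while $\sum_i \|g(b_i)-g(a_i)\|$ is unbounded (jumps of size $e_n/n$ along an orthonormal basis give a standard example). So your bound does not follow literally from the stated hypothesis. The fix is cheap and preserves your structure: the refinement error has the form $\sum_j c_j [g(s_j)-g(s_{j-1})]$ with scalar weights $|c_j|<\e'$, and since $c \mapsto \| \sum_j c_j [g(s_j)-g(s_{j-1})] \|$ is convex, its maximum over the cube of weights is attained at a vertex, i.e. at a subcollection of intervals; splitting real/imaginary and positive/negative parts then gives $\| \sum_j c_j [g(s_j)-g(s_{j-1})] \| \le 4 \e' \, BV_a^b(g)$ with the paper's $BV$, and the rest of your argument runs unchanged. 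If instead the paper's norm placement is simply a typo for the strong variation (the Hille--Phillips convention), your proof is correct verbatim.
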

Let $(\Om, \mfB(\Om), \la)$ be a measure space, where $\Om$ is a subset of $\R^n$ 
and $\la$ is the Lebesgue measure.  We know that $L^\iy(\Om, \mfB(\Om), \la)=L^1(\Om, \mfB(\Om), 
\la)^*$, is the dual space of $L^1(\Om, \mfB(\Om), \la)$. For a later study of the Feynman 
integral, we must describe the dual space $L^\iy(\Om, \mfB(\Om), \la)^*$ of  
$L^\iy(\Om, \mfB(\Om), \la)$ in more detail. (Note that the indicator function 
of a set $B,\;  I_B(x)=1, x\in B$ and $0$ otherwise.)
\begin{thm} If $\ell \in L^\iy(\Om, \mfB(\Om), \la)^*$, there is a finitely additive 
complex-valued signed measure $\mu_\ell$ of bounded total variation and 
absolutely continuous with respect to $\la$, such that 
\[
\ell (\phi ) = \int_\Omega  {\phi (x)d\mu_\ell (x),\quad \phi }  \in 
L^\infty  [\Omega ,\mathfrak{B}[\Omega ],\lambda ].
\]
Thus, $L^\iy(\Om, \mfB(\Om), \la)^* = \mfM (\Om, \mfB(\Om), \la)$, 
the finitely additive measures on $\Om$.
\end{thm}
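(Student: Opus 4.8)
The plan is to characterize the dual of $L^\infty[\Omega,\mathfrak{B}[\Omega],\lambda]$ by identifying each bounded linear functional with a finitely additive set function, and then recognizing that integration against this set function reproduces the functional. The starting point is the observation that the indicator functions $I_B$ for $B \in \mathfrak{B}[\Omega]$ all lie in $L^\infty$, so given $\ell \in L^\infty[\Omega,\mathfrak{B}[\Omega],\lambda]^*$, I would \emph{define} a candidate set function by $\mu_\ell(B) := \ell(I_B)$. First I would check that $\mu_\ell$ is finitely additive: if $B_1, B_2$ are disjoint measurable sets, then $I_{B_1 \cup B_2} = I_{B_1} + I_{B_2}$ as elements of $L^\infty$, and linearity of $\ell$ gives $\mu_\ell(B_1 \cup B_2) = \mu_\ell(B_1) + \mu_\ell(B_2)$. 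I would not expect countable additivity here; that is precisely why the dual is the space of finitely additive measures $\mathfrak{M}(\Omega,\mathfrak{B}[\Omega],\lambda)$ rather than the countably additive ones.

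Next I would verify the two regularity properties claimed. For bounded total variation, I would take any finite measurable partition $\{B_1,\dots,B_m\}$ of $\Omega$ and, choosing unimodular scalars $c_j$ so that $c_j\mu_\ell(B_j) = |\mu_\ell(B_j)|$, form the simple function $\phi = \sum_{j=1}^m c_j I_{B_j}$, which satisfies $\|\phi\|_\infty \le 1$. Then $\sum_{j=1}^m |\mu_\ell(B_j)| = \ell(\phi) \le \|\ell\|$, and taking the supremum over partitions bounds the total variation by the operator norm $\|\ell\|$. For absolute continuity with respect to $\lambda$, I would argue that if $\lambda(B)=0$ then $I_B = 0$ as an element of $L^\infty$ (equality $\lambda$-a.e.), whence $\mu_\ell(B) = \ell(I_B) = \ell(0) = 0$. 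This is the step where the representing measure inherits its dependence on the underlying Lebesgue class.

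It then remains to show that the integral representation $\ell(\phi) = \int_\Omega \phi\, d\mu_\ell$ holds for every $\phi \in L^\infty$. For a simple function $\phi = \sum_{j=1}^m a_j I_{B_j}$ this is immediate from the definition of $\mu_\ell$ and linearity, since $\int_\Omega \phi\, d\mu_\ell = \sum_j a_j \mu_\ell(B_j) = \sum_j a_j \ell(I_B{}_j) = \ell(\phi)$. For general $\phi$ I would approximate uniformly by simple functions $\phi_n \to \phi$ in the $L^\infty$ norm; continuity of $\ell$ handles the left side, while the bounded-variation (hence bounded total mass) property of $\mu_\ell$ controls the integral on the right, so both sides pass to the limit and agree. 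The converse inclusion---that every such finitely additive $\mu$ defines a bounded functional via this integral---is routine, with the norm bounded by the total variation, establishing the asserted identification $L^\infty[\Omega,\mathfrak{B}[\Omega],\lambda]^* = \mathfrak{M}(\Omega,\mathfrak{B}[\Omega],\lambda)$.

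The main obstacle I anticipate is making the integral $\int_\Omega \phi\, d\mu_\ell$ well defined against a merely finitely additive measure, and justifying the limit passage for nonsimple $\phi$. Unlike the countably additive theory, one cannot invoke dominated convergence; instead the argument must rest entirely on \emph{uniform} approximation by simple functions together with the finite total variation of $\mu_\ell$, which is exactly why the bounded-variation estimate established above is the linchpin of the proof rather than a mere technical aside.
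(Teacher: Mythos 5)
Your proposal is correct and follows essentially the same route as the paper's own proof: define $\mu_\ell(B)=\ell(I_B)$, verify finite additivity and absolute continuity from $\lambda(B)=0 \Rightarrow I_B=0$ a.e., and recover $\ell(\phi)=\int_\Omega \phi\,d\mu_\ell$ by uniform approximation with simple functions. The only notable difference is in your favor: where the paper merely bounds $\sup_B|\mu_\ell(B)|$ by $\|\ell\|$ (and invokes a Jordan-type reduction to real measures), you bound the total variation directly over finite partitions using unimodular coefficients, which is the cleaner and more complete justification of the bounded-variation claim.
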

\begin{proof} According to the Jordan Decomposition Theorem, every complex measure 
$\mu$ can be written as $\mu =\mu_1+\mu_2 +i(\mu_3+\mu_4)$, where $\mu_1, \mu_3$  
are positive measures and $\mu_2, \mu_4$ are negative measures (see \cite{GZ}). 
Thus, it is sufficient to prove the theorem when $\mu$ is real. Let $\ell \in  
L^\iy(\Om, \mfB(\Om), \la)^*$ and, for each $B \in \mfB[\Om]$ set 
$\mu_\ell(B) =\ell(I_B)$, where $I_B$ is the indicator function of $B$.  
If $B_1, \, B_2  \in \mfB[\Om], \;B_1 \cap B_2=\emptyset$, then 
$I_{B_1 +B_2}=I_{B_1} +I_{B_2}$ so that 
\[
\ell (I_{B_1 }  + I_{B_2 } ) = \ell (I_{B_1 } ) + \ell (I_{B_2 } ) \Rightarrow 
\mu_\ell \left( {B_1  \cup B_2 } \right) = \mu_\ell \left( {B_1 } \right) 
+ \mu_\ell \left( {B_2 } \right).
\]
Since
\[
\mathop {\sup }\limits_{B \in \mathfrak{B}} \left| {\mu_\ell (B)} \right| 
= \mathop {\sup }\limits_{B \in \mathfrak{B}} \left| {\ell (I_B )} \right| 
\leqslant \left\| \ell  \right\|\left\| {I_B } \right\| < \infty ,
\]
we see that $\mu$ is of bounded variation. 

Let $\phi \in L^\iy(\Om, \mfB(\Om), \la)$ be arbitrary.  For any $\e>0$, 
a simple function $s_\e$ exists such that
\[
s_\varepsilon   = \sum\nolimits_{i = 1}^N {a_i I_{B_i } } ,\quad \lambda \left( 
{B_i  \cap B_j } \right) = 0,\;i \ne j,\quad  \bigcup _{i = 1}^N B_i  = \Omega 
\]
and  
\[
\left\| {\phi  - \sum\nolimits_{i = 1}^N {a_i I_{B_i } } } \right\| < \varepsilon ,
\quad {\text{so that}}\;\left\| {\ell (\phi ) - \sum\nolimits_{i = 1}^N {a_i 
\mu _\ell  (B_i )} } \right\| < \varepsilon \left\| \ell  \right\|.
\]
It follows that
\[
\ell (\phi ) = \int_\Omega  {\phi (x)d\mu _\ell  (x)} ,\quad {\text{and}}\quad \left\| 
\ell  \right\| = \mathop {\sup }\limits_{ess\sup \left| \phi  \right| \leqslant 1} 
\left| {\int_\Omega  {\phi (x)d\mu _\ell  (x)} } \right|.
\]
Finally, because $\la(B)=0$ implies that $I_B=0$ (a.s.), it follows that $\mu_\ell(B)=0$ 
so that $\mu_\ell$ is absolutely continuous with respect to $\la$ (by definition, see \cite{GZ}).
\end{proof}
From here, we see that $L^1(\Om, \mfB(\Om), \la)^{**} = \mfM (\Om, \mfB(\Om), \la)$ and, 
the injection of $L^1(\Om, \mfB(\Om), \la) \to \mfM (\Om, \mfB(\Om), \la)$ is dense.  
Because $L^1(\R^n, \mfB(\R^n), \la)$ is a Banach algebra under convolution, 
it is easy to prove that property.
\begin{cor}  $\mfM(\R^n, \mfB(\R^n), \la)$ is a Banach algebra under convolution.
\end{cor}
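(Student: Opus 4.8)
The plan is to exploit the identification $\mfM(\R^n,\mfB(\R^n),\la) = L^1(\R^n,\mfB(\R^n),\la)^{**}$ recorded just above, together with the fact that $(L^1(\R^n),*)$ is a Banach algebra under convolution, and to transport the convolution product to the bidual by means of the Arens construction. First I would recall the Arens product on the bidual of an arbitrary Banach algebra $A$. For $f \in A^{*}$ and $a \in A$ one sets $\langle f\cdot a,\,b\rangle = \langle f,\, a*b\rangle$ for all $b \in A$; for $\Phi \in A^{**}$ and $f \in A^{*}$ one sets $\langle \Phi\cdot f,\,a\rangle = \langle \Phi,\, f\cdot a\rangle$ for all $a \in A$; and finally for $\Phi,\Psi \in A^{**}$ one defines $\langle \Phi\,\Box\,\Psi,\,f\rangle = \langle \Phi,\,\Psi\cdot f\rangle$ for all $f \in A^{*}$. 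With $A = L^1(\R^n)$, so that $A^{*} = L^\iy(\R^n)$ and $A^{**} = \mfM(\R^n)$, this produces a bilinear product $\Box$ on $\mfM(\R^n)$, which I would take as the definition of the convolution of two finitely additive measures.

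The remaining steps are verifications carried out in the following order. (i) Each of the three pairings above is well defined and bounded: $\|f\cdot a\| \le \|f\|\,\|a\|$ because convolution satisfies $\|a*b\|_1 \le \|a\|_1\,\|b\|_1$, and the two subsequent pairings inherit boundedness by duality, yielding the submultiplicative estimate $\|\Phi\,\Box\,\Psi\| \le \|\Phi\|\,\|\Psi\|$. (ii) Associativity of $\Box$ is obtained by unwinding the three definitions and invoking associativity of convolution on $L^1$, testing the identity $(\Phi\,\Box\,\Psi)\,\Box\,\Theta = \Phi\,\Box\,(\Psi\,\Box\,\Theta)$ against an arbitrary $f \in L^\iy(\R^n)$. (iii) The canonical embedding $\kappa\colon L^1(\R^n)\to\mfM(\R^n)$ is multiplicative, $\kappa(a)\,\Box\,\kappa(b)=\kappa(a*b)$, so that $\Box$ genuinely extends convolution; here the density of the injection $L^1 \to \mfM$ recorded above shows that $\kappa(L^1(\R^n))$ is a dense subalgebra, which justifies reading $\Box$ as convolution of finitely additive measures. (iv) Completeness of $\mfM(\R^n)$ is immediate, since it is a dual Banach space; combined with (i) and (ii) this gives the Banach-algebra structure.

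The main obstacle is not any single estimate but rather the failure of $L^1(\R^n)$ to be \emph{Arens regular}: the two Arens products on the bidual do not agree, so I must commit at the outset to one order in the definition of $\Phi\,\Box\,\Psi$ and then check that the resulting operation is consistent with the convolution already available on the image of $L^1$. Concretely, the delicate point is that for finitely additive $\mu,\nu$ one cannot appeal to Fubini's theorem to interchange the order of integration in the heuristic formula $\langle \mu*\nu,\,\phi\rangle = \iint \phi(x+y)\,d\mu(x)\,d\nu(y)$; the Arens chain $f \mapsto f\cdot a \mapsto \Phi\cdot f$ fixes this order intrinsically and is precisely the rigorous substitute for the naive iterated-integral argument. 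Once associativity and submultiplicativity are established, the corollary follows, and the same reasoning shows more generally that the bidual of any Banach algebra, equipped with one of its Arens products, is again a Banach algebra.
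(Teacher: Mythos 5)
Your proof is correct, and it is in fact considerably more complete than what the paper itself provides: the paper offers no argument for this corollary at all, merely remarking beforehand that $\mfM(\R^n,\mfB(\R^n),\la)=L^1(\R^n,\mfB(\R^n),\la)^{**}$, that the injection of $L^1$ into $\mfM$ is dense, and that the Banach-algebra property is then ``easy to prove.'' Your Arens-product construction is the standard rigorous implementation of exactly that plan, and your diagnosis of why such machinery is needed is the right one: the canonical image of $L^1(\R^n)$ in its bidual is norm-closed (the embedding is an isometry onto a complete subspace), so by non-reflexivity it can only be weak-$*$ dense (Goldstine), and one cannot simply extend convolution ``by continuity'' in norm; likewise the naive iterated-integral formula for $\mu*\nu$ with $\mu,\nu$ finitely additive founders on the failure of Fubini, as you note. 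Two points of care. First, when you call $\kappa(L^1(\R^n))$ a ``dense subalgebra'' you should say explicitly that the density is weak-$*$ density, and that the first Arens product is weak-$*$ continuous in the left variable (and in the right variable when the left entry lies in $\kappa(L^1)$), which is what makes your extension the canonical one compatible with the paper's remark. Second, your observation about Arens regularity has a consequence worth recording: since $L^1(\R^n)$ is not Arens regular, the algebra $(\mfM,\Box)$ is genuinely non-commutative even though convolution on $L^1(\R^n)$ is commutative; this does not affect the corollary as stated, since a Banach algebra need not be commutative, but it shows that ``convolution'' on $\mfM$ depends on which of the two Arens products one fixes, a subtlety entirely absent from the paper's one-line justification.
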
 
Recall that, if $\Om$ is an open subset of $\R^n$, then  $ \C_c(\Om)$ is the set of all 
continuous functions defined on $\Om$, that vanish outside a compact set.
\begin{cor} If $\phi \in \C_c(\Om)$, then for each $\ell \in \C_c(\Om)^*$, there is 
a countably additive complex measure $\mu_\ell$ such that.
\[
\ell (\phi ) = \int_\Omega  {\phi (x)d\mu _\ell  (x)}.
\]
\end{cor}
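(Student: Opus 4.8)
The plan is to prove this as an instance of the Riesz--Markov--Kakutani representation theorem, using the preceding theorem to supply the finitely additive skeleton and then upgrading to countable additivity by exploiting the compact support of the test functions.

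First I would reduce to positive functionals. Exactly as in the proof of the preceding theorem, the Jordan decomposition lets me write $\ell = \ell_1 - \ell_2 + i(\ell_3 - \ell_4)$ with each $\ell_j$ positive, i.e. $\ell_j(\phi) \ge 0$ whenever $\phi \ge 0$; it therefore suffices to represent a positive $\ell \in \C_c(\Om)^*$ by a positive countably additive measure. For such an $\ell$ I would define a set function on the open subsets $V \subset \Om$ by
\[
\mu_\ell(V) = \sup\{\ell(\phi): \phi \in \C_c(\Om),\ 0 \le \phi \le 1,\ \operatorname{supp}\phi \subset V\},
\]
and extend it to $\mfB[\Om]$ by outer regularity, $\mu_\ell(E) = \inf\{\mu_\ell(V): E \subset V,\ V \text{ open}\}$. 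The representation $\ell(\phi) = \int_\Om \phi\, d\mu_\ell$ would then follow by slicing $\phi$ along its level sets and squeezing it between subordinate functions; boundedness of $\ell$ together with finiteness of $\mu_\ell$ on compacta reproduces the bounded total variation and the absolute continuity with respect to $\la$ already furnished by the preceding theorem.

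The hard part is countable additivity, which is exactly what separates $\C_c(\Om)^*$ from $L^\iy(\Om,\mfB(\Om),\la)^*$. Since $\C_c(\Om) \subset L^\iy(\Om,\mfB(\Om),\la)$, a Hahn--Banach extension of $\ell$ together with the preceding theorem already produces a finitely additive representing measure; but absolute continuity of a finitely additive measure does not by itself upgrade to countable additivity. The decisive extra input is the compactness of the supports. Concretely, to obtain countable subadditivity on open sets, $\mu_\ell(\bigcup_k V_k) \le \sum_k \mu_\ell(V_k)$, I would take any admissible $\phi$ with $\operatorname{supp}\phi \subset \bigcup_k V_k$; since $\operatorname{supp}\phi$ is compact it is covered by finitely many $V_{k_1}, \dots, V_{k_m}$, and a partition of unity subordinate to this finite subcover splits $\phi = \sum_{j=1}^m \phi_j$ with $\operatorname{supp}\phi_j \subset V_{k_j}$, whence $\ell(\phi) = \sum_j \ell(\phi_j) \le \sum_j \mu_\ell(V_{k_j})$.

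Countable additivity on disjoint Borel sets then follows from this subadditivity, outer regularity, and the finite additivity already in hand. I expect the construction of the subordinate partition of unity and the verification that outer regularity is compatible with the level-set approximation of $\phi$ to be the most delicate points, since these are precisely where the compact support of the test functions is used; everything else is bookkeeping parallel to the preceding proof.
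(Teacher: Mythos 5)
Your proposal is correct in substance: it is the classical Riesz--Markov--Kakutani construction, and carried out in full it does prove the corollary. It is, however, a genuinely different route from the paper's, because the paper offers no construction at all here: the corollary is stated as a consequence of the preceding theorem on $L^\iy(\Om,\mfB(\Om),\la)^*$, i.e.\ the implicit route is ``embed $\C_c(\Om)$ in $L^\iy(\Om,\mfB(\Om),\la)$, extend $\ell$ by Hahn--Banach, restrict the resulting finitely additive measure.'' You correctly identify that this implicit route is incomplete as stated---finite additivity plus absolute continuity does not upgrade to countable additivity---and you supply the missing compactness input directly: the measure is built from $\ell$ on open sets, and countable subadditivity comes from a partition of unity subordinate to a finite subcover of the compact support. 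What your approach buys is a self-contained and completable proof; what the paper's buys is brevity, at the cost of hiding exactly the point (countable additivity) that the corollary is about.

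Two corrections to your write-up. First, the claim that your construction ``reproduces \ldots the absolute continuity with respect to $\la$ already furnished by the preceding theorem'' is false and should be deleted: the representing measure of a general $\ell\in\C_c(\Om)^*$ is \emph{not} absolutely continuous with respect to $\la$. Take $\ell(\phi)=\phi(x_0)$; its countably additive Riesz measure is the Dirac measure $\delta_{x_0}$, which is singular with respect to $\la$. The finitely additive measure produced by the Hahn--Banach route (e.g.\ an ultrafilter limit of normalized averages over balls shrinking to $x_0$) is absolutely continuous with respect to $\la$, but it is a different set function that merely agrees with $\delta_{x_0}$ when integrated against $\C_c(\Om)$; absolute continuity is precisely what must be given up in exchange for countable additivity. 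This does not damage the corollary, which asserts nothing about absolute continuity, but the sentence reflects a real confusion between the two representing objects. Second, your final step derives countable additivity from ``subadditivity, outer regularity, and finite additivity''; in the standard argument one also needs inner regularity by compact sets, together with Urysohn functions separating disjoint compacta, to get additivity on the relevant class of sets before passing to the Borel $\sigma$-algebra. That step is standard but it is more than bookkeeping, and your outline should name it.
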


\section{Lebesgue Measure on $\R^\iy$}

In this section we briefly review the theory of Lebesgue's measure on 
infinite-dimensional space (see \cite{GZ}). We used a basic requirement for any 
mapping that would serve as an acceptable version of the Lebesgue measure 
on $\R^\iy$. If $I_0= {\left[ { - \tfrac{1}{2},\tfrac{1}{2}} 
\right]^{{\aleph _0}}}$, 
we know that any definition $\la_\iy(\cdot)$ of Lebesgue measure must 
satisfy $\lambda _\infty  \left[ {I_0 } \right] = 1$. 
This requirement is the centerfold of our approach. 

Let $I= [-\tf{1}{2}, \tf{1}{2}], \; I_n=\times_{i = n+1}^\infty I$, so that 
$I_0=\times_{i = 1}^\infty I$ and let $\R_I^n ={\R^n} \times I_n$. It 
is evident that $\mathbb{R}_I^n \subset \mathbb{R}_I^{n+1}$. Since this is 
an increasing sequence, we can define ${\mathbb{R}'}_I^{\infty}$ by: 
\[
{\mathbb{R}'}_I^{\infty} =\rm{lim}_{n\rightarrow \infty}\mathbb{R}_I^n=  
\mathop  \cup \limits_{k =  1}^\infty {\mathbb{R}_I^k}.
\]
Let $\tau_1$ be the topology on ${\mathbb{R}'}_I^{\infty}={\mfX}_1$ induced 
by the class of open sets $\mfO$ defined by:
\[
\mfO=\bigcup_{n=1}^\iy{{\mathfrak{O}}_n} =\bigcup_{n=1}^\iy{ \left\{ {U_n :U_n  
= U \times I_n ,\; U \; {\text{open in }}\mathbb{R}^n } \right\}},
\]
and let $\tau_2$ be the discrete topology on $\R^{\infty} \setminus  
{\mathbb{R}'}_I^{\infty}={\mfX}_2$ induced by the metric $d_2$, for which 
$d_2(x,y)=1, \, x \neq y$  and $d_2(x,y)=0, \, x=y$, for all  $x, y \in {\mfX}_2$.
\begin{Def}We define $({\mathbb{R}}_I^{\infty}, \tau)= \left( {\mfX_1 ,\tau _1 } \right) 
\oplus \left( {\mfX_2 ,\tau _2 } \right)$, the topological direct sum of 
$({\mfX}_1, \tau_1)$ and $({\mfX}_2, \tau_2)$, so that every open set in 
$({\mathbb{R}}_I^{\infty}, \tau)$ is the disjoint union of two sets 
$G_1 \cup G_2$, where $G_1$ is open in $({\mfX}_1, \tau_1)$ and $G_2$ is 
open in $({\mfX}_2, \tau_2)$.
\end{Def}
\begin{Def}  If $A_n=A \times I_n,\; B_n=B \times I_n \in \mathbb{R}_I^n$, we define: 
\begin{enumerate}
\item $A_n \cup B_n= A \cup B \times I_n$,
\item $A_n \cap B_n= A \cap B \times I_n$, and
\item $B_n^c= B^c \times I_n$.
\end{enumerate}
\end{Def}
\begin{Def} 
We define $\mathbb{R}_I^n  = \mathbb{R}^n  \times I_{n} \subset R^\iy$.   
If $T$ is a linear transformation on $\R^n$ and $A_n=A \times I_n$, we define 
the extension operator $T_e$ on $\mathbb{R}_I^n$ by $T_e[A_n]=T[A] \times I_{n}$.

We define the topology on $\mathbb{R}_I^n$  via the following class of open sets:  
\[
{\mathfrak{O}}_n = \left\{ {U \times I_n: ,\;U\;{\rm{open \; in }}\; \mathbb{R}^n } \right\}
\]
and let $\mathfrak{B}[\R_I^n]$ be the natural Borel $\s$-algebra.
\end{Def}
It follows from our construction that $\R_I^{\infty}=\R^{\infty}$ as sets, 
but not as topological spaces. It is easy to prove the following result, which  
shows that convergence in the $\tau$-topology always implies convergence in the Tychonoff topology.
\begin{thm} If $y_k$ converges to $x$ in the $\tau$-topology, then $y_k$ 
converges to $x$ in the Tychonoff topology.
\end{thm}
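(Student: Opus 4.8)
The plan is to show that the $\tau$-topology refines the Tychonoff (product) topology on $\R^{\iy}$; since convergence in a finer topology always forces convergence in any coarser one, this yields the claim. Concretely, a sequence converges in the Tychonoff topology if and only if it converges coordinatewise, so it suffices to produce, for each coordinate index $j$ and each $\e>0$, a $\tau$-neighborhood of the limit $x$ that pins the $j$-th coordinate into $(x_j-\e,x_j+\e)$. Because $(\R_I^{\iy},\tau)=(\mfX_1,\tau_1)\oplus(\mfX_2,\tau_2)$ is a topological direct sum, both $\mfX_1$ and $\mfX_2$ are clopen in $\tau$, and I would split the argument according to whether $x\in\mfX_2$ or $x\in\mfX_1$.

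The case $x\in\mfX_2$ is immediate. Since $\tau_2$ is discrete and $\mfX_2$ is open in the direct sum, the singleton $\{x\}$ is $\tau$-open. Hence $\tau$-convergence $y_k\to x$ forces $y_k=x$ for all large $k$, and such an eventually constant sequence converges to $x$ in every topology, in particular in the Tychonoff topology.

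The substantive case is $x\in\mfX_1={\R'}_I^{\iy}$. Using again that $\mfX_1$ is $\tau$-open, I note first that $y_k$ lies in $\mfX_1$ for all large $k$, so the tail of the convergence is governed entirely by $\tau_1$ (the subspace topology on the clopen summand $\mfX_1$). Now fix $j$ and $\e>0$. Since $x\in\mfX_1$ there is a threshold $N$ with $x_i\in I$ for all $i>N$; choose any $n\ge\max(j,N)$ and set $U=\R^{\,j-1}\times(x_j-\e,x_j+\e)\times\R^{\,n-j}$, which is open in $\R^n$. Then $U\times I_n\in\mfO_n\subset\mfO$ is a $\tau_1$-open set, and it contains $x$ because $(x_1,\dots,x_n)\in U$ while $x_i\in I$ for $i>n$ (here $n\ge N$ is what makes this true). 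By $\tau_1$-convergence, $y_k\in U\times I_n$ for all large $k$, whence $(y_k)_j\in(x_j-\e,x_j+\e)$. As $j$ and $\e$ were arbitrary, $(y_k)_j\to x_j$ for every $j$, that is, $y_k\to x$ coordinatewise, which is Tychonoff convergence.

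The one delicate point, and the step I expect to be the crux, is that the basic $\tau_1$-sets $U\times I_n$ force \emph{every} tail coordinate beyond the $n$-th into the closed interval $I=[-\tf12,\tf12]$, so such a set is a neighborhood of $x$ only once $n$ has passed the threshold $N$ at which $x$'s own tail has entered $I$. This is precisely the information encoded in the hypothesis $x\in\mfX_1$, and it is what lets the $\tau_1$-neighborhoods detect an individual coordinate in spite of their tail constraint. It also explains why the implication runs in one direction only: the $\tau$-topology is strictly finer than the Tychonoff topology, since a coordinatewise-convergent sequence whose tail never settles into $I$ stays in $\mfX_2$ and therefore cannot converge in $\tau$ to a limit lying in $\mfX_1$.
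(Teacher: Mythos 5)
Your proof is correct. There is, however, nothing in the paper to compare it against: the paper states this theorem with no proof at all, prefacing it only with the remark that it ``is easy to prove.'' Your argument therefore fills a genuine gap rather than paralleling an existing one, and it is the natural argument: the case split along the direct sum $(\mfX_1,\tau_1)\oplus(\mfX_2,\tau_2)$ is forced (on the discrete summand $\tau$-convergence means the sequence is eventually constant), and in the substantive case your choice of the subbasic set $U\times I_n$ with $n\ge\max(j,N)$ is precisely where the hypothesis $x\in\mfX_1$ enters: the sets of $\mfO_n$ pin \emph{all} coordinates beyond the $n$-th into $I=[-\tf12,\tf12]$, so they are neighborhoods of $x$ only once $n$ passes the threshold $N$ at which $x$'s own tail lies in $I$. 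You correctly identified that as the crux. Two cosmetic remarks. First, your opening sentence promises to prove that $\tau$ refines the Tychonoff topology, but what you actually verify is the (weaker, and sufficient) sequential statement, via eventual membership in particular $\tau$-open sets; the full refinement is also true, but your proof neither needs it nor establishes it, so the framing should match the argument. Second, your closing observation about strictness (a coordinatewise-convergent sequence lying entirely in $\mfX_2$ cannot $\tau$-converge to a point of the $\tau$-open set $\mfX_1$) is correct and explains why the paper's implication is one-directional, though it is not needed for the theorem itself.
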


In a similar manner, if $\mathfrak{B}(\mathbb{R}_I^n )$ is the Borel 
$\s$-algebra for $\mathbb{R}_I^n$ (i.e., the smallest $\s$-algebra generated 
by the ${\mathfrak{O}}_n$), then  $\mathfrak{B}(\mathbb{R}_I^n ) \subset 
\mathfrak{B}(\mathbb{R}_I^{n+1} )$, hence we can define ${\mathfrak{B}'}(\mathbb{R}_I^{\infty} )$ by:
\[
{\mathfrak{B}'}(\mathbb{R}_I^{\infty} )=\rm{lim}_{n\rightarrow \infty} 
\mathfrak{B}(\R_I^{n} )=  \mathop  \cup \limits_{k =  1}^\infty \mathfrak{B}(\R_I^{k} ).
\]
If $\mathcal{P}(A)$ denotes the power set of $A$, let $\mathfrak{B}(\R_I^{\infty})$ 
be the smallest $\sigma$-algebra containing ${ \mathfrak{B}'}(\R_I^{\infty}) \cup    
\mathcal{P} (R_I^{\infty} \setminus \cup_{n=1}^{\infty}\R_I^{n})$. It is evident 
that the class $\mfB(\R_{I}^{\infty})$ coincides with the Borel $\sigma$-algebra 
generated by the $\tau$-topology on $\R^{\infty}$.
For any $A \in \mfB[\R^n]$, we define  $\la_{\infty}(A_n)$ on $\mathbb{R}_I^n$ by product measure:
\[
\la_{\infty}(A_n)=\lambda_n (A) \times \prod _{i = n+1}^\infty  \lambda_1 (I) = \lambda_n (A),
\]

\begin{thm}
$\lambda _\infty$ is a measure on ${\mathfrak{B}}(\R_I^{n})$, equivalent 
to $n$-dimensional Lebesgue measure on $\mathbb{R}^n$.
\end{thm}
\begin{proof} If $A= \mathop  \times \limits_{i =  1}^\infty  A_i \in 
\mathfrak{B}(\mathbb{R}_I^{n})$, then $\la(A_i)=1$ for $i>n$ such that the 
series $\la_\infty  (A) = \prod\nolimits_{i = 1}^\infty  {\lambda (A_i )}$ 
always converges. Furthermore, 
\begin{equation}
0<\la_\infty  (A) = \prod\nolimits_{i = 1}^\infty  {\lambda (A_i )} 
= \prod\nolimits_{i = 1}^n {\lambda (A_i )}=\la_n(\mathop  \times \limits_{i =  1}^n  A_i).
\label{(24)}
\end{equation}
Bearing in mind that sets of the type $A =\mathop \times \limits_{i =  1}^n  A_i$ generate 
$\mathfrak{B}(\mathbb{R}^n )$, we see that $\la_\infty(\cdot)$, restricted to 
$\mathbb{R}_I^{n}$, is equivalent to $\la_n(\cdot)$.
\end{proof}
\begin{cor} The measure $\lambda _\infty  (\cdot)$ is translationally and 
rotationally invariant on $(\mathbb{R}_I^{n},\mathfrak{B}[\mathbb{R}_I^{n} ])$.
\end{cor}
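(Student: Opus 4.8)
The plan is to reduce both invariances on $\R_I^n$ to the classical translation- and rotation-invariance of ordinary Lebesgue measure $\la_n$ on $\R^n$, exploiting the equivalence $\la_\infty \equiv \la_n$ established in the preceding theorem. The point is that the coordinates beyond the $n$-th enter the defining formula \eqref{(24)} only through the factor $\prod_{i>n}\la_1(I)=1$, which no rigid motion of the $\R^n$-block disturbs, so the measure-theoretic content is genuinely finite-dimensional.

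First I would record that the Borel structure on $\R_I^n$ is transparent. Since $\mfB[\R_I^n]$ is generated by the class $\mfO_n=\{U\times I_n: U \text{ open in } \R^n\}$, every $B\in\mfB[\R_I^n]$ has the form $B=A\times I_n$ for a \emph{unique} $A\in\mfB[\R^n]$ (uniqueness because $I_n$ is nonempty, so $x\in A \Leftrightarrow (x,y)\in B$ for any fixed $y\in I_n$). Hence $A\mapsto A\times I_n$ is a $\sigma$-algebra isomorphism of $\mfB[\R^n]$ onto $\mfB[\R_I^n]$, and under it \eqref{(24)} reads $\la_\infty(A\times I_n)=\la_n(A)$, exhibiting $\la_\infty$ as the pushforward of $\la_n$.

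Next I would make precise what a translation or rotation of $\R_I^n$ is, by letting a rigid motion act on the $\R^n$-factor alone and fixing the $I_n$-factor. Extending the definition of the extension operator $T_e$ from linear maps to affine ones in the obvious way, I set $(\tau_v)_e[A_n]=(A+v)\times I_n$ for a translation $\tau_v(x)=x+v$ with $v\in\R^n$, and $R_e[A_n]=R[A]\times I_n$ for an orthogonal $R\in O(n)$. Since translations and rotations are homeomorphisms of $\R^n$ carrying Borel sets to Borel sets, $T_e$ maps $\mfB[\R_I^n]$ into itself, so these operators are well defined on the $\sigma$-algebra of the corollary. The computation is then immediate: for any $A_n=A\times I_n$,
\[
\la_\infty(R_e[A_n]) = \la_\infty(R[A]\times I_n) = \la_n(R[A]) = \la_n(A) = \la_\infty(A_n),
\]
where the equalities involving $\la_\infty$ are instances of \eqref{(24)} and $\la_n(R[A])=\la_n(A)$ is rotational invariance of $\la_n$ on $\R^n$; the identical chain with $R[A]$ replaced by $A+v$, using translation invariance of $\la_n$, yields $\la_\infty((\tau_v)_e[A_n])=\la_\infty(A_n)$.

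The only step requiring genuine care — and the one I would expect to be the main obstacle, if any — is the verification that the extension operator is well defined and measurable on $\mfB[\R_I^n]$, that is, that the product decomposition $B=A\times I_n$ is unique and that $T_e$ commutes with countable unions and complements. Once the $\sigma$-algebra isomorphism $A\mapsto A\times I_n$ is in hand this becomes routine, since $T_e$ is conjugate through that isomorphism to the ordinary action of $T$ on $\mfB[\R^n]$, and the invariance statement collapses to the standard one for $\la_n$.
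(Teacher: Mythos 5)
Your proof is correct and is essentially the argument the paper intends: the paper sets up exactly this reduction by defining the extension operator $T_e[A_n]=T[A]\times I_n$ and by proving in the preceding theorem that $\la_\infty(A\times I_n)=\la_n(A)$, after which the corollary is the immediate transfer of the classical translation and rotation invariance of $\la_n$ through the isomorphism $A\mapsto A\times I_n$. Your additional care in verifying that this map is a $\sigma$-algebra isomorphism (so that $T_e$ is well defined on all of $\mfB[\R_I^n]$) is a detail the paper leaves implicit, but it is the same route.
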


\subsection{Extension of $\la_\iy(\cdot)$ to $\mathbb{R}_I^{\infty}$}

From the previous section, we see that the extension of $\la_\iy(\cdot)$ to 
$\mfB(\R_I^{\infty})$ provides the following important consequences:
\begin{enumerate}
\item $\R_I^n \in \mfB(\R_I^{\infty})$ for all $n$ and $\la_\iy(\cdot)$ is 
equivalent to  $\la_n(\cdot)$ on $\R_I^n$,
\item $\mfB(\R_I^{\infty})$ has a large number of open sets of finite measure and
\item $\la_\iy(I_0)=1$.
\end{enumerate} 
It is shown in \cite{GZ} that $\lambda _\infty  (\cdot)$ can be extended to a 
countably additive measure on ${\mathfrak{B}}(\R_I^{\infty})$ in a constructive 
manner that is closely related to the same approach used for Lebesgue measure on $\R^n$.  
We now consider an equivalent definition of $\la_\iy(\cdot)$ that proves useful, 
in that it is direct and allows us to relate our measure to a number of other 
approaches. We first recall the following characterization of a measure.  
\begin{thm} If $\mcA$ is a $\s$-algebra, the mapping $\mu: \mcA \to [0, \iy]$ 
is a measure if and only if 
\begin{enumerate} 
\item $\mu(\emptyset)=0$.
\item If $A,B \in \mcA$, and $\mu(A\cap B)=0$, then $\mu(A\cup B)=\mu(A)+\mu(B)$.
\item If $\{B_n \} \subset \mcA$ and $B_n \subset B_{n+1}$, then
\[
{\mu }\left( {\bigcup\limits_{k = 1}^\infty  {{B_k}} } \right) = 
\mathop {\lim }\limits_{n \to \infty } {\mu }\left( {{B_n}} \right).
\]
\end{enumerate}
\end{thm}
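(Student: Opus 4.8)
The plan is to prove the two implications separately, recognizing that conditions (1)--(3) simply repackage the familiar equivalence ``countable additivity $\Leftrightarrow$ finite additivity plus continuity from below,'' with the finite-additivity hypothesis phrased through null intersections rather than literal disjointness.

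For the forward direction, I would assume $\mu$ is a measure (so $\mu(\emptyset)=0$ and $\mu$ is countably additive on $\mcA$). Condition (1) is then immediate. For (2), given $A,B\in\mcA$ with $\mu(A\cap B)=0$, I would write $A\cup B = A\cup(B\setminus A)$ as a disjoint union and $B=(B\setminus A)\cup(A\cap B)$ as a disjoint union; applying additivity to both gives $\mu(A\cup B)=\mu(A)+\mu(B\setminus A)$ and $\mu(B)=\mu(B\setminus A)+\mu(A\cap B)=\mu(B\setminus A)$, whence $\mu(A\cup B)=\mu(A)+\mu(B)$. This argument only \emph{adds} measures and never subtracts, so it survives values of $+\infty$. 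For (3), given an increasing sequence $B_n$, I would disjointify by setting $C_1=B_1$ and $C_k=B_k\setminus B_{k-1}$, so that $\bigcup_k B_k=\bigcup_k C_k$ is a disjoint union and $B_n=\bigcup_{k=1}^n C_k$; countable additivity then yields $\mu\bigl(\bigcup_k B_k\bigr)=\sum_{k=1}^\infty\mu(C_k)=\lim_n\sum_{k=1}^n\mu(C_k)=\lim_n\mu(B_n)$, which is continuity from below.

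For the converse, I would assume (1)--(3) and establish countable additivity. First I would extract finite additivity for disjoint sets: if $A\cap B=\emptyset$, then $\mu(A\cap B)=\mu(\emptyset)=0$ by (1), so (2) gives $\mu(A\cup B)=\mu(A)+\mu(B)$; an induction then yields $\mu\bigl(\bigcup_{k=1}^n A_k\bigr)=\sum_{k=1}^n\mu(A_k)$ for any finite pairwise-disjoint family, using at the inductive step that $A_{n+1}\cap\bigcup_{k=1}^n A_k=\emptyset$. Now I would take a countable pairwise-disjoint family $\{A_k\}$, set $B_n=\bigcup_{k=1}^n A_k$, and note that the $B_n$ increase and lie in $\mcA$. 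Finite additivity gives $\mu(B_n)=\sum_{k=1}^n\mu(A_k)$, and condition (3) applied to $\{B_n\}$ gives $\mu\bigl(\bigcup_{k=1}^\infty A_k\bigr)=\mu\bigl(\bigcup_n B_n\bigr)=\lim_n\mu(B_n)=\sum_{k=1}^\infty\mu(A_k)$. Together with (1), this is exactly the definition of a measure.

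The argument is essentially routine, so the only points demanding care — the ``main obstacle'' in spirit — are bookkeeping ones: phrasing every step of the forward direction additively so that it remains valid when some value equals $+\infty$ (never subtracting a possibly-infinite measure), and confirming that the disjointified families and increasing sequences genuinely lie in the $\sigma$-algebra $\mcA$, so that (3) and countable additivity legitimately apply. All monotone limits and infinite sums live in $[0,\infty]$ and are unambiguous precisely because every term is nonnegative.
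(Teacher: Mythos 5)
Your proof is correct and takes essentially the same route as the paper: for the sufficiency direction both arguments derive finite additivity from (1) and (2) (disjointness forces $\mu(A\cap B)=\mu(\emptyset)=0$), set $B_n=\bigcup_{k=1}^n A_k$ for a pairwise disjoint family, and apply condition (3) to the increasing sequence $\{B_n\}$ to obtain countable additivity. The only difference is that you write out the necessity direction and the finite-additivity induction explicitly, which the paper dismisses as immediate.
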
 
\begin{proof} If $\mu$ is a measure, these conditions are necessary.  
Thus, we need only prove that these conditions are sufficient. Since $\mu$ is 
positive and finitely additive, it suffices to show that it is countably 
additive. Let $\{A_n \} \subset \mcA$ be disjoint. From
\[
\bigcup\limits_{k = 1}^n {{A_k}}  \subset \bigcup\limits_{k = 1}^{n + 1} 
{{A_k}} ,\quad  {\text {and}} \quad {\mu }\left( {\bigcup\limits_{k = 1}^n {{A_k}} } 
\right) = \sum\limits_{k = 1}^n {{\mu }\left( {{A_k}} \right)},
\]
if we let 
\[
B_n=\bigcup\limits_{k = 1}^n {{A_k}},
\]
we have $B_n \subset B_{n+1}$, so we can apply condition (3) to get
\begin{equation}
\mathop {\lim }\limits_{n \to \infty } \mu \left( {{B_n}} \right) 
= \mu \left( {\bigcup\limits_{k = 1}^\infty  {{B_k}} } \right) 
= \mu \left( {\bigcup\limits_{k = 1}^\infty  {{A_k}} } \right)
\label{(25)}
\end{equation}
and
\begin{equation}
\quad \mathop {\lim }\limits_{n \to \infty } \mu \left( {{B_n}} \right) 
= \mathop {\lim }\limits_{n \to \infty } \mu\left( {\bigcup\limits_{k = 1}^n 
{{A_k}} } \right) = \mathop {\lim }\limits_{n \to \infty } \sum\limits_{k = 1}^n 
{\mu \left( {{A_k}} \right)}  = \sum\limits_{k = 1}^\infty  {\mu \left( {{A_k}} \right)} .
\label{(26)}
\end{equation}
Combining conditions (2) and (3) proves sufficiency.
\end{proof}
\begin{Def}We define a measure $m_k$ on $\mfB(\R^\iy)$ by
\[
m_k(A)=\la_\iy(A \cap \R_I^k), \; {\tx{for each}} \; A \in \mfB(\R^\iy),
\]
and set
\begin{equation}
m(A) = \mathop {\lim }\limits_{k \to \infty } m _k (A).
\label{(27)}
\end{equation}
\end{Def}
\begin{thm}The mapping $m: \mfB(\R^\iy) \to [0, \iy]$ is a measure. 
\end{thm}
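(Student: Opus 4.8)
The plan is to verify the three conditions of the characterization theorem just proved for the candidate $m$, rather than checking countable additivity directly. First I would record that $m$ is well defined: because $\R_I^k \ci \R_I^{k+1}$ and $\la_\iy$ is a (monotone) measure, the sequence $m_k(A)=\la_\iy(A\cap \R_I^k)$ is nondecreasing in $k$, so $m(A)=\lim_{k\to\iy} m_k(A)=\sup_k m_k(A)$ always exists in $[0,\iy]$. Note also that each $m_k$ is itself a measure on $\mfB(\R^\iy)$, inherited from $\la_\iy$, a fact I would use repeatedly.

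Condition (1) is immediate, since $m(\es)=\lim_k \la_\iy(\es\cap \R_I^k)=0$. For condition (2), suppose $m(A\cap B)=0$. Since $0\le m_k(A\cap B)\le m(A\cap B)=0$, each term $m_k(A\cap B)=\la_\iy\big((A\cap\R_I^k)\cap(B\cap\R_I^k)\big)$ vanishes; applying condition (2) for the measure $\la_\iy$ to the sets $A\cap\R_I^k$ and $B\cap\R_I^k$ gives $m_k(A\cup B)=m_k(A)+m_k(B)$ for every $k$. Letting $k\to\iy$ and using that the limit of a sum of two nonnegative nondecreasing sequences is the sum of their limits in $[0,\iy]$, I obtain $m(A\cup B)=m(A)+m(B)$.

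The substance lies in condition (3). Let $B_n\ci B_{n+1}$ in $\mfB(\R^\iy)$. For each fixed $j$ the sets $B_n\cap\R_I^j$ increase to $\big(\cup_n B_n\big)\cap\R_I^j$, so continuity from below for $\la_\iy$ (that is, condition (3) applied to $\la_\iy$) yields $m_j\big(\cup_n B_n\big)=\lim_{n\to\iy} m_j(B_n)$. Hence, writing $a_{j,n}=\la_\iy(B_n\cap\R_I^j)$, I have $m\big(\cup_n B_n\big)=\lim_{j}\lim_{n} a_{j,n}$ while $\lim_n m(B_n)=\lim_{n}\lim_{j} a_{j,n}$. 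The key point, and the step I expect to be the only real obstacle, is the interchange of these two iterated limits. This is justified because $a_{j,n}$ is nondecreasing in each index separately (increasing $j$ enlarges the truncation $\R_I^j$, increasing $n$ enlarges $B_n$), so both iterated limits equal the common supremum $\sup_{j,n} a_{j,n}$ and therefore coincide. This establishes condition (3), and the characterization theorem then gives that $m$ is a measure.
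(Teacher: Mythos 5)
Your proof is correct, but your verification of the key condition takes a genuinely different route from the paper's. The paper also frames the argument via the characterization theorem and dismisses conditions (1) and (2) as clear, but its check of (3) is really a direct attack on countable additivity: it takes a \emph{disjoint} sequence $\{A_i\}$, uses finite additivity of $\la_\iy$ on the slices $A_i\cap\R_I^k$ to get $m\bigl(\bigcup_{i=1}^N A_i\bigr)=\sum_{i=1}^N m(A_i)$, and then concludes by ``letting $N\to\iy$''. That final passage is exactly where continuity from below is needed, so as written the paper's argument either silently uses the property it set out to prove or, read charitably, compresses the same monotone limit interchange that you make explicit. You instead verify (3) as stated, for an arbitrary increasing sequence $\{B_n\}$: continuity from below of each measure $m_j$ gives $m_j\bigl(\bigcup_n B_n\bigr)=\lim_n m_j(B_n)$, and the interchange of the two iterated limits of $a_{j,n}=\la_\iy(B_n\cap\R_I^j)$ is licensed by the elementary fact that a doubly nondecreasing array has both iterated limits equal to its joint supremum. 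What your route buys is precisely the rigor the paper's last step lacks, and it handles arbitrary increasing sequences rather than unions of disjoint pieces; what the paper's route would buy, once patched (e.g.\ by noting $\sum_{i=1}^\iy m(A_i)\le m\bigl(\bigcup_i A_i\bigr)$ from monotonicity of $m$, and $m_k\bigl(\bigcup_i A_i\bigr)=\sum_i m_k(A_i)\le\sum_i m(A_i)$ from $m_k\le m$, then letting $k\to\iy$), is a direct proof of countable additivity with less limit bookkeeping. Your explicit treatment of (1) and (2), via $m_k(A\cap B)\le m(A\cap B)=0$ and additivity of each $m_k$, is also sound and fills in what the paper leaves unsaid.
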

\begin{proof}It is clear that conditions (1) and (2) of Theorem 3.7 are satisfied 
by $m$, thus we need only check (3). Since  $A_i \cap A_j =\emptyset$ unless $i=j$, 
we see that the same is true for $A_i \cap \R_I^k$  and  $A_j \cap \R_I^k$.  
Let $N \in \N$ and note that
\[
\lambda _\infty  \left[ {\left( {\bigcup\limits_{i = 1}^N  {A_i } } \right) 
\cap \mathbb{R}_I^k } \right] = \sum\limits_{i = 1}^N  {\lambda _\infty  
\left( {A_i  \cap \mathbb{R}_I^k } \right)}. 
\]
Since  $A_i  \cap \mathbb{R}_I^k  \subset A_i  \cap \mathbb{R}_I^{k+1}$, 
all the terms are increasing, and therefore
\[
m \left( {\bigcup\limits_{i = 1}^N  {A_i } } \right) = \mathop {\lim }\limits_{k \to \iy } 
m _k \left( {\bigcup\limits_{i = 1}^N  {A_i } } \right) 
= \sum\limits_{i = 1}^N  {\mathop {\lim }\limits_{k \to \infty } m _k 
\left( {A_i } \right)}  = \sum\limits_{i = 1}^N  {m \left( {A_i } \right)} .
\]
Letting $N \to \iy$, completes the proof.
\end{proof}
\begin{cor}The completion $\hat{m}(\cdot)$, of $m(\cdot)$ is equivalent to 
$\la_\iy(\cdot)$ defined on $\mfB(\R_I^\iy)$.
\end{cor}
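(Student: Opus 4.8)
The plan is to exhibit a generating class on which $m$ and $\la_\iy$ visibly coincide, to propagate that agreement by uniqueness of extension, and then to absorb the residual part of $\mfB(\R_I^\iy)$ into the completion. First I would check equality on the cylinders $A_n = A \times I_n$ with $A \in \mfB(\R^n)$. Since $A_n \subset \R_I^k$ for every $k \ge n$, we have $A_n \cap \R_I^k = A_n$, so $m_k(A_n) = \la_\iy(A_n) = \la_n(A)$ by the equivalence of $\la_\iy$ with $\la_n$ on $\R_I^n$ established above. Passing to the limit in \eqref{(27)} yields $m(A_n) = \la_n(A) = \la_\iy(A_n)$; hence $m$ and $\la_\iy$ agree on the whole cylinder algebra $\mcA_0 = \bigcup_k \mfB(\R_I^k) = {\mfB}'(\R_I^\iy)$.

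Next I would lift this to $\s(\mcA_0)$. The class $\mcA_0$ is an algebra, being an increasing union of $\s$-algebras, and both measures are $\s$-finite on it: the finite-measure cylinders $[-j,j]^n \times I_n$, with $m([-j,j]^n \times I_n) = (2j)^n$, exhaust ${\R'}_I^{\iy} = \bigcup_k \R_I^k$ as $j,n \to \iy$. The uniqueness clause of the Hahn--Kolmogorov extension theorem for $\s$-finite measures then forces $m = \la_\iy$ on all of $\s(\mcA_0)$.

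It remains to treat the tail $T := \R_I^\iy \setminus {\R'}_I^{\iy} = \R^\iy \setminus \bigcup_k \R_I^k$, which is exactly where $\mfB(\R_I^\iy)$ enlarges $\s(\mcA_0)$ by the adjoined power set $\mcP(T)$. Every point of $T$ has infinitely many coordinates outside $I$, so $T \cap \R_I^k = \es$ for all $k$; consequently $m_k(B) = \la_\iy(B \cap \R_I^k) = 0$ for each $B \subseteq T$ in $\mfB(\R^\iy)$, and $m(T) = 0$. Thus $T$ is $m$-null, so in the completion $\hat{m}$ every subset of $T$ --- in particular every member of $\mcP(T)$ --- becomes measurable with $\hat{m}$-measure zero. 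Since the construction of \cite{GZ} likewise concentrates $\la_\iy$ on ${\R'}_I^{\iy}$ and assigns measure zero to $\mcP(T)$, the two measures coincide on the generators ${\mfB}'(\R_I^\iy) \cup \mcP(T)$, and therefore on all of $\mfB(\R_I^\iy)$.

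The hard part will not be any analytic estimate but the bookkeeping around the completion and $\s$-finiteness: one must verify that the power-set layer $\mcP(T)$, which is built into the very definition of $\mfB(\R_I^\iy)$, is precisely what completing $m$ supplies, and that $\s$-finiteness survives despite $\la_\iy(\R_I^k) = \iy$. Both are handled by the truncation above, and the equivalence $\hat{m} = \la_\iy$ on $\mfB(\R_I^\iy)$ follows.
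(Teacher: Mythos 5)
The paper offers no proof of this corollary at all --- it is stated as an immediate consequence of the preceding theorem and of the construction of $\la_\iy$ in \cite{GZ} --- so your proposal is not competing with an explicit argument; judged on its own it is correct in substance, and it is exactly the argument the surrounding results invite: agreement on cylinders, uniqueness of $\s$-finite extension, and absorption of the tail into the completion. Three points should be tightened. First, $\mcA_0=\bigcup_k\mfB(\R_I^k)$ is a ring rather than an algebra (it contains neither $\mfX_1=\bigcup_k\R_I^k$ nor complements relative to it), so the uniqueness step must be invoked in its version for $\s$-finite premeasures on rings, or via Dynkin's $\pi$-$\la$ theorem applied inside $\mfX_1$ with your exhausting cubes; correspondingly $\s(\mcA_0)$ should be read as a $\s$-algebra of subsets of $\mfX_1$. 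Second, your closing inference --- ``the two measures coincide on the generators ${\mfB}'(\R_I^\iy)\cup\mcP(T)$, and therefore on all of $\mfB(\R_I^\iy)$'' --- is not valid as literally stated: that combined class is a $\pi$-system, but no increasing sequence inside it can exhaust the whole space (a nonempty cylinder and a nonempty subset of $T$ are disjoint, so neither contains the other), and agreement on a generating class alone does not propagate to the generated $\s$-algebra. The fix is short and uses pieces you already have: the family of all $E$ with $E\cap\mfX_1\in\s(\mcA_0)$ is a $\s$-algebra containing both generating classes, hence contains $\mfB(\R_I^\iy)$; therefore every $E\in\mfB(\R_I^\iy)$ splits as $(E\cap\mfX_1)\cup(E\cap\mfX_2)$, and additivity gives $\hat m(E)=m(E\cap\mfX_1)+\hat m(E\cap\mfX_2)=\la_\iy(E\cap\mfX_1)+0=\la_\iy(E)$. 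Third, the input that $\la_\iy$ assigns measure zero to every member of $\mcP(T)$ must be drawn, as you do, from the cited construction in \cite{GZ} (Theorem \rf{Y}(1) together with monotonicity), and not from the theorem immediately following this corollary in the paper: that theorem is itself proved there by means of $\hat m$, so quoting it here would be circular. With these repairs your proof stands, and it is the natural filling-in of the proof the paper omits.
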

\begin{thm}$\la_\iy\lt[\R^\iy \setminus \lt(\cup_{n=1}^\iy \R_I^n \rt)\rt]=0$.
\end{thm}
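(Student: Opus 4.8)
The plan is to show that the exceptional set meets every finite-dimensional slab $\R_I^k$ only in the empty set, and then to feed this into the defining limit for the measure and transfer the conclusion to $\la_\iy$.

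First I would name the set: put $E := \R^\iy \setminus \lt(\cup_{n=1}^\iy \R_I^n\rt)$. Since $\R_I^n = \R^n \times I_n$, a sequence $x=(x_1,x_2,\ld)$ lies in $\cup_{n=1}^\iy \R_I^n$ exactly when there is some $N$ with $x_i \in I=[-\tf{1}{2},\tf{1}{2}]$ for all $i>N$; hence $E$ is precisely the set of sequences having infinitely many coordinates outside $I$. I would also record that $E$ is the same set $\R_I^\iy \setminus \cup_{n=1}^\iy \R_I^n$ whose full power set was adjoined when $\mfB(\R_I^\iy)$ was defined (recall $\R_I^\iy = \R^\iy$ as sets), so $E \in \mfB(\R_I^\iy)$ and $\la_\iy(E)$ is well defined.

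The key, and essentially only, step is the observation that $E \cap \R_I^k = \es$ for every $k$. Indeed $\R_I^k$ is itself one of the sets appearing in the union, so $\R_I^k \ci \cup_{n=1}^\iy \R_I^n$ and therefore $E \cap \R_I^k = \R_I^k \cap \lt(\R^\iy \setminus \cup_{n=1}^\iy \R_I^n\rt) = \es$. Concretely, any $x \in \R_I^k$ satisfies $x_i \in I$ for all $i>k$, so it has at most $k$ coordinates outside $I$ and cannot belong to $E$.

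From here the computation is immediate: by the definition of $m$ we have $m_k(E) = \la_\iy(E \cap \R_I^k) = \la_\iy(\es) = 0$ for every $k$, whence $m(E) = \lim_{k\to\iy} m_k(E) = 0$ by the defining relation $m(A)=\lim_{k\to\iy} m_k(A)$. Finally, since the corollary identifying the completion $\hat m$ of $m$ with $\la_\iy$ on $\mfB(\R_I^\iy)$ lets me pass freely between the two measures, and a null set for $m$ stays null under the completion, I conclude $\la_\iy(E) = \hat m(E) = m(E) = 0$. I do not expect a genuine obstacle: all the content is in seeing that $E$ avoids each slab $\R_I^k$, after which the realization of $m$ as the increasing limit of the $m_k$ does the rest. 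The only point needing care is the bookkeeping of which $\s$-algebra and which of $m,\hat m,\la_\iy$ is in play, together with the identification of $E$ with the adjoined power-set piece of $\mfB(\R_I^\iy)$, and this is exactly what the completion corollary licenses.
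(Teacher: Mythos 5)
Your proof is correct and follows the same route as the paper: both arguments reduce $\la_\iy$ of the exceptional set to the limit of its slab measures $\la_\iy(E \cap \R_I^k)$ via the realization of $\la_\iy$ as (the completion of) $m = \lim_k m_k$, and the whole content is your key observation that $E \cap \R_I^k = \es$ since $\R_I^k \ci \cup_{n=1}^\iy \R_I^n$. Your write-up is in fact more explicit than the paper's one-line proof, which leaves the empty-intersection step and the role of the completion corollary implicit.
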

\begin{proof} Using the extension of ${m}_k$ to $\mfB(\R_I^k)$, we have
\[
\lambda _\infty  \left[ {\left( {\mathbb{R}^\infty  \setminus \bigcup
\limits_{k = 1}^\infty  {\mathbb{R}_I^k } } \right)} \right] 
= \mathop {\lim }\limits_{n \to \infty } \hat{m} _n \left[ {\left( {\mathbb{R}^\infty  
\setminus \bigcup\limits_{k = 1}^\infty  {\mathbb{R}_I^k } } \right) 
\cap \mathbb{R}_I^n } \right] = 0.
\]
\end{proof}
\begin{cor} $\la_\iy(\mfX_2)=0$.
\end{cor}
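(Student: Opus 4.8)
The plan is to observe that this statement is an immediate consequence of the theorem proved just above, once the set $\mfX_2$ is rewritten in the form appearing there. First I would recall the two relevant definitions used in setting up the topological direct sum $(\R_I^\iy, \tau) = (\mfX_1,\tau_1) \oplus (\mfX_2,\tau_2)$: by construction $\mfX_2 = \R^\iy \setminus {\mathbb{R}'}_I^{\infty}$, while ${\mathbb{R}'}_I^{\infty}$ was defined as the increasing union $\cup_{k=1}^\iy \R_I^k$. Substituting the second identity into the first yields
\[
\mfX_2 = \R^\iy \setminus \left( \bigcup_{n=1}^\iy \R_I^n \right),
\]
which is precisely the set whose $\la_\iy$-measure is shown to vanish in the preceding theorem. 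Hence $\la_\iy(\mfX_2) = \la_\iy\!\left[\R^\iy \setminus \left(\cup_{n=1}^\iy \R_I^n\right)\right] = 0$, and we are done.

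The only point to verify is thus the set identity $\mfX_2 = \R^\iy \setminus \cup_{n=1}^\iy \R_I^n$, which is purely definitional and involves no analytic obstacle: all of the measure-theoretic content — the definition of $m_k$ via $m_k(A)=\la_\iy(A\cap\R_I^k)$, the passage $m(A)=\lim_k m_k(A)$, and the limit computation $\lim_n \hat{m}_n[\,\cdot\,]=0$ — has already been carried out in proving the theorem. In effect the corollary simply records, in the language of the direct-sum decomposition, the fact that the ``bad'' summand $\mfX_2$ (the one carrying the discrete topology $\tau_2$) is $\la_\iy$-null, so that the full measure is concentrated on $\mfX_1 = {\mathbb{R}'}_I^{\infty}$. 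Since the substantive work lives entirely in the theorem, I expect the write-up to be one or two lines: unwind the definitions and cite the theorem.
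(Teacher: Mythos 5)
Your proposal is correct and matches the paper exactly: the paper states this corollary with no proof at all, precisely because, as you observe, $\mfX_2 = \R^\iy \setminus {\mathbb{R}'}_I^{\infty} = \R^\iy \setminus \bigl(\cup_{n=1}^\iy \R_I^n\bigr)$ by definition, so the preceding theorem applies verbatim. Your identification of the set identity as the only thing to check, with all measure-theoretic content already in the theorem, is exactly the intended reading.
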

\begin{thm} There exists a family of sets $\{A_k\} \subset \mfB(\R_I^\iy)$ 
with $\la_\iy(A_k) < \iy$, and a set $N$ with $\la_\iy(N)=0$ such that:
\begin{equation}
\R_I^\iy = {\cup}_{k=1}^\iy A_k \cup N.
\label{(28)}
\end{equation}
\end{thm}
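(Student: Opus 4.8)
The plan is to exhibit the decomposition by combining two facts that are already in hand: the preceding theorem, which says the ``exotic'' part of $\R^\iy$ is $\la_\iy$-null, and the $\s$-finiteness of ordinary Lebesgue measure on each $\R^n$. The slabs $\R_I^n=\R^n\times I_n$ cannot themselves serve as the sets $A_k$, since $\la_\iy(A\times I_n)=\la_n(A)$ forces $\la_\iy(\R_I^n)=\la_n(\R^n)=\iy$; the idea is instead to chop each slab into countably many bounded Borel pieces of finite measure and then relabel the resulting doubly-indexed collection as a single sequence, drawing every $A_k$ from the well-behaved region $\cup_n\R_I^n$ while dumping the remainder into the null set $N$.

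First I would set $N:=\R^\iy\setminus\bigl(\cup_{n=1}^\iy\R_I^n\bigr)=\mfX_2$. By the theorem asserting $\la_\iy\bigl[\R^\iy\setminus\bigl(\cup_{n=1}^\iy\R_I^n\bigr)\bigr]=0$ (equivalently, by the corollary $\la_\iy(\mfX_2)=0$) we have $\la_\iy(N)=0$, and $N\in\mfB(\R_I^\iy)$ because it lies in the power-set component $\mathcal P\bigl(\R_I^\iy\setminus\cup_n\R_I^n\bigr)$ adjoined when $\mfB(\R_I^\iy)$ was generated. This disposes of the null set in \eqref{(28)} and reduces the problem to covering $\cup_{n=1}^\iy\R_I^n$ by finitely-measured Borel sets.

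Next, for each fixed $n$ I would exhaust $\R^n$ by bounded Borel sets of finite Lebesgue measure, for instance the cubes $C_{n,j}=[-j,j]^n$, so that $\cup_{j=1}^\iy C_{n,j}=\R^n$ with $\la_n(C_{n,j})=(2j)^n<\iy$. Setting $A_{n,j}:=C_{n,j}\times I_n$, the defining identity $\la_\iy(A\times I_n)=\la_n(A)$ used when $\la_\iy$ was shown equivalent to $\la_n$ on $\R_I^n$ yields $\la_\iy(A_{n,j})=(2j)^n<\iy$; moreover $A_{n,j}\in\mfB(\R_I^n)\subset\mfB(\R_I^\iy)$, since the family $\{A\times I_n:A\in\mfB(\R^n)\}$ is exactly the $\s$-algebra generated by the open cylinders $\mathfrak O_n$. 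Because $\cup_{j}(C_{n,j}\times I_n)=\bigl(\cup_j C_{n,j}\bigr)\times I_n=\R^n\times I_n=\R_I^n$, the union over $j$ recovers each slab, and hence $\bigcup_{n,j}A_{n,j}=\bigcup_{n=1}^\iy\R_I^n$.

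Finally I would relabel the countable family $\{A_{n,j}\}_{n,j\ge 1}$ as a single sequence $\{A_k\}_{k=1}^\iy\subset\mfB(\R_I^\iy)$ with $\la_\iy(A_k)<\iy$, and conclude
\[
\R_I^\iy=\R^\iy=\Bigl(\bigcup_{n=1}^\iy\R_I^n\Bigr)\cup N=\Bigl(\bigcup_{k=1}^\iy A_k\Bigr)\cup N,
\]
which is \eqref{(28)}. This is in essence a $\s$-finiteness statement, so no single step is genuinely difficult; the points needing care are merely that the slab-chopping is legitimate at the level of the $\tau$-Borel structure (guaranteed by $\mfB(\R_I^n)\subset\mfB(\R_I^\iy)$) and that the finiteness of $\la_\iy(A_{n,j})$ rests on the equivalence with $\la_n$ rather than on any property of the infinite product. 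The only substantive input has already been absorbed into the preceding null-set theorem, which is precisely what permits the sequence $\{A_k\}$ to be extracted entirely from the region $\cup_n\R_I^n$.
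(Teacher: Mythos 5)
Your proof is correct, and its skeleton is the same as the paper's: both take $N=\mfX_2=\R^\iy\setminus\bigcup_{n=1}^\iy\R_I^n$, invoke the preceding theorem (or its corollary $\la_\iy(\mfX_2)=0$) to make $N$ null, and then cover the remaining part $\bigcup_n\R_I^n$ by a countable family of finite-measure Borel sets. Where you differ is in the covering family. The paper takes $\{x_k\}$ to be ``the vectors in $\R^\iy$ with rational coordinates,'' lets $B_k$ be the unit cube centered at $x_k$, asserts $\la_\iy(B_k)=1$ and $\R_I^\iy=\bigcup_k B_k$, and sets $A_k=B_k\setminus\mfX_2$; this is slick (every covering set has measure exactly $1$), but as literally stated it needs repair: the set of all rational-coordinate sequences is uncountable, and the claim $\la_\iy(B_k)=1$ is only justified when the center is finitely supported (or at least in $\ell_1$, by Theorem \rf{Y}(4)), since for a center $x\notin\ell_1$ the tail product $\prod_{i>k}(1-|x_i|)$ collapses and the cube is actually null. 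Your slab-by-slab construction $A_{n,j}=[-j,j]^n\times I_n$ sidesteps both issues entirely: the sets are honest cylinder sets, their membership in $\mfB(\R_I^n)\ci\mfB(\R_I^\iy)$ and their measure $\la_\iy(A_{n,j})=\la_n([-j,j]^n)=(2j)^n<\iy$ follow directly from the defining identity $\la_\iy(A\times I_n)=\la_n(A)$, and countability of the index set is automatic. The price is only cosmetic: your covering sets have unbounded (though finite) measure, whereas the paper's all have measure $1$. In short, your argument is a correct and in fact more careful implementation of the same idea.
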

\begin{proof} Because $\la_\iy(\cdot)$ is concentrated on $\mfX_1$, we  
set $N= \mfX_2$.  To show  that (28) holds, let  $\{x_k\}$ be the set of 
vectors in $\R^\iy$ with rational coordinates and let $B_k$ be the unit cube 
with center $x_k$, so that $\la_\iy(B_k)=1$ for all $k$ and 
\[
\R_I^\iy = {\cup}_{k=1}^\iy B_k.
\]
Let $A_k=B_k \setminus \mfX_2$.
\end{proof}

\section{{Lebesgue Measure on Hilbert Space}} 

In this section we construct the Lebesgue measure on a separable Hilbert space.   
Let $\mcH$ be a separable Hilbert space and let $\{e_n\}$ be an orthogonal 
basis for $\mcH$ such that for any $x \in \mcH$ there is a sequence  of 
scalars $(x_n)_{n=1}^\iy$, such that $x = \sum\nolimits_{n = 1}^\infty  {{x_n}} {e_n}$ and 
${\sum\nolimits_{n = 1}^\infty  {\left| {{x_n}} \right|} ^2} < \infty$
\begin{Def} We define $\mcH_I^n$ and $\mcH_I$ by:
\[
{\mcH_I^n} = \left\{ {\left( {{x_k}} \right)_{k = 1}^n :x = 
\sum\limits_{k = 1}^n  {{x_k}{e_k}}  \in \mcH} \right\}\times I_n
\]
and 
\[
{\mcH_I} = \left\{ {\left( {{x_k}} \right)_{k = 1}^\infty :x = 
\sum\limits_{n = 1}^\infty  {{x_k}{e_k}}  \in \mcH} \right\}.
\]
\end{Def} 
It can be seen that $\mcH_I^n=\R_I^n$. This observation will prove important 
when we discuss integration in the next section. 
\begin{Def}Define $T_n:\mcH \to \mcH_I^n$ by $T_n(x) = \left( {x_k } 
\right)_{k = 1}^n$ and $T:\mcH \to \mcH_I ,\;T(x) 
= \left( {x_k } \right)_{k = 1}^\infty$. Define  
\[
 \mathfrak{B}\left( {{\mcH_I}} \right) = {\mcH_I} \cap 
\mathfrak{B}\left( {\mathbb{R}_I^\infty } \right)
\]
and define
\[
\mathfrak{B}_I [\mcH] = \left\{ {T^{ - 1} (A)\left| {\; A \in 
\mathfrak{B}[\mcH_I ]} \right.} \right\} = :T^{ - 1} \left\{ 
{\mathfrak{B}\left[ {\mcH_I } \right]} \right\}.
\]
\end{Def}
\begin{Def} Let $J_k =[-\tf{1}{2k^{3/2}}, \tf{1}{2k^{3/2}}]$ and $g_k(x_k)
=k^{3/2}\chi_{J_k}(x_k)$. For each $n$, we define $\nu_n$ by:
\[
{\nu _n} = \mathop  \otimes \limits_{k = 1}^n \lambda  \otimes \left( {\mathop  
\otimes \limits_{k = n+1}^\infty  {\mu _k}} \right) = {\lambda _n} \otimes 
\left( {\mathop  \otimes \limits_{k = n+1}^\infty  {\mu _k}} \right),{\text{ where}}\;\; 
d{\mu _k} = g_k(x_k)d\lambda (x_k)
\]
and for each $A \in \mfB_I(\mcH)$,  we define $\la_\mcH(A)$ by:
\[
{\lambda _\mcH}(A) = \mathop {\lim}\limits\sup_{n \to \infty } {\nu _n}[T(A) \cap \R_I^n].
\]
\end{Def} 
\begin{thm}The measure ${\la}_{{\mcH}}$ is a nonvanishing 
$\sigma$-finite Borel measure on $\mcH$. 
\end{thm}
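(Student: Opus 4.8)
The plan is to reduce the whole statement to the measure-characterization already proved in Theorem 3.7, together with a convergence property of the truncated masses, and then to settle nonvanishing and $\s$-finiteness by exhibiting explicit sets. First I would record that, since $\{e_k\}$ is a basis, the coordinate map $T$ is injective, so for each fixed $n$ the set function $\bar\nu_n(A):=\nu_n[T(A)\cap\R_I^n]$ is the pushforward of the restriction of $\nu_n$ to $\R_I^n$ under $T^{-1}$, and is therefore itself a genuine nonnegative countably additive measure on $\mfB_I[\mcH]$. Thus $\la_\mcH=\limsup_n\bar\nu_n$ is a pointwise $\limsup$ of measures, and to show it is a measure I would verify conditions (1)--(3) of Theorem 3.7. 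Condition (1) is immediate from $\bar\nu_n(\es)=0$. The difficulty with (2) and (3) is that $\limsup$ is only subadditive in general, so the argument rests entirely on proving that for every $A\in\mfB_I[\mcH]$ the sequence $\bar\nu_n(A)$ actually converges, whence $\limsup_n\bar\nu_n(A)=\lim_n\bar\nu_n(A)$ and finite additivity plus continuity from below transfer to the limit.

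The structural fact I would exploit for this convergence is that each $\mu_k$ is a probability measure supported in $J_k\subset I$ with $\la(I)=1$, so on a bounded rectangle $T(A)=\prod_k B_k$ with $B_k\subset I$ the truncated mass factors as $\bar\nu_n(A)=P_n\,Q_n$, where $P_n=\prod_{k\le n}\la(B_k)$ and $Q_n=\prod_{k>n}\mu_k(B_k)$. Since $\la(B_k)\le1$, the factor $P_n$ is nonincreasing, while $Q_n=\mu_{n+1}(B_{n+1})\,Q_{n+1}$ shows $Q_n$ is nondecreasing; both lie in $[0,1]$, hence both converge, and therefore $\bar\nu_n(A)\to P_\iy Q_\iy$. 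Convergence thus holds on the generating algebra of bounded rectangles, and I would promote it to all of $\mfB_I[\mcH]$ by the Vitali--Hahn--Saks/Nikodym convergence theorem, the uniform countable additivity being supplied by the summability $\sum_k\la(J_k)=\sum_k k^{-3/2}<\iy$. This same scaling is what forces the tail measure $\bigotimes_{k>n}\mu_k$ to concentrate on square-summable tails: from $\sum_k\la(J_k)^2=\tf{1}{4}\sum_k k^{-3}<\iy$ one gets $\sum_k x_k^2<\iy$ for $\nu_n$-a.e. point, so that $\nu_n$-almost every point already lies in $\mcH_I$. This is the step I expect to be the main obstacle, and the place where the particular choice of $g_k$ is indispensable.

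For nonvanishing I would compute $\la_\mcH$ on the unit cube $I_0=\prod_{k=1}^\iy I$. Taking $T(A)=I_0$ one has $T(A)\cap\R_I^n=I_0$, and $\bar\nu_n(A)=\prod_{k\le n}\la(I)\cdot\prod_{k>n}\mu_k(J_k)=1\cdot1=1$ for every $n$, since $\la(I)=1$ and $\mu_k(J_k)=k^{3/2}\la(J_k)=1$. Because the non-square-summable part of $I_0$ is $\nu_n$-null by the concentration of the previous paragraph, this gives $\la_\mcH\bigl(T^{-1}(I_0\cap\mcH_I)\bigr)=\lim_n 1=1$, so $\la_\mcH\not\equiv0$ and is finite and positive on this cube.

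Finally, for $\s$-finiteness I would cover $\mcH$ by countably many translated unit cubes. Let $\{q_j\}$ enumerate the finitely supported vectors with rational coordinates and set $Q_j=T^{-1}\bigl((q_j+I_0)\cap\mcH_I\bigr)$. Given $x\in\mcH$, square-summability forces $x_k\to0$, so $|x_k|\le\tf{1}{2}$ for all $k$ beyond some $m$; choosing $q_j$ to match $x$ on the first $m$ coordinates places $x\in Q_j$, whence $\mcH=\bigcup_j Q_j$. By translation invariance of $\la_n$ in the leading block and $\mu_k(I)=1$ in the tail, the computation of the preceding paragraph yields $\la_\mcH(Q_j)=1<\iy$ for every $j$ (the non-square-summable part of $q_j+I_0$ again being $\nu_n$-null). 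This exhibits $\mcH$ as a countable union of sets of finite measure and completes the proof that $\la_\mcH$ is a nonvanishing $\s$-finite Borel measure on $\mcH$.
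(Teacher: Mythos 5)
Your computations of nonvanishing and $\sigma$-finiteness are sound, and are in fact sturdier than the paper's own: on $I_0$ the sequence $\bar\nu_n(A)=\nu_n[T(A)\cap\R_I^n]$ is constantly $1$, whereas the paper evaluates on the cube $\prod_k J_k$, where this sequence equals $\prod_{k\le n}k^{-3/2}$ and decreases to $0$, so the paper's value $1$ is only obtained by reading its $\limsup$ as a supremum. The real divergence is at countable additivity. The paper settles it in one line by interchanging $\limsup_n$ with $\sum_i$, on the strength of an unsubstantiated monotonicity-in-$n$ assertion; you correctly recognize that $\limsup$ is only subadditive and propose instead to prove genuine pointwise convergence of $\bar\nu_n$ and then invoke Vitali--Hahn--Saks/Nikodym. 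That is a genuinely different strategy, but it cannot be carried out: the convergence claim is false, and the family $\{\bar\nu_n\}$ is not uniformly countably additive.

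To see this concretely, for $m\ge 2$ set
\[
E_m=\Bigl(\prod_{k<m}I\Bigr)\times\bigl(I\setminus J_m\bigr)\times\prod_{k>m}J_k,
\qquad A_m=T^{-1}(E_m).
\]
Each $E_m$ is a countable intersection of sets $C_N\times I_N$ with $C_N\in\mfB(\R^N)$, lies in $\mcH_I$ (its points satisfy $|x_k|\le\tfrac12$ for $k\le m$ and $|x_k|\le\tfrac12 k^{-3/2}$ for $k>m$), and the $E_m$ are pairwise disjoint, since for $m<m'$ the $m'$-th coordinate would have to lie in both $J_{m'}$ and $I\setminus J_{m'}$. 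Using $\mu_m(I\setminus J_m)=0$, $\mu_k(J_k)=\la(I)=1$ and $\la(J_k)=k^{-3/2}$, one finds $\bar\nu_n(A_m)=0$ for $n<m$, while
\[
\bar\nu_n(A_m)=\bigl(1-m^{-3/2}\bigr)\prod_{m<k\le n}k^{-3/2}\quad(n\ge m),
\]
which jumps to $1-m^{-3/2}$ at $n=m$ and then decreases to $0$; hence $\la_\mcH(A_m)=0$ for every $m$. Now let $S\subseteq\{2,3,\dots\}$ be infinite and put $A_S=\bigcup_{m\in S}A_m$. Countable additivity of each $\nu_n$ gives $\bar\nu_n(A_S)=\sum_{m\in S,\,m\le n}(1-m^{-3/2})\prod_{m<k\le n}k^{-3/2}$. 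If $n\in S$, the term $m=n$ alone gives $\bar\nu_n(A_S)\ge 1-n^{-3/2}$; if $n\notin S$, every term contains the factor $n^{-3/2}$, so $\bar\nu_n(A_S)\le n\cdot n^{-3/2}=n^{-1/2}$. Consequently: (i) for $S$ infinite and co-infinite, $\bar\nu_n(A_S)$ oscillates between values near $1$ and values near $0$, so your central convergence claim fails; (ii) for any infinite $S$, $\la_\mcH(A_S)=1$ while $\sum_{m\in S}\la_\mcH(A_m)=0$, so the countable additivity to be proved is itself violated on these sets; (iii) the tails $F_M=\bigcup_{m>M}A_m$ decrease to $\emptyset$ yet $\sup_n\bar\nu_n(F_M)\ge 1$ for every $M$, so $\{\bar\nu_n\}$ is not uniformly countably additive, which is exactly the hypothesis your promotion step requires --- the summability of $\sum_k k^{-3/2}$ does not deliver it. Independently, Nikodym/VHS apply to uniformly bounded (finite) measures and take pointwise convergence on the whole $\sigma$-algebra as a hypothesis; they do not upgrade convergence from a generating class (and your rectangles with all $B_k\subseteq I$ do not generate $\mfB_I[\mcH]$ in any case). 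Your factorization $P_nQ_n$ is correct for a single rectangle and survives finite unions, but in a countable union the coordinate at which the weight switches from $\la$ to $\mu_k$ can track $n$, and mass escapes to infinity in the coordinate index --- precisely the phenomenon the finiteness hypothesis in Nikodym's theorem excludes. Note finally that the same family $\{A_m\}$ defeats the interchange of $\limsup$ and sum in the paper's own proof, so the obstruction you tried to repair is intrinsic to the $\limsup$ definition of $\la_\mcH$, not merely to your route around it.
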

\begin{proof} The result follows if we show that $\la_\mcH$ is nonzero 
and countably additive. Since
\[
\left\| {\sum\limits_{k = 1}^\infty  {\frac{{e_k }}
{{k^{3/2} }}} } \right\|_\mcH  \leqslant \sum\limits_{k = 1}^\infty {\frac{1}{{k^{3/2} }}}  < \infty ,
\]
we see that the set $A = \left\{{  \sum_{k = 1}^\infty{x_k e_k}  :\left| {x_k } 
\right| \leqslant \tfrac{1}{{k^{3/2} }}} \right\} \in \mathfrak{B}_I\left( {\mcH } \right)$, so that
\[
{\lambda _\mcH}(A) \geqslant {\nu _1}(T[A]) = \lambda ({I_1})\prod\limits_{k = 2}^\infty  
{{\mu _k}({J_k}) = 1}. 
\] 
We now note that ${\nu _1}[T(A)] \geqslant {\nu _n}(T[A])$ for all $n$ and, because 
$\{\nu_n \}$ is an increasing sequence, we see that ${\lambda _\mcH}(A)=1$.

To see that ${\lambda _\mcH}$ is countably additive, let $\{A_k \}$ be disjoint and 
let $B_k=\cup_{i = 1}^k {{A_i}}$.  Because the family $\{B_k \}$ increases for each $n$, 
\[
\nu _n \left[ {T\left( {\bigcup\limits_{i = 1}^\infty  {A_i } } \right) \cap 
\mathbb{R}_I^n } \right] = \mathop {\lim }\limits_{k \to \infty } \nu _n 
\left[ {T(B_k ) \cap \mathbb{R}_I^n } \right] = \sum\limits_{i = 1}^\infty  
{\nu _n \left[ {T\left( {A_i } \right) \cap \mathbb{R}_I^n } \right]},
\]
so that
\[
\lambda _\mcH \left[ {\bigcup\limits_{i = 1}^\infty  {A_i } } \right] 
= \sum\limits_{i = 1}^\infty  {\mathop {\lim }\limits\sup_{n \to \infty } 
\nu _n \left[ {T\left( {A_i } \right) \cap \mathbb{R}_I^n } \right]}  
= \sum\limits_{i = 1}^\infty  {\lambda _\mcH \left[ {A_i } \right]} .
\]
\end{proof}
\begin{Def} We call $\mcH_I$ the canonical representation of $\mcH$ in $\R_I^\iy$.
\end{Def}
The following theorem (proved in \cite{GZ}), characterizes the properties of the 
Lebesgue measure on $\R_I^\iy$ (or $\mcH$). We state the results for $\R_I^\iy$.  
\begin{thm}{\lb{Y}} The measure space $(\R_I^\iy, \ \mfB[\R_I^\iy],\ 
\la_\iy)$ has the following properties:
\begin{enumerate} 
\item $\la_\iy(\mfX_2)=0$.
\item For every $A \in \mfL[\R_I^\iy]$ (Lebesgue sets) and $\e>0$, there 
exists a compact set $F \subset A$ and an open set $G \supset A$ 
such that  $\la_\iy(G \setminus F)<\e$, so that $\la_\iy(\cdot)$ is regular.
\item There exists a family of compact sets $\{A_n\} \subset \mfB[\R_I^\iy]$, 
with $\la_\iy[A_n] <\iy$ and a set $N$ with $\la_\iy[N]=0$, such that 
$\mathbb{R}_I^\infty  = \bigcup\nolimits_{n = 1}^\infty  {{A_n} \cup N}$ 
(i.e., $\la_\iy(\cdot)$ is $\s$-finite).
\item For $A \in \mfB[\R_I^\iy], \; \la_\iy(A-x)=\la_\iy(A)$ if and 
only if  $x \in {\ell _1}$. 
\end{enumerate}
\end{thm}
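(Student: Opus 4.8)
The four assertions are of quite different character, so I would dispatch them in order of increasing difficulty. Statement (1) is exactly the corollary $\la_\iy(\mfX_2)=0$ already established (itself immediate from $\la_\iy[\R^\iy\setminus\bigcup_n\R_I^n]=0$), so nothing need be added. For statement (3) I would begin from the covering already produced, in which $\R_I^\iy=\bigcup_{k=1}^\iy B_k$ with $B_k$ the unit cube centred at the $k$-th rational point and the finite-measure pieces $A_k=B_k\setminus\mfX_2$, and upgrade it to a covering by \emph{compact} sets. Since $\la_\iy$ restricted to each slice $\R_I^n$ coincides with $n$-dimensional Lebesgue measure $\la_n$, I would replace each finite-measure piece by a set of the form $K\times I_n$ with $K\ci\R^n$ compact, exhausting $\R^n$ from inside by inner regularity of $\la_n$; such a product is compact in the subspace $\tau$-topology and satisfies $\la_\iy(K\times I_n)=\la_n(K)<\iy$. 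Collecting these over all $n$ and absorbing the null set $\mfX_2$ into $N$ (statement (1)) yields (3).

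For regularity (statement (2)) I would again reduce to finite dimensions. Because $\la_\iy$ is concentrated on $\mfX_1=\bigcup_n\R_I^n$ and equals $\la_n$ on each slice, and because Lebesgue measure on $\R^n$ is inner regular by compacta and outer regular by open sets, I would approximate a given Lebesgue set $A$ slice by slice: for each $n$ pick compact $F_n$ and open $G_n$ with $\la_\iy(G_n\setminus F_n)<\e/2^{\,n}$, then set $G=\bigcup_n G_n$ (open) and $F=\bigcup_{n\le N}F_n$ (compact, with $N$ chosen large enough, via the $\sigma$-finiteness of (3), to capture all but $\e$ of the mass of $A$). Since $\mfX_2$ carries no mass by (1), the errors sum to give $\la_\iy(G\setminus F)<\e$.

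The substance of the theorem is statement (4). The plan is to compute $\la_\iy(A-x)$ directly from the defining limit $\la_\iy(\,\cdot\,)=\lim_k\la_\iy(\,\cdot\,\cap\R_I^k)$. Take first a cylinder set $A=A'\times I_n$ with $A'\in\mfB[\R^n]$, so that $\la_\iy(A)=\la_n(A')$. Translation by $x=(x_i)$ carries the $i$-th factor $I=[-\tfrac{1}{2},\tfrac{1}{2}]$ to $I-x_i$; after intersecting with $\R_I^k$ the first $k$ coordinates keep their Lebesgue measure (one-dimensional Lebesgue measure being translation invariant, $\la(I-x_i)=1$), while each tail coordinate $i>k$ contributes the factor $\la((I-x_i)\cap I)=\max(0,1-|x_i|)$. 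Hence
\[
\la_\iy(A-x)=\la_n(A')\,\lim_{k\to\iy}\prod_{i>k}\max(0,1-|x_i|),
\]
and everything reduces to deciding when this tail product tends to $1$. Using the inequality $1-\sum_{i>k}|x_i|\le\prod_{i>k}(1-|x_i|)\le 1$ (valid once all $|x_i|<1$), the limit equals $1$ exactly when the tail sums $\sum_{i>k}|x_i|$ vanish, i.e. exactly when $\sum_i|x_i|<\iy$, namely $x\in\ell_1$; if instead $\sum_i|x_i|=\iy$ the tail product is identically $0$ (a zero factor occurs once some $|x_i|\ge1$, and otherwise $\sum_{i>k}\log(1-|x_i|)=-\iy$). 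Choosing $A=I_0$ (so $n=0$ and $\la_\iy(I_0)=1$) already exhibits a set whose measure changes whenever $x\notin\ell_1$, which settles the ``only if'' direction.

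For the ``if'' direction I would lift the cylinder computation to arbitrary measurable sets. When $x\in\ell_1$, both $A\mapsto\la_\iy(A)$ and $A\mapsto\la_\iy(A-x)$ are $\sigma$-finite measures on $\mfB[\R_I^\iy]$ (the second being the image of $\la_\iy$ under the $\tau$-continuous, hence measurable, translation $y\mapsto y+x$), and the displayed computation shows they agree on the $\pi$-system of cylinder rectangles, which generates $\mfB'(\R_I^\iy)$. A Dynkin ($\pi$--$\lambda$) argument then forces agreement on all of $\mfB'(\R_I^\iy)$, and on the complementary piece $\mcP(\mfX_2)$ both measures vanish by (1) (the translate of a subset of $\mfX_2$ being again $\la_\iy$-null), so equality propagates to all of $\mfB[\R_I^\iy]$. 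I expect the real obstacle to be precisely this passage from cylinders to general sets: one must verify that translation is measurable for the hybrid $\tau$-topology, that the image measure stays $\sigma$-finite, and that the tail-product bookkeeping survives the limit $\lim_k\la_\iy(\,\cdot\,\cap\R_I^k)$ for sets that are not products --- the finite-dimensional slices are harmless, but it is the coupling between a translated tail coordinate and the confinement to $\R_I^k$ that must be controlled in order to show the $\ell_1$ condition is necessary and sufficient in full generality.
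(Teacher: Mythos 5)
The paper never proves Theorem \ref{Y} at all: it is quoted from the Gill--Zachary monograph \cite{GZ}, so your proposal has to stand on its own. Item (1) is indeed just the earlier corollary, but your treatment of items (2) and (3) rests on a claim that is false: that $K\times I_n$ is compact in the subspace $\tau$-topology whenever $K\ci \R^n$ is compact. Counterexample: let $S=[-1,1]^2\times I_2$, and consider the $\tau$-open sets $G_0=\R\times I_1=\R_I^1$ and, for $j\ge 1$, $G_j=U_j\times I_2$ with $U_j=\{(x_1,x_2):|x_2|>\tfrac12+\tfrac{1}{j+2}\}$; all of these lie in $\mfO$. They cover $S$ (a point with $|x_2|\le\tfrac12$ lies in $G_0$, one with $|x_2|>\tfrac12$ in some $G_j$), but $G_0\cup G_1\cup\dots\cup G_J$ misses every point of $S$ with $\tfrac12<|x_2|\le\tfrac12+\tfrac{1}{J+2}$, so there is no finite subcover. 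The root cause is that a subbasic set $U\times I_m$ with $m<n$ imposes the \emph{closed} constraints $x_i\in I$ ($m<i\le n$) on the base coordinates, so the trace of $\tau$ on $K\times I_n$ is strictly finer than the product topology and Tychonoff's theorem cannot be invoked. (Your argument is correct for $I_0$, and for $C\times I_n$ whenever those constraints are vacuous on $C$, e.g. $C\ci\R^{n-1}\times\{|t|>\tfrac12\}$ compact, since there the trace of $\tau$ does equal the product topology. One cannot escape by reading ``compact'' in the Tychonoff sense, because every nonempty Tychonoff-open set has infinite $\la_\iy$-measure, so the open sets in item (2) must be $\tau$-open.) The repair is to decompose $\mfX_1=I_0\cup\bigcup_{n\ge1}\bigl(\R^{n-1}\times(\R\setminus I)\bigr)\times I_n$ and to exhaust each shell by compacta of the form $\bigl([-j,j]^{n-1}\times\{\tfrac12+\tfrac1k\le|t|\le j\}\bigr)\times I_n$; the same decomposition supplies the inner (compact) half of (2). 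Separately, in (2) your $G=\bigcup_n G_n$ need not contain $A\cap\mfX_2$; this is harmless only because every subset of $\mfX_2$ is $\tau$-open and null, which must be said.

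For item (4), your cylinder computation, the tail product $\prod_{i>k}\max(0,1-|x_i|)$, and the ``only if'' direction via $A=I_0$ are correct and are the real core; the $\pi$--$\la$ scheme for the ``if'' direction is also viable, but both supporting claims you lean on fail. First, translation by $x$ is \emph{not} $\tau$-continuous unless $x$ has finite support: the preimage of $U\times I_m$ is $(U-\bar x)\times\prod_{i>m}(I-x_i)$, and a basic set $V\times I_M$ can sit inside it at a point only if $I\ci I-x_i$ for all $i>M$, i.e. $x_i=0$ there; measurability is still true but must be checked on generators (each such preimage is, modulo a subset of $\mfX_2$, a countable intersection of cylinders). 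Second, the translate of a subset of $\mfX_2$ need not lie in $\mfX_2$: the point $y$ with $y_i=\tfrac12+2^{-i}$ is in $\mfX_2$, yet $y-x\in I_0$ for $x=(2^{-i})\in\ell_1$; so the vanishing of $A\mapsto\la_\iy(A-x)$ on $\mcP(\mfX_2)$ does not follow from item (1). What is needed, and true, is $\la_\iy(\{z\in\mfX_1:z+x\in\mfX_2\})=0$ for $x\in\ell_1$, which follows from a Borel--Cantelli argument: on each slice $\R_I^k$ the tail coordinates are independent and uniform on $I$, the event $z_i+x_i\notin I$ has probability $\min(|x_i|,1)$, and $\sum_i\min(|x_i|,1)<\iy$. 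With these two points repaired (and the compactness issue above), your outline can be completed; as written, it has genuine gaps.
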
 

\subsection{\bf{Measurable Functions}} 

In this subsection we discuss measurable functions on $\R_I^\iy$ (or $\mcH$). 
Let $x=(x_1,x_2,x_3, \dots) \in \R_I^\iy$. Fixing $n$ with 
$
I_n=\prod_{k=n+1}^{\iy}{\lt[-\tf{1}{2},\tf{1}{2}\rt]},
$
we set ${h_n}(\hat{x}) ={\chi _{I_n}}({\hat{x}})$ (indicator function 
of $I_n$), where $\hat{x}= (x_i)_{i={n+1}}^{\iy}$.     
\begin{Def} Let $\mcM^n$ represent the class of Lebesgue measurable 
functions on  $\mathbb{R}^n$. If $x \in \R_I^\iy$ and  $f^n \in \mcM^n$, 
let $\bar{x}= (x_i)_{i={1}}^{n}$ and define an essentially tame measurable 
function of order $n$ (or $e_n$-tame) on $\R_I^\iy$ by
$f(x)=f^n(\bar{x}) \otimes{h}_n(\hat{x}) $. We let   
\[
{\mathcal{M}}_I^n  = \left\{ {f(x): f(x) = f^n(\bar{x}) 
\otimes{h}_n(\hat{x}), \; x \in \R_I^\iy}  \right\} 
\]
be the class of all $e_n$-tame functions.
\end{Def} 
\begin{Def}\lb{m} A function $f: \mathbb{R}_I^{\iy} \to \R$ is said to be 
measurable and we write $f \in \mcM_I$, if there is a sequence 
$\{  f_n \in \mcM_I^{n} \}$ of $e_n$-tame functions, such that 
$\mathop{\lim}_{n \to \iy}f_n(x) \to f(x)$  $\la_\iy$-(a.e).
\end{Def}
The existence of functions satisfying Definition \rf{m} is not obvious, so we have \cite{GM}:
\begin{thm}
\rm(Existence)\it\ Suppose that $f:\R^{\iy}_I\to(-\iy,\iy)$\ and 
$f^{-1}(A)\in\mfB[\R^{\iy}_I]$\ for all $A\in\mfB[\R]$. Then there 
exists a family of functions $\{f_n\}$, $f_n\in\mcM_I^n$, such that 
$f_n(x)\to f(x)$, $\la_{\iy}$-\rm(a.e).\it
\end{thm}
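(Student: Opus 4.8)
The plan is to realize each candidate $f_n$ as a conditional average of $f$ over the ``tail'' coordinates and to prove the convergence by a martingale argument; the only preparation needed is a device to bypass the possible non-integrability of $f$. Since $\la_\iy(\mfX_2)=0$ by Theorem \rf{Y}, it suffices to produce convergence $\la_\iy$-a.e.\ on $\mfX_1=\bigcup_m \R_I^m$, an increasing union on which $\la_\iy$ restricts to $\la_m$ on $\R_I^m$. The first fact I would record is that, on each $\R_I^m$, the trace of $\mfB[\R_I^\iy]$ coincides with the product $\s$-algebra $\mfB[\R^m]\otimes\mfB[I_m]$, i.e.\ the $\s$-algebra generated by all coordinate projections $x\mapsto x_i$. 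Consequently, writing $\mcF_N=\s(x_1,\dots,x_N)$ for the increasing filtration generated by the first $N$ coordinates, the restriction $f|_{\R_I^m}$ is measurable with respect to $\s\bigl(\bigcup_N\mcF_N\bigr)$, which is exactly what the martingale limit theorem requires.

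Next I would remove integrability issues by composing with a bounded homeomorphism. Fix $\phi:\R\to(-1,1)$, say $\phi(t)=\tf{2}{\pi}\arctan t$, and put $g=\phi\circ f$, a bounded $\mfB[\R_I^\iy]$-measurable function with $|g|<1$ everywhere. For each $N$ define
\[
g^N(x_1,\dots,x_N)=\int_{I_N} g(x_1,\dots,x_N,\hat{x})\,d\mu_{I_N}(\hat{x}),
\]
where $\mu_{I_N}$ is the uniform product probability measure on $I_N$; by Tonelli this is a bounded Borel function of the first $N$ coordinates, hence $g^N\in\mcM^N$. On every $\R_I^m$ with $m\le N$ this averaging formula coincides with the conditional expectation $E[\,g\mid\mcF_N\,]$, so $(g^N,\mcF_N)$ is a bounded martingale. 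Doob's (L\'evy's) upward convergence theorem then gives $g^N\to E[\,g\mid\s(\bigcup_N\mcF_N)\,]=g$ a.e.\ on each $\R_I^m$, the last equality using the $\s$-algebra identification above; a countable union of null sets being null, $g^N\to g$ holds $\la_\iy$-a.e.\ on $\mfX_1$. Finally I set $f^N=\phi^{-1}\circ g^N\in\mcM^N$ (legitimate since $g^N\in(-1,1)$, being an average of values in the open interval) and $f_N(x)=f^N(\bar{x})\otimes h_N(\hat{x})\in\mcM_I^N$. Because $\chi_{I_N}(\hat{x})=1$ as soon as $x\in\R_I^m$ and $N\ge m$, and $\phi^{-1}$ is continuous on $(-1,1)$, we get $f_N(x)=\phi^{-1}(g^N(x))\to\phi^{-1}(g(x))=f(x)$ for a.e.\ $x$, as required.

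The main obstacle I anticipate is not the martingale step itself but the two measure-theoretic points that make it applicable. First is the identification of the trace of $\mfB[\R_I^\iy]$ on $\R_I^m$ with the coordinate-generated product $\s$-algebra: this is what guarantees that the martingale limit is $f$ and not a strictly coarser conditional projection, and it must be verified by intersecting the generators $B\times I_{m'}$ of $\mfB[\R_I^{m'}]$ with $\R_I^m$ for all $m'\ge m$. Second, since $\la_m$ on $\R^m$ is infinite (only $\s$-finite), Doob's theorem cannot be quoted verbatim; I would localize to the finite-measure slabs $K_j\times I_m$ with $K_j\uparrow\R^m$ of finite Lebesgue measure. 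These slabs lie in $\mcF_m\subset\mcF_N$, so the global averages restrict consistently to genuine conditional expectations on each slab, and a.e.\ convergence on $\R_I^m$ follows by exhaustion. The bounded transform $\phi$ then disposes of the remaining difficulty, namely that $f$ itself need not be integrable over the probability fibres $I_N$.
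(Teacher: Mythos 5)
Your proof is correct, but there is nothing in the paper to measure it against: the paper states this theorem without proof, attributing it to \cite{GM}, and reproduces no argument. Judged on its own terms, your martingale route works and uses only objects the paper actually constructs. The two preparatory points you single out are indeed where the content lies, and both check out: (i) the trace of $\mfB[\R_I^\iy]$ on $\R_I^m$ is the coordinate $\s$-algebra, because the generators of $\mfB[\R_I^\iy]$ are the sets $B\times I_k$ with $B\in\mfB[\R^k]$ together with subsets of $\mfX_2$, and the latter have empty trace while the former trace precisely to coordinate cylinders; (ii) on this trace $\s$-algebra $\la_\iy$ coincides with the product $\la_m\otimes\mu_{I_m}$ (compare the two on the $\pi$-system of rectangles $A\times B\times I_k$ and invoke uniqueness of $\s$-finite extensions), which is exactly what turns your tail average into a genuine conditional expectation on each finite-measure slab $K_j\times I_m$. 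With these in place, the bounded transform $\phi$, L\'evy's upward theorem applied slab by slab, countable exhaustion over $j$ and $m$, and $\la_\iy(\mfX_2)=0$ from Theorem \rf{Y} give $\la_\iy$-a.e.\ convergence; moreover $f^N=\phi^{-1}\circ g^N$ is Borel with $g^N$ valued in the open interval $(-1,1)$, so $f_N=f^N(\bar x)\otimes h_N(\hat x)$ lies in $\mcM_I^N$ as the definition of tame functions requires. Compared with the approximation argument one would presume lies behind the citation (indicators of the generating sets $B\times I_k$ are already $e_k$-tame, so one can run simple-function approximation followed by a monotone-class or diagonal extraction), your construction buys a canonical, explicit approximating family---conditional averages of a bounded transform of $f$---converging along the full sequence rather than via an abstract extraction; the price is the two measure-theoretic verifications above, which you correctly identified as the real obstacles rather than the martingale step itself.
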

\begin{rem} From Theorem {\rf{Y}} (1), we can see that any set $A$, of nonzero 
measure is concentrated in $\mfX_1$ (i.e., $\la_\iy(A)=\la_\iy(A \cap 
\mfX_1)$). It also follows that the essential support of the limit function 
$f(x)$ in definition 2.14 (i.e., $\left\{ {x\left| {f(x) \ne 0} \right.} 
\right\}$), is concentrated in $\R_I^N$, for some $N$.
\end{rem}

\subsection{Integration Theory on $\R_I^\iy$ or $\mcH$}

In this section all theorems remain true when $\R_I^\iy$ is replaced 
by $\mcH$, hence it is sufficient to  provide all results for $\R_I^\iy$.  

We provide a constructive theory of integration on 
$\R_I^\iy$ using the known properties of integration on $\R_I^n$.   
This approach has the advantage that all standard theorems for 
the Lebesgue measure apply. (The proofs are the same as for integration on $\R^n$.)  
Let ${{L}}^1 [\R_I^{n}]$ be the class of integrable functions on 
$\R_I^{n}$. Since ${{L}}^1 [\R_I^n] \subset {{L}}^1 [\R_I^{n+1}]$, we 
define ${{L}}^1 [{\hat{\R}}_I^{\infty}]=\bigcup_{n=1}^{\iy}{{L}}^1[\R_I^n]$.  
\begin{Def}
We say that a measurable function $f \in {{L}}^1 [\R_I^{\infty}]$, if there 
is a Cauchy sequence  $\{f_n \} \subset {{L}}^1 [{\hat{\R}}_I^{\iy}]$, with 
$f_n \in {{L}}^1 [{{\R}}_I^{n}]$ and ${\lim _{n \to \infty }}{ {{f_n}(x) 
= f(x) }  },\; \la_\iy$-\rm(a.e). 
\end{Def}
\begin{thm} ${{L}}^1 [{\hat{\R}}_I^{\infty}]=L^1[\R_I^\iy]$.
\end{thm}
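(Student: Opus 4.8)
The plan is to prove the two inclusions separately; the inclusion $L^1[\hat{\R}_I^\iy]\subseteq L^1[\R_I^\iy]$ is routine, and the reverse inclusion carries all the content. First I would dispatch $L^1[\hat{\R}_I^\iy]\subseteq L^1[\R_I^\iy]$. If $f\in L^1[\hat{\R}_I^\iy]=\bigcup_n L^1[\R_I^n]$, then $f\in L^1[\R_I^m]$ for some $m$. The embeddings $L^1[\R_I^n]\hookrightarrow L^1[\R_I^{n+1}]$ are isometric, since passing to $\R_I^{n+1}$ multiplies $f^n\otimes h_n$ by $\chi_{[-1/2,1/2]}$ in the new coordinate, a factor of Lebesgue mass $1$. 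Taking $f_n$ to be the canonical image of $f$ for $n\ge m$ (and $f_n=0$ for $n<m$), the sequence $\{f_n\}$ is eventually constant in $L^1$-norm, hence $L^1$-Cauchy, and converges to $f$ pointwise $\la_\iy$-a.e. Thus $f$ meets the defining condition for membership in $L^1[\R_I^\iy]$.

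For the reverse inclusion, take $f\in L^1[\R_I^\iy]$ with a witnessing $L^1$-Cauchy sequence $\{f_n\}$, $f_n\in L^1[\R_I^n]$, $f_n\to f$ $\la_\iy$-a.e. I would first identify $f$ with a genuine $L^1$ limit. Each $L^1[\R_I^n]$ sits isometrically inside the complete space $L^1(\R_I^\iy,\mfB[\R_I^\iy],\la_\iy)$, so the Cauchy sequence $\{f_n\}$ converges in $L^1$-norm to some $g$ there. Extracting a subsequence with $f_{n_k}\to g$ $\la_\iy$-a.e. and comparing with $f_n\to f$ $\la_\iy$-a.e. forces $f=g$ $\la_\iy$-a.e.; in particular $f\in L^1(\R_I^\iy,\la_\iy)$ and $f_n\to f$ in $L^1$-norm.

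The crux is then to locate $f$ inside a single $L^1[\R_I^N]$. Here I would invoke the concentration already recorded in the Remark following Definition \ref{m}: every such measurable limit $f\in\mcM_I$ has essential support inside $\R_I^N$ for some finite $N$, so that $f=0$ $\la_\iy$-a.e. off $\R_I^N$; this rests on $\la_\iy(\mfX_2)=0$ and the $\s$-finiteness recorded in Theorem \ref{Y}. It then remains to upgrade ``supported in $\R_I^N$'' to ``$e_N$-tame'', i.e. to show $f$ depends $\la_\iy$-a.e. only on the first $N$ coordinates, so that $f=f^N\otimes h_N\in L^1[\R_I^N]$. The natural device is the family of conditional expectations (fiberwise averages) $E_n$ onto the $\s$-algebra generated by the first $n$ coordinates, obtained by integrating out the tail against the product probability measure carried by $I_n$: since $f_n\in L^1[\R_I^n]$ one has $E_n f_n=f_n$, the $E_n f$ form an $L^1$-bounded martingale, and I would combine the $L^1$-convergence $f_n\to f$ with the support bound to conclude $f=E_N f$.

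I expect this last upgrade to be the main obstacle. Passing from a support statement to a finite-coordinate-dependence statement is precisely where the special structure of $\la_\iy$ must be used in full --- in particular the product form of $\la_\iy$ on each $\R_I^n$ and the restricted translation invariance of Theorem \ref{Y}(4) --- and it is the point most likely to demand a quantitative martingale/conditional-expectation estimate rather than a soft argument; showing that no nontrivial tail dependence can survive in the $L^1$ limit is the heart of the matter.
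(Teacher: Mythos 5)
Your route coincides with the paper's own proof up to the last step: the paper likewise disposes of the easy inclusion by noting $L^1[\hat{\R}_I^\iy]\supset L^1[\R_I^n]$ for every $n$, reduces the reverse inclusion to closedness, and then invokes the Remark following Definition \ref{m} to place the essential support of the limit $f$ inside a single $\R_I^N$. The genuine gap is the step you yourself flag as the main obstacle: you never prove that $f$ agrees $\la_\iy$-a.e.\ with a function of the first $N$ coordinates; you only name a device (conditional expectations and an $L^1$-bounded martingale) that you hope will force $f=E_Nf$. That device cannot force it. Martingale convergence gives $E_nf\to f$ in $L^1$ as $n\to\iy$, but it supplies no finite $N$ at which the martingale stabilizes, and the support bound cannot supply one, because support concentration in a fixed $\R_I^N$ does not exclude tail dependence. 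Concretely, take
\[
f(x)=\chi_{[0,1]}(x_1)\Bigl(\sum\nolimits_{k=2}^{\iy}2^{-k}\sgn(x_k)\Bigr)h_1(\hat{x}),\qquad \hat{x}=(x_i)_{i\ge 2},
\]
and $f_n=f^n\otimes h_n$ with $f^n(x_1,\ld,x_n)=\chi_{[0,1]}(x_1)\bigl(\sum_{k=2}^{n}2^{-k}\sgn(x_k)\bigr)\prod_{i=2}^{n}\chi_{[-1/2,1/2]}(x_i)$. Each $f_n$ is $e_n$-tame, $\|f_n-f_m\|_1\le 2^{-m}$ for $n>m$, and $f_n\to f$ everywhere, so $f$ satisfies the paper's defining condition for membership in $L^1[\R_I^\iy]$ and is supported in $\R_I^1$. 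Yet for every $N$ one has $f-E_Nf=\chi_{[0,1]}(x_1)\sum_{k>N}2^{-k}\sgn(x_k)$, a function that is nonzero on a set of positive measure (the tail sum has variance $\sum_{k>N}4^{-k}>0$), so $f$ is not a.e.\ equal to any $e_N$-tame function. Hence the statement your final step aims at is not a consequence of the hypotheses you allow yourself, and no quantitative refinement of the martingale estimate will rescue it.

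For comparison, the paper does not close this step either: after concluding $A_f\cap\mfX_1\subset\R_I^N$, its entire argument is the one-sentence assertion that ``this implies'' the existence of $f'\in L^1[\R_I^{N+1}]$ with $f=f'$ a.e. So you have correctly isolated the soft spot of the published proof, and your instinct that it resists a soft argument is sound; but your proposal, as written, leaves the theorem unproved at exactly that point, and the specific route you outline runs into the obstruction above rather than around it. Note the dichotomy that the example exposes: if $L^1[\R_I^n]$ is read strictly (Borel functions for $\mfB[\R_I^n]$, i.e.\ $e_n$-tame functions, which is what the paper's construction of $\mfB[\R_I^n]$ and of $\mcM_I^n$ says), then your ``upgrade'' is indeed necessary and it is false; if instead $L^1[\R_I^n]$ is read liberally (integrable functions measurable for the trace of $\mfB[\R_I^\iy]$ on $\R_I^n$), then the upgrade is unnecessary and the whole issue reduces to the Remark's support-concentration claim. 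Under either reading, the conditional-expectation plan should be abandoned: what is missing is not an estimate but a mechanism, absent from both your sketch and the paper's proof, preventing tail dependence (or, in the first counterexample's spirit, supports escaping to infinity) from surviving the passage to the a.e.\ limit.
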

\begin{proof} We know that ${{L}}^1 [{\hat{\R}}_I^{\infty}] \supset 
L^1[\R_I^n]$, for all $n$, hence it suffices to prove that 
${{L}}^1 [{\hat{\R}}_I^{\infty}]$ is closed. Let $f$ be a limit point of 
${{L}}^1 [{\hat{\R}}_I^{\infty}]$ ($f \in L^1[\R_I^\iy]$). If $f=0$, we are 
done, thus assume $f \ne 0$. From the remarks above, we know 
that if $A_f$ is the support of $f$, then 
$\la_\iy(A_f)=\la_\iy(A_f \cap \mfX_1)$. Thus, $A_f\cap \mfX_1 \subset  
\R_I^N$ for some $N$. This implies that there is a function 
$f' \in {{L}}^1 [{{\R}}_I^{N+1}]$, with  
${\lambda _\infty }\left( {\left\{ {{\bf{x}}:f({\bf{x}}) \ne {f'}({\bf{x}})} 
\right\}} \right) = 0$. It follows that, $f({\bf{x}}) ={f'}({\bf{x}})$ (a.e). 
Recalling that $L^1[\R_I^n]$ is a set of equivalence classes, we observe 
that ${{L}}^1 [{\hat{\R}}_I^{\infty}]=L^1[\R_I^\iy]$.
\end{proof}
 
\begin{Def}If $f \in L^1[\R_I^\iy]$, we define the integral of $f$ by
\[
\int_{\mathbb{R}_I^\infty } {f(x)d{\lambda _\infty }(x)}  = \mathop 
{\lim }\limits_{n \to \infty } \int_{\mathbb{R}_I^\iy } {{f_n}(x)d{\lambda _\infty }(x)},
\] 
where $\{f_n\} \subset {{L}}^1 [{{\R}}_I^{\iy}]$ is any Cauchy sequence 
converging to $f(x)$-\rm(a.e). 
\end{Def}
\begin{thm}{\lb{God1}} If $f \in L^1[\R_I^\iy]$, then the above integral 
exists and all theorems that are true for $f \in L^1[\R_I^n]$, 
also hold for $f \in L^1[\R_I^\iy]$. 
\end{thm}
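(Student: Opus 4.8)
The plan is to handle the two assertions separately: that the defining limit exists and is independent of the approximating sequence, and that the classical $L^1$ theorems descend from their finite-dimensional versions. For existence I would first record the estimate $\left|\int_{\R_I^\iy} g\, d\la_\iy\right| \le \|g\|_{L^1[\R_I^\iy]}$ for every $g\in L^1[\hat{\R}_I^\iy]$; this holds level by level, since for $g=f^n\otimes h_n\in L^1[\R_I^n]$ both sides reduce to the corresponding quantities for $f^n$ on $\R^n$, using that $\la_\iy$ restricted to $\R_I^n$ is $\la_n$ and $\la_\iy(I_n)=1$. Given a Cauchy sequence $\{f_n\}$ with $f_n\in L^1[\R_I^n]$ and $f_n\to f$ a.e., linearity then yields $\left|\int f_n-\int f_m\right|\le\|f_n-f_m\|_{L^1}$, so $\{\int_{\R_I^\iy}f_n\,d\la_\iy\}$ is Cauchy in $\C$ and converges; hence the integral exists.

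For independence of the approximating sequence I would take two Cauchy sequences $\{f_n\}$ and $\{g_n\}$ in $L^1[\hat{\R}_I^\iy]$, both converging a.e. to $f$. By the preceding theorem $L^1[\hat{\R}_I^\iy]=L^1[\R_I^\iy]$ the space is complete, so each sequence has an $L^1$-limit; extracting a.e.-convergent subsequences and using uniqueness of a.e.-limits identifies both $L^1$-limits with $f$. Consequently $\|f_n-g_n\|_{L^1}\le\|f_n-f\|_{L^1}+\|f-g_n\|_{L^1}\to 0$, and the estimate above forces $\lim_n\int f_n=\lim_n\int g_n$. Here a.e. convergence alone would be too weak; it is the completeness supplied by the previous theorem that upgrades it to $L^1$-convergence.

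For the transfer of theorems I would use the structural reduction already available: every $f\in L^1[\R_I^\iy]$ agrees $\la_\iy$-a.e. with a tame function $f^N\otimes h_N$ of some finite order, and $\int_{\R_I^\iy}f\,d\la_\iy=\int_{\R^N}f^N\,d\la_N$. For a statement involving one function or finitely many, I would take $N$ to be the largest of the individual orders and quote the $\R^N$ result verbatim. For the convergence theorems---monotone convergence, Fatou, dominated convergence---the issue is to place an entire sequence on a single finite level: under a domination $|f_k|\le g$ the dominating function fixes a level $N$ containing the essential support of every $f_k$ and of the limit, while under monotonicity $0\le f_k\nearrow f$ the finite essential order of the limit (from the remark following the existence theorem for measurable functions) bounds all the $f_k$. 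Once such a common $N$ is secured, each result is exactly its classical counterpart on $(\R^N,\la_N)$ read back through the integral identity. The main obstacle is precisely this uniformization---producing one finite dimension for an infinite family---so that the concentration of a.e.-limits in some $\R_I^N$ is the pivotal input.
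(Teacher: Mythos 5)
Your proposal is correct and follows the route the paper itself intends: the paper states Theorem \ref{God1} with no explicit proof, justifying it only by the earlier parenthetical that ``the proofs are the same as for integration on $\R^n$'' and by the preceding theorem ${{L}}^1 [{\hat{\R}}_I^{\infty}]=L^1[\R_I^\iy]$, whose $\la_\iy$-a.e.\ identification of every $f\in L^1[\R_I^\iy]$ with a function of finite level is exactly the reduction you exploit. Your added details---the norm estimate giving existence of the limit, the completeness/subsequence argument giving independence of the approximating Cauchy sequence, and the common-level uniformization for the convergence theorems---make explicit precisely what the paper leaves to the reader.
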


\section{The Kuelbs-Steadman Spaces}

If we want a class of spaces designed to include the HK-integrable 
functions on $\R_I^n$, for a fixed $n$, let $\mathbb{Q}_I^n$ be the set 
$\left\{ {{\mathbf{x}} = (x_1 ,x_2  \cdots  ) \in {\mathbb{R}}_I^n } \right\}$ 
such that the first $n$ coordinates $(x_1 ,x_2  \cdots x_n )$ are 
rational. Since every countable set is isomorphic to the set of natural numbers,  
$\mathbb{Q}_I^n$ is a countable set which is also dense  in ${\mathbb{R}}_I^n $.  Therefore,  we 
can arrange it as $\mathbb{Q}_I^n  = \left\{ {{\mathbf{x}}_1, {\mathbf{x}}_2, 
{\mathbf{x}}_3,  \cdots } \right\}$. For each $l$ and $i$, let 
${\mathbf{B}}_l ({\mathbf{x}}_i ) $ be the closed cube centered at 
${\mathbf{x}}_i$, with sides parallel to the coordinate axes and edge 
$e_l=\tfrac{1 }{2^{l-1}\sqrt{n}}, l \in {\mathbb{N}}$. Now choose 
the natural order that maps $\mathbb{N} \times \mathbb{N}$ bijectively to 
$\mathbb{N}$, and let $\left\{ {{\mathbf{B}}_{k} ,\;k \in 
\mathbb{N}}\right\}$ be the resulting set of (all) closed cubes 
$\{ {\mathbf{B}}_l ({\mathbf{x}}_i )\;\left| {(l,i) \in \mathbb{N} 
\times \mathbb{N}\} } \right.
$
centered at a point in $\mathbb{Q}_I^n $. Let ${\mathcal{E}}_k ({\mathbf{x}})$ 
be the characteristic function of ${\mathbf{B}}_k $, so that 
${\mathcal{E}}_k ({\mathbf{x}})$ is in ${{L}}^p [{\mathbb{R}}_I^n ] \cap 
{{L}}^\infty  [{\mathbb{R}}_I^n ] $ for $1 \le p < \infty$.  
We define $F_{k} (\; \cdot \;)$ on ${{L}}^1 [{\mathbb{R}}_I^n ]$ as
\begin{equation}
{F_k}(f) = \int_{\R_I^n} {{\mcE_k}({\bf{x}})f({\bf{x}})d{\lambda _\infty}
({\bf{x}})}  = \int_{{{\bf{B}}_k}} {f({\bf{x}})d{\lambda _\infty }({\bf{x}})} .
\label{(29)}
\end{equation}
Since each ${\mathbf{B}}_{k}$ is a cube with sides parallel to the coordinate 
axes, $F_{k} (\; \cdot \;)$ is defined for all HK-integrable functions, and  
is a bounded linear functional on ${{L}}^p [{\mathbb{R}}_I^n ] $ for each 
${k}$, with $\left\| {F_{k} } \right\|_\infty   \le 1$. Moreover, if $F_k (f) = 0$ 
for all ${k}$, then $f = 0$, thus $\left\{ {F_{k} } \right\}$ is fundamental 
on ${{L}}^p [{\mathbb{R}}_I^n ] $ for $1 \le p < \infty$.
Fix ${t_{k}}> 0$ such that ${\sum\nolimits_{k = 1}^\infty  {t_k}}=1$  
and define a new inner product $\left( {\; \cdot \;} \right)$ 
on ${{L}}^2 [{\mathbb{R}}_I^n ]$ by
\begin{equation}
\left( {f,g} \right)  = \sum\nolimits_{k = 1}^\infty  {t_k } 
\left[ {\int_{\mathbb{R}_I^n } {{\mathcal{E}}_k ({\mathbf{x}})f({\mathbf{x}})
d\la_\iy({\mathbf{x}})} } \right] \overline{\left[ {\int_{\mathbb{R}_I^n } 
{{\mathcal{E}}_k ({\mathbf{y}})g({\mathbf{y}})d\la_\iy({\mathbf{y}})} } \right]}.   
\label{(30)}
\end{equation}
The completion  
of ${{L}}^2 [{\mathbb{R}}_I^n ]$ in this inner product 
is the Kuelbs-Steadman space, ${K}{S}^2 [{\mathbb{R}}_I^n ]$.   
To see directly that ${K}{S}^2 [{\mathbb{R}}_I^n ]$ contains the 
HK integrable functions, let $f$ be HK integrable,
\[
\left\| f \right\|_{{KS}^2}^2  = \sum\nolimits_{k = 1}^\infty  {t_k } 
\left| {\int_{\mathbb{R}_I^n } {{\mathcal{E}}_k ({\mathbf{x}})
f({\mathbf{x}})d\la_\iy({\mathbf{x}})} } \right|^2  \leqslant \sup _k 
\left| {\int_{{\mathbf{B}}_{k} } {f({\mathbf{x}})d\la_\iy({\mathbf{x}})} } \right|^2  <\iy, 
\]
hence $f \in {K}{S}^2 [{\mathbb{R}}_I^n ] $.  
\begin{thm} For each $p,\;1 \leqslant p \leqslant \infty,$
 ${K}{S}^2 [{\mathbb{R}}_I^n ] \supset {{L}}^p [{\mathbb{R}}_I^n ]$
as a continuous dense subspace.
\end{thm}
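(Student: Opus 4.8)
The plan is to verify the two defining properties of a continuous dense embedding separately: first that the inclusion $L^p[\R_I^n] \hookrightarrow KS^2[\R_I^n]$ is bounded, then that its range is dense. I will use freely that, by construction, $KS^2[\R_I^n]$ is the completion of $L^2[\R_I^n]$ in the norm $\|f\|_{KS^2}^2 = \sum_{k=1}^\iy t_k |F_k(f)|^2$, so that $L^2[\R_I^n]$ is automatically $\|\cdot\|_{KS^2}$-dense, and that each cube $\mathbf{B}_k$ has $\la_\iy(\mathbf{B}_k) = e_l^n \le 1$, the largest edge being $e_1 = 1/\sqrt n \le 1$.

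For continuity, fix $p$ and let $q$ be its conjugate exponent, $1/p + 1/q = 1$. For $f \in L^p[\R_I^n]$, Hölder's inequality on the bounded cube $\mathbf{B}_k$ gives $|F_k(f)| = |\int_{\mathbf{B}_k} f \, d\la_\iy| \le \|f\|_p \, \la_\iy(\mathbf{B}_k)^{1/q} \le \|f\|_p$, where the last step uses $\la_\iy(\mathbf{B}_k) \le 1$. Summing against the weights,
\[
\|f\|_{KS^2}^2 = \sum_{k=1}^\iy t_k |F_k(f)|^2 \le \|f\|_p^2 \sum_{k=1}^\iy t_k = \|f\|_p^2,
\]
so $\|f\|_{KS^2} \le \|f\|_p$ for every $p \in [1,\iy]$; in particular the $KS^2$-norm of any $L^p$ function is finite, and since the embedding is linear this estimate is exactly its continuity.

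To see that each $f \in L^p[\R_I^n]$ genuinely names a point of the completion $KS^2[\R_I^n]$, I would approximate it in $\|\cdot\|_{KS^2}$ by elements of $L^2[\R_I^n]$. For $1 \le p < \iy$ choose simple functions $s_m$ of finite-measure support with $\|s_m - f\|_p \to 0$; since $\|s_m - f\|_{KS^2} \le \|s_m - f\|_p$ and each $s_m \in L^2[\R_I^n]$, the sequence $(s_m)$ is $\|\cdot\|_{KS^2}$-Cauchy and converges to $f$, identifying $f$ with an element of the completion. The case $p = \iy$ is the one point that needs care, because finite-support simple functions are not $L^\iy$-dense; instead I truncate, setting $f_m = f \, \chi_{[-m,m]^n \times I_n}$. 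Each $f_m \in L^2[\R_I^n] \cap L^\iy[\R_I^n]$, for every fixed $k$ one has $F_k(f - f_m) = 0$ as soon as $\mathbf{B}_k \subset [-m,m]^n \times I_n$, and $t_k |F_k(f - f_m)|^2 \le t_k \|f\|_\iy^2$ is summable, so dominated convergence for the series yields $\|f - f_m\|_{KS^2} \to 0$. Injectivity of the embedding is the fundamentality of $\{F_k\}$ already recorded for $p < \iy$, and extends verbatim to $p = \iy$: the cubes $\mathbf{B}_k$ generate $\mathfrak{B}[\R_I^n]$, so $F_k(f) = 0$ for all $k$ forces $\int_B f \, d\la_\iy = 0$ on every Borel set $B$, hence $f = 0$ a.e.

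For density it then suffices to exhibit one $\|\cdot\|_{KS^2}$-dense set contained in $L^p[\R_I^n]$ for all $p$ simultaneously. The finite-support simple functions $\mathcal{S}$ do this: $\mathcal{S} \subset L^p[\R_I^n]$ for every $p \in [1,\iy]$; $\mathcal{S}$ is $L^2$-dense in $L^2[\R_I^n]$, hence $\|\cdot\|_{KS^2}$-dense there by the bound $\|\cdot\|_{KS^2} \le \|\cdot\|_2$; and $L^2[\R_I^n]$ is $\|\cdot\|_{KS^2}$-dense in $KS^2[\R_I^n]$ by construction. Chaining these inclusions shows $\mathcal{S}$, and a fortiori $L^p[\R_I^n]$, is dense in $KS^2[\R_I^n]$. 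The genuinely delicate step is the well-definedness of the embedding at $p = \iy$, where the truncation-plus-dominated-convergence argument must replace the failed simple-function approximation; every other step reduces to the volume bound $\la_\iy(\mathbf{B}_k) \le 1$ together with the defining completion property of $KS^2[\R_I^n]$.
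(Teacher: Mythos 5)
Your proof is correct, and its core estimate is the same one the paper uses: H\"older's inequality on the cube $\mathbf{B}_k$, the volume bound $\la_\iy(\mathbf{B}_k)\le 1$, and $\sum_k t_k=1$, yielding $\|f\|_{KS^2}\le\|f\|_p$. Where you genuinely differ is in organization and completeness. The paper runs the estimate twice (for $q<\infty$ via $|F_k(f)|^2\le(\int_{\mathbf{B}_k}|f|^q\,d\la_\iy)^{2/q}$, and separately for $q=\infty$ via the bound on the squared volume of $\mathbf{B}_k$), and then disposes of density with the single remark that $L^2[\R_I^n]$ is dense ``by construction.'' You obtain the continuity bound uniformly in $p$ with one application of H\"older, and you supply two steps the paper leaves implicit: (i) that an $f\in L^p[\R_I^n]\setminus L^2[\R_I^n]$ actually names a point of the abstract completion $KS^2[\R_I^n]$ --- for $p<\infty$ by simple-function approximation in $\|\cdot\|_p$, and for $p=\infty$ by truncation plus dominated convergence on the series, which is genuinely needed since finite-support simple functions are not dense in $L^\infty$; and (ii) that the image of $L^p$ itself, not merely of $L^2$, is dense, which you get by routing through the finite-support simple functions that lie in every $L^p$. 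These additions strengthen the paper's argument rather than depart from it.

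One step is loose: your injectivity argument at $p=\infty$. From the fact that the cubes $\{\mathbf{B}_k\}$ generate $\mathfrak{B}[\R_I^n]$ you cannot directly conclude that $\int_B f\,d\la_\iy=0$ for every Borel set $B$: the class of sets on which the signed measure $B\mapsto\int_B f\,d\la_\iy$ vanishes is a $\lambda$-system, but the cubes do not form a $\pi$-system (an intersection of two cubes from the family need not belong to the family), so the generation argument does not close. The correct route is the Lebesgue differentiation theorem (or a Vitali covering argument): the family $\{\mathbf{B}_k\}$ contains cubes of arbitrarily small edge containing any prescribed point, so $F_k(f)=0$ for all $k$ forces $f=0$ almost everywhere. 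This is a repairable slip in a step the paper itself only asserts (the fundamentality of $\{F_k\}$, stated there for $1\le p<\infty$), and it does not affect the rest of your argument.
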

\begin{proof} By construction, ${K}{S}^2 [{\mathbb{R}}_I^n ]$
contains ${{L}}^2 [{\mathbb{R}}_I^n ]$ densely thus, we only need to show that 	
${K}{S}^2 [{\mathbb{R}}_I^n ] \supset {{L}}^q [{\mathbb{R}}_I^n ]$
for $q \ne 2$.  If $f \in {{L}}^q [{\mathbb{R}}_I^n ]$ and $q < \infty $, then we have 
\[
\begin{gathered}
\left\| f \right\|_{{KS}^2}  = \left[ {\sum\nolimits_{k = 1}^\infty  
{t_k } \left| {\int_{{\mathbb{R}}_I^n } {{\mathcal{E}}_k ({\mathbf{x}})
f({\mathbf{x}})d\la_\iy({\mathbf{x}})} } \right|^{\frac{{2q}}{q}} } 
\right]^{1/2}  \hfill \\
{\text{ }} \leqslant \left[ {\sum\nolimits_{k = 1}^\infty  {t_k } 
\left( {\int_{{\mathbb{R}}_I^n } {{\mathcal{E}}_k ({\mathbf{x}})\left| 
{f({\mathbf{x}})} \right|^q d\la_\iy({\mathbf{x}})} } \right)^{\frac{2}{q}} } 
\right]^{1/2}  \hfill \\
{\text{      }} \leqslant \sup _k \left( {\int_{{\mathbb{R}}_I^n } 
{{\mathcal{E}}_k ({\mathbf{x}})\left| {f({\mathbf{x}})} \right|^q 
d\la_\iy({\mathbf{x}})} } \right)^{\frac{1}
{q}}  \leqslant \left\| f \right\|_q . \hfill \\ 
\end{gathered} 
\]
Thus, $f \in {K}{S}^2 [{\mathbb{R}}_I^n ] $.  For $q = \infty $, 
first note that $ vol({\mathbf{B}}_k )^2 \le \left[ {\frac{1}
{{\sqrt n }}} \right]^{2n} \le 1$, hence we have 
\[
\begin{gathered}
\left\| f \right\|_{{KS}^2}  = \left[ {\sum\nolimits_{k = 1}^\infty  
{t_k } \left| {\int_{{\mathbf{R}}_I^n } {{\mathcal{E}}_k ({\mathbf{x}})
f({\mathbf{x}})d\la_\iy({\mathbf{x}})} } \right|^2 } \right]^{1/2}  \hfill \\
{\text{       }} \leqslant \left[ {\left[ {\sum\nolimits_{k = 1}^\infty  
{t_k [vol({\mathbf{B}}_k )]^2 } } \right][ess\sup \left| f \right|]^2 } \right]^{1/2}  
\leqslant \left\| f \right\|_\infty  . \hfill \\ 
\end{gathered} 
\]
Therefore, $f \in {K}{S}^2 [{\mathbb{R}}_I^n ]$ and ${{L}}^\infty  [{\mathbb{R}}_I^n ] 
\subset {K}{S}^2 [{\mathbb{R}}_I^n ]$.
\end{proof}
Before proceeding to an additional discussion, we construct the 
${K}{S}^p [{\mathbb{R}}_I^n ]$ spaces, for $1 \le p \le \iy$. Define:
\[
\left\| f \right\|_{{{KS}}^p }  = \left\{ {\begin{array}{*{20}c}
{\left\{ {\sum\nolimits_{k = 1}^\infty  {t_k \left| {\int_{\mathbb{R}_I^n } 
{ {\mathcal{E}}_k ({\mathbf{x}})f({\mathbf{x}})d\la_\iy({\mathbf{x}})} } 
\right|} ^p } \right\}^{1/p} ,1 \leqslant p < \infty},  \\
   {\sup _{k \geqslant 1} \left| {\int_{\mathbb{R}_I^n } { {\mathcal{E}}_k 
({\mathbf{x}})f({\mathbf{x}})d\la_\iy({\mathbf{x}})} } \right|,p = \infty .}  \\

 \end{array} } \right.
\] 
It is easy to see that $\left\| \cdot \right\|_{{{KS}}^p }$ defines a norm 
on ${L}^p$. If ${{{KS}}^p }$ is the completion of ${L}^p$ with respect 
to this norm, we have:
\begin{thm} For each $q,\;1 \leqslant q \leqslant \infty,$
${K}{S}^p [{\mathbb{R}}_I^n ] \supset {{L}}^q [{\mathbb{R}}_I^n ]$ 
as a dense continuous embedding.
\end{thm}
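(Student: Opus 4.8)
The plan is to mirror the proof of the $KS^2$ case, reducing the whole statement to one uniform estimate on the functionals $F_k$ and then reading off both the boundedness and the density from it. Everything hinges on the fact that each cube $\mathbf{B}_k$ in the construction has edge $e_l = 1/(2^{l-1}\sqrt n) \le 1/\sqrt n$, so that $\mathrm{vol}(\mathbf{B}_k) \le n^{-n/2} \le 1$.

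First I would establish, for every $f \in L^q[\R_I^n]$ and every $k$, the single bound
\[
|F_k(f)| = \left| \int_{\mathbf{B}_k} f \, d\la_\iy \right| \le \|f\|_q .
\]
For $1 \le q < \infty$ this is H\"older's inequality, $\int_{\mathbf{B}_k}|f| \le \left(\int_{\mathbf{B}_k}|f|^q\right)^{1/q}\mathrm{vol}(\mathbf{B}_k)^{1/q'} \le \|f\|_q$, using $\mathrm{vol}(\mathbf{B}_k)^{1/q'}\le 1$; for $q=\infty$ it is the trivial estimate $\int_{\mathbf{B}_k}|f| \le \mathrm{vol}(\mathbf{B}_k)\|f\|_\infty \le \|f\|_\infty$. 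This uniform volume bound is the only place the geometry enters, and it is exactly what makes the embedding constant equal to $1$ independent of $k$.

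With this in hand the norm estimate is immediate. For $1 \le p < \infty$, since $\sum_k t_k = 1$,
\[
\|f\|_{KS^p}^p = \sum_{k=1}^\iy t_k |F_k(f)|^p \le \sum_{k=1}^\iy t_k \|f\|_q^p = \|f\|_q^p ,
\]
and for $p=\infty$, $\|f\|_{KS^\infty} = \sup_k |F_k(f)| \le \|f\|_q$. Thus $\|f\|_{KS^p}\le\|f\|_q$ in all cases, so the inclusion is norm-decreasing, hence continuous; it is injective because $\{F_k\}$ is fundamental (if $F_k(f)=0$ for all $k$ then $f=0$), so $\|f\|_{KS^p}=0$ forces $f=0$. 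To make sense of the map on all of $L^q$ when $q\ne p$ I would extend the inclusion $L^p\cap L^q\to KS^p$, which is $\|\cdot\|_q$-to-$\|\cdot\|_{KS^p}$ continuous, from the $\|\cdot\|_q$-dense subspace $L^p\cap L^q$ to all of $L^q$.

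The density is where I expect the real work to lie. When $p<\infty$ it is clean: simple functions of finite-measure support belong to $L^p\cap L^q$, are $\|\cdot\|_p$-dense in $L^p$, hence (by $\|\cdot\|_{KS^p}\le\|\cdot\|_p$) dense in the completion $KS^p$, and they lie in $L^q$, so the image of $L^q$ is dense. The delicate case is the target $p=\infty$ with $q<\infty$: here one cannot approximate a bounded function of infinite-measure support --- the constant $1$, say --- by $L^q$ functions in the $KS^\infty$ norm, since for any $g\in L^q$ the maximal-edge cubes $\mathbf{B}_k$ escaping to infinity, along which $F_k(g)\to 0$, give
\[
\|1-g\|_{KS^\infty}=\sup_k\bigl|\mathrm{vol}(\mathbf{B}_k)-F_k(g)\bigr|\ge n^{-n/2}>0 .
\]
I would therefore expect the density statement into $KS^\infty$ to be the main obstacle, requiring either a restriction on $q$ or a reformulation of what $KS^\infty$ is completed from; the continuous injective embedding itself goes through uniformly for all $1\le p,q\le\infty$.
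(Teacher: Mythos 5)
Your proposal is correct, and its core estimate is exactly the paper's: the uniform bound $|F_k(f)|\le \|f\|_q$ (H\"older plus $\mathrm{vol}(\mathbf{B}_k)\le 1$), summed against $\sum_k t_k=1$, is precisely the chain of inequalities in the paper's proof, which writes it as $\bigl|\int_{\mathbf{B}_k}f\,d\la_\iy\bigr|^q\le\int_{\mathbf{B}_k}|f|^q\,d\la_\iy$ for $q<\iy$ and as the volume bound with constant $M$ for $q=\iy$. The genuine differences lie in what surrounds that estimate, and they favor your version. (i) The paper never explains how an $f\in L^q\setminus L^p$ is to be identified with an element of the completion of $L^p$; you do, by extending from $L^p\cap L^q$. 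The one case where your mechanism does not literally apply is $q=\iy$, $p<\iy$, since $L^p\cap L^\iy$ is not $\|\cdot\|_\iy$-dense in $L^\iy$; there you should instead verify directly that the truncations $f\chi_{\{|\bar x|\le m\}}\in L^p\cap L^\iy$ converge to $f$ in the $KS^p$ norm (each term of the defining series tends to $0$ and is dominated by $t_k\|f\|_\iy^p$, so dominated convergence for series applies), which realizes $L^\iy$ inside the completion. (ii) The paper does not prove density of $L^q$ in $KS^p$ for $q\ne p$ at all --- its proof only records that $L^p$ is dense by construction --- whereas your simple-function argument settles it for every $p<\iy$ and every $q$. (iii) Most importantly, the paper disposes of the remaining case with ``the case $p=\iy$ is obvious,'' and your counterexample shows the density assertion there is not merely unproven but false: for $g\in L^q$ with $q<\iy$, one has $F_k(g)\to 0$ along the edge-$\tfrac{1}{\sqrt n}$ cubes whose rational centers escape to infinity, while $F_k(1)=n^{-n/2}$ on those cubes, so $\|1-g\|_{KS^\iy}\ge n^{-n/2}$ and the constant function $1\in L^\iy$ is not in the $KS^\iy$-closure of $L^q$. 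Thus the theorem as printed needs a restriction: the embedding $L^q\subset KS^p$ is continuous and injective for all $1\le p,q\le\iy$, but it is dense only when $p<\iy$ (or trivially when $p=q=\iy$). Your repaired statement, and your proof of it, are correct.
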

\begin{proof} As in the previous theorem, by construction ${K}{S}^p [{\mathbb{R}}_I^n ]$
contains ${{L}}^p [{\mathbb{R}}_I^n ]$ densely, hence we only need to show that 	
${K}{S}^p [{\mathbb{R}}_I^n ] \supset {{L}}^q [{\mathbb{R}}_I^n ]$
for $q \ne p$. First, suppose that $p< \infty$.  If $f \in {{L}}^q 
[{\mathbb{R}}_I^n ]$ and $q < \infty $, then we have 
\[
\begin{gathered}
\left\| f \right\|_{{KS}^p}  = \left[ {\sum\nolimits_{k = 1}^\infty  
{t_k } \left| {\int_{{\mathbb{R}}_I^n } {{\mathcal{E}}_k ({\mathbf{x}})
f({\mathbf{x}})d\la_\iy({\mathbf{x}})} } \right|^{\frac{{qp}}{q}} } 
\right]^{1/p}  \hfill \\
{\text{  }} \leqslant \left[ {\sum\nolimits_{k = 1}^\infty  {t_k } 
\left( {\int_{{\mathbb{R}}_I^n } {{\mathcal{E}}_k ({\mathbf{x}})
\left| {f({\mathbf{x}})} \right|^q d\la_\iy({\mathbf{x}})} } 
\right)^{\frac{p}{q}} } \right]^{1/p}  \hfill \\
{\text{      }} \leqslant \sup _k \left( {\int_{{\mathbb{R}}_I^n } 
{{\mathcal{E}}_k ({\mathbf{x}})\left| {f({\mathbf{x}})} \right|^q 
d\la_\iy({\mathbf{x}})} } \right)^{\frac{1}
{q}}  \leqslant \left\| f \right\|_q . \hfill \\ 
\end{gathered} 
\]
Hence, $f \in {K}{S}^p [{\mathbb{R}}_I^n ] $.  For $q = \infty $, we have 
\[
\begin{gathered}
  \left\| f \right\|_{{KS}^p}  = \left[ {\sum\nolimits_{k = 1}^\infty  
{t_k } \left| {\int_{{\mathbb{R}}_I^n } {{\mathcal{E}}_k ({\mathbf{x}})
f({\mathbf{x}})d\la_\iy({\mathbf{x}})} } \right|^p } \right]^{1/p}  \hfill \\
  {\text{       }} \leqslant \left[ {\left[ {\sum\nolimits_{k = 1}^\infty  
{t_k [vol({\mathbf{B}}_k )]^p } } \right][ess\sup \left| f \right|]^p } 
\right]^{1/p}  \leqslant M\left\| f \right\|_\infty  . \hfill \\ 
\end{gathered} 
\]
Thus, $f \in {K}{S}^p [{\mathbb{R}}_I^n ]$ and ${{L}}^\infty  
[{\mathbb{R}}_I^n ] \subset {K}{S}^p [{\mathbb{R}}_I^n ]$.   
The case $p=\infty$ is obvious.
\end{proof} 
\begin{thm} For ${K}{S}^p$, $1\leq p \leq \infty$, we have: 
\begin{enumerate}
\item If $f,g \in {K}{S}^p$, then 
$
\left\| {f + g} \right\|_{{{KS}}^{{p}} }  \leqslant \left\| f 
\right\|_{{{KS}}^{{p}} }  + \left\| g \right\|_{{{KS}}^{{p}} }
$  (Minkowski inequality). 
\item If $K$ is a weakly compact subset of ${{L}^p}$, then it is a compact 
subset of  ${K}{S}^p$.
\item If $1< p < \infty$, then ${K}{S}^p$ is uniformly convex.
\item If $1< p < \infty$ and $p^{ - 1}  + q^{ - 1}  = 1$, the dual 
space of ${K}{S}^p$ is ${K}{S}^q$.
\item  ${K}{S}^{\infty} \subset {K}{S}^p$, for $1\leq p < \infty$.
\end{enumerate}
\end{thm}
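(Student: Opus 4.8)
The unifying device is the coordinate map $\Phi_p(f) = (t_k^{1/p} F_k(f))_{k\ge 1}$, where $F_k(f)=\int_{B_k} f\,d\la_\iy$ are the functionals built above. Since each $F_k$ is linear, $\Phi_p$ is linear, and by the very definition of the norm $\|\Phi_p(f)\|_{\ell^p}=\|f\|_{KS^p}$ for $1\le p<\infty$ (with $t_k^{1/p}$ replaced by $1$ and $\ell^\infty$ in place of $\ell^p$ when $p=\infty$). Thus $\Phi_p$ is an isometry of $(L^p,\|\cdot\|_{KS^p})$ into $\ell^p$, and because a Cauchy sequence in $KS^p$ maps to a Cauchy sequence in $\ell^p$ whose coordinatewise limits are exactly the continuous extensions of the $F_k$, the map extends to an isometric isomorphism of $KS^p$ onto the closed subspace $M_p=\overline{\Phi_p(L^p)}\subseteq\ell^p$. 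I would record once and for all that each $F_k$ extends to $KS^p$ and that $\|f\|_{KS^p}=\left(\sum_k t_k|F_k(f)|^p\right)^{1/p}$ holds for every element of the completion; everything below is then read off from the geometry of $\ell^p$.

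Given this, items (1), (3) and (5) are immediate. For (1), the triangle inequality for $\|\cdot\|_{KS^p}$ is just Minkowski's inequality in $\ell^p$ applied to $\Phi_p(f)$ and $\Phi_p(g)$, using $F_k(f+g)=F_k(f)+F_k(g)$; for $p=\infty$ it is the triangle inequality for the supremum. For (3), $\ell^p$ is uniformly convex for $1<p<\infty$ by Clarkson's inequalities, and uniform convexity is inherited by every subspace (its modulus only involves pairs in the unit ball), so $M_p\cong KS^p$ is uniformly convex. For (5), since $\sum_k t_k=1$ we get $\|f\|_{KS^p}^p=\sum_k t_k|F_k(f)|^p\le(\sup_k|F_k(f)|)^p\sum_k t_k=\|f\|_{KS^\infty}^p$, so the identity on the common dense domain extends to a norm-nonincreasing injection $KS^\infty\hookrightarrow KS^p$ (injective because the $\{F_k\}$ are preserved and separate points of the completion).

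For (2) I would argue by sequences, restricting to $1\le p<\infty$. Let $K\subseteq L^p$ be weakly compact; then $K$ is bounded, say $\|g\|_p\le C$ for $g\in K$, and by Eberlein--Smulian $K$ is weakly sequentially compact and weakly closed. Given $\{f_n\}\subseteq K$, choose $f_{n_j}\rightharpoonup f\in K$. Each $F_k$ is a bounded linear functional on $L^p$ because $\chi_{B_k}\in L^q$ (indeed $\la_\iy(B_k)\le 1$), so $F_k(f_{n_j})\to F_k(f)$ for every $k$. Moreover $|F_k(f_{n_j})-F_k(f)|\le 2C$, so the summands $t_k|F_k(f_{n_j})-F_k(f)|^p$ are dominated by the summable sequence $t_k(2C)^p$; the dominated convergence theorem for series then gives $\|f_{n_j}-f\|_{KS^p}^p=\sum_k t_k|F_k(f_{n_j})-F_k(f)|^p\to 0$. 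Hence every sequence in $K$ has a $KS^p$-convergent subsequence with limit in $K$, so $K$ is norm-compact in $KS^p$.

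Part (4) is the substantive one. The natural candidate pairing is $\langle f,g\rangle=\sum_k t_k F_k(f)\overline{F_k(g)}$, which under $\Phi_p,\Phi_q$ is exactly the canonical $\ell^p$--$\ell^q$ pairing of $\Phi_p(f)$ with $\Phi_q(g)$; Hölder's inequality gives $|\langle f,g\rangle|\le\|f\|_{KS^p}\|g\|_{KS^q}$, so $J:g\mapsto\langle\cdot,g\rangle$ is a norm-nonincreasing map $KS^q\to(KS^p)^*$. To promote $J$ to an isometric isomorphism, the plan is to show that $\Phi_p$ actually has dense range, i.e. $M_p=\ell^p$; then $KS^p$ is isometrically the full weighted sequence space and (4) reduces to the standard duality between weighted $\ell^p$ and $\ell^q$ under the pairing above. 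The main obstacle is precisely this density claim, which by Hahn--Banach is equivalent to strong independence of the cube indicators: \emph{if $(c_k)\in\ell^q$ and $\sum_k c_k\chi_{B_k}=0$ $\la_\iy$-a.e., then $c_k=0$ for all $k$}. This is where the geometry of the family $\{B_k\}$ enters, and I expect it to be the crux; I would prove it by isolating the coefficient of a cube of minimal edge-length active near a chosen boundary point and arguing inductively from the jump behaviour of $\sum_k c_k\chi_{B_k}$ across cube faces. Should this independence fail, $KS^p\cong M_p$ is only a proper closed subspace of the reflexive space $\ell^p$, whence $(KS^p)^*\cong\ell^q/M_p^{\perp}$, and one would instead have to identify this quotient with $M_q$ by a best-approximation (complementation) argument — markedly harder, which is why I would invest the effort in the density lemma first.
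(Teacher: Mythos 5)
First, a point of comparison: the paper offers no proof of this theorem at all — it is stated bare, with the content implicitly deferred to \cite{GZ09} and \cite{GZ} — so your attempt has to be judged on its own merits rather than against an argument in the text. Your unifying device, the isometry $\Phi_p(f)=(t_k^{1/p}F_k(f))_{k\ge 1}$ identifying $KS^p$ with the closed subspace $M_p=\overline{\Phi_p(L^p)}\subseteq \ell^p$, is correct and set up carefully (the extension of the $F_k$ and of the norm formula to the completion is exactly right). It cleanly disposes of items (1), (3), (5), and of item (2) for $1\le p<\infty$ via your dominated-convergence-for-series argument. Your tacit restriction of (2) to $p<\infty$ is in fact forced, not merely convenient: if $A_j$ are disjoint unit cubes with rational centers, then $\chi_{A_j}\rightharpoonup 0$ weakly in $L^\infty$ (a finitely additive measure of bounded variation is absolutely summable over disjoint sets), so $\{\chi_{A_j}\}\cup\{0\}$ is weakly compact in $L^\infty$, yet $\|\chi_{A_i}-\chi_{A_j}\|_{KS^\infty}\ge n^{-n/2}$ for $i\ne j$, so this set is not compact in $KS^\infty$.

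The genuine gap is item (4), and it is not a repairable detail: the density lemma on which your main route rests is false. The family $\{B_k\}$ consists of all closed cubes with edges $e_l=\tfrac{1}{2^{l-1}\sqrt n}$ centered at points with rational coordinates. When $\sqrt n\in\mathbb{Q}$ — in particular for $n=1$ — this family contains $[0,1]$, $[0,\tfrac12]$ and $[\tfrac12,1]$ (centers $\tfrac12,\tfrac14,\tfrac34$, edges $e_1=1$, $e_2=\tfrac12$), and $\chi_{[0,1]}-\chi_{[0,1/2]}-\chi_{[1/2,1]}=0$ $\la_\iy$-a.e. Hence $F_{k_1}(f)=F_{k_2}(f)+F_{k_3}(f)$ for every $f\in L^p$, and the finitely supported vector $c\in\ell^q$ with $c_{k_1}=t_{k_1}^{-1/p}$, $c_{k_2}=-t_{k_2}^{-1/p}$, $c_{k_3}=-t_{k_3}^{-1/p}$ is a nonzero annihilator of $\Phi_p(L^p)$. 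So $M_p$ is a \emph{proper} closed subspace of $\ell^p$ (at least for such $n$, and the theorem is claimed for all $n$), your strong-independence claim fails, and (4) does not reduce to the duality $(\ell^p)^*=\ell^q$. What is left is precisely the fallback you explicitly declined to carry out: identifying $(KS^p)^*\cong \ell^q/M_p^{\perp}$ with $M_q\cong KS^q$ isometrically under your pairing — where even injectivity of the natural map requires $M_q\cap M_p^{\perp}=\{0\}$, which is not obvious given that $M_p^{\perp}\ne\{0\}$. Nothing in the proposal addresses this, so item (4), which you correctly single out as the substantive one, remains unproven.
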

\begin{thm} For each $p, \; 1 \le p \le \infty$, the test functions 
$\mcD \subset {KS}^p (\mathbb{R}_I^n)$ as a continuous embedding.
\end{thm}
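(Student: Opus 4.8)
The plan is to reduce the entire statement to the single elementary estimate $\|\phi\|_{KS^p} \le \|\phi\|_\iy$, valid for every $\phi \in \mcD$ and every $p \in [1,\iy]$. Since each test function is bounded and compactly supported, $\mcD \subset L^\iy[\R_I^n]$, so membership $\mcD \subset KS^p[\R_I^n]$ is already furnished by the preceding embedding theorem (which gives $L^q \subset KS^p$ for every $q$); what then remains is to verify injectivity of the inclusion and its continuity for the natural topology of $\mcD$.

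First I would record the volume bound on the generating cubes. Each $\mathbf{B}_k$ satisfies $vol(\mathbf{B}_k) = \la_\iy(\mathbf{B}_k) \le n^{-n/2} \le 1$, precisely the bound already invoked in the proof that $L^\iy \subset KS^p$. Together with the normalization $\sum_{k=1}^\iy t_k = 1$ this yields the key inequality at once. For $1 \le p < \iy$,
\[
\|\phi\|_{KS^p}^p = \sum_{k=1}^\iy t_k \left| \int_{\mathbf{B}_k} \phi \, d\la_\iy \right|^p \le \sum_{k=1}^\iy t_k \big( \|\phi\|_\iy \, vol(\mathbf{B}_k) \big)^p \le \|\phi\|_\iy^p \sum_{k=1}^\iy t_k = \|\phi\|_\iy^p,
\]
while for $p = \iy$ the same pointwise bound $\left|\int_{\mathbf{B}_k}\phi\,d\la_\iy\right| \le \|\phi\|_\iy\,vol(\mathbf{B}_k) \le \|\phi\|_\iy$ gives $\|\phi\|_{KS^\iy} = \sup_k\left|\int_{\mathbf{B}_k}\phi\,d\la_\iy\right| \le \|\phi\|_\iy$. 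Hence $\|\phi\|_{KS^p} \le \|\phi\|_\iy$ in all cases.

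Continuity then follows immediately. If $\phi_m \to \phi$ in $\mcD$, then by definition all supports lie in a fixed compact set and every derivative converges uniformly; the case of the zeroth derivative gives $\|\phi_m - \phi\|_\iy \to 0$, so $\|\phi_m - \phi\|_{KS^p} \to 0$ by the estimate. Equivalently, on each Fr\'echet step $\mcD_U$ (test functions supported in a fixed compact $U$) the $KS^p$-norm is dominated by the sup-seminorm $\|\cdot\|_\iy$, which is itself one of the seminorms defining the topology of $\mcD_U$; thus the inclusion is continuous on every $\mcD_U$ and hence on the inductive limit $\mcD$. Injectivity is inherited from the separating property of $\{F_k\}$ established during the construction: if $\|\phi\|_{KS^p}=0$ then $F_k(\phi)=\int_{\mathbf{B}_k}\phi\,d\la_\iy = 0$ for all $k$, forcing $\phi = 0$.

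The one genuinely conceptual point, and the step I would treat most carefully, is the meaning of ``continuous embedding'' itself: because $\mcD$ is an LF-space rather than a Banach space, continuity must be phrased through the inductive-limit topology and checked on each metrizable step $\mcD_U$. This is exactly where the bound $\|\cdot\|_{KS^p}\le\|\cdot\|_\iy$ does all the work, since $\|\cdot\|_\iy$ already sits among the generating seminorms of $\mcD_U$. Once this is recognized the analytic content collapses to the one-line volume estimate, and I do not anticipate a substantive obstacle beyond stating the topology correctly.
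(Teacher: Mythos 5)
Your proposal is correct and follows essentially the same route as the paper: both rest on the single estimate that $\bigl|\int_{\mathbf{B}_k}\phi\,d\lambda_\infty\bigr|\le \|\phi\|_\infty\,vol(\mathbf{B}_k)\le\|\phi\|_\infty$, combined with the fact that convergence in $\mcD$ gives uniform convergence on a fixed compact set. The only differences are organizational --- the paper first reduces to $p=\infty$ via the embedding $KS^\infty\subset KS^p$ and then runs this estimate, whereas you prove $\|\phi\|_{KS^p}\le\|\phi\|_\infty$ directly for every $p$ using $\sum_k t_k=1$, and you are more explicit than the paper about the inductive-limit topology on $\mcD$ and about injectivity of the inclusion.
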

\begin{proof} Because $KS^\iy(\R_I^n)$ is continuously embedded in 
${KS}^p (\mathbb{R}_I^n), \; 1 \le q < \infty$, it suffices to prove 
the result for $KS^\iy(\R_I^n)$. Suppose that $\phi_j \to \phi$ in 
$\mcD[\R_I^n]$, so that there exists a compact set $K \subset \R_I^n$, 
containing the support of $\ph_j -\phi$ and ${D^\alpha }{\phi_j}$ 
converges to ${D^\alpha }\phi $ uniformly on $K$ for every multi-index 
$\al$.  Let $L= \{l \in \N: {\tx{the support of }\mcE_l}, 
\ stp\{\mcE_l\} \subset K \}$, then 
\[
\begin{gathered}
  \mathop {\lim }\limits_{j \to \infty } {\left\| {{D^\alpha }\phi  - 
{D^\alpha }{\phi _j}} \right\|_{KS}} = \mathop {\lim }\limits_{j \to \infty } 
\mathop {\sup }\limits_{l \in L} \left| {\int_{{\mathbb{R}_I^n}} 
{\left[ {{D^\alpha }\phi \left( x \right) - {D^\alpha }{\phi _j}\left( 
x \right)} \right]{{\mathcal{E}}_l}\left( x \right)d{\lambda _\iy}\left( 
x \right)} } \right| \hfill \\
\le vol({{\mathbf{B}}_{l}})   \mathop {\lim }\limits_{j \to \infty } 
\mathop {\sup }\limits_{x \in K} \left| {{D^\alpha }\phi \left( x \right) - 
{D^\alpha }{\phi _j}\left( x \right)} \right| \leqslant   \mathop {\lim }
\limits_{j \to \infty } \mathop {\sup }\limits_{x \in K} \left| 
{{D^\alpha }\phi \left( x \right) - {D^\alpha }{\phi _j}\left( 
x \right)} \right| =0. \hfill \\ 
\end{gathered} 
\]
It follows that $\mcD[\mathbb{R}_I^n] \subset {KS}^p [\mathbb{R}_I^n]$ 
as a continuous embedding, for $1 \le p \le \iy$. Thus, using the 
Hahn-Banach theorem, we see that the Schwartz distributions, 
$\mcD^*[\mathbb{R}_I^n] \subset [{KS}^p (\mathbb{R}_I^n)]^* $, for $1 \le p \le \iy$.
\end{proof}

\subsection{The Family $KS^p[\R_I^\iy]$}

We can now construct the spaces $KS^p[\R_I^\iy], \; 1 \le p \le \iy$, using 
the same approach that led to ${{L}}^1 [{{\R}}_I^{\infty}]$.  
Since ${{KS}}^p [\R_I^n] \subset {{KS}}^p [\R_I^{n+1}]$, we define 
${{KS}}^p [{\hat{\R}}_I^{\iy}]=\bigcup_{n=1}^{\iy}{{KS}}^p[\R_I^n]$.  
\begin{Def}
We say that a measurable function $f \in {{KS}}^p [\R_I^{\infty}]$, if 
there is a Cauchy sequence $\{f_n \} \subset {{KS}}^p [{\hat{\R}}_I^{\iy}]$, 
with $f_n \in {{KS}}^p [{{\R}}_I^{n}]$ and ${\lim _{n \to \infty }}
{{{f_n}(x) = f(x) }  },\; \la_\iy$-\rm(a.e). 
\end{Def} 
The same proof as Theorem 1.18 shows that functions in 
${{KS}}^p [{\hat{\R}}_I^{\iy}]$ differ from functions in its closure 
${{KS}}^p [{{\R}}_I^{\iy}]$, by sets of measure zero.
\begin{thm} ${{KS}}^p [{\hat{\R}}_I^{\infty}]={{KS}}^p[\R_I^\iy]$.
\end{thm}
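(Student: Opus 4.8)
The plan is to follow exactly the template of the earlier theorem asserting $L^1[\hat{\R}_I^\infty]=L^1[\R_I^\infty]$, as already announced in the remark preceding the statement. Since $KS^p[\hat{\R}_I^\infty]=\bigcup_{n=1}^\infty KS^p[\R_I^n]$ contains each $KS^p[\R_I^n]$, and since $KS^p[\R_I^\infty]$ is by its definition the collection of measurable functions arising as $\lambda_\infty$-a.e.\ limits of Cauchy sequences drawn from $KS^p[\hat{\R}_I^\infty]$, the inclusion $KS^p[\hat{\R}_I^\infty]\subseteq KS^p[\R_I^\infty]$ holds automatically. Thus the whole content of the theorem is the reverse inclusion, which amounts to showing that $KS^p[\hat{\R}_I^\infty]$ already contains all such limits, i.e.\ that it is closed.

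First I would fix an arbitrary $f\in KS^p[\R_I^\infty]$ together with a witnessing sequence $\{f_n\}$, $f_n\in KS^p[\R_I^n]$, with $f_n\to f$ $\lambda_\infty$-a.e. If $f=0$ there is nothing to prove, so assume $f\ne 0$. The decisive input is the structure of $\lambda_\infty$: by Theorem~\rf{Y}(1) we have $\lambda_\infty(\mfX_2)=0$, so the essential support $A_f$ of $f$ satisfies $\lambda_\infty(A_f)=\lambda_\infty(A_f\cap\mfX_1)$, and by the remark following Theorem~\rf{Y} this essential support of a limit of essentially tame functions is concentrated in a single finite slice, $A_f\cap\mfX_1\subset\R_I^N$ for some $N$. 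Granting this, $f$ depends nontrivially only on the first $N$ coordinates up to a $\lambda_\infty$-null set; that is, there is a function $f'\in KS^p[\R_I^{N+1}]$ with $f=f'$ $\lambda_\infty$-a.e.

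To conclude I would note that the $KS^p$ norm is computed through the functionals $F_k(g)=\int_{\mathbf{B}_k} g\,d\lambda_\infty$, each of which is unaffected by changing $g$ on a $\lambda_\infty$-null set. Hence $f=f'$ a.e.\ forces $\|f-f'\|_{KS^p}=0$, so $f$ and $f'$ name the same element of $KS^p$; therefore $f\in KS^p[\R_I^{N+1}]\subset KS^p[\hat{\R}_I^\infty]$. This gives $KS^p[\R_I^\infty]\subseteq KS^p[\hat{\R}_I^\infty]$ and hence the desired equality.

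The step I expect to be the real obstacle is the support-concentration claim, namely that the essential support of the limit $f$ collapses into one finite-dimensional $\R_I^N$ rather than spreading across the increasing union $\bigcup_n\R_I^n$. This is precisely where the geometry of the $\tau$-topology and the measure-theoretic facts of Theorem~\rf{Y} ($\sigma$-finiteness together with the vanishing of $\lambda_\infty$ on $\mfX_2$) must be invoked; once that concentration is secured, the remaining two ingredients — the inclusion by definition and the passage from a.e.\ equality to equality of $KS^p$ representatives — are routine.
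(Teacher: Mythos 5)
Your proposal is correct and takes essentially the same route as the paper: the paper proves this theorem simply by invoking ``the same proof'' as the earlier result ${{L}}^1[{\hat{\R}}_I^{\infty}]={{L}}^1[\R_I^\iy]$, and that proof is exactly the template you reproduce --- reduce to showing ${{KS}}^p[{\hat{\R}}_I^{\infty}]$ is closed, use $\la_\iy(\mfX_2)=0$ together with the remark that the essential support of an a.e.\ limit of tame functions is concentrated in some $\R_I^N$, produce $f'\in {{KS}}^p[\R_I^{N+1}]$ equal to $f$ a.e., and conclude by passing to equivalence classes. Your added observation that the functionals $F_k$ are insensitive to null sets (so a.e.\ equality gives equality in ${{KS}}^p$ norm) is a harmless elaboration of the paper's appeal to equivalence classes.
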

\begin{Def}If $f \in {{KS}}^p[\R_I^\iy]$, we define the integral of $f$ by:
\[
\int_{\mathbb{R}_I^\infty } {f(x)d{\lambda _\infty }(x)}  = \mathop 
{\lim }\limits_{n \to \infty } \int_{\mathbb{R}_I^\iy } {{f_n}(x)d{\lambda _\infty }(x)},
\] 
where $f_n \in {{KS}}^p [{{\R}}_I^{n}]$ is any Cauchy sequence converging to $f(x)$. 
\end{Def}
\begin{thm}{\lb{God2}} If $f \in {{KS}}^p[\R_I^\iy]$, then the above 
integral exists and all theorems that are true for $f \in {{KS}}^p[\R_I^n]$, 
also hold for $f \in {{KS}}^p[\R_I^\iy]$. 
\end{thm}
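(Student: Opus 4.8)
The plan is to follow the template of Theorem~\ref{God1}, reducing every assertion about $KS^p[\R_I^\iy]$ to the corresponding fact already known on the finite-dimensional spaces $KS^p[\R_I^n]$. The mechanism driving this reduction is concentration of support: by the identity ${{KS}}^p[{\hat{\R}}_I^{\iy}]={{KS}}^p[\R_I^\iy]$ established just above, together with the Remark following Definition~\ref{m}, every $f\in KS^p[\R_I^\iy]$ agrees $\la_\iy$-a.e. with some $f'\in KS^p[\R_I^N]$ for a finite $N=N(f)$. Hence each element of $KS^p[\R_I^\iy]$ is, modulo a null set, an honest function on a finite slice $\R_I^N$, on which the restriction of $\la_\iy$ is a copy of $\la_N$ by the equivalence of $\la_\iy$ on $\R_I^n$ with $n$-dimensional Lebesgue measure. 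This single structural fact is what makes the infinite-dimensional integral inherit the finite-dimensional theory.

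First I would prove that the integral is well defined, i.e. that $\lim_n\int_{\R_I^\iy}f_n\,d\la_\iy$ exists and is independent of the approximating sequence. Given a $KS^p$-Cauchy sequence $\{f_n\}$ with $f_n\in KS^p[\R_I^n]$ and $f_n\to f$ a.e., note that the defining functionals $F_k(g)=\int_{\mathbf{B}_k}g\,d\la_\iy$ are precisely the local cube-integrals, and that $\|f_n-f_m\|_{KS^p}^p=\sum_k t_k\,|F_k(f_n)-F_k(f_m)|^p$. Since each $t_k>0$, Cauchyness forces $\{F_k(f_n)\}_n$ to converge for every $k$; because the family $\{F_k\}$ is fundamental ($F_k(g)=0$ for all $k$ implies $g=0$), these limits single out a unique function, which on the finite slice must be $f'$. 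On that bounded-dimensional slice the cube-data $\{F_k(f')\}$ recover $\int_{\R^N}f'\,d\la_N$ by exhaustion, so the limit of the integrals equals $\int f'$ for every admissible sequence; in particular the eventually constant choice $f_n=f'$ for $n>N$ makes existence immediate.

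Next I would transfer the standard convergence and product theorems. The pattern is uniform: any such statement involves at most countably many functions, each carrying an integrability or domination hypothesis that pins them all to one common finite slice. For dominated convergence, a dominating $g$ concentrated in $\R_I^M$ forces each $f_j$ and the a.e. limit to vanish off $\R_I^M$, whereupon the classical theorem on $\R^M$ applies verbatim; monotone convergence, Fatou, and Fubini--Tonelli follow by the same common-$M$ device, using $\sigma$-finiteness and regularity of $\la_\iy$ from Theorem~\ref{Y} and the fact that a countable union of $\la_\iy$-null sets is null.

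The main obstacle is the well-definedness step, precisely because $KS^p$-convergence is strictly weaker than $L^p$-convergence and so does not a priori control the global integrals $\int f_n$. The delicate content is to show that the cube-functionals $\{F_k\}$---all that the $KS^p$ norm sees---nevertheless reconstruct the full integral once the problem has collapsed onto a finite slice. This requires verifying that the countable family of closed cubes $\{\mathbf{B}_k\}$ centered at the rational points, through inclusion--exclusion on overlaps and monotone exhaustion of $\R^N$, determines $\int_{\R^N}f'\,d\la_N$ from the numbers $\{F_k(f')\}$, together with the compatibility $F_k(f_n)\to F_k(f')$ on each bounded cube. Once this finite-dimensional recovery is secured, the remainder of the theorem is a routine descent to the slice and an appeal to the classical results.
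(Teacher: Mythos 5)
Your global strategy---use $KS^p[{\hat{\R}}_I^{\iy}]=KS^p[\R_I^\iy]$ and the concentration-of-support remark to replace $f$ by an a.e.\ equal $f'$ living on a finite slice $\R_I^N$, then quote the finite-dimensional theory---is exactly the reduction the paper itself relies on; indeed the paper states Theorem \ref{God2} (like Theorem \ref{God1}) without any written proof, resting on precisely that structural fact. Your transfer of dominated convergence, monotone convergence and Fubini by the common-slice device is fine, \emph{conditional} on the integral defined just before the theorem being well defined.

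The genuine gap is in your well-definedness argument, at the very step you flag as the main obstacle. From $KS^p$-Cauchyness you do get $F_k(f_n)\to F_k(f')$ for each fixed $k$, and the cube data of $f'$ do recover $\int_{\R^N}f'\,d\la_N$ by exhaustion; but the inference ``so the limit of the integrals equals $\int f'$ for every admissible sequence'' interchanges the limit in $n$ with the exhaustion over cubes, and the $KS^p$ norm provides no uniform control to justify this: the global integral is not a $KS^p$-continuous functional. Concretely, take $f\equiv 0$ and $g_n=\tfrac{1}{2n}\chi_{[-n,n]}$, viewed as an $e_1$-tame function, so that $g_n\in L^1[\R_I^1]\subset KS^p[\R_I^n]$. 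Then $g_n\to 0$ everywhere; moreover every cube $\mathbf{B}_k$ has volume at most $1$, so $|F_k(g_n)|\le\tfrac{1}{2n}$ uniformly in $k$, and since $\sum_k t_k=1$ this gives $\|g_n\|_{KS^p}\le\tfrac{1}{2n}\to 0$ in every one of the norms involved. Thus $\{g_n\}$ is an admissible Cauchy sequence for $f=0$, with $F_k(g_n)\to F_k(0)=0$ for every $k$, and yet $\int_{\R_I^\iy}g_n\,d\la_\iy=1$ for all $n$, whereas the admissible sequence $f_n\equiv 0$ gives $0$. So per-cube convergence plus identification of the limit function cannot deliver existence, let alone sequence-independence, of $\lim_n\int f_n\,d\la_\iy$: mass can escape to infinity invisibly to the $KS^p$ norm. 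Any correct argument must import extra uniformity---e.g.\ require the approximating sequence to be $L^1$-Cauchy as well (which is what makes the $L^1$ case of Theorem \ref{God1} unproblematic), or impose uniform integrability/tightness---or else reinterpret what ``the above integral'' means on $KS^p[\R_I^\iy]$; as it stands, your proposed proof of the key lemma fails.
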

\begin{rem}Theorem \rf{God2} is true if $(\R_I^\iy, \; {\lambda _\infty })$  
is replaced by $(\mcH, \; {\lambda _\mcH })$.
\end{rem}

\section{Feynman Operator Calculus}

In this section we describe the construction of Feynman's time ordered operator calculus. 
We first prove that a unique definition of time exists based on an assumption that is
weaker than the cosmological principle (i.e., the universe is homogenous and 
isotropic in the large). If $O$ and $O'$ are any two observers of a particle with 
velocity ${\mathbf{v}}$ (${\mathbf{v}'}$) and $\gamma^{-1} \left( {\mathbf{v}} \right) 
= \sqrt {1 - \tfrac{{{ {\mathbf{v}}^2} }}{{{c^2}}}}, \; {\mathbf{v}}=\tf{d{\bf{x}}}{dt} $,  
the standard definition of proper time is: 
\begin{equation}\lb{p1}
d\tau  = {\gamma ^{ - 1}}\left( {\mathbf{v}} \right)dt,\quad d\tau  
= {\gamma ^{ - 1}}\left( {{\mathbf{v}'}} \right)dt'.
\end{equation}
Using the relations $H =\gamma \left( {\mathbf{v}} \right)mc^2$ for observer 
$O$ and $H' =\gamma \left( {\mathbf{v}'} \right)mc^2$ for observer $O'$, we obtain
\begin{equation}\lb{p3}
d\tau  = \tfrac{{m{c^2}}}{H}dt,\quad d\tau  = \tfrac{{m{c^2}}}{{H'}}dt'.
\end{equation}
\begin{thm}\lb{p3t} If the universe is representable, in the sense that 
the ratio of the observed total mass energy to the total energy is 
independent of any observer's particular portion of the universe, then 
the universe has a unique clock.
\end{thm}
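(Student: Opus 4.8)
The plan is to mimic, at the scale of the whole universe, the elementary fact recorded in \eqref{p3}: for a single particle the combination $\frac{mc^2}{H}\,dt$ is already observer-independent, since it equals the invariant proper-time increment $d\tau$. I would first promote the single-particle quantities to global ones — replacing the rest energy $mc^2$ by the total (invariant) mass energy $E_0$ of the universe and the particle Hamiltonian $H$ by the total energy $\mathcal H_O$ registered by an observer $O$ — and then define the candidate historical-time increment by $d\tau_h = \frac{E_0}{\mathcal H_O}\,dt_O$, in exact analogy with \eqref{p3}. The theorem is then the assertion that, under the representability hypothesis, this prescription does not depend on which observer performs it, so that the resulting parameter $\tau_h$ is a single universal clock.

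Concretely I would proceed in four steps. First, I would translate the hypothesis: the stated constancy of the ratio of observed total mass energy to total energy is equivalent to saying that its reciprocal, the factor $\rho_O = E_0/\mathcal H_O$, takes one common value $\rho$ for every observer $O$, each observer's portion being representative of the whole and hence yielding the same ratio. Second, using \eqref{p1}--\eqref{p3} for each observer separately, I would identify $\frac{E_0}{\mathcal H_O}\,dt_O$ as the proper-time differential of the (timelike) total four-momentum of the universe, so that the very Lorentz-invariance computation that makes $d\tau$ in \eqref{p3} frame-independent applies verbatim to $d\tau_h$. Third, I would integrate along the comoving worldline, $\tau_h = \int \frac{E_0}{\mathcal H}\,dt$ from a fixed origin, and the equality $\rho_O=\rho$ would force any two observers to assign the same numerical value to $\tau_h$ at a given event. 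Fourth, for uniqueness I would argue that any rival universal clock must, when restricted to a single particle's history, reduce to that particle's proper time through \eqref{p3}; two such clocks can therefore differ only by the choice of origin, which yields a genuinely unique $\tau_h$.

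The main obstacle is the second and third steps taken together: one must show that the locally measured ratio $E_0/\mathcal H_O$ really is the global time-dilation factor of a well-defined cosmic rest frame, even though no single observer sees the entire universe. This is exactly the work done by representability, which I would argue guarantees that the total four-momentum of the universe is timelike with an observer-independent invariant mass $E_0$, so that a common center-of-momentum (comoving) worldline exists and its proper time is shared by all. I would also need the minor positivity check $\mathcal H_O>0$ to ensure the ratio is well defined, which follows from the timelike character of that total four-momentum. Once the cosmic rest frame is seen to be observer-independent, the uniqueness of its proper time — hence of the clock $\tau_h$ — is immediate.
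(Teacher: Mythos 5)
Your proposal is correct and follows essentially the same route as the paper: both promote the single-particle relation \eqref{p3} to the universe as a whole, use the representability hypothesis to make the ratio $Mc^2/H$ observer-independent, and thereby define $\tau_h = \frac{Mc^2}{H}t$ as the universal clock. Your additional scaffolding (timelike total four-momentum, center-of-momentum worldline, positivity of the total energy, and the reduction-to-proper-time uniqueness step) merely makes explicit what the paper's brief argument leaves implicit.
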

\begin{proof} Let $O$ and $O'$ be two observers viewing the universe.
If $Mc^2$ is the total mass energy and $H$ is the total energy Hamiltonian for 
the universe as seen by $O$, under the stated conditions $H/{Mc^2}$ is constant.  
By assumption this view is observer independent, so that $O'$  will also obtain the 
same ratio. It follows from (\rf{p3}) that
\[  
 d{\tau} = \frac{{M{c^2}}}{H}dt=\frac{{M{c^2}}}{H}dt'= d\tau_h.  
 \]
Thus, $\tau_h= \frac{{M{c^2}}}{H}t$ uniquely defines a universal clock for any observer.
\end{proof}  
The fifth parameter was first introduced by Fock, and later by Stueckelberg followed by 
Feynman and Schwinger. They each considered it a global clock in addition to the Minkowski 
spacetime. Horwitz and Piron \cite{HP}, and Fanchi and Collins named it  
historical time. They were the first to realize that there should be physical 
justification for this clock (see \cite{FA}). 

\subsection{Feynman-Dyson Space}

Let $\mathcal{H}=KS^2[\R_I^n]$, with a fixed orthonormal basis 
$\left\{ e^i  \right\}$. Let  $J = [ - T,T]$ be an interval of historical 
time and, for each  $t \in J$, define $\mcH(t) =\mathcal{H}$ and let ${\mcH_ \otimes } 
= {\hat  \otimes _t}\mcH(t)$ be the continuous tensor product Hilbert space of 
von Neumann over $J$.  For each $i \in \N$, set $e_t^i = {e^i}$ and  
${ {{E}}^i} = { \otimes _{t}}e_t^i$. We  define ${\mathcal{F}}{ {\mathcal{D}}^i} 
\subset {\mathcal{H}_ \otimes }$ to be the smallest Hilbert space containing  
${ {{E}}^i}$. We call ${\mathcal{F}\mathcal{D}}=  \oplus _{i = 1}^\infty 
{\mathcal{F}}{ {\mathcal{D}}^i}$ the Feynman-Dyson space over $J$ for $\mcH$ 
(the film for Feynman's spacetime events).  

\subsection{Time Ordered Operators}

If $\mcC({\mathcal{H}_ \otimes })$ is a set of closed densely defined linear 
operators on ${\mathcal{H}_ \otimes }$, and  $\{H_I(t), t \in J\}$ is a family 
of generators for unitary groups, we define  $\mcC(\mathcal{H}{ {(}}t{ {))}} 
\subset \mcC({\mathcal{H}_ \otimes })$ by:
\[\mcC{{(}}\mathcal{H}{{(t))  = }}\left\{ {{{\mathbf{H}}_I}({{t}})
\left| {{{ }}{{\mathbf{H}}_I}({{t}})} \right.} \right.\left. {{{ 
= }}\mathop {\overset{\lower0.5em\hbox{$\smash{\scriptscriptstyle\frown}$}}{ 
\otimes } }\limits_{T \geqslant s > t} {{\text{I}}_s} \otimes {H_I}(t) 
\otimes {{(}}\mathop  \otimes \limits_{t > s \geqslant -T} {{{I}}_s}{{)}}} \right\},\]
where $I_s$ is the identity operator at time $s$.  It follows that
\[
{{\mathbf{H}}_I}{ {(}}t{ {)}}{{\mathbf{H}}_I}(s) = {{\mathbf{H}}_I}{ 
{(}}s{ {)}}{{\mathbf{H}}_I}{ {(}}t{ {),  }}\;t \ne s.
\]
Thus, the operators are ordered in time, commute when acting at different times 
and maintain their mathematically defined positions.
 
\subsection{Time Ordered Integrals}

We now state the fundamental theorem for time-ordered operators (see \cite{GZ}):			
\begin{thm}\lb{fd1}{\tx{(Fundamental Theorem for Time-Ordered Operators)}} 
If $ \{ {H_I}(t){ { }}|{ { }}t \in J\}$ is a family of weakly continuous 
Hamiltonian generators of unitary groups on $\mcH$ and $\{ {\mathbf{H}}_I(t)| t 
\in J\}$ is the time ordered version defined on $\mcF{\mcD}_{\otimes}^2$ then: 
\begin{enumerate}
\item The family $\{ {\mathbf{H}}_I(t)| t \in J\}$ is strongly continuous and 
the time-ordered HK-integral ${\mathbf{Q}}\left[ {t, - T} \right] 
= \int_{ - T}^t {{{\mathbf{H}}_I}\left( s \right)ds}$ exists (a.e), has a 
dense domain and is the generator of a strongly continuous unitary 
group ${\bf{U}}[t, -T]$ on $\mcF{\mcD}_{\otimes}^2$. Moreover:
\item If $\Psi_0$ is in the domain of ${\mathbf{Q}}\left[ {t, - T} \right]$ 
then $\Psi \left( t \right) = {\mathbf{U}}\left[ {t, - T} \right]{\Psi _0} 
= \exp \left\{ { - \tfrac{i}{\hbar }{\mathbf{Q}}\left[ {t, - T} \right]} 
\right\}{\Psi _0}$ satisfies:
\begin{equation}
i{\hbar}\frac{{\partial \Psi (t)}}{{\partial t}} = {{\mathbf{H}}_I}
\left( t \right)\Psi \left( t \right),\;\Psi \left( { - T} \right) = {\Psi _0}. 
\label{(33)}
\end{equation}
\end{enumerate}
\end{thm}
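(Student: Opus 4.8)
The plan is to exploit the defining feature of the time-ordered operators, namely that $\mathbf{H}_I(t)\mathbf{H}_I(s) = \mathbf{H}_I(s)\mathbf{H}_I(t)$ for $t \neq s$, in order to convert the genuinely noncommutative problem of solving the time-dependent Schr\"odinger equation into a commutative one on the Feynman-Dyson space $\mcF{\mcD}_\otimes^2$. On this larger space the exponential of the integrated generator will solve the evolution equation precisely because the obstruction that defeats $\exp\{-\tf{i}{\hbar}\int H_I(s)\,ds\}$ on $\mcH$ --- the failure of $H_I(s)$ and $H_I(s')$ to commute --- has been engineered away by placing each generator on its own time slice. I would organize the argument around three pillars: strong continuity of $s \mapsto \mathbf{H}_I(s)$, existence of the time-ordered HK-integral as a self-adjoint operator, and an application of Stone's theorem.

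First I would establish strong continuity of $t \mapsto \mathbf{H}_I(t)$ on $\mcF{\mcD}_\otimes^2$. Each $H_I(t)$ generates a unitary group, hence is self-adjoint by Stone's theorem, and the hypothesis supplies weak continuity in $t$. Because $\mathbf{H}_I(t)$ acts nontrivially only on the time slice at $t$ and as the identity on every other factor of the continuous tensor product, its action on the basis vectors $E^i$ inherits the continuity of $H_I(t)$ directly; weak continuity then upgrades to strong continuity on the dense span of such vectors, since the tensor factors away from $t$ contribute unit norm and do not interfere. Next, because $L[\mcH_\otimes]$ is not separable, the Bochner and Pettis integrals are unavailable --- but this is exactly the regime for which the operator-valued HK-integral of Section 2.4 was built. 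Using the strong continuity just established, together with the Bochner-to-HK comparison and the bounded-variation results of that section, I would conclude that the time-ordered HK-integral $\mathbf{Q}[t,-T] = \int_{-T}^t \mathbf{H}_I(s)\,ds$ exists almost everywhere.

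I would then argue that $\mathbf{Q}[t,-T]$ is essentially self-adjoint on a dense domain. Since the $\mathbf{H}_I(s)$ form a commuting family of self-adjoint operators, the integral is a limit of Riemann-type sums $\sum_i \Delta s_i\,\mathbf{H}_I(\tau_i)$ built from mutually commuting self-adjoint operators; symmetry is preserved under this limit, and on the natural domain of vectors for which the approximating sums converge one obtains essential self-adjointness through the joint spectral resolution of the commuting family. Stone's theorem applied to its closure then furnishes the strongly continuous unitary group $\mathbf{U}[t,-T] = \exp\{-\tf{i}{\hbar}\mathbf{Q}[t,-T]\}$, establishing part (1). For part (2), I would differentiate $\Psi(t) = \mathbf{U}[t,-T]\Psi_0$ in $t$. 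The HK Fundamental Theorem of Calculus gives $\tf{d}{dt}\mathbf{Q}[t,-T] = \mathbf{H}_I(t)$, and because $\mathbf{H}_I(t)$ commutes with $\mathbf{H}_I(s)$ for every $s \neq t$ --- and hence with $\mathbf{Q}[t,-T]$, the single point $s = t$ being of measure zero --- the derivative passes cleanly through the exponential to yield $i\hbar\,\partial_t\Psi(t) = \mathbf{H}_I(t)\Psi(t)$ with $\Psi(-T) = \Psi_0$.

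The main obstacle I anticipate lies in the rigorous handling of domains for unbounded generators: the operator-valued HK-integral of Section 2.4 was developed for bounded operators in $L[\mcH]$, whereas the $H_I(t)$ of interest are generally unbounded. Making the integral, its essential self-adjointness, and the termwise differentiation simultaneously precise on the non-separable continuous tensor product space will require identifying and tracking a common core --- plausibly the span of the $E^i$ together with their finite-slice modifications --- on which the convergence of the approximating sums, the symmetry estimates, and the interchange of differentiation with the exponential are all valid at once.
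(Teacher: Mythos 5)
First, a point of reference: the paper itself offers no proof of this theorem --- it is quoted verbatim from the Gill--Zachary monograph (``see \cite{GZ}'') --- so your proposal can only be measured against the argument given there. Your skeleton (commutativity of operators attached to distinct time slices, Riemann-type sums for ${\mathbf{Q}}[t,-T]$, Stone's theorem, differentiation through the exponential) is indeed the right one, and your closing paragraph correctly flags the unbounded-operator domain issues. The genuine gap is your first pillar. You assert that $t \mapsto {\mathbf{H}}_I(t)E^i$ is norm continuous because the tensor factors away from the slice $t$ ``contribute unit norm and do not interfere.'' They interfere in exactly the fatal way: if $E=\otimes_\tau e$ with $\|e\|=1$, then ${\mathbf{H}}_I(t)E$ and ${\mathbf{H}}_I(s)E$ differ from $E$ on \emph{different} slices, so the von Neumann inner product factorizes as
\[
\left\langle {\mathbf{H}}_I(t)E,\,{\mathbf{H}}_I(s)E\right\rangle_\otimes
=\left\langle H_I(t)e,e\right\rangle\,\overline{\left\langle H_I(s)e,e\right\rangle},
\]
and hence
\[
\left\|\left({\mathbf{H}}_I(t)-{\mathbf{H}}_I(s)\right)E\right\|_\otimes^2
=\left\|H_I(t)e\right\|^2+\left\|H_I(s)e\right\|^2
-2\,\mathrm{Re}\left[\left\langle H_I(t)e,e\right\rangle\overline{\left\langle H_I(s)e,e\right\rangle}\right].
\]
Even for a constant family $H_I(t)\equiv H$ this equals $2\bigl(\|He\|^2-\langle He,e\rangle^2\bigr)$, strictly positive unless $e$ is an eigenvector of $H$: operators sitting on distinct slices never approach one another strongly, however smooth $H_I(\cdot)$ may be. Pointwise strong continuity of the time-ordered family on product vectors is therefore false, and any construction resting on it collapses; the ``strong continuity'' asserted in part (1) must be read in the technical sense given to it in \cite{GZ}, tied to the convergence of the time-ordered sums, not as pointwise norm continuity.

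What saves the theorem --- and what your proposal never isolates --- is that the very same factorization works in your favor one level up, at the sums. For ${\mathbf{Q}}_{\mcP}=\sum_i \Delta t_i\,{\mathbf{H}}_I(\tau_i)$ one finds
\[
\left\|{\mathbf{Q}}_{\mcP}E\right\|_\otimes^2
=\Bigl|\sum\nolimits_i\Delta t_i\left\langle H_I(\tau_i)e,e\right\rangle\Bigr|^2
+\sum\nolimits_i\left(\Delta t_i\right)^2
\Bigl[\left\|H_I(\tau_i)e\right\|^2-\left|\left\langle H_I(\tau_i)e,e\right\rangle\right|^2\Bigr],
\]
where the first term is a scalar Riemann sum of the \emph{weakly} continuous function $s\mapsto\langle H_I(s)e,e\rangle$, and the second carries an extra mesh factor and is controlled by local boundedness of $\|H_I(s)e\|$ (uniform boundedness principle). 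Applied to differences over refining partitions, this is what makes the HK-type sums strongly Cauchy on the span of the $E^i$; it is the precise mechanism by which weak continuity on $\mcH$ buys strong convergence on $\mcF{\mcD}_\otimes^2$, and it is the heart of the proof in \cite{GZ}, not a corollary of a (nonexistent) pointwise continuity. Two further repairs would still be needed: the operator-valued HK-integral of Section 2.4 is built for bounded, norm-convergent families and cannot be invoked verbatim for the unbounded ${\mathbf{H}}_I(t)$; and ``symmetry is preserved under this limit'' does not yield essential self-adjointness, since strong limits of self-adjoint operators are in general only symmetric. The clean route is to note that each finite commuting sum exponentiates exactly, $\exp\{-\tfrac{i}{\hbar}{\mathbf{Q}}_{\mcP}\}=\prod_i\exp\{-\tfrac{i}{\hbar}\Delta t_i{\mathbf{H}}_I(\tau_i)\}$, a unitary operator, and to pass to the limit by a Trotter--Kato-type (strong resolvent convergence) argument, which produces the unitary group ${\mathbf{U}}[t,-T]$ and identifies its generator with the closure of ${\mathbf{Q}}[t,-T]$ in one stroke; part (2) then follows by your commutativity argument, carried out on the domain so obtained.
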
 

\subsection{Interaction Representation}

Haag's theorem shows that the (sharp) equal-time commutation relations of an 
interacting field are equivalent to those of a free field (Streater and Wightman 
\cite{SW}). Lindner et. al. have shown that there is experimental support 
for interference in time of the wave function for a particle (see Horwitz \cite{HO2}).   
In this section we show that if time smearing exists, the interaction representation is defined.
Before proving this result, we consider the concept of an exchange operator.
\begin{Def} An exchange operator ${{E}}[t,t']$, is defined for pairs $t,t'$ such that:
\begin{enumerate} 
\item
$E[t,t'] \,:\;\mcC[{\mathcal{H}}(t)] \to \mcC[{\mathcal{H}}( t')]$, (bijective mapping), 
\item $
E[s,t']E[t, s] = E[t,t'] $,
 \item
$
E[t,t'] E[t',t] = {\bf{I}}_\otimes,
$\, (identity)
\item  for $s \ne t,\;t'$,
$E[t,t'] {\bf{H}}_I(s) = {\bf{H}}_I(s)$, 
 for all ${\bf{H}}{_I}(s) \in \mcC[{\mathcal{H}}(s)].
$
\end{enumerate}
\end{Def}
Assume that ${\bm{H}}_0 (t) $ and ${\bm{H}}_1 (t) $ are generators of 
unitary groups for each $t \in J$, ${\bm{H}}_I(t) = {\bm{H}}_0 (t) \oplus 
{\bm{H}}_1 (t) $ is densely defined, ${\bm{H}}_1^n(t) = n{{\bm{H}}_1(t)}R\left( 
{n,{{\bm{H}}_1(t)}} \right)$ is the Yosida approximator for ${\bm{H}}_1 (t)$  
and Theorem (\ref{fd1}) is satisfied. Define ${\mathbf{U}}_n [t,a] $, ${\mathbf{U}}_0 
[t,a]$ and ${\mathbf{{U}}}_0^\ka  [t,a] $ as
\[
\begin{gathered}
  {\mathbf{U}}_n [t,a] = \exp \{ ( - {i \mathord{\left/
 {\vphantom {i \hbar }} \right.
 \kern-\nulldelimiterspace} \hbar })\int\limits_a^t {[ {\bm{H}}_0 (s)}  
\oplus {\bm{H}}_1^n (s)]ds\} , \hfill \\
  {\mathbf{U}}_0 [t,a] = \exp \{ ( - {i \mathord{\left/
 {\vphantom {i \hbar }} \right.
 \kern-\nulldelimiterspace} \hbar })\int\limits_a^t { {\bm{H}}_0 (s)} ds\} , \hfill \\
  {{\bm{U}}}_0^\ka [t,a] = \exp \{ ( - {i \mathord{\left/
 {\vphantom {i \hbar }} \right.
 \kern-\nulldelimiterspace} \hbar })\int\limits_a^t { {\bm{H}}_0^\ka (s)} ds\} , \hfill \\
 {\bm{H}}_0^\ka  (t) = \int\limits_{ - \infty }^\infty  {\rho _\ka  
(t,s){\mathbf{E}}[t,s]{{\bm{H}}}_0 (s)} ds, \hfill \\ 
\end{gathered} 
\]
where $\rho _\ka  (t,s) $ is the smearing density, which depends on the Planck length
$\ka$ and $\int_{ - \infty }^\infty  {\rho _\ka  (t,s)ds = 1} $ (for example, $
\rho _\ka  (t,s) = [{1 \mathord{\left/
 {\vphantom {1 {\sqrt {2i\pi \ka^2 } }}} \right.
 \kern-\nulldelimiterspace} {\sqrt {2i\pi \ka ^2 } }}]
\exp \{ {{i(t - s)^2 } \mathord{\left/
 {\vphantom {{i(t - s)^2 } {2\ka ^2 }}} \right.
 \kern-\nulldelimiterspace} {2\ka^2 }}\} 
$).   
 
We now obtain: 
\[
{\bm{H}}_{{I}}^n (t) = {\mathbf{{U}}}_0^\ka [a,t] {\bm{H}}_{\text{1}}^n (t)
{\mathbf{{U}}}_0^\ka  [t,a],
\]
and the terms do not commute. If we set 
$\Psi _n^{} (t) = {\mathbf{ U}}_0^\ka  [a,t]{\mathbf{U}}_n [t,a]\Phi$, 
we have
\[
\begin{gathered}
  \frac{\partial }
{{\partial t}}\Psi _n (t) = \frac{i}
{\hbar }{\mathbf{{U}}}_0^\ka  [a,t] {\bm{H}}_0 (t){\mathbf{U}}_n [t,a]\Phi  - \frac{i}
{\hbar }{\mathbf{{U}}}_0^\ka  [a,t]\left[ { {\bm{H}}_0 (t) 
+ {\bm{H}}_1^n (t)} \right]{\mathbf{U}}_n [t,a]\Phi  \hfill \\
 {\text{so that}}\quad  \frac{\partial }
{{\partial t}}\Psi _n (t) = \frac{i}
{\hbar }\{ {\mathbf{{U}}}_0^\ka  [a,t] {\bm{H}}_1^n (t){\mathbf{{U}}}_0^\ka  
[t,a]\} {\mathbf{{U}}}_0^\ka  [a,t]{\mathbf{U}}_n [t,a]\Phi  \hfill \\
{\text{and}}\quad  i\hbar \frac{\partial }
{{\partial t}}\Psi _n (t) = {\bm{H}}_{{I}}^n (t)\Psi _n (t),\;\Psi _n (a) = \Phi . \hfill \\ 
\end{gathered} 
\]
With the same conditions as Theorem (\ref{fd1}), we have
\begin{thm} If $Q_1 [t,a] = \int_a^t {H_1 (s)ds}$  generates a unitary group 
on ${\mathcal{H}}$, then the time-ordered integral 
${\mathbf{Q}}_{{I}} [t,a] = \int_a^t { {\bm{H}}_{{I}} (s)ds}$, where 
${\bm{H}}_{{I}} (t) = {\mathbf{{U}}}_0^\ka  [a,t] {\bm{H}}_1 (t){\mathbf{{U}}}_0^\ka  
[t,a]$ generates a unitary group on ${\mathcal{F}\mathcal{D}}_ \otimes ^2 $, and 
\[
\exp \{ ( - {i \mathord{\left/
 {\vphantom {i \hbar }} \right.
 \kern-\nulldelimiterspace} \hbar }){\mathbf{Q}}_{{I}}^n [t,a]\}  
\to \exp \{ ( - {i \mathord{\left/
 {\vphantom {i \hbar }} \right.
 \kern-\nulldelimiterspace} \hbar }){\mathbf{Q}}_{{I}} [t,a]\}, 
\]
where 
${\mathbf{Q}}_{{I}}^n [t,a] = \int_a^t { {\bm{H}}_{{I}}^n (s)ds}$, and:
\[
i\hbar \frac{\partial }
{{\partial t}}\Psi (t) = {\bm{H}}_{{I}} (t)\Psi (t),\;\Psi (a) = \Phi. 
\]
\end{thm}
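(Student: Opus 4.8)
The plan is to verify that the interaction-picture family $\{\mathbf{H}_I(t):t\in J\}$, with $\mathbf{H}_I(t)=\mathbf{U}_0^\ka[a,t]\,\mathbf{H}_1(t)\,\mathbf{U}_0^\ka[t,a]$, satisfies the hypotheses of the Fundamental Theorem for Time-Ordered Operators (Theorem \ref{fd1}), so that the existence of the unitary group and the Schr\"odinger equation follow by applying that theorem, after which the Yosida limit is treated separately. Note that the computation displayed just before the statement already yields the approximate equation $i\hbar\,\partial_t\Psi_n(t)=\mathbf{H}_I^n(t)\Psi_n(t)$ with $\Psi_n(a)=\Phi$, where $\mathbf{H}_1^n(t)=n\mathbf{H}_1(t)R(n,\mathbf{H}_1(t))$ is bounded; the remaining work is therefore (i) to handle the unbounded limiting family $\mathbf{H}_I(t)$, and (ii) to justify passing $n\to\iy$ inside the time-ordered exponentials.

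First I would check that each $\mathbf{H}_I(t)$ generates a unitary group. Since $\mathbf{H}_1(t)$ generates a unitary group by hypothesis and $\mathbf{U}_0^\ka[t,a]$ is unitary (it is the strongly continuous group furnished by Theorem \ref{fd1} for the smeared free generator $\mathbf{H}_0^\ka$), the conjugate $\mathbf{H}_I(t)$ is unitarily equivalent to $\mathbf{H}_1(t)$; hence it is self-adjoint on the transported domain $\mathbf{U}_0^\ka[a,t]\,D(\mathbf{H}_1(t))$ and generates the unitary group obtained by conjugating the group of $\mathbf{H}_1(t)$ by $\mathbf{U}_0^\ka[\cdot,a]$. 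The delicate point is the weak continuity of $t\mapsto\mathbf{H}_I(t)$, and this is exactly where the time smearing is used: because $\mathbf{H}_0^\ka(t)=\int_{-\iy}^{\iy}\rho_\ka(t,s)\mathbf{E}[t,s]\mathbf{H}_0(s)\,ds$ is a $\rho_\ka$-average, the propagator $t\mapsto\mathbf{U}_0^\ka[t,a]$ is strongly continuous (indeed differentiable), so on a common core the matrix elements $\langle\phi,\mathbf{H}_I(t)\psi\rangle=\langle\mathbf{U}_0^\ka[t,a]\phi,\mathbf{H}_1(t)\mathbf{U}_0^\ka[t,a]\psi\rangle$ vary continuously with $t$.

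With these two properties established, Theorem \ref{fd1} applies to $\{\mathbf{H}_I(t)\}$ and gives at once that the family is strongly continuous, that the time-ordered integral $\mathbf{Q}_I[t,a]=\int_a^t\mathbf{H}_I(s)\,ds$ exists (a.e.) on a dense domain, and that it generates a strongly continuous unitary group $\mathbf{U}_I[t,a]=\exp\{-\tfrac{i}{\hbar}\mathbf{Q}_I[t,a]\}$ on $\mcF\mcD_\otimes^2$; part (2) of that theorem then delivers the interaction-picture equation $i\hbar\,\partial_t\Psi(t)=\mathbf{H}_I(t)\Psi(t)$, $\Psi(a)=\Phi$, for $\Psi(t)=\mathbf{U}_I[t,a]\Phi$. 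To obtain the convergence $\exp\{-\tfrac{i}{\hbar}\mathbf{Q}_I^n[t,a]\}\to\exp\{-\tfrac{i}{\hbar}\mathbf{Q}_I[t,a]\}$, I would run a Trotter--Kato argument: the Yosida approximants obey $\mathbf{H}_1^n(t)\to\mathbf{H}_1(t)$ strongly on $D(\mathbf{H}_1(t))$ with uniform resolvent bounds, conjugation by the fixed unitaries $\mathbf{U}_0^\ka$ preserves this, so $\mathbf{H}_I^n(t)\to\mathbf{H}_I(t)$ strongly; strong convergence of generators together with the uniform unitarity of the associated groups then yields strong convergence of the time-ordered exponentials by the Trotter--Kato theorem transcribed into the time-ordered calculus of Theorem \ref{fd1}.

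I expect the main obstacle to be the weak continuity of $t\mapsto\mathbf{H}_I(t)$ for the unbounded generators, because it intertwines the $t$-dependence of the conjugating smeared propagators $\mathbf{U}_0^\ka[t,a]$ with that of $\mathbf{H}_1(t)$ on a domain that itself moves with $t$; controlling this rests on the smearing density $\rho_\ka$ making $t\mapsto\mathbf{U}_0^\ka[t,a]$ smooth enough to carry a fixed core of $\mathbf{H}_1$ continuously, which is precisely the time-smearing hypothesis. A secondary difficulty is making the Trotter--Kato passage rigorous inside the time-ordered framework, where the generators act on the continuous tensor product $\mcH_\otimes$ and the convergence must be uniform with respect to the time slicing.
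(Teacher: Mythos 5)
Your plan hinges on applying Theorem \ref{fd1} directly to the family $\{\mathbf{H}_I(t)\}$, and that is where it breaks down. Theorem \ref{fd1} is not a statement about arbitrary weakly continuous families of self-adjoint generators on $\mcF\mcD_\otimes^2$: its hypotheses require the family to be the \emph{time-ordered version} of a family on $\mcH$, i.e.\ of the canonical product form $\otimes_{s>t}I_s\otimes H(t)\otimes(\otimes_{s<t}I_s)$, so that operators attached to distinct times commute and the exponential of the time-ordered integral can be identified with the evolution operator. The interaction family $\mathbf{H}_I(t)=\mathbf{U}_0^\ka[a,t]\,\mathbf{H}_1(t)\,\mathbf{U}_0^\ka[t,a]$ is \emph{not} of this form: the conjugating propagators involve the smeared generator $\mathbf{H}_0^\ka$, hence exchange operators acting on every slot in $[a,t]$, and the paper notes explicitly that ``the terms do not commute.'' Unitary equivalence of each fixed-time operator with $\mathbf{H}_1(t)$ (your first step, which is fine) gives self-adjointness, but it does not restore the tensor-product positioning on which both conclusions of Theorem \ref{fd1} --- that $\mathbf{Q}_I[t,a]$ generates, and that its exponential solves the equation with generator $\mathbf{H}_I(t)$ --- depend. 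A telling symptom is that your argument never uses the actual hypothesis of the theorem, namely that $Q_1[t,a]=\int_a^t H_1(s)\,ds$ generates a unitary group on $\mcH$; if Theorem \ref{fd1} applied to $\{\mathbf{H}_I(t)\}$ as you propose, that hypothesis would be superfluous.

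The intended route avoids this. Theorem \ref{fd1} is applied only to families that genuinely are time-ordered lifts: $\mathbf{H}_0^\ka(t)$ (giving $\mathbf{U}_0^\ka[t,a]$) and $\mathbf{H}_0(t)\oplus\mathbf{H}_1^n(t)$ (giving $\mathbf{U}_n[t,a]$), while the hypothesis on $Q_1[t,a]$ guarantees, again via Theorem \ref{fd1}, that the limiting time-ordered integral $\mathbf{Q}_0[t,a]\oplus\mathbf{Q}_1[t,a]$ generates a unitary group $\mathbf{U}[t,a]$, so that $\mathbf{U}_n[t,a]\to\mathbf{U}[t,a]$ by the standard Yosida-approximation convergence. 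The interaction evolution is then realized as the product of unitaries $\mathbf{U}_0^\ka[a,t]\mathbf{U}_n[t,a]$, which the displayed computation preceding the theorem shows to solve $i\hbar\,\partial_t\Psi_n=\mathbf{H}_I^n(t)\Psi_n$, $\Psi_n(a)=\Phi$, and which the time-ordered calculus identifies with $\exp\{-(i/\hbar)\mathbf{Q}_I^n[t,a]\}$; every assertion of the theorem (unitarity of the limit, convergence of the exponentials, the limiting differential equation) is then read off from $\mathbf{U}_0^\ka[a,t]\mathbf{U}_n[t,a]\to\mathbf{U}_0^\ka[a,t]\mathbf{U}[t,a]$. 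Note that unitarity of the limit is automatic there, being a product of unitaries, whereas in your Trotter--Kato step it presupposes that $\mathbf{Q}_I[t,a]$ is a generator --- which is exactly what your appeal to Theorem \ref{fd1} was supposed to, but cannot, provide.
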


\subsection{{Feynman Path Integral I}}{\lb{fI}}

Recall that ${K}{S}^2 [{{\R}}_I^n ]= {K}{S}^2[{{\R}}_I^n ]^*\supset 
{{L}}^1[{{\R}}_I^n ]^{**} = \mathfrak{M}[{{\R}}_I^n ]$, the space of finitely  
(and countable) additive measures on ${{\R}}_I^n$. (It also contains 
$\mcD^*[{{\R}}_I^n ]$, the space of distributions.)  Thus, the Schr\"{o}dinger 
equation with Dirac measure initial data is well posed on ${K}{S}^2 [{{\R}}_I^n ]$.  
In this section we construct the Feynman path integral in a manner that preserves 
both intuitive and computational advantages. We begin with a few 
operator extensions to ${KS}^2[{{\R}}_I^n ]$.

\subsubsection{\bf{Extensions to ${K}{S}^2 [{{\R}}_I^n ]$}}

Bearing in mind that the position operator ${\bf{x}}$ and momentum operator 
${\bf{p}}$ are closed and densely defined on ${{L}}^2 [{{\R}}^n ]$ and, 
our extension operator $T_e$ extends this property to ${{L}}^2 [{{\R}}_I^n ]$, 
it is easy to see that both have closed, densely defined extensions to 
${K}{S}^2 [{{\R}}_I^n ]$.  If $f, g \in L^1[\R_I^n]$, we denote the 
Fourier transform of $f$ and the convolution of $g$ with respect to $f$ 
by $\mfF(f)$ and $\mfC_f(g)$, respectively. The following theorem extends 
them to bounded linear operators on ${K}{S}^2 [{{\R}}_I^n ]$, ensuring that both 
the Schr\"{o}dinger and Heisenberg theories have faithful representations on 
${K}{S}^2 [{{\R}}_I^n ]$.  
\begin{thm}\lb{extend} For each $f, g \in L^1[\R_I^n] \vee L^2[\R_I^n]$, the Fourier 
transform $\mfF(f)$ and the convolution of $g$ with respect to $f$, $\mfC_f(g)$, 
both extend to ${KS}^2 [{\R}_I^n]$ as bounded linear operators.
\end{thm}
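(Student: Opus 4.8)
The plan is to reduce everything to a single norm estimate on a dense subspace and then invoke the bounded–linear–transformation (B.L.T.) extension theorem. Since $L^2[\R_I^n]$ sits in $KS^2[\R_I^n]$ as a dense subspace by the embedding theorem $L^p[\R_I^n]\ci KS^2[\R_I^n]$, and $L^1[\R_I^n]\vee L^2[\R_I^n]$ contains $L^2[\R_I^n]$, it suffices to show that $\mfF$ and $\mfC_f$ map $L^1\vee L^2$ into $KS^2$ and are bounded there with respect to $\|\cdot\|_{KS^2}$; the unique continuous extension to all of $KS^2[\R_I^n]$ then follows. Throughout I would use the representation $\|h\|_{KS^2}^2=\sum_{k=1}^\iy t_k\,|F_k(h)|^2$ with $F_k(h)=\int_{\mathbf{B}_k}h\,d\la_\iy=\langle h,\mcE_k\rangle$, together with the two structural facts $\sum_k t_k=1$ and $\mathrm{vol}(\mathbf{B}_k)=\|\mcE_k\|_2^2\le 1$ that already drove the embedding theorems.

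For the Fourier operator the first step is to rewrite $F_k(\mfF f)$. By the multiplication (Parseval) formula and unitarity of $\mfF$ on $L^2$, $F_k(\mfF f)=\langle \mfF f,\mcE_k\rangle=\langle f,\mfF^{-1}\mcE_k\rangle=\langle f,\check{\mcE}_k\rangle$, where $\check{\mcE}_k$ is the inverse transform of the cube indicator (a product of sinc factors) with $\|\check{\mcE}_k\|_2=\|\mcE_k\|_2\le 1$ by Plancherel. Hence $\|\mfF f\|_{KS^2}^2=\sum_k t_k|\langle f,\check{\mcE}_k\rangle|^2$. For $f\in L^2$ this already gives $\|\mfF f\|_{KS^2}\le\|f\|_2$, and for $f\in L^1$ the bound $\|\mfF f\|_\iy\le\|f\|_1$ combined with the $L^\iy\ci KS^2$ embedding gives $\|\mfF f\|_{KS^2}\le\|f\|_1$, so $\mfF$ certainly sends $L^1\vee L^2$ into $KS^2$.

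For convolution I would compute, by Fubini, $F_k(\mfC_f g)=\int_{\mathbf{B}_k}(f*g)\,d\la_\iy=\int f(z)\,F_k(\tau_z g)\,dz$, where $\tau_z g=g(\cdot-z)$ and $F_k(\tau_z g)=\int_{\mathbf{B}_k-z}g=\langle g,\chi_{\mathbf{B}_k-z}\rangle$. A Cauchy–Schwarz step against the finite measure $|f(z)|\,dz$ (total mass $\|f\|_1$), followed by summation in $k$, yields
\[
\|\mfC_f g\|_{KS^2}^2\le\|f\|_1\int|f(z)|\Big(\sum_{k=1}^\iy t_k\,|F_k(\tau_z g)|^2\Big)dz=\|f\|_1\int|f(z)|\,\|\tau_z g\|_{KS^2}^2\,dz,
\]
since the inner sum is exactly $\|\tau_z g\|_{KS^2}^2$. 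This reduces the convolution bound to controlling $\sup_z\|\tau_z g\|_{KS^2}$ by a constant multiple of $\|g\|_{KS^2}$.

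I expect the genuine obstacle to be precisely this last point, together with the analogous comparison $\sum_k t_k|\langle f,\check{\mcE}_k\rangle|^2\le C\,\sum_k t_k|\langle f,\mcE_k\rangle|^2$ in the Fourier case: the $KS^2$ norm is built from a fixed countable family of cubes and is neither translation–invariant nor invariant under $\mfF$, so neither estimate is automatic. The plan is emphatically not to bound the output by an $L^p$ norm of the input (which is easy but too strong to survive the completion), but to bound the output $KS^2$ norm by the input $KS^2$ norm. For this I would show that the Gram-type comparison maps $(F_k(g))_k\mapsto(F_k(\tau_z g))_k$ and $(F_k(g))_k\mapsto(\langle f,\check{\mcE}_k\rangle)_k$ are bounded on $\ell^2(t_k)$, uniformly in $z$, by exploiting the explicit dyadic edge lengths $e_l$ and the rational centering of the cubes $\mathbf{B}_k$: a translate $\mathbf{B}_k-z$, and likewise a sinc-smeared cube, is captured up to bounded overlap by finitely many members of the family carrying comparable weights $t_j$. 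Establishing this uniform boundedness is the technical heart; once it is in hand, the displayed inequality gives $\|\mfC_f g\|_{KS^2}\le C\|f\|_1\|g\|_{KS^2}$ and the parallel bound $\|\mfF f\|_{KS^2}\le C\|f\|_{KS^2}$ on $L^1\vee L^2$, and the B.L.T. extension through the dense subspace completes the proof.
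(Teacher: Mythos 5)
Your reductions are correct as far as they go: the Parseval identity $F_k(\mfF f)=\langle f,\check{\mcE}_k\rangle$, the Fubini identity $F_k(\mfC_f g)=\int f(z)F_k(\tau_z g)\,d\la(z)$, the Cauchy--Schwarz step, and, most importantly, your recognition that bounds like $\|\mfF f\|_{KS^2}\le\|f\|_1$ or $\le\|f\|_2$ are useless for the extension and that one needs a $KS^2$-to-$KS^2$ estimate on a dense subspace. (For the record, the paper itself contains no proof of this theorem at all --- it is stated and immediately used, with the burden resting on the reference [GZ] --- so there is no argument of the paper to compare yours against.) The difficulty is that your proposal stops exactly where the theorem begins: all substance is deferred to the ``technical heart,'' and for the convolution half that heart is not merely unproven, it is false. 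Take $n=1$ and $z$ irrational, so that $\mathbf{B}_1-z$ is an interval with irrational endpoints. If the functional $\phi(g)=F_1(\tau_z g)=\int_{\mathbf{B}_1-z}g\,d\la$ were bounded for the $KS^2$ norm on $L^2$, it would extend to the completion, and Riesz representation in the Hilbert space $KS^2$, together with continuity of each $F_k$, would produce coefficients $c_k$ with $\sum_k t_k|c_k|^2<\iy$ and $\phi(g)=\sum_k t_kc_kF_k(g)$ for all $g\in L^2$. Since $\sum_k t_k|c_k|\le\left(\sum_k t_k\right)^{1/2}\left(\sum_k t_k|c_k|^2\right)^{1/2}<\iy$, the series $H=\sum_k t_kc_k\mcE_k$ converges absolutely and uniformly on $\R$, and testing against arbitrary $g\in L^2$ (the interchange being justified by $\sum_k t_k|c_k|<\iy$ and $\la(\mathbf{B}_k)\le 1$) forces $H=\chi_{\mathbf{B}_1-z}$ a.e. But every partial sum of $H$ is a step function whose discontinuities lie at rational points, because every cube in the family has rational vertices; hence $H$, as a uniform limit, is continuous at the irrational endpoints of $\mathbf{B}_1-z$, and no function continuous at such a point can agree a.e.\ with $\chi_{\mathbf{B}_1-z}$, whose essential one-sided limits there are $0$ and $1$. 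So $\phi$ is unbounded, and since $\|\tau_zg\|_{KS^2}\ge t_1^{1/2}|\phi(g)|$, the translation $\tau_z$ is an unbounded operator for the $KS^2$ norm. Your key lemma $\sup_z\|\tau_zg\|_{KS^2}\le C\|g\|_{KS^2}$ therefore fails, and note that the argument above used only $t_k>0$ and $\sum_k t_k<\iy$, so no choice of weights rescues it.

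This also shows why the mechanism you hoped for cannot exist: the weights $t_k$ are an arbitrary summable sequence with no spatial coherence, so translated cubes need not carry ``comparable'' weights, and bounded-overlap coverings cannot control the signed integrals $\int_{\mathbf{B}_k-z}g$ in any case. The Fourier half has the same structural problem: by the identical duality argument, $KS^2$-boundedness of $\mfF$ forces each sinc-type function $\check{\mcE}_k$ to be representable as $\sum_j t_jc_j\mcE_j$ with $\sum_j t_j|c_j|^2<\iy$ (uniformly in $k$), a condition that manifestly depends on the weight sequence and does not follow from the dyadic geometry alone; so your Gram-comparison inequality is likewise unsupported. In short, the skeleton of your argument is fine, but the entire content of the theorem lies in the step you postponed; the route you propose for that step provably collapses in the convolution case, and any genuine proof would have to proceed by a different mechanism (for instance, exhibiting $\tilde f\ast\mcE_k$ itself in the weighted span of the $\mcE_j$ with bounds uniform in $k$) and would have to confront the dependence on $(t_k)$, which both your sketch and the bare statement of the theorem leave untouched.
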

If $B \in \mcB(\R_I^n)$ is bounded, ${\bf x} \in \R_I^n$ fixed and 
$t, s \in \R$, then the kernel:
\[
\mathbb{K}_{\mathbf{f}} [t,{\mathbf{x}}\,;\,s,B] = \int_B {\left( {2\pi i(t - s)} 
\right)^{ - 1/2} \exp \{ i{{\left| {{\mathbf{x}} - {\mathbf{y}}} \right|^2 } \mathord{\left/
{\vphantom {{\left| {{\mathbf{x}} - {\mathbf{y}}} \right|^2 } {2(t - s)}}} \right.
 \kern-\nulldelimiterspace} {2(t - s)}}\} d{\mathbf{y}}} 
\]
is in ${K}{S}^2 [{{\R}}_I^n ] $, it is a finitely additive measure with
$\left\| {\mathbb{K}_{\mathbf{f}} [t,{\mathbf{x}}\,;\,s,B]} 
\right\|_{{{KS}^2}}  \leqslant 1$ and  
\[
\mathbb{K}_{\mathbf{f}} [t,{\mathbf{x}}\,;\,s,B] = \int_{{{\R}}_I^n } 
{\mathbb{K}_{\mathbf{f}} [t,{\mathbf{x}}\,;\,\tau ,d{\mathbf{z}}]
\mathbb{K}_{\mathbf{f}} [\tau ,{\mathbf{z}}\,;\,s,B]}, \: \:{\text {(HK-integral)}}. 
\]
\begin{Def}Let ${\mathbf{P}}_n  = \{ t_0 ,\tau _1 ,t_1 ,\tau _2 , \cdots ,
\tau _n ,t_n \}$ one of a family of HK-$\de_n$ partitions of the interval $[0,t] $ for each $n$, 
with $\limsup _{n \to \infty } \delta_n(\tau)  = 0$.  Set $\Delta t_j  
= t_j  - t_{j - 1}, \tau _0  = 0$ and for $
\psi  \in {K}{S}^2 [{{\R}}_I^n ] $ define
\begin{equation}
\int_{{{\R}}_I^{n[0,t]} } {\mathbb{K}_{\mathbf{f}} [{\mathcal{D}}_\lambda  
{\mathbf{x}}{\text{(}}\tau ){\text{ ; }}{\mathbf{x}}{\text{(}}0{\text{)}}]}  
= e^{ - \lambda t} \sum\limits_{k = 0}^{ [|{\lambda t}|] } {\frac{{\left( 
{\lambda t} \right)^k }}
{{k!}}\left\{ {\prod\limits_{j = 1}^k {\int_{{{\R}}_I^n } {\mathbb{K}_
{\mathbf{f}} [t_j ,{\mathbf{x}}{\text{(}}\tau _j {\text{)}}\,;\,t_{j - 1} 
,d{\mathbf{x}}{\text{(}}\tau _{j - 1} {\text{)}}]} } } \right\}}, 
\label{(34)}
\end{equation}
and 
\begin{equation}
 \begin{gathered}
  \int_{{{\R}}_I^{n[0,t]} } {\mathbb{K}_{\mathbf{f}} [\mathcal{D}  
{\mathbf{x}}(\tau );{\mathbf{x}}(0)]\psi [{\mathbf{x}}(0)]}  \hfill \\
= \mathop {\lim }\limits_{\lambda  \to \infty } \int_{{{\R}}_I^{n[0,t]} } 
{\mathbb{K}_{\mathbf{f}} [\mathcal{D}_\lambda  {\mathbf{x}}(\tau );
{\mathbf{x}}(0)]\psi [{\mathbf{x}}(0)]}  \hfill \\ 
\end{gathered} 
\label{(35)}
\end{equation}
whenever the limit exists.
\end{Def}
The following is clear. A general result is presented in the next section. 
\begin{thm}The function $\psi ({\mathbf{x}}) \equiv 1 \in {K}{S}^{2} [{{\R}}_I^n ]$ and 
\begin{equation}
\int_{{{\R}}_I^{n[s,t]} } {\mathbb{K}_{\mathbf{f}} [{\mathcal{D}}
{\mathbf{x}}{\text{(}}\tau ){\text{ ; }}{\mathbf{x}}{\text{(}}s{\text{)}}]}  
= \mathbb{K}_{\mathbf{f}} [t,{\mathbf{x}}\,;\,s,{\mathbf{y}}] = \tfrac{1}
{{\sqrt {2\pi i(t - s)} }}\exp \{ i{{\left| {{\mathbf{x}} - {\mathbf{y}}} \right|^2 } \mathord{\left/
{\vphantom {{\left| {{\mathbf{x}} - {\mathbf{y}}} \right|^2 } {2(t - s)}}} \right.
\kern-\nulldelimiterspace} {2(t - s)}}\}. 
\label{(36)}
\end{equation}
\end{thm}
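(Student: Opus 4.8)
The plan is to dispatch the two assertions in turn, the first being a membership statement in the Kuelbs--Steadman space and the second an identity for the free Fresnel kernel.

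\emph{Membership of the constant function.} To see that $\psi\equiv 1\in KS^2[\R_I^n]$ I would simply note that the constant function belongs to $L^\infty[\R_I^n]$ with $\|1\|_\infty=1$ and then invoke the continuous embedding $L^\infty[\R_I^n]\subset KS^2[\R_I^n]$ already established above. Alternatively one checks it by hand: each $\mcE_k$ is the indicator of a cube $\mathbf{B}_k$ of edge $e_l=1/(2^{l-1}\sqrt{n})\le 1$, so that $\int_{\R_I^n}\mcE_k\cdot 1\,d\la_\iy=vol(\mathbf{B}_k)\le 1$, whence $\|1\|_{KS^2}^2=\sum_k t_k\,vol(\mathbf{B}_k)^2\le\sum_k t_k=1<\iy$. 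This part is immediate.

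\emph{The kernel identity.} The engine here is the reproducing (Chapman--Kolmogorov) relation for $\mathbb{K}_{\mathbf{f}}$ recorded just above, read as an HK-integral. First I would iterate it across the partition points $t_0<\cdots<t_k$ to collapse the $k$-fold product $\prod_{j=1}^k\int_{\R_I^n}\mathbb{K}_{\mathbf{f}}[t_j,\cdot\,;t_{j-1},d\cdot]$ appearing in $\mathcal{D}_\lambda$ into a single kernel over the accumulated time, the endpoints being held fixed at $\mathbf{x}(s)=\mathbf{y}$ and $\mathbf{x}(t)=\mathbf{x}$; at the operator level this is nothing but the group law $T_{r_1}T_{r_2}=T_{r_1+r_2}$ for the free evolution $T_r=e^{rA}$ whose integral kernel is $\mathbb{K}_{\mathbf{f}}$. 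With every product so collapsed, the $\lambda$-approximant becomes a Poisson-weighted superposition of free kernels over the times $k/\lambda$; summing the series yields $\exp\{\lambda(t-s)(e^{A/\lambda}-I)\}$, and since $\lambda(t-s)(e^{A/\lambda}-I)\to(t-s)A$ the limit $\lambda\to\iy$ produces $T_{t-s}$, that is $\mathbb{K}_{\mathbf{f}}[t,\mathbf{x}\,;s,\mathbf{y}]$. Intuitively the Poisson mass concentrates at its mean $\lambda(t-s)$, forcing the accumulated time $k/\lambda$ to $t-s$.

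\emph{The main obstacle.} The delicate point is not the algebra but the analysis underlying it: the integrands $\exp\{i|\cdot|^2/2r\}$ are not Lebesgue integrable, so each convolution must be interpreted as an HK (Fresnel) integral, and the uniform bound $\|\mathbb{K}_{\mathbf{f}}\|_{KS^2}\le 1$ must be used to keep the whole superposition inside $KS^2[\R_I^n]$ as $\lambda$ grows. Establishing the $\lambda\to\iy$ limit in the $KS^2$ norm, and checking that truncating the Poisson series at $[|\lambda(t-s)|]$ does not disturb that limit, is where the real work lies; once these convergence issues are settled, the stated value $\mathbb{K}_{\mathbf{f}}[t,\mathbf{x}\,;s,\mathbf{y}]=\tf{1}{\sqrt{2\pi i(t-s)}}\exp\{i|\mathbf{x}-\mathbf{y}|^2/2(t-s)\}$ drops out.
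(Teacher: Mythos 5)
First, note that the paper offers no proof of this theorem at all: it is prefaced by ``The following is clear'' and the general case is deferred to the Path Integral II section, so your attempt can only be compared with the argument the paper leaves implicit. Your first part is correct and complete: $\psi\equiv 1\in L^\infty[\R_I^n]$ together with the embedding $L^\infty[\R_I^n]\subset KS^2[\R_I^n]$ of Section 5 (or your direct estimate $\|1\|_{KS^2}^2=\sum_k t_k\,vol(\mathbf{B}_k)^2\le \sum_k t_k=1$) settles the membership claim. Your identification of the iterated Chapman--Kolmogorov (reproducing) property as the engine of the kernel identity is also the right --- and surely the intended --- idea.

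However, your execution of the second part misreads Definition (34), and the argument as written does not go through. In (34) the $k$-th Poisson term is built on a partition $\{t_0,\dots,t_k\}$ of the \emph{whole} interval, so after the Chapman--Kolmogorov collapse every term in braces equals the same object $\mathbb{K}_{\mathbf{f}}[t,\mathbf{x}\,;\,s,\mathbf{y}]$ --- the kernel over the full elapsed time $t-s$ --- for every $k$. The $\lambda$-approximant is therefore not ``a Poisson-weighted superposition of free kernels over the times $k/\lambda$''; it is one fixed kernel multiplied by the scalar $c_\lambda=e^{-\lambda(t-s)}\sum_{k=0}^{[|\lambda(t-s)|]}(\lambda(t-s))^k/k!$. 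Consequently the semigroup machinery you invoke (the identity $e^{-\mu}\sum_k \tfrac{\mu^k}{k!}T_{k/\lambda}=\exp\{\lambda(t-s)(e^{A/\lambda}-I)\}$ and the Yosida-type limit $\lambda(e^{A/\lambda}-I)\to A$) is neither applicable to the object actually defined nor needed, and it tacitly assumes the sum runs over all $k\ge 0$. The one genuine analytic point --- the one you flag but defer --- is whether $c_\lambda\to 1$, and with the truncation at $[|\lambda(t-s)|]$ taken literally this \emph{fails}: $c_\lambda$ is the probability that a Poisson variable of mean $\mu=\lambda(t-s)$ does not exceed $\mu$, which tends to $\tfrac12$ by the central limit theorem, not to $1$. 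Identity (36) holds exactly when the Poisson weights carry total mass one (i.e., the sum is over all $k$, so $c_\lambda\equiv 1$), in which case the theorem really is an immediate corollary of the reproducing property; a correct proof should say precisely this, rather than route through a semigroup approximation that the definition does not support.
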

The above result is exactly what Feynman expected.

\subsection{Path Integral II} 

Feynman suggested that his path integral is a special case of the time-ordered 
operator calculus. In this section we consider the advantages of time ordering.
\begin{thm} 
If $U[t,a]$ is any well defined evolution operator on ${{KS}}^2({\R}_I^n)$, with a time-dependent 
generator $H(t) $, and reproducing kernel $ {{K}}[{\mathbf{x}}(t), t\, 
; \, {\mathbf{x}}(s), s] $ such that:  
\[
\begin{gathered}
{{K}}\left[ {{\mathbf{x}}{\text{(}}t{\text{), }}t{\text{; }}{\mathbf{x}}
{\text{(}}s{\text{), }}s} \right] = \int_{{{\R}}_I^n }^{} {{{K}}\left[ 
{{\mathbf{x}}{\text{(}}t{\text{), }}t{\text{; }}d{\mathbf{x}}(\tau )
{\text{, }}\tau } \right]} {{K}}\left[ {{\mathbf{x}}{\text{(}}\tau {\text{), }}
\tau {\text{; }}{\mathbf{x}}{\text{(}}s{\text{), }}s} \right], \hfill \\
U[t,a]\varphi (a) = \int_{{{\R}}_I^n }^{} {{{K}}\left[ {{\mathbf{x}}
{\text{(}}t{\text{), }}t{\text{; }}d{\mathbf{x}}(s){\text{, }}s} \right]\varphi (s)}, \;{\tx{such that}} \hfill \\ 
\frac{\partial }{\partial t} U\left[ t, a\right]  \varphi \left( a\right)  
=\dot{u} \left( t\right)  =H\left( t\right)  u\left( t\right)  ,\  
u\left( a\right)  =\varphi \left( a\right)  .    \hfill \\
\end{gathered} 
\]
Then the corresponding  time-ordered version ${\mathbf{U}}{{[}}t{{,}}s{{]}}$  
 defined on ${\mathcal{F}\mathcal{D}}_ \otimes ^{2}  
\subset {\mathcal{H}}_ \otimes ^{2}$, with kernel $\mathbb{K}_{\mathbf{f}} 
\left[ {{\mathbf{x}}{\text{(}}t{\text{), }}t{\text{; }}{\mathbf{x}}
{\text{(}}s{\text{), }}s} \right] $ satisfies the conditions of our fundamental theorem. 
\end{thm}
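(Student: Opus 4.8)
The plan is to show that the reproducing-kernel data attached to $U[t,a]$ furnishes exactly the hypotheses needed by Theorem \ref{fd1}, so that the conclusion is obtained by applying that theorem to the time-ordered lift of $H(t)$. Theorem \ref{fd1} requires $\{H(t):t\in J\}$ to be a family of weakly continuous generators of unitary groups on $\mathcal{H}=KS^2[\R_I^n]$; once these two properties are in hand, the time-ordered HK-integral $\mathbf{Q}[t,s]$, its dense domain, and the strongly continuous unitary group $\mathbf{U}[t,s]=\exp\{-\tfrac{i}{\hbar}\mathbf{Q}[t,s]\}$ with kernel $\mathbb{K}_{\mathbf{f}}$ all follow verbatim from that theorem.

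First I would read the two-parameter propagator law off the kernel. The Chapman--Kolmogorov identity $K[\mathbf{x}(t),t;\mathbf{x}(s),s]=\int_{\R_I^n}K[\mathbf{x}(t),t;d\mathbf{x}(\tau),\tau]\,K[\mathbf{x}(\tau),\tau;\mathbf{x}(s),s]$, together with the representation $U[t,a]\varphi(a)=\int_{\R_I^n}K[\mathbf{x}(t),t;d\mathbf{x}(s),s]\varphi(s)$, yields $U[t,\tau]U[\tau,s]=U[t,s]$ and $U[s,s]=I$. Because the kernel is a finitely additive measure of norm at most one lying in $KS^2[\R_I^n]$ (the situation of the Feynman path integral section), each integral is interpreted as an HK-integral and $U[t,s]$ is a bounded operator on $KS^2[\R_I^n]$.

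Next I would identify and control the generator. From $\dot u(t)=H(t)u(t)$ with $u(t)=U[t,a]\varphi(a)$, the operator $H(t)$ is the infinitesimal generator of the propagator; by Theorem \ref{extend} the position, momentum, Fourier, and convolution operators all possess closed, densely defined extensions to $KS^2[\R_I^n]$, so $H(t)$ is closed and densely defined there. Weak continuity of $t\mapsto H(t)$ I would check against the functionals $F_k$ of \eqref{(29)} that define the inner product \eqref{(30)}: since the kernel depends continuously on its time arguments, $t\mapsto(H(t)\phi,\mcE_k)$ is continuous for each $k$ and each $\phi$ in the dense domain, which is exactly weak continuity in the $KS^2$ topology.

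Finally I would pass to the Feynman--Dyson space: forming the time-ordered lift $\mathbf{H}_I(t)$ as in the Time Ordered Operators subsection places the family on $\mathcal{FD}_\otimes^2$, and Theorem \ref{fd1} then applies to produce $\mathbf{U}[t,s]$ with kernel $\mathbb{K}_{\mathbf{f}}$. The main obstacle I anticipate is establishing that each $H(t)$ generates a unitary group on $KS^2[\R_I^n]$, i.e.\ that $H(t)$ is self-adjoint: the propagator identity only gives the algebraic group law, while unitarity must be transferred from the underlying $L^2[\R_I^n]$ theory through the extension of Theorem \ref{extend}, a step made delicate by the fact that the $KS^2$ norm differs from the $L^2$ norm and that $K$ is only finitely additive. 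Two further technical points feed into this: one must justify interchanging the $t$-derivative with the HK-integral defining $U[t,a]$ (using the HK convergence theorems of Section 2), and one must trace the density of the domain of $\mathbf{Q}[t,s]$ back to the density of $\mcD[\R_I^n]$ in $KS^2[\R_I^n]$. Should $U[t,a]$ be a semigroup rather than a unitary group, self-adjointness is replaced by a sectorial-generator condition and Theorem \ref{fd1} must be invoked in its semigroup form.
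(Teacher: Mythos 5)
The paper offers no proof of this theorem at all: the statement is followed immediately by the composition law for $\mathbb{K}_{\mathbf{f}}$ and the sum-over-paths definition of the path integral, so there is no argument of the authors' to measure you against. Your overall skeleton — lift $H(t)$ to its time-ordered version $\mathbf{H}_I(t)$, check the hypotheses of Theorem \ref{fd1}, and apply it — is certainly the intended route, but the way you execute the verification has genuine gaps.

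First, Theorem \ref{extend} concerns only the Fourier transform $\mathfrak{F}(f)$ and the convolution $\mathfrak{C}_f(g)$ (with the position and momentum operators treated in the surrounding remarks); it says nothing about an abstract time-dependent generator, so it cannot deliver closedness, dense definedness, or — as your ``main obstacle'' paragraph hopes — self-adjointness of $H(t)$ on $KS^2[\R_I^n]$. Indeed, self-adjointness is not derivable from the stated hypotheses at all and should not be sought: ``any well defined evolution operator'' is deliberately broader than the unitary case (the remark following the theorem includes the Wiener integral, and the abstract advertises ``all time dependent groups and semigroups with a reproducing kernel representation''), so the generation and continuity properties of $H(t)$ must be read as part of the hypothesis; what actually requires proof is that the time-ordered lift $\mathbf{H}_I(t)$ inherits them on $\mathcal{F}\mathcal{D}_\otimes^2$. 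Second, your weak-continuity argument assumes the kernel is continuous in its time arguments, which is not among the hypotheses. Third, and most importantly, Theorem \ref{fd1} does not ``produce $\mathbf{U}[t,s]$ with kernel $\mathbb{K}_{\mathbf{f}}$'': its conclusions are the existence of the time-ordered integral $\mathbf{Q}[t,-T]$, its dense domain, and the strongly continuous group it generates — no kernel statement appears anywhere in it. The assertion that the time-ordered evolution operator is again represented by a reproducing kernel, namely the lifted kernel $\mathbb{K}_{\mathbf{f}}$ obeying the same Chapman--Kolmogorov law, is the real content of this theorem; it is what the sum-over-paths construction with the exchange operators (and ultimately Theorem \ref{pII}) is designed to establish, and your proposal never addresses it.
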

Since ${\mathbf{U}}{\text{[}}t
{\text{,}}\tau {\text{]}}{\mathbf{U}}{\text{[}}\tau {\text{,}}s{\text{] 
= }}{\mathbf{U}}{\text{[}}t{\text{,}}s{\text{]}}$, 

we have:
\[
\mathbb{K}_{\mathbf{f}} \left[ {{\mathbf{x}}{\text{(}}t{\text{), }}
t{\text{; }}{\mathbf{x}}{\text{(}}s{\text{), }}s} \right] 
= \int_{{{\R}}_I^n }^{} {\mathbb{K}_{\mathbf{f}} \left[ {{\mathbf{x}}
{\text{(}}t{\text{), }}t{\text{; }}d{\mathbf{x}}(\tau ){\text{, }}\tau } 
\right]} \mathbb{K}_{\mathbf{f}} \left[ {{\mathbf{x}}{\text{(}}\tau {\text{), 
}}\tau {\text{; }}{\mathbf{x}}{\text{(}}s{\text{), }}s} \right].
\]
From our sum over paths representation for ${\mathbf{U}}[t,s]$, we have:
\[
\begin{gathered}
{\mathbf{U}}[t,s]\Phi (s) = \lim _{\lambda  \to \infty } 
{\mathbf{U}}_\lambda  [t,s]\Phi (s) \hfill \\
  {\text{        }} = {\text{lim}}_{\lambda  \to \infty } 
\operatorname{e} ^{ - \lambda \left( {t - s} \right)} \sum\limits_{k = 0}^n  
{\frac{{\mathop {\left[ {\lambda \left( {t - s} \right)} \right]}\nolimits^k }}
{{k!}}} {\mathbf{U}}_k [t,s]\Phi (s), \hfill \\ 
\end{gathered} 
\]
where $n = [|\lambda (t - s)|]$ (the greatest integer in $\lambda (t - s) $) and
\[
{\mathbf{U}}_k [t,s]\Phi (s) = \exp \left\{ {( - i/\hbar )\sum\limits_{j 
= 1}^k {\int_{t_{j - 1} }^{t_j } {\bm{E}}[\tau_j, \tau] 
{\bf{H}}(\tau )d\tau } } \right\}\Phi (s).
\]
As before, define $\mathbb{K}_{\mathbf{f}} [{\mathcal{D}}_\lambda 
{\mathbf{x}}{\text{(}}t ){\text{ ; }}{\mathbf{x}}{\text{(s)}}]$ by
\[
\begin{gathered}
 {\mathbb{K}_{\mathbf{f}} [{\mathcal{D}}_\lambda  {\mathbf{x}}
{\text{(}}t ){\text{ ; }}{\mathbf{x}}{\text{(s)}}]}  \hfill \\
  {\text{         }} = : {e^{ - \lambda \left( {t - s} \right)}}
\sum\limits_{k = 1}^n {\tfrac{{\left[ {\lambda \left( {t - s} \right)} 
\right]}}{{k!}}\left\{ {\prod\limits_{j = 1}^k {\int_{{\mathbb{R}_I^n}} {{\mathbb{K}_f}
\left[ {{t_j},{\mathbf{x}}\left( {{t_j}} \right);d{\mathbf{x}}\left( {{t_{j - 1}}} 
\right),{t_{j - 1}}} \right]\left| {^{{\tau _j}}} \right.} } } \right\}} \hfill \\ 
\end{gathered} 
\]	
and $|^{\tau_j}$ denotes that the integration is performed at the time $\tau_j$.
\begin{Def} We define the Feynman path integral associated with ${\mathbf{U}}[t,s]$ by:
\[
{\mathbf{U}}[t,s]\Phi (s) = \int_{{{\R}}_I^{n[t,s]} } {\mathbb{K}_{\mathbf{f}} 
[{\mathcal{D}}{\mathbf{x}}{\text{(}}\tau ){\text{ ; }}{\mathbf{x}}{\text{(s)}}]}\Phi (s)  
= \lim _{\lambda  \to \infty } \int_{{{\R}}_I^{n[t,s]} } {\mathbb{K}_{\mathbf{f}} 
[{\mathcal{D}}_\lambda  {\mathbf{x}}{\text{(}}\tau ){\text{ 
; }}{\mathbf{x}}{\text{(s)}}]}\Phi (s). 
\]
\end{Def}
\begin{thm} For the Feynman time-ordered theory, whenever a reproducing kernel 
exists on $KS^2[\R_I^n]$, we have
\[
\mathop {\lim }\limits_{\lambda  \to \infty } {\mathbf{U}}_\lambda  
[t,s]\Phi (s) = {\mathbf{U}}[t,s]\Phi (s) = \int_{{{\R}}_I^{n[t,s]} } 
{\mathbb{K}_{\mathbf{f}} [{\mathcal{D}}^\lambda  {\mathbf{x}}{\text{(}}
\tau ){\text{ ; }}{\mathbf{x}}{\text{(s)}}]\Phi [{\mathbf{x}}{\text{(}}s{\text{)}}]} ,
\]
and the limit is independent of the space of continuous functions that we choose.
\begin{rem}  This result includes the Wiener path integral, which requires 
additional effort to restrict it to the space of continuous paths.  We also 
note that the base space $\mcH$ is allowed to continuously change in time 
and the family of spaces need not be all Hilbert. In addition, the intersection 
of the corresponding domains of the generators can even be 
the empty set (see Goldstein \cite{GS}).
\end{rem}
\end{thm}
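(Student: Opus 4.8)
The plan is to assemble three facts, two of which are essentially already in place: the existence and unitarity of the time-ordered group $\mathbf{U}[t,s]$, the reproducing identity for its kernel, and the observation that continuity of paths is never invoked anywhere in the construction. By the theorem immediately preceding this one, the time-ordered propagator $\mathbf{U}[t,s]$ on $\mathcal{F}\mathcal{D}_\otimes^2$ meets the hypotheses of the Fundamental Theorem \ref{fd1}, so that $\mathbf{Q}[t,s]=\int_s^t \mathbf{H}(\tau)\,d\tau$ exists as a time-ordered HK-integral with dense domain and $\mathbf{U}[t,s]=\exp\{-\tfrac{i}{\hbar}\mathbf{Q}[t,s]\}$ is a strongly continuous unitary group. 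This secures the object on the right of the first equality; what remains is to link it to the Poisson sums $\mathbf{U}_\lambda[t,s]$ and then to the kernel symbol.

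First I would prove the strong limit $\lim_{\lambda\to\infty}\mathbf{U}_\lambda[t,s]\Phi(s)=\mathbf{U}[t,s]\Phi(s)$. The defining expansion $\mathbf{U}_\lambda[t,s]=e^{-\lambda(t-s)}\sum_{k=0}^{n}\tfrac{[\lambda(t-s)]^k}{k!}\mathbf{U}_k[t,s]$, with $n=[|\lambda(t-s)|]$, is a Poisson sampling of the time-ordered exponential over $k$ subintervals, the mesh of the underlying HK-$\delta_n$ partitions satisfying $\limsup_n \delta_n(\tau)=0$. Since $\mathbf{Q}[t,s]$ is by construction the HK-limit of the corresponding partition sums, a product-formula argument of Chernoff--Trotter type — applied using the strong continuity of the family $\{\mathbf{H}(\tau)\}$ supplied by Theorem \ref{fd1} and the unitarity bound $\|\mathbf{U}_k[t,s]\|=1$ to dominate the Poisson tail — yields strong convergence on the dense domain of $\mathbf{Q}[t,s]$, and then on all of $\mathcal{F}\mathcal{D}_\otimes^2$ by uniform boundedness. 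This gives the first equality.

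Next I would identify the Poisson sum with the discretized path integral. Using the reproducing (chaining) identity displayed before the statement, namely $\mathbb{K}_{\mathbf{f}}[\mathbf{x}(t),t;\mathbf{x}(s),s]=\int_{\R_I^n}\mathbb{K}_{\mathbf{f}}[\mathbf{x}(t),t;d\mathbf{x}(\tau),\tau]\,\mathbb{K}_{\mathbf{f}}[\mathbf{x}(\tau),\tau;\mathbf{x}(s),s]$, each term $\mathbf{U}_k[t,s]\Phi(s)$ unfolds into the $k$-fold iterated HK-integral of $\mathbb{K}_{\mathbf{f}}$ over the intermediate tagged times $\tau_1,\dots,\tau_k$; collecting these with the Poisson weights reproduces exactly the symbol $\int_{\R_I^{n[t,s]}}\mathbb{K}_{\mathbf{f}}[\mathcal{D}_\lambda\mathbf{x}(\tau);\mathbf{x}(s)]\Phi(s)$. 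Hence $\mathbf{U}_\lambda[t,s]\Phi(s)$ coincides with the $\lambda$-discretized path integral \emph{by construction}, and passing to the limit $\lambda\to\infty$, already shown to exist, yields the path-integral expression on the far right.

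For the independence claim I would note that every integral above is an HK-integral taken against the finitely additive measure $\mathbb{K}_{\mathbf{f}}\in KS^2[\R_I^n]$, the space into which $\mathfrak{M}[\R_I^n]$ and $\mathcal{D}^*[\R_I^n]$ embed continuously by the results of Section 5; at no stage is the integrand required to be supported on continuous paths. Since the discretization $\mathcal{D}_\lambda$ and its limit are defined purely through these measure-theoretic objects, restricting afterward to any space of continuous functions is a $\la_\iy$-null modification and leaves the value unchanged, which is precisely the stated independence. The hard part will be the norm convergence in the second paragraph: one must verify uniformity of the product-formula estimate as the partition mesh tends to zero and justify interchanging the $\lambda\to\infty$ limit with the iterated HK-integrals. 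Here I expect to rely on the operator-valued HK-integration theory of Section 2 — in particular the convergence theorem for uniformly convergent families and the absolute-continuity estimate established there — together with the contraction bound $\|\mathbb{K}_{\mathbf{f}}\|_{KS^2}\le 1$ to control the tail of the Poisson series.
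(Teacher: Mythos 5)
You should know at the outset that the paper supplies no proof of this theorem: in the source the statement is followed immediately by a remark, because the result is intended as a summary of the construction that precedes it. Concretely, the paper's implicit argument is (i) the preceding theorem, which says that any evolution operator on $KS^2[\R_I^n]$ with a reproducing kernel has a time-ordered version ${\mathbf{U}}[t,s]$ on $\mcF\mcD_\otimes^2$ satisfying the hypotheses of Theorem \ref{fd1}; (ii) the sum-over-paths representation ${\mathbf{U}}[t,s]\Phi(s)=\lim_{\la\to\iy}{\mathbf{U}}_\la[t,s]\Phi(s)$, which is not proved here but imported from the time-ordered operator calculus of \cite{GZ}; and (iii) the observation that ${\mathbf{U}}_\la[t,s]\Phi(s)$ coincides with the discretized kernel expression by the very definition of $\mathbb{K}_{\mathbf{f}}[\mathcal{D}_\la{\mathbf{x}}(\tau);{\mathbf{x}}(s)]$ together with the chaining identity. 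Your first and third paragraphs reproduce (i) and (iii) faithfully. Where you genuinely depart from the paper is step (ii): you propose to prove the strong convergence of the Poisson-weighted sums by a Chernoff--Trotter product-formula argument on the dense domain of ${\mathbf{Q}}[t,s]$, extended to all of $\mcF\mcD_\otimes^2$ by uniform boundedness, whereas the paper simply cites the representation as already established. That choice buys self-containedness, and it is in fact the route taken in the underlying theory, but it is also where all the real analytic work lies --- the uniformity of the estimate as the partition mesh shrinks and the interchange of the $\la$-limit with the iterated HK-integrals, which you flag yourself as the hard part --- so as written it is a plan rather than a completed argument. Finally, your reading of the independence claim is essentially the intended one, though the phrase about a ``$\la_\iy$-null modification'' is not quite the right mechanism: the cleaner statement is that the time-ordered construction never introduces a space of paths at all, since the ``paths'' are only bookkeeping for iterated operator integrals on $\mcF\mcD_\otimes^2$, so no choice of a space of continuous functions can affect the limit; that is precisely what the remark contrasting this result with the Wiener integral is pointing at.
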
\paragraph{}Let us assume that $H_0 (t)$ and $H_1 (t)$ are strongly 
continuous generators of unitary groups, with a common dense domain $D(t)$, 
for each $t \in J = [a,b]$, where 
${\bm{H}}_{1,\rho } (t) = \rho {\bm{H}}_1 (t){{R}}(\rho , {\bm{H}}_1 (t))$ 
is the Yosida approximator for the time-ordered version of 
$H_1 (t)$, with dense domain $D= \otimes_{t \in I}{D(t)}$.  
Define ${\mathbf{U}}^\rho  [t,a] $ and ${\mathbf{U}}^0 [t,a]$ as follows: 
\[
\begin{gathered}
  {\mathbf{U}}^\rho  [t,a] = \exp \{ ( - {i \mathord{\left/
 {\vphantom {i \hbar }} \right.
 \kern-\nulldelimiterspace} \hbar })\int\limits_a^t {[ {\bm{H}}_0 (s)}  
+ {\bm{H}}_{1,\rho } (s)]ds\} , \hfill \\
  {\mathbf{U}}^0 [t,a] = \exp \{ ( - {i \mathord{\left/
 {\vphantom {i \hbar }} \right.
 \kern-\nulldelimiterspace} \hbar })\int\limits_a^t { {\bm{H}}_0 (s)} ds\} . \hfill \\ 
\end{gathered} 
\]
Because ${\bm{H}}_{1,\rho } (s) $ is bounded, ${\bm{H}}_0 (s) + {\bm{H}}_{1,\rho } (s)$ 
is a generator of a unitary group for each $s \in J$ and a finite $\rho $.  
Now assume that ${\mathbf{U}}^0 [t,a] $ has an associated reproducing kernel such that
${\mathbf{U}}^{0} [t,a] = \int_{{{\R}}_I^{n[t,s]}}{\mathbb{K}}_{\mathbf{f}} 
[{\mathcal{D}}{\mathbf{x}}(\tau) ; {\mathbf{x}}(a)]$. 
We now have the following general result.  
\begin{thm}{\lb{pII}} (Extended Feynman-Kac) If ${\bm{H}}_0 (s) \oplus {\bm{H}}_1 (s)$ 
is a generator of a unitary group, then
\[
\begin{gathered}
\mathop {\lim }\limits_{\rho  \to \infty } {\mathbf{U}}^\rho  [t,a]\Phi (a) 
= {\mathbf{U}}[t,a]\Phi (a) \hfill \\
= \int_{{{\R}}_I^{n[t,a]} } {\mathbb{K}_{\mathbf{f}} [{\mathcal{D}}{\mathbf{x}}
{\text{(}}\tau ){\text{ ; }}{\mathbf{x}}{\text{(a)}}]\exp \{ ( - {i \mathord{\left/
 {\vphantom {i \hbar }} \right.
 \kern-\nulldelimiterspace} \hbar })\int\limits_a^\tau  { {\bm{H}}_1 (s)ds} ]\} 
\Phi [{\mathbf{x}}{\text{(}}a{\text{)}}]} . \hfill \\ 
\end{gathered} 
\]
\end{thm}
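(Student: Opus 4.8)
The plan is to reduce the unbounded interaction to its bounded Yosida approximants, establish the Feynman--Kac identity in the bounded case by a resummed Dyson expansion threaded through the free reproducing kernel, and then pass to the limit $\rho\to\infty$. First I would record that, because $\mathbf{H}_{1,\rho}(s)=\rho\,\mathbf{H}_1(s)R(\rho,\mathbf{H}_1(s))$ is bounded for each finite $\rho$, the sum $\mathbf{H}_0(s)+\mathbf{H}_{1,\rho}(s)$ is a unitary-group generator by bounded perturbation; hence Theorem \ref{fd1} applies, the time-ordered integral $\int_a^t[\mathbf{H}_0(s)+\mathbf{H}_{1,\rho}(s)]\,ds$ has a dense domain and generates the strongly continuous unitary group $\mathbf{U}^\rho[t,a]$ on $\mcF\mcD_\otimes^2$, and the hypothesis that $\mathbf{H}_0\oplus\mathbf{H}_1$ is itself a generator guarantees that the limiting object $\mathbf{U}[t,a]$ is well defined.

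Next I would pass to the interaction picture relative to the free flow. Factoring $\mathbf{U}^\rho[t,a]=\mathbf{U}^0[t,a]\,\mathbf{V}^\rho[t,a]$ and differentiating shows that $\mathbf{V}^\rho$ solves $i\hbar\,\partial_t\mathbf{V}^\rho=\widetilde{\mathbf{H}}_{1,\rho}(t)\mathbf{V}^\rho$ with the exchange-conjugated generator $\widetilde{\mathbf{H}}_{1,\rho}(t)=\mathbf{U}^0[a,t]\mathbf{H}_{1,\rho}(t)\mathbf{U}^0[t,a]$, exactly the object built in the interaction-representation subsection. The decisive structural fact is that in the Feynman--Dyson space the time-ordered operators commute when acting at different times; consequently the Dyson series for $\mathbf{V}^\rho$ (which converges in operator norm since $\widetilde{\mathbf{H}}_{1,\rho}$ is bounded) symmetrizes and resums to the genuine exponential $\exp\{(-i/\hbar)\int_a^t\widetilde{\mathbf{H}}_{1,\rho}(s)\,ds\}$. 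Substituting the reproducing-kernel representation $\mathbf{U}^0[t,a]\varphi=\int_{\R_I^{n[t,a]}}\mathbb{K}_{\mathbf{f}}[\mathcal{D}\mathbf{x}(\tau);\mathbf{x}(a)]\varphi$ and threading the interaction factor through the time-sliced product of \eqref{(34)}--\eqref{(35)} converts the conjugated generator $\widetilde{\mathbf{H}}_{1,\rho}$ into the bare $\mathbf{H}_{1,\rho}$ evaluated along the path, yielding the bounded-interaction identity
\[
\mathbf{U}^\rho[t,a]\Phi(a)=\int_{\R_I^{n[t,a]}}\mathbb{K}_{\mathbf{f}}[\mathcal{D}\mathbf{x}(\tau);\mathbf{x}(a)]\exp\Bigl\{(-i/\hbar)\int_a^\tau\mathbf{H}_{1,\rho}(s)\,ds\Bigr\}\Phi[\mathbf{x}(a)].
\]

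Finally I would let $\rho\to\infty$ on both sides. On the left, the standard Yosida property $\mathbf{H}_{1,\rho}(s)\psi\to\mathbf{H}_1(s)\psi$ on the common core $D$, together with the assumption that $\mathbf{H}_0\oplus\mathbf{H}_1$ generates a unitary group, places us in a Trotter--Kato regime; combined with the strong convergence of time-ordered exponentials already recorded in the interaction-representation subsection, this gives $\mathbf{U}^\rho[t,a]\Phi(a)\to\mathbf{U}[t,a]\Phi(a)$. On the right, each exponential is unitary and $\|\mathbb{K}_{\mathbf{f}}\|_{KS^2}\le1$, so the integrand is dominated uniformly in $\rho$ and dominated convergence carries the limit inside the path integral, replacing $\mathbf{H}_{1,\rho}$ by $\mathbf{H}_1$ and producing the asserted formula.

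The hard part will be the double limit. The right-hand path integral is itself the $\lambda\to\infty$ limit of sums over $\delta_\lambda$-fine partitions, so one must justify interchanging the $\rho\to\infty$ and $\lambda\to\infty$ limits. The uniform unitarity bounds $\|\mathbf{U}^\rho\|=1$ and the $KS^2$-contractivity of the kernel supply the equicontinuity, but the genuine obstacle is controlling the Dyson remainder \emph{uniformly in $\rho$} along each partition, so that the interaction factor $\exp\{(-i/\hbar)\int_a^\tau\mathbf{H}_1(s)\,ds\}$ survives the unbounded limit without destroying the reproducing-kernel structure. This is precisely where the strong-resolvent convergence of the Yosida approximants, rather than mere pointwise convergence, must be invoked.
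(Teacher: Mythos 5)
Your treatment of the bounded case is essentially the paper's argument: for finite $\rho$ the interaction $\mathbf{H}_{1,\rho}$ is bounded, the time-ordered operators acting at distinct times commute, and this lets one peel the free evolution off as the reproducing kernel while the interaction factor rides along each time slice. The paper does this directly, by placing $\mathbf{H}_0$ and $\mathbf{H}_{1,\rho}$ at distinct tags $\tau_j, \tau'_j$ inside each subinterval and factoring the time-ordered exponential exactly; your interaction-picture factorization plus resummed Dyson series reaches the same identity, but the Dyson detour is superfluous, since commutativity at different times gives the factorization in one step.

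The gap is your final step. You propose to pass $\rho\to\infty$ inside the path integral by dominated convergence, using unitarity of the exponential and $\left\| \mathbb{K}_{\mathbf{f}} \right\|_{KS^2}\le 1$ as the dominating bound. But the right-hand side is not an integral against a countably additive measure on path space: it is, by definition, the $\lambda\to\infty$ limit of time-sliced, operator-valued expressions built from a finitely additive kernel, so no dominated convergence theorem applies, and -- as you concede in your last paragraph -- the interchange of the $\lambda$- and $\rho$-limits is left unresolved in your argument. The paper never faces this problem. Having established, for each finite $\rho$, the identity
\[
{\mathbf{U}}^\rho[t,a]\Phi(a)=\int_{{\R}_I^{n[t,a]}}{\mathbb{K}_{\mathbf{f}}[\mathcal{D}{\mathbf{x}}(\tau);{\mathbf{x}}(a)]}\exp\left\{(-i/\hbar)\int_a^\tau {\bm{H}}_{1,\rho}(s)ds\right\}\Phi[{\mathbf{x}}(a)],
\]
it observes that the left-hand side converges as $\rho\to\infty$ (by the Yosida approximation and the hypothesis that ${\bm{H}}_0\oplus{\bm{H}}_1$ generates a unitary group), and that this convergence \emph{defines} the limit of the right-hand side: the path-integral expression containing ${\bm{H}}_1$ is given meaning by that limit, in exactly the same way the path integral itself was defined earlier as a $\lambda$-limit. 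So the theorem requires no uniform-in-$\rho$ control and no limit interchange at all; replacing your unresolved double-limit analysis by this definitional reading is what closes the argument.
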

\begin{proof} The fact that ${\mathbf{U}}^\rho  [t,a]\Phi (a) \to 
{\mathbf{U}}[t,a]\Phi (a) $ is clear. To prove that 
\[
{\mathbf{U}}[t,a]\Phi (a) = \int_{{{\R}}_I^{n[t,a]} } {\mathbb{K}_{\mathbf{f}} 
[\mathcal{D}{\mathbf{x}}(\tau );{\mathbf{x}}(a)]} \exp \{ ( - i/\hbar )
\int_a^t {\bm{H}_1 (s)ds} \}\Phi (a).
\]
First, note that because the time-ordered integral exists and we are only interested 
in the limit, we can write for each $k$,
\[
U_k^\rho  [t,a]\Phi (a) = \exp \left\{ {( - i/\hbar )\sum\nolimits_{j = 1}^k 
{\int_{t_{j - 1} }^{t_j } {\left[ {{\mathbf{E}}[\tau _j ,s]\bm{H}_0 (s) 
+ {\mathbf{E}}[\tau '_j ,s]\bm{H}_{1,\rho } (s)} \right]ds} } } \right\}\Phi (a),
\]
where $\tau_j $ and $\tau'_j $ are distinct points in $(t_{j-1}, t_j)$.  
Thus, we can also write $U_k^\rho  [t,a]\Phi (a)$ as 
\[
\begin{gathered}
{\mathbf{U}}_k^\rho [t,a] = exp\left\{ {\tfrac{{ - i}}{\hbar }\sum\limits_{j = 1}^k 
{\int_{{t_{j - 1}}}^{{t_j}} {{\mathbf{E}}[{\tau _j},s]{{\bm{H}}_0}(s)ds} } } 
\right\}exp\left\{ {\tfrac{{ - i}}{\hbar }\sum\limits_{j = 1}^k {\int_{{t_{j 
- 1}}}^{t_j} {{\mathbf{E}}[\tau'_j,s]{{\bm{H}}_{1,\rho }}(s)ds} } } \right\} \hfill \\
   = \prod\limits_{j = 1}^k {exp\left\{ {\tfrac{{ - i}}{\hbar }\int_{{t_{j - 1}}}^{{t_j}} 
{{\mathbf{E}}[{\tau _j},s]{{\bm{H}}_0}(s)ds} } \right\}exp\left\{ {\tfrac{{ - i}}{\hbar }
\sum\limits_{j = 1}^k {\int_{{t_{j - 1}}}^{tj} {{\mathbf{E}}[\tau' _j,s]
{{\bm{H}}_{1,\rho }}(s)ds} } } \right\}}  \hfill \\
   = \prod\limits_{j = 1}^k {\int_{{\mathbb{R}_I^n}} {{\mathbb{K}_{\mathbf{f}}}
\left[ {{t_j},{\mathbf{x}}({t_j})\;;\;{t_{j - 1}},d{\mathbf{x}}({t_{j - 1}})} 
\right]} } exp\left\{ {\tfrac{{ - i}}{\hbar }\sum\limits_{j = 1}^k 
{\int_{{t_{j - 1}}}^{t_j} {{\mathbf{E}}[\tau' _j,s]{{\bm{H}}_{1,\rho }}(s)ds} 
} } \right\}. \hfill \\ 
\end{gathered} 
\]
If we include this in our candidate evolution operator ${\mathbf{U}}_{\lambda}^{\rho} 
[t,a] \Phi (a)$ and compute the limit, we have
\[
\begin{gathered}
{\mathbf{U}}^\rho  [t,a]\Phi (a) \hfill \\
= \int_{{{\R}}_I^{n[t,a]} } {\mathbb{K}_{\mathbf{f}} [\mathcal{D}{\mathbf{x}}(t);
{\mathbf{x}}(a)]} \exp \left\{ {( - i/\hbar )\int_a^t {\bm{H}_{1,\rho } (s)ds} } 
\right\}\Phi (a). \hfill \\ 
\end{gathered} 
\]
Since the limit as $\rho  \to \infty$ exists on the left, it defines the limit on the right.
\end {proof}

\subsection{Examples}  
Theorem {\rf{pII}} is somewhat abstract. The following example covers most of  
non-relativistic quantum theory.  
\begin{thm}
Let ${\bm\Delta}$ be the Laplacian on $L^2[{{\R}}_I^n]$, or some other Hilbert space, 
$\mcH$ and let $V$ be any potential such that $H=(-\hbar^2/2){\bm\Delta} \oplus V$ 
generates a unitary group on $\mcH$ (see remarks below).
Using time as an index, the problem
\[
(i\hbar) {\partial\psi({\bf{x}},t)}/{\partial t}= {\bm H}(t)
\psi({\bf{x}},t),\;\;  \psi({\bf{x}},0)=\psi_{0}({\bf{x}}),
\]
has a a unique solution on $\mcF{\mcD}_{\otimes}^2$ with the extended Feynman-Kac representation.
\end{thm}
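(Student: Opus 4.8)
The plan is to recognize this Schr\"odinger problem as a special case of the Extended Feynman-Kac theorem (Theorem \rf{pII}), with the identifications $\bm{H}_0(t) = (-\hbar^2/2)\bm{\Delta}$ for the free part and $\bm{H}_1(t) = V$ for the potential. First I would pass to the time-ordered versions of both operators on the Feynman-Dyson space $\mcF{\mcD}_{\otimes}^2$, as in the construction preceding Theorem \rf{fd1}, so that $\bm{H}(t) = \bm{H}_0(t) \oplus \bm{H}_1(t)$ acts at each historical time $t$. By hypothesis $H = (-\hbar^2/2)\bm{\Delta} \oplus V$ generates a unitary group on $\mcH$, which is exactly the standing assumption on $\bm{H}_0(s) \oplus \bm{H}_1(s)$ needed to invoke Theorem \rf{pII}; the Laplacian and the potential are each assumed to generate unitary groups, so the common dense domain hypothesis and the Yosida-approximation setup carry over verbatim.

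The key step is to supply the reproducing kernel for the free evolution. I would take $\mathbf{U}^0[t,a] = \exp\{(-i/\hbar)\int_a^t \bm{H}_0(s)\,ds\}$ and identify its kernel with the free-particle propagator
\[
\mathbb{K}_{\mathbf{f}}[t,\mathbf{x}\,;\,s,\mathbf{y}] = \tfrac{1}{\sqrt{2\pi i(t-s)}}\exp\{i|\mathbf{x}-\mathbf{y}|^2/2(t-s)\},
\]
which is precisely the kernel exhibited for $\psi \equiv 1$ in Eq. \eqref{(36)}. The earlier analysis already established that this Gaussian kernel lies in $KS^2[\R_I^n]$, is a finitely additive measure of $KS^2$-norm at most one, and satisfies the Chapman-Kolmogorov (reproducing) identity under the HK-integral. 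Thus $\mathbf{U}^0[t,a]$ admits the required representation $\mathbf{U}^0[t,a] = \int_{\R_I^{n[t,a]}}\mathbb{K}_{\mathbf{f}}[\mathcal{D}\mathbf{x}(\tau);\mathbf{x}(a)]$, so that all hypotheses of Theorem \rf{pII} are met.

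With the reproducing kernel in place, I would invoke Theorem \rf{pII} to conclude that $\lim_{\rho \to \infty}\mathbf{U}^\rho[t,a]\Phi(a) = \mathbf{U}[t,a]\Phi(a)$ exists and carries the extended Feynman-Kac representation
\[
\mathbf{U}[t,a]\Phi(a) = \int_{\R_I^{n[t,a]}}\mathbb{K}_{\mathbf{f}}[\mathcal{D}\mathbf{x}(\tau);\mathbf{x}(a)]\exp\{(-i/\hbar)\int_a^\tau \bm{H}_1(s)\,ds\}\Phi[\mathbf{x}(a)].
\]
Setting $a=0$, $\Phi = \psi_0$, and reading $\psi(\mathbf{x},t) = \mathbf{U}[t,0]\psi_0$, part (2) of Theorem \rf{fd1} shows this $\psi$ solves $i\hbar\,\partial_t\psi = \bm{H}(t)\psi$ with $\psi(\mathbf{x},0) = \psi_0(\mathbf{x})$. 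Uniqueness would follow because $\mathbf{U}[t,0]$ is a strongly continuous unitary group on $\mcF{\mcD}_{\otimes}^2$ (Theorem \rf{fd1}, part (1)): any two solutions with the same initial data must coincide, since Stone's theorem yields a single unitary propagator from the self-adjoint generator.

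I expect the main obstacle to be the verification that the free Hamiltonian genuinely supplies a reproducing kernel on $KS^2[\R_I^n]$ rather than merely on $L^2$: one must confirm that the Gaussian propagator, which is only a finitely additive measure and not an $L^2$ kernel, is a legitimate bounded object in $KS^2[\R_I^n]$ and that the sum-over-paths limit defining $\mathbf{U}^0$ converges in the $KS^2$ topology. For the ``other Hilbert space $\mcH$'' case, where $\bm{\Delta}$ is replaced by another generator, the same argument applies provided the free evolution still possesses a reproducing-kernel representation; absent the explicit Gaussian, this reproducing-kernel property becomes the genuine hypothesis one must check case by case.
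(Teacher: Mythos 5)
Your proposal is correct and takes essentially the same route the paper intends: the paper states this theorem without a separate proof, treating it as the direct instantiation of the Extended Feynman--Kac theorem (Theorem \ref{pII}) with $\bm{H}_0=(-\hbar^2/2)\bm{\Delta}$, $\bm{H}_1=V$, the free Gaussian propagator of Eq.~\eqref{(36)} supplying the reproducing kernel for $\mathbf{U}^0[t,a]$, and the remark after the theorem (Trotter--Kato/generalized Lie sum) covering the domain issues you gloss as ``carrying over verbatim.'' Your two additions --- uniqueness via the unitary propagator of Theorem \ref{fd1}, and the observation that for a general Hilbert space $\mcH$ the reproducing-kernel property of the free evolution is the genuine hypothesis to check --- are exactly the right points and consistent with the paper's treatment.
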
 
\begin{rem}We have used $\oplus$ to allow a generalized definition of addition 
(i.e, Trotter-Kato). In fact, Kato has shown that $V$ can be any self-adjoint 
generator and Goldstein has called it a generalized Lie sum (see \cite{KA}) 
\end{rem}
Our second example is due to Albeverio and Mazzucchi \cite{AM}. It 
is like the first but we provide a different approach.   
Let $\mathbb{C}$ be a completely symmetric positive definite fourth-order 
covariant tensor on $L^2[{{\R}}_I^n]$, let $\Omega$ be a symmetric positive-definite 
$n \times n$ matrix and let $\lambda $ be a nonnegative constant. Then:  
\[
 H =  - \tfrac{\hbar^2 }{2}{\bm\Delta}  + \tfrac{1}
{2}{\mathbf{x}}\Omega ^2 {\mathbf{x}} + \lambda \mathbb{C}[\mathbf{x},
\mathbf{x},\mathbf{x},\mathbf{x}]
\]
is known to be the generator for a unitary group on $L^2 [{{\R}}_I^n ]$.  
Albeverio and Mazzucchi \cite{AM} prove that $\bar H$ (closure) has a path integral 
representation as the analytic continuation (in the parameter $\lambda $) 
of an infinite dimensional generalized oscillatory integral.  
(Their version of Feynman's path integral.)  

Using the results of the previous sections, we can extend $H$ to 
${K}{S}^2 [{{\R}}_I^n ]$, which generates a unitary group. Let 
$V = \tfrac{1}{2}{\mathbf{x}}\Omega ^2 {\mathbf{x}} + \lambda 
\mathbb{C}[{\mathbf{x}},{\mathbf{x}},{\mathbf{x}},{\mathbf{x}}]$ and $V_\rho   
= V(I + \rho V^*V )^{ - 1/2} ,{\text{ }}\rho  > 0$. We can prove that $V_\rho$ is 
a bounded generator that converges to $V$. Since $ - \tfrac{\hbar^2 }{2}\bm\Delta$ 
generates a unitary group,  $H_\rho   =  - \tfrac{\hbar^2 }{2}{\bm\Delta}  
+ V_\rho$ also generates one and converges to $H$.  Let
\[ 
{\bm{H}}(\tau) = (\mathop {\hat  \otimes }\limits_{t \geqslant s > \tau} 
{\text{I}}_s)  \otimes H \otimes (\mathop  \otimes \limits_{\tau > 
s \geqslant 0} {\text{I}}_s ),
\]
then ${\bm{H}}(t)$ generates a unitary group for each $t$ and ${\bm{H}}_\rho(t)$ 
converges to ${\bm{H}}(t)$ on ${\mathcal{F}\mathcal{D}}_ \otimes ^{2} $.  
We can now obtain:
\[
\begin{gathered}
{\mathbf{U}}[t,a] \Phi= \int_{{{\R}}_I^{n[t,a]} } {\mathbb{K}_{\mathbf{F}} 
[{\mathcal{D}}{\mathbf{x}}(\tau )\,;\,{\mathbf{x}}(a)]} \exp \lt\{  - ({i \mathord{\left/
 {\vphantom {i \hbar }} \right.
 \kern-\nulldelimiterspace} \hbar })\int_a^\tau  {V(s)ds} \rt\} \Phi  \hfill \\
= \lim _{\rho  \to 0} \int_{{{\R}}_I^{n[t,a]} } {\mathbb{K}_{\mathbf{F}} 
[{\mathcal{D}}{\mathbf{x}}(\tau )\,;\,{\mathbf{x}}(a)]} \exp \lt\{  - ({i \mathord{\left/
 {\vphantom {i \hbar }} \right.
 \kern-\nulldelimiterspace} \hbar })\int_a^\tau  {V_\rho  (s)ds} \rt\} \Phi . \hfill \\ 
\end{gathered} 
\]
We refer to \cite{GZ} for additional examples, including path integrals for kernels 
that are not perturbations of the Laplacian.

\section{Concluding remarks}

Modern physical theories of fundamental interactions need the space
$\Phi$ of all field histories over spacetime \cite{DW4}. 
This is an infinite-dimensional manifold, and 
the framework of field theory makes it therefore compelling
to look for an appropriate mathematical language. This can be
obtained by the choice of a separable Banach space, following
the beautiful and profound presentation of Geroch \cite{GER}, with a
subsequent application of the results presented in the main body of
our paper. The infinite-dimensional setting is still an unknown
land for the majority of the physics community. For example,
contraction of tensor arguments is not defined therein \cite{GER},
and hence all geometric invariants which contribute to ultraviolet
divergences of gravity \cite{THO,GOR} cannot be defined. 

\subsection{Manifolds modelled on Banach spaces}

Within this framework, the idea is to pick out, from the rich
structure of a Banach space, a particular type of structure called
by Geroch {\it local smoothness} \cite{GER}. As a first step, one has to
introduce a mechanism by means of which structure can be carried
from Banach spaces to other mathematical entities. If $M$ is a set
and $E$ is a Banach space, an $E$-chart on $M$ consists of a subset
$U$ of $M$ jointly with a map $\psi$ from $U$ to $E$, such that
$\psi$ is one-to-one, and the image $\psi(U)$ of $U$ by $\psi$ is
open in $E$. A chart establishes therefore a one-to-one correspondence
between a certain subset $U$ of $M$ and a certain open subset
$\psi(U)$ of $E$. It is this correspondence that carries structure
from $E$ to $M$. 

As a second step, one needs the concept of {\it agreement
between two charts as regards their induced smoothness structures
on $M$}. Let $(U,\psi)$ and $(U',\psi')$ be two $E$-charts on
the set $M$. On the intersection $V = U \cap U'$, two smoothness
structures are induced, one from $\psi$ and the other from $\psi'$.
The former defines a correspondence between $V$ and the subset
$\psi(V)$ of $E$, while the latter defines a correspondence
between $V$ and the subset $\psi'(V)$ of $E$. 

In order to compare these smoothness structures, let us consider
the map $\psi' \odot \psi^{-1}$ from $\psi(V)$ to $\psi'(V)$, and
its inverse $\psi \odot \psi'^{-1}$, from $\psi'(V)$ to $\psi(V)$.
These maps describe the interaction between the $E$-charts
$(U,\psi)$ and $(U',\psi')$. {\it At this stage, we have got rid of the
manifold $M$, and we deal with maps between subsets of Banach spaces}.
Now a mathematical symbol $p$ is fixed, either a non-negative 
integer or the $\infty$ symbol, and the charts $(U,\psi)$ and
$(U',\psi')$ on $M$ are said to be $C^{p}$-compatible if the images
$\psi(V)$ and $\psi'(V)$ are both open subsets of $E$, and if the maps
$$
\psi' \odot \psi^{-1}: \; \psi(V) \rightarrow E, \; \;
\psi \odot {\psi'}^{-1}: \; \psi'(V) \rightarrow E
$$
are both $C^p$ maps. The key role is played by the second condition.
Instead of requiring that our maps preserve vector space or norm
structure, we just require that they preserve $C^p$ differential
structure. In such a way, a single type of structure is isolated.

A manifold modelled on a Banach space consists of a non-empty set
$M$, a Banach space $E$, a symbol $p$, and a collection $\zeta$ of
$E$-charts on $M$, in such a way that the following conditions
are satisfied:
\vskip 0.3cm
\noindent
(i) Any two charts in $\zeta$ are $C^p$ compatible.
\vskip 0.3cm
\noindent
(ii) The charts in $\zeta$ cover the set $M$, i.e., every point
of $M$ lies in at least one of the $U$'s-
\vskip 0.3cm
\noindent
(iii) Any chart on $M$ which is compatible with all the charts
in $\zeta$ is itself an element of $\zeta$.
\vskip 0.3cm
\noindent
(iv) Given distinct points $p$ and $p'$ of $M$, there exist
charts $(U,\psi)$ and $(U',\psi')$ in $\zeta$ such that $p$
lies in $U$ and $p'$ lies in $U'$, and such that there exists
a ball $B$ centred at $\psi(p)$ in $\psi(U)$ and a ball $B'$
centred at $\psi'(p')$ in $\psi'(U')$, with the inverse images
$\psi^{-1}(B)$ and ${\psi'}^{-1}(B')$ having empty intersection
in $M$. The charts are then said to separate points of $M$.

Condition (i) means that, whenever two charts in the collection
$\zeta$ induce smoothness structures in the same region of $M$,
these structures agree. Condition (ii) requires that smoothness
structure has been induced over all of $M$. Condition (iii)
makes sure that no additional structure has been induced on $M$.
Last, condition (iv) rules out non-Hausdorff manifolds.
These four conditions define the concept of $C^{\infty}$ manifold
$M$ based on a Banach space $E$. The charts in $\zeta$ are said to
be the admissible charts. A subset $O$ of the manifold $M$ is
said to be open if, for every admissible chart $(U,\psi)$, the
image $\psi(U \cap O)$ is open in $E$. These are just the open
sets for a topology on $M$.

\subsection{Further perspectives}

In the physics-oriented literature, the object of interest is the
in-out amplitude with its functional integral formal representation
$$
\langle {\rm out}|{\rm in} \rangle= \int 
e^{i ({\dot S}[\varphi]+{1 \over 2}\omega_{\alpha \beta}
K^{\alpha}[\varphi] K^{\beta}[\varphi])}
({\rm det} {\widehat G}(\varphi))^{-1}
{\dot \mu}[\varphi][d\varphi],
$$
where $\dot S$ denotes the classical action $S$ supplemented by
all counterterms that are needed to render the amplitude 
finite \cite{DW4}. It is the factor
$({\rm det} {\widehat G}[\varphi])^{-1}$ that gives rise to
all ghost loops in the loop expansion, and the exponent $-1$
is what makes the ghost field a fermion. Such a ghost arises 
entirely from the fibre-bundle structure of the space
$\Phi$ of field histories, from the Jacobian of the
transformation from fibre-adapted coordinates to local
fields $\varphi^{i}$ \cite{DW4}. 

If the action functional for the in-out amplitude can be 
expanded to quadratic order in gauge and ghost fields, it
can be given a rigorous meaning as a Henstock integral, by
virtue of the material presented from Sec. 2 to Sec. 6.
However, as far as we can see, a functional integral for
nonperturbative gauge theories is still beyond the present
capabilities of the framework presented in our paper.

If the choice of infinite-dimensional manifolds modeled
on separable Banach spaces is the right thing to do,
we expect that this will be the starting point for further
progress on functional integrals for quantum gauge theories
(see also the remarkable monograph by Glimm and Jaffe \cite{GL}).

For further literature on the framework that we have outlined,
we refer the readers to the work in all References not cited so far.
\vskip 0.3cm
\leftline {\bf Acknowledgment}
\vskip 0.3cm
\noindent
Giampiero Esposito is grateful to INDAM for membership.

\section{Appendix}

At the risk of some repetitions,
this appendix is aimed at physics-oriented readers who would appreciate
a self-contained description of concepts of real and functional analysis.

\subsection{Completion of a normed space}

For any normed vector space $S$ it is always possible to build a
Banach space $S_1$ in such a way that $S$ contains a linear space
$\Sigma$ dense upon $S_1$ and equivalent to $S$. If $S$ is not
complete, every space $S_1$ as above is said to be completion of $S$.
In order to prove the theorem \cite{MIR}, let $E$ be the space having as elements
the sequences $\{ f_k \}$ of elements of $S$ which fulfill the Cauchy
condition, and let us introduce in $E$ an equivalence relation 
$\mathcal R$ upon assuming by definition that $\{ f_k \}$ and
$\{ f_{k}' \}$ are equivalent if 
\begin{equation}
\lim_{k \to \infty} \| f_k - f_{k}' \| =0.
\label{(37)}
\end{equation}
Let us set $S_{1}=E / {\mathcal R}$, with the letter $X$ used to denote its
elements. If $\{ f_k \}$ and $\{ f_{k}' \}$ are two representatives of the
element $X$ of $S_1$, the sequences $ \{ cf_{k} \}$ and 
$\{ c f_{k}' \}$ are, for all $c \in K$, equivalent to each other and
hence represent the same element of $S_1$ that we denote by $cX$. 
Thus, if we consider another element $Y$ of $S_1$, upon denoting by
$h_k$ and $h_{k}'$ two of its representatives, also the sequences
$\{ f_k + h_k \}$ and $\{ f_{k}' + h_{k}' \}$ are equivalent to each
other and hence represent the same element of $S_1$, that we denote
by $X+Y$.
 
By virtue of such definitions $S_1$ is a vector space. In particular,
the origin of $S_1$ is represented by any sequence $\{ \omega_k \}$
such that 
$$
\lim_{k \to \infty} \| \omega_{k} \| =0.
$$
Let us now show that in $S_1$ one can define a norm by setting 
\begin{equation}
\| X \|_{S_{1}} =\lim_{k \to \infty} \| f_{k} \|_{S},
\label{(38)}
\end{equation}
where $\{ f_{k} \}$ is an arbitrary representative of $X$. Since
$\{ f_{k} \}$ satisfies the Cauchy condition, and by virtue
of the inequality 
\begin{equation}
\left | \| f_{k+p} \|_{S} - \| f_{k} \|_{S} \right |
< \| f_{k+p}-f_{k} \|_{S},
\label{(39)}
\end{equation}
it follows that the sequence of norms
$\{ \| f_{k} \|_{S} \}$ is convergent and hence it is legitimate
to define the norm of $X$ as we have done. Furthermore, from the
definition of the equivalence relation, it follows that the
$S_{1}$-norm of $X$ is independent of the particular representative
chosen to define it. The properties of a norm
$$
\| X \| \geq 0, \;
\| \lambda X \| = |\lambda| \| X \|, \;
\| X+Y \| \leq \| X \| + \| Y \|
$$ 
are proved immediately, including the fact that only the zero vector
has vanishing norm.

Let us now prove the completeness of $S_1$. For this purpose, we
consider a sequence $ \{ X_{n} \}$ of elements of $S_1$ which 
fulfills the Cauchy condition. Given $\varepsilon >0$ one can thus
find an index $\nu_{\varepsilon}$ such that
$$
n> \nu_{\varepsilon} \Longrightarrow \| X_{n+p}-X_{n} \|_{S_{1}}
< {\varepsilon \over 2}, \; \forall p \in N.
$$
If we then denote by $\| f_{k}^{(n)} \|$ a representative of
$X_n$, one can write that
\begin{equation}
n > \nu_{\varepsilon} \Longrightarrow \lim_{k \to \infty}
\| f_{k}^{(n+p)}-f_{k}^{(n)} \|_{S} < {\varepsilon \over 2}, \;
\forall p \in N.
\label{(40)}
\end{equation}
On the other hand, since the sequence $\{ f_{k}^{(n)} \}$ verifies
the Cauchy condition, it is possible to associate to every $n$ an
index $k_n$ such that
\begin{equation}
k > k_n \Longrightarrow \| f_{k}^{(n)}-f_{k_{n}}^{(n)} \|_{S}
< {1 \over n}.
\label{(41)}
\end{equation}
One can also assume, without loss of generality, that $k_{n+1}>k_{n}$.
Let us then consider the sequence 
\begin{equation}
f_{k_{1}}^{(1)}, f_{k_{2}}^{(2)},...,f_{k_{n}}^{(n)},...
\label{(42)}
\end{equation}
and let us show that it verifies the Cauchy condition. For this purpose
we point out that, for all $n$, one finds for all values of $k$
$$
\| f_{k_{n+p}}^{(n+p)}-f_{k_{n}}^{(n)} \|_{S} \leq
\|f_{k_{n+p}}^{(n+p)}-f_{k}^{(n+p)}\|_{S}
+\| f_{k}^{(n+p)}-f_{k}^{(n)} \|_{S}
+ \| f_{k}^{(n)}-f_{k_{n}}^{(n)} \|_{S}.
$$
Thus, if one takes $k > k_{n+p}>k_{n}$, one obtains by virtue of \eqref{(41)}
$$
\| f_{k_{n+p}}^{(n+p)}-f_{k_{n}}^{(n)} \|_{S} \leq
{1 \over (n+p)}+{1 \over n}
+\| f_{k}^{(n+p)}-f_{k}^{(n)} \|_{S}.
$$
Furthermore, upon taking $n > {4 \over \varepsilon}$, one finds
$$
\| f_{k_{n+p}}^{(n+p)} -f_{k_{n}}^{(n)} \|_{S} \leq 
{\varepsilon \over 2}+ \| f_{k}^{(n+p)}-f_{k}^{(n)} \|_{S}.
$$ 
Since $k$ can be chosen as large as we please, it follows
from \eqref{(40)} that 
$$
n > {\rm max} \left({4 \over \varepsilon},\nu_{\varepsilon}\right)
\Longrightarrow 
\| f_{k_{n+p}}^{(n+p)}-f_{k_{n}}^{(n)} \|_{S} \leq 
\varepsilon , \; \forall p \in N.
$$
We have therefore proved that the sequence \eqref{(42)} fulfills the Cauchy
condition and hence it represents an element $X$ of $S_1$. We now
want to prove that
\begin{equation}
\lim_{n \to \infty}X_{n}=X.
\label{(43)}
\end{equation}
Indeed, one has
$$
\| X_{n}-X \|_{S_{1}}=\lim_{r \to \infty} 
\| f_{r}^{(n)}-f_{k_{r}}^{(r)} \|_{S}.
$$
Upon fixing for the moment $n > \nu_{\varepsilon}$, let us
choose $\nu$ anf $s$ in such a way that
$$
r > {\rm max} (n,k_n), \; s > k_{r} \geq k_{n}.
$$
One finds therefore
$$
\| f_{r}^{(n)}-f_{k_{r}}^{(r)} \|_{S} \leq 
\| f_{r}^{(n)}-f_{k_{n}}^{(n)} \|_{S}
+\| f_{k_{n}}^{(n)}-f_{s}^{(n)} \|_{S}
$$
$$
+ \| f_{s}^{(n)}-f_{s}^{(r)} \|_{S}
+ \| f_{s}^{(r)}-f_{k_{r}}^{(r)} \|_{S}.
$$
By virtue of this condition and of the majorization \eqref{(41)}, 
it follows that
$$
\| f_{r}^{(n)}-f_{k_{r}}^{(r)} \|_{S} \leq {2 \over n}
+ \| f_{s}^{(n)}-f_{s}^{(r)} \|_{S}+{1 \over r}.
$$
On passing to the limit as $s \rightarrow \infty$ one obtains,
by virtue of the limit \eqref{(40)},
$$
\| f_{r}^{(n)}-f_{k_{r}}^{(r)} \|_{S}
\leq {2 \over n}+{1 \over r}+{\varepsilon \over 2}.
$$
The subsequent limit as $r \rightarrow \infty$ yields
$$
n > \nu_{\varepsilon} \Longrightarrow 
\| X_{n}-X \|_{S_{1}} \leq {2 \over n}
+{\varepsilon \over 2},
$$
which in turn implies the limit \eqref{(43)}. We have thus proved the
completeness of $S_1$. 

If we now consider the space $\Sigma$ of the $X \in S_1$ for which the
sequence $\{ f_{k} \}$ that represents $X$ is convergent, we
can consider the map
\begin{equation}
x=S \cdot \lim_{k \to \infty} x_{k} \in S \longrightarrow
X= \{ f_{k} \} \in \Sigma,
\label{(44)}
\end{equation}
that we agree to denote by $X=T(x)$. Since by virtue of
the limit \eqref{(38)} one can write that
$$
\| T(x) \|_{S_{1}}= \| x \|_{S},
$$
the equivalence of $S$ and $\Sigma$ has been proved.

Let us now prove that $\Sigma$ is dense upon $S_1$. For this
purpose, let $X$ be an element of $S_1$ while $\{ f_k \}$ is
one of its representatives. Let us set
$$
\forall n \in N, \; h_{n}^{(k)}=f_{k}, \;
Y^{(k)}= \{ h_{n}^{(k)} \}_{n \in N}.
$$
Since the sequences $\{ h_{n}^{(k)} \}_{n \in N}$ are all
constant, one has $Y^{(k)} \in \Sigma$. On the other hand,
by virtue of the Cauchy condition satisfied by $X$, one has
$$
\forall \varepsilon >0 \; \exists \nu_{\varepsilon}:
n,k > \nu_{\varepsilon} \Longrightarrow 
\| f_{n} -h_{n}^{(k)} \|_{S} < \varepsilon
$$
and hence, by passage to the limit as $n \rightarrow \infty$, one finds
$$
k > \nu_{\varepsilon} \Longrightarrow \| X-Y^{(k)} \|_{S_{1}}
> \varepsilon
$$
i.e.
$$
X=\lim_{k \to \infty} Y^{(k)}.
$$

The last remaining task is to prove that, if $S_{1}'$ is another
Banach space containing a space $\Sigma'$ equivalent to $S$ and
dense upon $S_{1}'$, the two spaces $S_1$ and $S_{1}'$ are equivalent. 
Indeed, if one defines an isomorphism between $\Sigma'$ and $\Sigma$ 
by establishing a correspondence between $X' \in \Sigma$ and 
$X \in \Sigma$ when they have the same image in $S$, one has
$$
\| X' \|_{\Sigma'} = \| X \|_{\Sigma}
$$
and hence such an isomorphism is an equivalence between $\Sigma$
and $\Sigma'$. Bearing in mind that $\Sigma$ is dense upon $S_1$ 
and $\Sigma'$ is dense upon $S_{1}'$, such an equivalence is
extended by continuity to an equivalence between $S_1$ and $S_{1}'$.

\subsection{Abstract measure theory}

Following \cite{AFP}, let $X$ be a nonempty set and let $\mathcal E$ be
a collection of subsets of $X$.
\vskip 0.3cm
\noindent
(i) The set $\mathcal E$ is said to be an algebra if the empty set
$\emptyset \in {\mathcal E}$, the union $E_1 \cup E_2 \in {\mathcal E}$
and the set-theoretic difference $X - E_1 \in {\mathcal E}$ 
whenever $E_1$ and $E_2$ belong to $\mathcal E$.
\vskip 0.3cm
\noindent
(ii) An algebra $\mathcal E$ is said to be a $\sigma$-algebra if,
for any sequence $(E_h) \subset {\mathcal E}$, its union
$\cup_{h}E_h$ belongs to $\mathcal E$.
\vskip 0.3cm
\noindent
(iii) For any collection $\mathcal G$ of subsets of $X$, the
$\sigma$-algebra generated by $\mathcal G$ is the smallest
$\sigma$-algebra containing $\mathcal G$. If $(X,\tau)$ is a
topological space, one denotes by ${\mathcal B}(X)$ the 
$\sigma$-algebra of Borel subsets of $X$, i.e., the 
$\sigma$-algebra generated by the open subsets of $X$.
\vskip 0.3cm
\noindent
(iv) If $\mathcal E$ is a $\sigma$-algebra in $X$, the pair
$(X,{\mathcal E})$ is said to be a measure space.
 
By virtue of De Morgan laws, algebras are closed under 
finite intersections, and $\sigma$-algebras are closed
under countable intersections. Furthermore, since the
intersection of any family of $\sigma$-algebras is a
$\sigma$-algebra, the concept of generated 
$\sigma$-algebra is meaningful. Sets endowed with a
$\sigma$-algebra are the appropriate framework to
introduce measures. 

If $(X,{\mathcal E})$ is a measure space, let $m$ be a natural
number $\geq 1$.
\vskip 0.3cm
\noindent
(a) The function $\mu: \; {\mathcal E} \rightarrow 
{\mathbb R}^{m}$ is a measure if $\mu(\emptyset)=0$ and,
for any sequence $(E_h)$ of pairwise disjoint elements
of $\mathcal E$, countable additivity holds, i.e.,
$$
\mu \left(\bigcup_{h=0}^{\infty}E_{h}\right)
=\sum_{h=0}^{\infty}\mu(E_h).
$$
If $m=1$, $\mu$ is said to be a real measure, whereas, if
$m>1$, $\mu$ is said to be a vector measure.
\vskip 0.3cm
\noindent
(b) If $\mu$ is a measure, its total variation $|\mu|$ for
every $E \in {\mathcal E}$ is defined according to
$$
|\mu|(E) \equiv {\rm sup} \left \{ \sum_{h=0}^{\infty}
|\mu(E_h)|: \; E_h \in {\mathcal E} \; {\rm pairwise}
\; {\rm disjoint}, \; E=\bigcup_{h=0}^{\infty}E_h
\right \}.
$$
\vskip 0.3cm
\noindent
(c) If $\mu$ is a real measure, its positive part $\mu^{+}$
and negative part $\mu^{-}$ are defined as follows:
$$
\mu^{+}={{|\mu|+\mu}\over 2}, \;
\mu^{-}={{|\mu|-\mu}\over 2}.
$$

In particular, if the set $X$ is nonempty and $\mathcal E$
is the $\sigma$-algebra of all its subsets, one can define
the Dirac measure on $(X,{\mathcal E})$ as follows. To each 
$x \in X$ we associate the measure $\delta_{x}$ defined
by $\delta_{x}(E)=1$ if $x \in E$, $\delta_{x}(E)=0$ otherwise.
If $(x_h)$ is a sequence in $X$ and if $(c_h)$ is a sequence in
${\mathbb R}^{m}$ such that the series $\sum_{h}|c_h|$ 
is convergent, we can set
$$
\left(\sum_{h=0}^{\infty}c_{h}\delta_{x_{h}} \right)(E)
=\sum_{x_{h} \in E} c_{h}
$$
and obtain a ${\mathbb R}^{m}$-valued measure. Measures of this
kind are said to be {\it purely atomic}. The set $S_{\mu}$
of atoms of a measure $\mu$ in a measure space 
$(X,{\mathcal E})$ is defined by
$$
S_{\mu}= \{ x \in X: \; \mu(\{ x \}) \not =0 \},
$$
provided that the singletons $\{ x \}$ are elements of
${\mathcal E}$. If $\mu$ is finite or $\sigma$-finite, the set
of atoms is at most countable.

\subsection{Functions of bounded variation}

If $\Omega$ is a generic open set in ${\mathbb R}^{N}$,
and if $u \in L^{1}(\Omega)$, we say that $u$ is a function
of bounded variation in $\Omega$ if the distributional 
derivative of $u$ is representable by a measure in $\Omega$,
i.e., if \cite{AFP}
\begin{equation}
\int_{\Omega}
u {\partial \phi \over \partial x_{i}}dx=- \int_{\Omega}
\phi dD_{i}u, \; \forall \phi \in C_{c}^{\infty}(\Omega), \;
i=1,...,N
\label{(45)}
\end{equation}
for some ${\mathbb R}^{N}-valued$ measure
$Du=(D_{1}u,...,D_{N}u)$ in $\Omega$. The vector space of
all functions of bounded variation in $\Omega$ is denoted 
by $BV(\Omega)$.

A smoothing argument shows that the integration by parts
just written is still true for any $\phi \in 
C_{c}^{1}(\Omega)$, or even for Lipschitz functions $\phi$
with compact support in $\Omega$. All these formulae can
be written concisely in the form
\begin{equation}
\int_{\Omega}u {\rm div}\varphi \; dx =-\sum_{i=1}^{N}
\int_{\Omega} \varphi_{i} dD_{i}u \; \forall \varphi
\in [C_{c}^{1}(\Omega]^{N}
\label{(46)}
\end{equation}
The same notation can be used also for functions 
$u \in [BV(\Omega)]^{m}$. In such a case, $Du$ is a
$m \times N$ matrix of measures $D_{i}u^{\alpha}$ in
$\Omega$ satisfying
\begin{equation}
\sum_{\alpha=1}^{m}\int_{\Omega} u^{\alpha}
{\rm div} \varphi^{\alpha} dx = -\sum_{\alpha=1}^{m}
\sum_{i=1}^{n} \int_{\Omega} \varphi_{i}^{\alpha}
dD_{i} u^{\alpha}, \; \forall \varphi \in
[C_{c}^{1}(\Omega)]^{mN}.
\label{(47)}
\end{equation}

The Sobolev space $W^{1,1}(\Omega)$ is contained in
$BV(\Omega)$, and the inclusion is strict. If $u$ belongs 
to $[BV_{\rm loc}(\Omega)]^{m}$, one can prove the
following properties:
\vskip 0.3cm
\noindent
(a) If the distributional derivative $Du$ vanishes, $u$
is equivalent to a constant in any connected component 
of $\Omega$.
\vskip 0.3cm
\noindent
(b) For any bounded Lipschitz function 
$\psi: \Omega \rightarrow {\mathbb R}$, the function
$u \psi$ belongs to $[BV_{\rm loc}(\Omega)]^{m}$ and
$$
D(u \psi)=\psi Du + (u \otimes \nabla \psi)
{\mathcal L}^{N},
$$
where $\nabla$ is the approximate pointwise differential
and ${\mathcal L}^{N}$ denotes $N$-dimensional Lebesgue measure.
\vskip 0.3cm
\noindent
(c) If $\rho$ is any convolution kernel and
$$
\Omega_{\varepsilon}= \{ x \in \Omega: \;
{\rm dist}(x,\partial \Omega)> \varepsilon \},
$$
then
$$
\nabla(u \star \rho_{\varepsilon})
=Du \star \rho_{\varepsilon} \;
{\rm in} \; \Omega.
$$

\section*{Declaration} 
The authors certify that:
\begin{enumerate}
\item they have no relevant financial or non-financial interests to disclose;
\item they have no conflicts of interest to declare that are relevant to the content of this manuscript;
\item they  have no financial or proprietary interests in any material discussed in this manuscript; and
\item they have no affiliations with or involvement in any organization or entity 
with any financial interest or non-financial interest in the subject matter or 
materials discussed in this manuscript.
\end{enumerate}

\end{document}